\patchcmd{\subsubsection}{\itshape}{\bfseries}{}{}
\let\save@mathaccent\mathaccent
\newcommand*\if@single[3]{%
  \setbox0\hbox{${\mathaccent"0362{#1}}^H$}%
  \setbox2\hbox{${\mathaccent"0362{\kern0pt#1}}^H$}%
  \ifdim\ht0=\ht2 #3\else #2\fi
  }
\newcommand*\rel@kern[1]{\kern#1\dimexpr\macc@kerna}
\newcommand*\widebar[1]{\@ifnextchar^{{\wide@bar{#1}{0}}}{\wide@bar{#1}{1}}}
\newcommand*\wide@bar[2]{\if@single{#1}{\wide@bar@{#1}{#2}{1}}{\wide@bar@{#1}{#2}{2}}}
\newcommand*\wide@bar@[3]{%
  \begingroup
  \def\mathaccent##1##2{%
    \let\mathaccent\save@mathaccent
    \if#32 \let\macc@nucleus\first@char \fi
    \setbox\z@\hbox{$\macc@style{\macc@nucleus}_{}$}%
    \setbox\tw@\hbox{$\macc@style{\macc@nucleus}{}_{}$}%
    \dimen@\wd\tw@
    \advance\dimen@-\wd\z@
    \divide\dimen@ 3
    \@tempdima\wd\tw@
    \advance\@tempdima-\scriptspace
    \divide\@tempdima 10
    \advance\dimen@-\@tempdima
    \ifdim\dimen@>\z@ \dimen@0pt\fi
    \rel@kern{0.6}\kern-\dimen@
    \if#31
      \overline{\rel@kern{-0.6}\kern\dimen@\macc@nucleus\rel@kern{0.4}\kern\dimen@}%
      \advance\dimen@0.4\dimexpr\macc@kerna
      \let\final@kern#2%
      \ifdim\dimen@<\z@ \let\final@kern1\fi
      \if\final@kern1 \kern-\dimen@\fi
    \else
      \overline{\rel@kern{-0.6}\kern\dimen@#1}%
    \fi
  }%
  \macc@depth\@ne
  \let\math@bgroup\@empty \let\math@egroup\macc@set@skewchar
  \mathsurround\z@ \frozen@everymath{\mathgroup\macc@group\relax}%
  \macc@set@skewchar\relax
  \let\mathaccentV\macc@nested@a
  \if#31
    \macc@nested@a\relax111{#1}%
  \else
    \def\gobble@till@marker##1\endmarker{}%
    \futurelet\first@char\gobble@till@marker#1\endmarker
    \ifcat\noexpand\first@char A\else
      \def\first@char{}%
    \fi
    \macc@nested@a\relax111{\first@char}%
  \fi
  \endgroup
}
\DeclareSymbolFont{bbold}{U}{bbold}{m}{n}
\DeclareSymbolFontAlphabet{\mathbbold}{bbold}
\newtheorem{theorem}{Theorem}
\newtheorem{cor}{Corollary}
\newtheorem{lemma}{Lemma}
\newtheorem{prop}{Proposition}
\newtheorem{remark}{Remark}
\newtheorem{definition}{Definition}
\newtheorem{assump}{Assumption}
\newcommand{\norm}[1]{\left\Vert#1\right\Vert}
\newcommand{\abs}[1]{\left\vert#1\right\vert}
\newcommand{\cA}{\mathcal{A}}
\newcommand{\cB}{\mathcal{B}}
\newcommand{\cE}{\mathcal{E}}
\newcommand{\cF}{\mathcal{F}}
\newcommand{\cG}{\mathcal{G}}
\newcommand{\cI}{\mathcal{I}}
\newcommand{\cP}{\mathcal{P}}
\newcommand{\cT}{\mathcal{T}}
\newcommand{\cX}{\mathcal{X}}
\newcommand{\BB}{\mathbb{B}}
\newcommand{\EE}{\mathbb{E}}
\newcommand{\NN}{\mathbb{N}}
\newcommand{\PP}{\mathbb{P}}
\newcommand{\RR}{\mathbb{R}}
\newcommand{\ZZ}{\mathbb{Z}}
\newcommand{\supp}{\mathrm{spt}}
\newcommand{\ind}{\mathbbm 1}
\newcommand{\X}{\mathcal{X}}
\newcommand{\kl}[2]{\mathsf{D}_\mathsf{KL}\left(#1\middle\|#2\right)}
\newcommand{\klest}[3]{\hat{\mathsf{D}}_{\mathsf{KL},#3}\left(#1,#2\right)}
\newcommand{\rendiv}[3]{\mathsf{D}_{#3}\left(#1\middle\|#2\right)}
\newcommand{\renest}[4]{\hat{\mathsf{D}}_{#3,#4}\left(#1,#2\right)}
\newcommand{\ubar}[1]{\underaccent{\bar}{#1}}
\definecolor{cblue}{rgb}{0.16, 0.32, 0.75}
\def\h2{\tilde h}
\def\hm1{\hat h_{-1}}
\begin{document}
\title[Deviation Inequalities for R\'{e}nyi Divergence  Estimators]{Deviation Inequalities for R\'{e}nyi 
Divergence  Estimators via Variational Expression}
\thanks{S. Sreekumar is supported in parts by the Centre National de la Recherche Scientifique (CNRS) funding and the CentraleSup\'elec$\mspace{1 mu}$-$\mspace{1 mu}$L2S funding. K. Kato is partially supported by NSF grant DMS-2413405.}

\author[S. Sreekumar]{Sreejith Sreekumar}
\address[S. Sreekumar]{L2S, CNRS, CentraleSupélec,  University of Paris-Saclay, France}
\email{sreejith.sreekumar@centralesupelec.fr}

\author[K. Kato]{Kengo Kato}
\address[K. Kato]{
Department of Statistics and Data Science, Cornell University, USA
}
\email{kk976@cornell.edu}

\begin{abstract}
R\'enyi divergences play a pivotal role in information theory, statistics, and machine learning. While several estimators of these divergences have been proposed in the literature with their consistency properties established and minimax convergence rates quantified, existing accounts of probabilistic bounds governing the estimation error are relatively underdeveloped.  Here, we make progress in this regard by establishing exponential deviation inequalities for smoothed plug-in estimators and neural estimators by relating the error to an appropriate empirical process and leveraging tools from empirical process theory. In particular, our approach does not require the underlying distributions to be compactly supported or have densities bounded away from zero, an assumption prevalent in existing results. The deviation inequality also leads to a one-sided concentration bound from the expectation, which is useful in random-coding arguments over continuous alphabets in information theory with potential applications to physical-layer security.  As another concrete application, we consider a hypothesis testing framework for auditing R\'{e}nyi differential privacy using the  neural estimator as a test statistic and obtain  non-asymptotic performance guarantees for such a test. 
\end{abstract}
\keywords{R\'enyi divergence, deviation inequality, variational estimator, smoothed plug-in estimator, differential privacy}
\maketitle
\section{Introduction}
R\'enyi divergences \cite{Renyi-60} play a fundamental role in information theory, statistics, machine learning, and related fields. These divergences as well as information measures derived from them, such as mutual information and entropy,   characterize fundamental limits in information-theoretic coding problems. For instance, they characterize the error-exponents in hypothesis testing, the capacity of noisy channels, ultimate limits in data compression and rate-distortion theory, to name a few (see e.g., \cite{CoverThomas}). Further, they appear as discrepancy measures or as optimization metrics in applications. R\'enyi divergence is a functional of the underlying probability distributions, which are often unknown in practice. In such scenarios, the problem of estimating R\'{e}nyi divergences from samples arises, for which various estimators have been proposed 
in the literature, such as those based on plug-in, kernel, and variational methods (see Section \ref{Sec:relwork} below).  While consistency and minimax error rates for several of these estimators have been established, an understanding of the probability law governing the estimation error, which is pivotal for a complete statistical account of performance, is limited.

In this work, we make progress in this regard by establishing exponential deviation inequalities for two classes of R\'{e}nyi divergence estimators, namely Gaussian-smoothed plug-in estimator and  neural estimator. To this end, we utilize known variational expressions for these divergences (see \eqref{eq:varformkl} and \eqref{eq:rendonvarform} below) \cite{Donsker-1975,Nguyen-2010,Birell-2021}. 
Our approach relies on relating the deviation to the suprema of empirical processes indexed by suitable function classes, and using concentration inequalities for the suprema to arrive at the desired conclusion. Our results go beyond the framework of discrete alphabets and compactly supported  distributions with densities bounded away from zero, as required in existing works (see e.g., \cite{Singh-Poczos-2014,pmlr-v32-singh14,sreekumar2021neural}). Specifically, we obtain deviation inequalities for Gaussian-smoothed empirical measures (based on samples) which satisfy neither of these criteria, under the assumption that the unsmoothed distribution is compactly supported or sub-Gaussian. Further, under an additional assumption that the densities have bounded ratios and belong to an appropriate Lipschitz class, we are also able to treat the deviation of the estimate from the unsmoothed value. We also obtain exponential inequalities for neural estimators of R\'{e}nyi divergences and discuss their applications to obtaining the first complete non-asymptotic performance guarantee for auditing R\'{e}nyi differential privacy (DP) \cite{jagielski2020auditing,domingo2022auditing} using the neural estimator as a test statistic.
\subsection{Estimation of R\'{e}nyi  divergences}
For $\alpha \in (0,1) \cup (1,\infty)$, the R\'{e}nyi divergence of order $\alpha$ between Borel probability measures $\mu$ and $\nu$ on $\RR^d$ is
\begin{align}
    \rendiv{\mu}{\nu}{\alpha}:=\begin{cases}
        \frac{1}{\alpha-1} \log \big(\int_{\RR^d} p_{\mu}^{\alpha}p_{\nu}^{1-\alpha}d\rho\big), & \mu \ll \nu \mbox{ or }\mu \not\perp \nu \mbox{ for } \alpha \in (0,1),\\
        \infty, &\mbox{otherwise},
    \end{cases} \notag
\end{align}
where $ \rho$ is a sigma-finite positive measure such that $\mu,\nu \ll \rho$ and $\mu \not\perp \nu$ denotes that $\mu$ and $\nu$ are non-singular. 
The special cases of R\'{e}nyi divergences of order $\alpha \in \{0,1,\infty\}$ are defined as the respective limits of the above expression. The case $\alpha=1$ corresponds to the Kullback-Leibler (KL) divergence and $\alpha=\infty$ is the max-divergence, given by 
\begin{align}
\kl{\mu}{\nu}:=\begin{cases}
       \int_{\RR^d} p_{\mu} \log \frac{p_{\mu}}{p_{\nu}} d\rho, & \mu \ll \nu,\\
        \infty, &\mbox{otherwise},
    \end{cases} \notag
\end{align}
and 
\begin{align}
\rendiv{\mu}{\nu}{\max}:= \begin{cases}
\log \norm{\frac{p_{\mu}}{p_{\nu}}}_{\infty,\rho} & \mu \ll \nu,\\
        \infty, &\mbox{otherwise},
    \end{cases} \notag
\end{align}
respectively ($\norm{\mspace{1 mu}\cdot \mspace{1 mu}}_{\infty,\rho}$ denotes the usual norm in $L^{\infty}(\rho)$, see Section \ref{Sec:notation}). R\'{e}nyi divergences are sensitive to support mismatch between the underlying distributions and are unbounded in general. Commonly proposed estimators for continuous distributions overcome this issue via some form of regularization. For instance, plug-in and kernel methods utilize a regularized version of the density constructed from samples by shifting and/or scaling of an appropriate kernel function such as a Gaussian density or a bump function. In general, R\'{e}nyi divergence is also a non-smooth functional of the underlying densities. Hence, the  classical estimators of \cite{Bickel-Ritov-88,Birge-Massart-95,Laurent-96,Kerkyacharian-96}, which are tailored to smooth functionals, do not suffice to achieve optimal convergence rates over a class of densities with unbounded support.

R\'{e}nyi divergences can be expressed in a variational form as the supremum over a function class of a  functional that involves expectations with respect to (w.r.t.) the underlying distributions.  The following variational (or dual) expressions are well-known\footnote{The expression in \eqref{eq:donvarexpkl} is known as the Donsker-Varadhan variational expression.  The second expression given in \eqref{eq:donvarexpkl} which will be useful for our purposes can be shown via an application of the Legendre-Fenchel duality applied to the convex map $x \mapsto x \log x$ (see \cite{Nguyen-2010}).} for KL divergence \cite{Donsker-1975,Nguyen-2010}: 
\begin{subequations}\label{eq:varformkl}
\begin{align}
 \kl{\mu}{\nu}&=\sup_{f}\EE_{\mu}[f]-\log \EE_{\nu}[e^f] \label{eq:donvarexpkl} \\   
 &=\sup_{f}\EE_{\mu}[f]-\EE_{\nu}[e^f]+1, \label{eq:varexpkl}
\end{align}
\end{subequations}
where the suprema above are over all measurable functions  such that the second expectations in \eqref{eq:donvarexpkl} and \eqref{eq:varexpkl} are finite\footnote{Alternatively, one may consider the supremum to be over all measurable functions with the interpretation, $\infty-\infty=-\infty$ and $-\infty+\infty=-\infty$, or even restrict the supremum to be over all bounded measurable functions without loss of generality.}. 
 The suprema in \eqref{eq:varformkl} are achieved when $\mu \ll \nu$ by $f_{\mu,\nu}^{\star}=\log \frac{d \mu}{d \nu}$.   For R\'{e}nyi divergences of order $\alpha \in (0,1) \cup (1,\infty)$,  the following  variational expression was shown in   \cite{Birell-2021}:
\begin{align}
    \rendiv{\mu}{\nu}{\alpha}=\sup_{f} \frac{\alpha}{\alpha-1} \log \left(\int_{\RR^d} e^{(\alpha-1)f} d\mu\right)-\log \left(\int_{\RR^d} e^{\alpha f} d\nu\right), \label{eq:rendonvarform}
\end{align}
where the supremum is over all measurable functions such that the second integral is finite. The supremum is achieved when $\mu \ll \nu$ and $\left(\frac{d\mu}{d\nu}\right)^{\alpha} \in L_1(\nu)$ by $f_{\mu,\nu}^{\star}$ \cite{Birell-2021}.  It is known that the value of the expressions on the right hand side (RHS) of \eqref{eq:donvarexpkl},  \eqref{eq:varexpkl}, and \eqref{eq:rendonvarform} are unchanged if $f$ is restricted to the set of all bounded continuous functions (see e.g., \cite{Birell-2021}). An alternative expression similar in spirit to \eqref{eq:varexpkl} also holds for  $\rendiv{\mu}{\nu}{\alpha}$ (see Lemma \ref{lem:varexpren} in Appendix). 

The aforementioned variational expressions have been leveraged recently to construct an estimator of R\'{e}nyi divergences and $f$-divergences \cite{belghazi2018,sreekumar2021neural,Birell-2021}.  The  estimate is obtained simply by approximating the expectations in the dual form by that w.r.t. the empirical measures based on samples and maximizing over an appropriate class of functions, e.g., the output of a neural net. Such estimators have excellent adaptability and scalability properties in practice, and have been shown to be consistent under mild assumptions on the underlying distributions. In fact, such estimators are minimax optimal achieving a parametric rate of convergence under sufficient smoothness of densities, provided the neural network is scaled appropriately according to sample size\cite{sreekumar2021neural}. Additionally, the specific form of the variational formula as well as the resulting estimator lends itself naturally to an analysis via empirical process theory. Here, we utilize these aspects to derive exponential deviation inequalities for Gaussian-smoothed plug-in estimator and neural estimator via  powerful concentration inequalities for empirical process suprema. 

Our approach to deriving the deviation inequality for Gaussian-smoothed  plug-in estimator relies on identifying primitive conditions on the distributions which allow the estimation error to be controlled by the suprema of empirical processes with good concentration properties. In this regard, we show that when  $\mu$ and $\nu$ satisfy certain regularity conditions, which are in particular satisfied in the compactly supported or sub-Gaussian case, it is possible to relate the desired deviation to suprema of empirical processes indexed by a smooth  function class with an integrable envelope\footnote{An envelope of a class $\cF$ of functions on $\cX \subseteq \RR^d$ is a function $F:\cX \rightarrow [0,\infty]$ such that $\sup_{f \in \cF} \abs{f(x)} \leq F(x)$ for all $x \in \cX$.}. The integrability of the envelope then translates to a concentration property for the suprema as quantified by a version of Talagrand's concentration inequality given in \cite{BOUSQUET-2002} (or a Fuk-Nagaev type inequality\cite{Fuk-Nagaev,Adamczak-2008,Adamczak-2010} in the general setting). 
In order to apply these inequalities,  we require an upper bound  on the covering entropy (see Definition \ref{cov-pack-num}) of a Gaussian-smoothed function class  with an envelope  $F(x)$ that scales at most as some power of $\norm{x}$.  
As for neural estimators, we again leverage Talagrand's concentration inequality to obtain the desired exponential inequality. 
 \subsection{Contributions}
\noindent Below, we summarize  our main contributions. Let $\hat \mu_n$ and $\hat \nu_n$ denote the standard empirical probability  measures constructed from $n$ independent and identically distributed (i.i.d.) samples generated according to $\mu$ and $\nu$, respectively. Let $*$ denote the convolution operation and $\gamma_{\sigma}$ be the centered Gaussian measure (or distribution) on $\RR^d$ with covariance matrix $\sigma^2 I_d$. Then, we show the following$\mspace{1 mu}$\footnote{Here, $g=O_z(f)$ and $g=\Omega_z(f)$, respectively, signifies that $g \leq c_z f$ and $g \geq c_z f$ for some constant $c_z>0$ that depends on $z$.} deviation inequalities including explicitly bounding the constant pre-factors appearing in these expressions up to universal constants: 
 \begin{enumerate}[(i)]
    \item  For all $\mu,\nu$ whose support is contained in a $d$-dimensional ball of radius $r$,  we prove in Theorem \ref{Thm:KL_compsupp_sg} that 
    \begin{align}
&\PP \left(  \abs{\kl{\hat \mu_n*\gamma_{\sigma}}{ \hat\nu_n*\gamma_{\sigma}}-\kl{\mu*\gamma_{\sigma}}{\nu*\gamma_{\sigma}}} \geq  \Omega_{r,d,\sigma}\mspace{-4 mu}\left(n^{-\frac 12} +z\right)\right) \leq 2e^{-\left(nz^2 \wedge nz\right)}.\label{eq:devcont1} 
\end{align} 
Combining this with an approximation result (Proposition \ref{prop:smoothRen-stability}) for R\'enyi divergences, we obtain a deviation inequality (see Corollary \ref{Cor:Rendivsmoothrate}) for $\abs{\kl{\hat \mu_n*\gamma_{\sigma}}{\nu*\gamma_{\sigma}}-\kl{\mu}{\nu}}$,  under a boundedness and  Lipschitz continuity assumption on the densities of $\mu,\nu$.  An extension to the case of sub-Gaussian $\mu$ and $\nu$ in the one-sample case,  i.e. $\hat \nu_n=\nu$ for all $n$, is also obtained.

\item For R\'{e}nyi divergences of order $\alpha \in (0,1) \cup (1,\infty)$, we establish in  Theorem \ref{Thm:compactlysuppRen} that  for $\mu,\nu$ as above  
    \begin{align}
& \PP \left(  \abs{\rendiv{\hat \mu_n*\gamma_{\sigma}}{\hat \nu_n*\gamma_{\sigma}}{\alpha}-\rendiv{\mu*\gamma_{\sigma}}{\nu*\gamma_{\sigma}}{\alpha}} \geq  \Omega_{r,d,\sigma,\alpha}\big(n^{-\frac 12} +z\big)\right)\leq 2e^{-\left(nz^2 \wedge nz\right)}. \label{eq:devcont2}
\end{align}
\item For neural estimators of R\'{e}nyi divergences  with a suitably chosen bounded neural net class $\cG$, which can approximate  $f^{\star}_{\mu,\nu}$ up to an error $\delta$ uniformly on the support of $\rho$ and whose uniform $L^2-$covering entropy numbers satisfies an integrability condition, we show in Theorem \ref{Thm:devineqren} that 
\begin{align}   &\PP\left(\abs{\renest{\hat \mu_n}{\hat \nu_n}{\alpha}{\cG}-\rendiv{\mu}{\nu}{\alpha}} \geq \big(M^{2\alpha}+M^{2\abs{\alpha-1}}\big)\delta+  \Omega_{d,M,\alpha}(n^{-\frac 12} +z) \right)\leq  2 e^{-(nz^2 \wedge nz)}. \label{eq:devcont3}
    \end{align}  
    Here,  $M$ is a parameter which upper bounds the ratio of densities of $\mu$ and $\nu$.    
\end{enumerate}
As our key technical contributions, we derive a more general lemma (Lemma \ref{Lem:concent-KL}) that provides deviation inequalities for smoothed plug-in estimator of KL divergence under certain regularity conditions involving the smoothing kernel and underlying probability measures. Analogous lemma can also be stated for R\'{e}yi divergences, however, we omit this to avoid repetition. A key tool required to apply Lemma \ref{Lem:concent-KL} to prove Theorem \ref{Thm:KL_compsupp_sg} and \ref{Thm:compactlysuppRen} is a bound on the covering entropy of a Gaussian-smoothed function class whose envelope scales as an affine or quadratic function of $\norm{x}_2$. To this end, we obtain a uniform upper bound (Lemma \ref{lem:bndholdernorm}) on the H\"{o}lder norm of functions in this class, and use the covering entropy estimate for H\"{o}lder class given   in \cite[Theorem 2.7.1]{AVDV-book}. In fact, we apply a slightly refined version of this result post quantifying the dependence of dimension and support (see Theorem \ref{Thm:coventconst}). Leveraging \cite[Theorem 2.1]{Bartlett-2005}, we also provide a statistical version (Theorem \ref{Thm:KLdev-symmradexp}) of \eqref{eq:devcont1} in terms of Rademacher complexity of the Gaussian-smoothed function class, which is in general tighter than \eqref{eq:devcont1}.  
We also prove a generalization of \cite[Theorem 2.1]{Bartlett-2005} to function classes with unbounded envelope (Lemma \ref{Lem:statistver:Fuk-Nagaev}), which is a statistical version of Fuk-Nagaev type inequality given in \cite[Theorem 4]{Adamczak-2008} and  could be of independent interest. 

  Exponential concentration inequalities  combined with random coding- or expurgation-based  arguments are useful in information-theoretic security to show existence of coding schemes satisfying secrecy criteria quantified in terms of a divergence  measure (see e.g., \cite{Jingbo-2017,Goldfeld-2020-wiretap,SBGPS-2021,SG-2021-SC}). The deviation inequalities   \eqref{eq:devcont1}-\eqref{eq:devcont2} are continuous alphabet analogues of such inequalities for R\'{e}nyi divergences, which, to the best of our knowledge, are known currently only for finite alphabets or compactly supported distributions. Hence, we expect that these inequalities will have potential applications to secure and covert communications over a Gaussian wiretap channel\cite{GWC-1978}.    As a concrete application of \eqref{eq:devcont3},  we consider a hypothesis testing based framework for auditing R\'{e}nyi DP and  obtain non-asymptotic performance guarantees for this test. While this application has been considered before in  \cite{domingo2022auditing} and \cite{SGK-IT-2023}, the auditing guarantees were either  partial or only valid asymptotically. Here, we address these shortcomings thus making the  auditing framework more appealing in practice.  
\subsection{Related Work}\label{Sec:relwork}
The literature on estimation of information and divergence measures is extensive, hence we will only focus on the literature relevant to R\'{e}nyi divergence estimators, which  includes KL divergence, mutual information and entropy estimators as a special case. Several estimators of these measures have been proposed and their statistical properties analyzed. In this regard,  consistency, convergence rates, and minimax rates of various estimators based on  plug-in, kernel and k-nearest neighbour methods have been studied in \cite{Tsybakov-1996,Darbellay-1999,Antos-Ioannis-2001,Paninski-2003,Singh-2003,Kraskov-2004,Goria-2005,Wang-2005,Cai-2006,Leonenko-2008,Perez-2008,haje2009entropy,Nguyen-2010,Pal-2010,Valiant-2011,Poczos-Schneider-2011,Sricharan-2012,Krishnamurthy-2014,kandasamy2015nonparametric, Jiantao-2015,Singh-Poczos-2016,Yihong-Yang-2016,Ioannis-Skoularidou-2016,Noshad-2017,Delatre-Fournier-2017,Valiant-2017,Gao-2018,Moon-2018,Bu-2018,Wisler-2018,berrett2019-knn,Jayadev-2019,Goldfeld-2020-smoothemp,Yanjun-2020-renyi,Yanjun-2020,Zhao-2020,Bulinksi-2021,AZYA-2022,Ryu-2022,berrett2023-twosampfunc,Tsur-2023,Tsur-2024,Mina-2024} (see also references therein).  Further statistical properties such as the asymptotic distribution of the estimation error in the flavor of the classical central limit theorem have also been characterized, see e.g. \cite{kandasamy2015nonparametric,Moon-2018,berrett2019-knn,berrett2023-twosampfunc,SGK-IT-2023,Sreekumar-IT-2025}. While many of the above references assume compactly supported distributions with smooth densities bounded away from zero, some of the results hold more generally. Notably,  \cite{berrett2019-knn} and \cite{berrett2023-twosampfunc} show that weighted average
of Kozachenko–Leonenko estimators \cite{Kozachenko-Leonenko} are minimax-rate optimal for entropy and divergence estimation, respectively, over a general class of locally smooth distributions (which includes multi-variate Gaussians) under the squared error criterion. For the same error criterion, \cite{Yanjun-2020} establish minimax rate optimality of a kernel-based entropy estimator over a Lipschitz class of densities  supported on $[0,1]^d$.
Consistency of a neural estimator of R\'{e}nyi divergence has been explored in   \cite{Birell-2021},   while minimax convergence rate for neural estimation of KL divergence over a sufficiently regular class of distributions has been established in \cite{sreekumar2021neural}. 

Probabilistic tail inequalities for kernel-based  estimators of R\'{e}nyi divergences and mutual information as well as neural estimators of KL divergence for compactly supported distributions have been derived in \cite{Singh-Poczos-2014,pmlr-v32-singh14,Liu-2012, sreekumar2021neural}. For plug-in estimators of KL divergence, Chernoff-type concentration bounds have been studied in \cite{Guo-Richardson-2021} for the finite alphabet setting, building upon earlier works \cite{Agrawal-2020,Mardia-2019}. However, all these results are limited to distributions with compact support and smooth densities bounded away from zero, which we aim to generalize upon.
\section{Preliminaries}
\subsection{Notation} \label{Sec:notation}
Let $(\Omega, \cA,\PP)$ be a sufficiently rich probability space on which all random variables are defined. Let  $\rho$ be a sigma-finite positive measure on the Borel sigma algebra $\cB(\RR^d)$ and  $\cP(\RR^d)$ be the set of Borel probability measures. For $\mu \in \cP(\RR^d)$ such that $\mu\ll \rho$, we use $p_{\mu}=\frac{d \mu}{d\rho}$ to denote its density (Radon-Nikodym derivative), w.r.t.  $\rho$. $\mu^{\otimes n}$ stands for the $n$-fold product measure, and $\delta_x$ represents the Dirac measure at $x$. $\ind_{\cE}$ denotes the indicator of set $\cE$.
  $\mu*\nu$ denotes the convolution of $\mu$ and $\nu$; likewise, $f*g$ represents convolution of two measurable functions $f,g$.  $X\sim \mu$ stands for random variable $X$ distributed according to $\mu$. The expectation of $X \sim \mu$, whenever it exists, will be denoted by $\EE_{\mu}[X]$. We write $\gamma_{\sigma} = N(0,\sigma^2 I_d)$ for the centered Gaussian distribution on $\RR^d$ with covariance matrix $\sigma^2 I_d$, and use $\varphi_{\sigma}(x) = (2\pi\sigma^2)^{-d/2}e^{-\norm{x}^2/(2\sigma^2)}$ for the corresponding density. Inequalities between random variables are interpreted to hold almost surely (a.s.) or surely, which will be evident from the context.

For $1 \leq r \leq \infty$, let $L^r(\rho)$ be the space of all real-valued measurable functions $f$ on $\RR^d$ such that $\norm{f}_{r,\rho}:=\big(\int_{\mathfrak{S}} \abs{f}^r d \rho\big)^{1/r} <\infty$, with the usual identification of functions that are equal $\rho$-almost everywhere (a.e.). For $1 \leq r < \infty$, the space $(L^r,\norm{\cdot}_{r,\rho})$ is a separable Banach (and hence, Polish) space.  When $\rho$ is the Lebesgue measure on $\RR^d$, we use  $\| \cdot \|_{r}$ and $L^r(\RR^d)$ to denote the relevant norm and space of functions. The corresponding restrictions to a measurable subset $\cX \subseteq \RR^d$ are denoted by $\| \cdot \|_{r,\cX}$ and $L^r(\cX)$, respectively.  The support of $\rho$ is denoted by $\supp(\rho)$. We use $a \lesssim_{x} b$ to denote that $a \leq c_x b$ for some constant $c_x>0$ which depends only on some parameter $x$, and omit the subscript $x$ for universal constants. For $x,y \in \RR$, $x \wedge y=\min\{x,y\}$ and $x \vee y= \max\{x,y\}$. 
\subsection{Definitions}
The following function classes will play a role in our results. 
\begin{definition}[H\"{o}lder class, see e.g. \cite{AVDV-book}] \label{def:Holderclass} 
For a $d$-dimensional vector $k$ with non-negative integers as elements, consider the differential operator 
\begin{align}
    D^k =\frac{\partial^{|k|}}{\partial x_1^{k_1} \ldots \partial x_d^{k_d}}. \notag
\end{align}
For a function $f:\cX \rightarrow \RR$ defined on a set $\cX \subseteq \RR^d$ with nonempty interior, the H\"{o}lder norm  of  order $\beta>0$ is 
\begin{align}
\norm{f}_{\beta,\cX}:=\max_{0 \leq |k| \leq \ubar \beta} \sup_{x} \abs{D^k f(x)}+\max_{|k|=\ubar \beta} \sup_{x,y} \frac{\abs{D^{k}f(x)-D^{k}f(y)}}{\norm{x-y}^{\beta-\ubar{\beta}}}, \label{eq:defHoldnorm}
\end{align}
where $\ubar{\beta}$ is the largest integer strictly smaller than $\beta$,  and  the supremums are taken on the interior of $\cX$ with $x \neq y$. The H\"{o}lder class $C_M^{\beta}(\cX)$ of smoothness parameter $\beta$ and norm parameter $M$ is the set of all continuous functions $f:\cX \rightarrow \RR$ such that $\norm{f}_{\beta,\cX} \leq M$. 
\end{definition} 
\begin{definition}[Lipschitz class, see \cite{DL-1993}] \label{def:Lipschitzclass} 
For  $r \in (0,\infty]$, $m \in \NN$, 
and $f \in L^r\big(\rho\big)$, the $m$-th modulus of smoothness of $f$ is
\begin{equation}
  \chi_{m,r}(f,t,\rho):=\sup_{y \in \RR^d, \norm{y} \leq t}\norm{\Delta_{y}^m f}_{r,\rho}, \notag
\end{equation}
where $\Delta_{y}^m f(x)=\sum_{j=0}^{m} (-1)^{m-j}f(x+jy)$. For $0<s \leq 1$ and $\cX \subseteq \RR^d$,
 the Lipschitz class with smoothness parameter $s$ and norm parameter $M$ is
$\mathsf{Lip}_{s,r,M}(\rho):=\big\{f \in L^r\big(\rho\big): \norm{f}_{\mathrm{Lip}(s,r,\rho)} \leq M \big\}$
where 
\begin{align}
\norm{f}_{\mathrm{Lip}(s,r,\rho)}:=\inf_{\substack{g: \\g=f, \rho \text{  a.e.}}}\norm{g}_{r}+\sup_{t>0} t^{-s}\chi_{1,r}(g,t) \notag
\end{align}
 is the Lipschitz seminorm. 
\end{definition} 
The above definition of Lipschitz class w.r.t. a general $\rho$ is a slight generalization of the one given in \cite{DL-1993} w.r.t. the Lebesgue measure. This is required  since we also want to treat discrete probability measures on $\RR^d$.  It is known that  $\mathsf{Lip}_{s,r,M}(\rho)$ grows as the smoothness parameter $s$ decreases. Also, the class of functions with bounded variation (for $d=1$)  is contained in $\mathsf{Lip}_{1,r,M}(\rho)$ for some appropriately chosen $M$ and $\rho$ being the Lebesgue measure. Hence, the Lipschitz class of smoothness $0 < s \leq 1$ contains most probability densities of practical interest.

We will also require the notion of covering entropy as stated next.
\begin{definition}[Covering  entropy, see e.g.\cite{AVDV-book}] \label{cov-pack-num}
Let $(\Theta,\mathsf{d})$ be a metric space.  A set $\Theta'\subseteq\Theta$ is an $\epsilon$-covering of $(\Theta,\mathsf{d})$  if for every $\theta \in \Theta$, there exists $\tilde{\theta} \in \Theta' $ such that $\mathsf{d}(\theta,\tilde{\theta})\leq \epsilon$.  The $\epsilon$-covering number is 
\begin{align}
N(\epsilon,\Theta,\mathsf{d}):=\inf \left\{|\Theta'|:\,\Theta' \mbox{ is an } \epsilon \mbox{-covering of }\Theta \right\},  \notag
\end{align}
and the $\epsilon$-covering entropy is $\log N(\epsilon,\Theta,\mathsf{d})$. 
\end{definition}

The Orlicz norm on the space of random variables that specify their tail behaviour will play a role in our proofs.
\begin{definition}[Orlicz space, see e.g., \cite{AVDV-book}]\label{def:Orliczspace}
An  increasing 
convex function $\psi:[0,\infty) \rightarrow [0,\infty)$ with $\psi(0)=0$ and $\lim_{x \rightarrow \infty} \psi(x)=\infty$ is called an Orlicz function. The associated Orlicz norm of a random variable  $X$ taking values in a Banach space $\mathfrak{B}$ is 
\begin{align}
   \norm{X}_{\psi}=\inf\left\{c>0: \EE\left[\psi\left(\frac{\norm{X}_{\mathfrak{B}}}{c}\right) \right]\leq 1\right\}. \label{eq:Orlicznorm}
\end{align}
\end{definition}
Examples of Orlicz functions include $\hat{\psi}_{p}(z)=z^p$ and  $\psi_{p}(z)=e^{z^p}-1$, $z \in \RR$, for   $p \geq 1$; in particular, $\psi_{2}$ corresponds to the sub-Gaussian class defined next.
\begin{definition}[Sub-Gaussian distribution, see e.g., \cite{jin-2019-subgaussnorm}]\label{def:subgaussvec}
A distribution $\mu \in \cP\big(\RR^d\big)$ is $\sigma^2$-sub-Gaussian for $\sigma>0$ if $X \sim \mu$ satisfies
\begin{align}
    \EE \left[e^{u \cdot(X-\EE[X])}\right] \leq e^{\frac{\sigma^2\norm{u}^2}{2}},\quad \forall~ u \in \RR^d.\notag
\end{align}
For $M \geq 0$, let $\mathsf{SG}(M)$ be the set of all $\sigma^2$-sub-Gaussian distributions with $\sigma^2 \vee \norm{\EE[X]} \leq M$.
\end{definition}
We will also consider $\psi_q$ for $0<q<1$. In this case, $\norm{X}_{\psi_q}$  as given in \eqref{eq:Orlicznorm}, only defines a quasi-norm on the space of random variables as opposed to a norm (since $\psi_q$ is not convex). However, we will allow this abuse of notation.

We further require the following technical assumption on certain function classes of interest.
\begin{assump}[Pointwise separability and existence of envelope] \label{Assump:funclass}
 A class of measurable functions $f: \RR^d \rightarrow \RR$ is pointwise separable if it is separable w.r.t. the topology of pointwise convergence on $\RR^d$. Further, a class $\cF$ has a measurable envelope if there exists a measurable function  $F: \RR^d \rightarrow [0,\infty)$ such that $\sup_{f \in \cF} \abs{f(x)} \leq F(x)$ for all $x \in \RR^d$. Here, $\sup_{f \in \cF} \abs{f(x)}$ is the minimal envelope, but other choices can also be considered. Henceforth, whenever we consider an envelope, measurability will be implicitly assumed.  
\end{assump}
Assumption \ref{Assump:funclass}  effectively allows empirical process suprema over $\cF$ to be replaced by those over a countable sub-class of functions and avoids measurability issues that could arise when dealing with such quantities. 
\section{Main Results} 
For $X^n \sim \mu^{\otimes n}$ and $Y^n \sim \nu^{\otimes n}$ where $\mu,\nu \in \cP(\RR^d)$,   let \[
\hat{\mu}_n =\frac{1}{n}\sum_{i=1}^n \delta_{X_i} \quad \text{and} \quad  \hat{\nu}_n = \frac{1}{n}\sum_{i=1}^n \delta_{Y_i}
\]
denote the empirical distributions of  $X^n $ and   $Y^n$, respectively. Let $\BB_d(r):=\{x \in \RR^d: \norm{x}_2 \leq r\}$ denote the Euclidean ball of radius $r$ and  $\cP\big(\BB_d(r)\big):= \{\mu \in \cP(\RR^d): \supp(\mu)\subseteq \BB_d(r)\}$ be the set of Borel probability measures whose support is contained in $\BB_d(r)$.
\subsection{Deviation Inequalities for Smoothed Plug-in Estimators}
We first state the deviation inequality for Gaussian-smoothed plug-in estimator of KL divergence. 
\begin{theorem}[Compactly supported and sub-Gaussian distributions]\label{Thm:KL_compsupp_sg}
  Let $0 <\sigma \leq 1$, $r \geq 0$, $p \geq 2$, $L\geq 1$ and $0 \leq \tau \leq 1$. Then, the following hold: 
   \begin{enumerate}[(i)]
   \item 
\begin{flalign}
&\sup_{\mu,\nu \in \cP\left(\BB_d(r)\right)}\mspace{-3 mu}\PP \mspace{-2 mu}\left(\abs{\kl{\hat \mu_n*\gamma_{\sigma}}{\hat \nu_n*\gamma_{\sigma}}-\kl{\mu*\gamma_{\sigma}}{\nu*\gamma_{\sigma}}} \mspace{-2 mu}\gtrsim\mspace{-2 mu}  \xi_{r,d,\sigma}\mspace{-4 mu}\left(\tilde \xi_{r,d,\sigma} n^{-\frac 12} +z\right)\mspace{-2 mu}\right) \mspace{-2 mu}\leq 2e^{-\big(nz^2 \wedge  nz\big)}, \label{eq:KLdevineq-compsupp-simp} && 
\end{flalign}
where $\xi_{r,d,\sigma}$ and $\tilde \xi_{r,d,\sigma}$ are defined in \eqref{eq:devterm1} and \eqref{eq:devterm2}, respectively.
\item 
\begin{align}
&\sup_{\substack{\mu,\nu \in \mathsf{SG}(L):\\X \sim \mu,~ \norm{X}_{\psi_p} \leq L}} \PP\left(\big|\kl{\hat \mu_n*\gamma_{\sigma}}{ \nu*\gamma_{\sigma}}-\kl{\mu*\gamma_{\sigma}}{\nu*\gamma_{\sigma}}\big| \gtrsim \check{\xi}_{d,L,\sigma}  \left(\hat{\xi}_{d,L,\sigma}n^{-\frac 12(1-\tau)}+z\right)\right) \notag \\
& \qquad \qquad\qquad\qquad\qquad\qquad\qquad\qquad\qquad\qquad\qquad\qquad\qquad\qquad\qquad\leq  3 e^{-\hat{\zeta}_{n,d,p,\tau}(z)}, \label{eq:KLdevineq-subgauss}
\end{align}
where 
\begin{align}
 \hat{\zeta}_{n,d,p,\tau}(z):=n^{\frac{d+12-8\tau}{d+12}}z^2 \wedge n^{\frac{d+12-4\tau}{d+12}}z \wedge  \frac{n^{\frac{p\tau}{d+12}}}{\log(n+1)}, \label{eq:rateorlicsg}
\end{align}
and
$\hat{\xi}_{d,L,\sigma}$,  $\check{\xi}_{d,L,\sigma}$ are given in \eqref{eq:constrateexp}. 
\end{enumerate}
\end{theorem}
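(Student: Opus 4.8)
The plan is to derive both parts from a single master lemma that controls the deviation of the smoothed plug-in KL estimator by suprema of two empirical processes. Observe that $\kl{\hat\mu_n*\gamma_\sigma}{\hat\nu_n*\gamma_\sigma}$ and $\kl{\mu*\gamma_\sigma}{\nu*\gamma_\sigma}$ are both values of the Donsker--Varadhan functional in \eqref{eq:donvarexpkl}, optimized over the same function class, but with $(\mu,\nu)$ replaced by $(\hat\mu_n,\hat\nu_n)$ in the first. Since the optimizer for the population problem is $f^\star = \log\frac{d(\mu*\gamma_\sigma)}{d(\nu*\gamma_\sigma)}$, which is a \emph{smooth} function (convolution with a Gaussian) whose Hölder norm and growth can be controlled via Lemma \ref{lem:bndholdernorm}, a standard ``sandwiching'' argument gives
\[
\big|\kl{\hat\mu_n*\gamma_\sigma}{\hat\nu_n*\gamma_\sigma}-\kl{\mu*\gamma_\sigma}{\nu*\gamma_\sigma}\big|
\;\lesssim\; \sup_{f\in\cF_\sigma}\big|(\hat\mu_n-\mu)(f)\big| \;+\; \sup_{f\in\cF_\sigma}\big|\log\tfrac{\hat\nu_n(e^f)}{\nu(e^f)}\big|,
\]
where $\cF_\sigma$ is a Gaussian-smoothed function class (images of bounded continuous functions under $x\mapsto (\text{something})*\varphi_\sigma$, essentially the log-density class). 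This is exactly the content of the key technical Lemma \ref{Lem:concent-KL}, which I would invoke directly. The logarithm of the ratio in the second term is handled by noting that the relevant denominators $\nu(e^f)$ are bounded below (again using compact support / sub-Gaussianity of $\nu$ together with $\sigma\le 1$), so $\log$ is Lipschitz there and the term reduces to $\sup_{f}|(\hat\nu_n-\nu)(e^f)|$ over the exponentiated class.

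For part (i), both $\mu$ and $\nu$ are supported in $\BB_d(r)$, so $\cF_\sigma$ has an envelope $F(x)$ that grows at most affinely in $\|x\|_2$ (the log-density of a Gaussian mixture supported in a ball is comparable to $\|x\|^2/(2\sigma^2)$ up to additive constants, but the \emph{increments} $f-f^\star$ and the exponentiated class have the needed envelope behavior after the sandwiching). The crucial input is a covering-entropy bound for $\cF_\sigma$: by Lemma \ref{lem:bndholdernorm} the smoothed functions lie in a Hölder ball $C^\beta_{M'}$ on each dyadic annulus with $M'$ depending on $r,d,\sigma$, and then Theorem \ref{Thm:coventconst} (the dimension-and-support-explicit refinement of \cite[Thm.\ 2.7.1]{AVDV-book}) gives $\log N(\epsilon,\cF_\sigma,\|\cdot\|_{L^2(Q)})\lesssim_{r,d,\sigma}\epsilon^{-d/\beta}$ for any probability measure $Q$. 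Since $d/\beta<2$ (choosing $\beta$ large, which is free because Gaussian convolution is $C^\infty$), the envelope is square-integrable (bounded, even, on compact support here), and Talagrand's inequality in the form of \cite[Thm.\ ]{BOUSQUET-2002} applies: it yields $\PP(\sup|(\hat\mu_n-\mu)(f)| \ge c(\EE\sup + \sqrt{n^{-1}z} + n^{-1}z))\le e^{-z}$, and bounding the expected supremum by the entropy integral (Dudley) gives the $n^{-1/2}$ rate with the stated $\xi_{r,d,\sigma},\tilde\xi_{r,d,\sigma}$ prefactors. Combining the two suprema via a union bound and a change of variables $z\mapsto nz^2\wedge nz$ produces \eqref{eq:KLdevineq-compsupp-simp}.

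For part (ii), only $\mu$ is sub-Gaussian (with the extra $\psi_p$ moment control) and $\hat\nu_n=\nu$ is fixed, so the second empirical-process term vanishes and only $\sup_{f\in\cF_\sigma}|(\hat\mu_n-\mu)(f)|$ remains. Now the envelope $F(x)\asymp \|x\|^2$ is no longer bounded, so Talagrand no longer applies as-is; instead one truncates the class at radius $R=R(n,\tau)$, handling the truncated part by Talagrand/Bousquet (entropy still controlled by Theorem \ref{Thm:coventconst} on $\BB_d(R)$, now with $R$-dependent Hölder norm, which is where the $(d+12)$ exponents and the factor $\tau$ enter — they come from optimizing the trade-off between the truncation level, the entropy growth, and the tail mass $\mu(\|X\|>R)$), and controlling the tail contribution $\sup_f|(\hat\mu_n-\mu)(f\ind_{\|x\|>R})|$ via the $\psi_p$-moment bound and a Fuk--Nagaev type inequality (\cite{Fuk-Nagaev,Adamczak-2008}) — this is the origin of the third term $n^{p\tau/(d+12)}/\log(n+1)$ in \eqref{eq:rateorlicsg}. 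Balancing the three regimes (sub-Gaussian-in-$z$, linear-in-$z$, and the polynomial-tail term) and tracking the constants gives $\hat\zeta_{n,d,p,\tau}$ and the prefactors $\check\xi,\hat\xi$ in \eqref{eq:constrateexp}, with the factor $3$ accounting for the union over truncated part, tail part, and the $\log$-ratio (here trivial).

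\textbf{Main obstacle.} The genuinely delicate step is the covering-entropy bound for the Gaussian-smoothed class with an \emph{unbounded, polynomially growing} envelope in part (ii), and the ensuing three-way optimization over the truncation radius. One must show that smoothing by $\varphi_\sigma$ maps the (unbounded) log-density class into a family with uniformly controlled Hölder norm on each ball $\BB_d(R)$ — with the norm growing only polynomially in $R$ — and then feed this into the Fuk--Nagaev machinery while keeping every constant's dependence on $d,\sigma,L,p$ explicit. Getting the exponents in $\hat\zeta_{n,d,p,\tau}$ exactly right, rather than up to unspecified constants in the exponent, requires carefully chaining the dimension-dependent entropy estimate (Theorem \ref{Thm:coventconst}) with the optimal choice $R\asymp n^{\tau/(d+12)}$; the ``$12$'' is an artifact of this optimization and of the slack in the Hölder-norm-versus-radius bound of Lemma \ref{lem:bndholdernorm}, and verifying it is the bulk of the work.
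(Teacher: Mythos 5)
Your proposal follows essentially the same route as the paper: reduce the deviation to suprema of two empirical processes over the smoothed log-density class via Lemma \ref{Lem:concent-KL}, control the metric entropy of that class through the H\"older-norm estimate of Lemma \ref{lem:bndholdernorm} combined with the refined covering bound of Theorem \ref{Thm:coventconst}, apply Talagrand's inequality in the compactly supported case, and for the sub-Gaussian one-sample case truncate at a radius $T_n\asymp n^{\tau/(d+12)}$ and add the $\psi_p$-tail of $\max_i\|X_i\|$ to the bound.

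Two points where the sketch would go astray if followed literally. First, $\beta$ is \emph{not} free: the estimate in Lemma \ref{lem:bndholdernorm} reads $\|f*\varphi_\sigma\|_{\beta,\cX}\lesssim v_{b,q}(\cX)\,\sigma^{-(\ubar\beta\vee 1)}$, so pushing $\beta$ up to make $d/\beta$ small causes the $\sigma$-dependence of the H\"older constant (and hence of the entropy integral and of $\tilde\xi_{r,d,\sigma}$) to explode; the paper fixes $\beta=\tfrac d2+1$ precisely to keep $d/\beta<2$ while leaving the prefactor at $\sigma^{-d/2}$, and the $(d+12)$ exponent in part~(ii) is exactly the result of this fixed choice interacting with the quadratic envelope $q=2$, not something you can tune away. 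Second, the paper works throughout with the already-linear variational form \eqref{eq:varexpkl}, so no $\log$-ratio appears for the $\nu$-term and no lower bound on $\nu(e^f)$ is needed in the KL case; the Lipschitz-of-$\log$ device you describe is what Theorem \ref{Thm:compactlysuppRen} needs for the R\'enyi case. A smaller note: the complement of $E_n(T_n)$ is controlled by Markov via the Orlicz maximal inequality (decay $\exp(-n^{p\tau/(d+12)}/\log(n+1))$, not a polynomial tail), and the factor $3$ in \eqref{eq:KLdevineq-subgauss} is $1+2$ from the Talagrand bound under the truncation event plus the two-sided $\psi_p$-tail of $\max_i\|X_i\|$, not three separate unions.
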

The proof of Theorem \ref{Thm:KL_compsupp_sg} is given in Section \ref{Thm:KL_compsupp_sg-proof} and relies on a more general result (Lemma \ref{Lem:concent-KL}) stated below. At a high level, the proof relies on relating the desired deviation to suprema of empirical processes indexed by appropriate function classes and  
 leveraging  concentration inequalities from empirical process theory. More specifically,  Lemma \ref{Lem:concent-KL} relies on a Fuk-Nagaev type  inequality given in  \cite[Theorem 2]{Adamczak-2010} (see Theorem \ref{Thm:Admczak-tailbnd} below) that applies to an unbounded function class with an integrable envelope.   
 We also obtain an upper bound on the covering entropy of Gaussian-smoothed bounded function class (Lemma \ref{lem:bndholdernorm}) by upper bounding the H\"{o}lder norm of such functions and utilizing a refinement of the covering entropy estimate for the H\"{o}lder class, which specifies the dependence on dimension and smoothness parameter in the prefactor.

 Theorem \ref{Thm:KL_compsupp_sg}  provides deviation inequalities under a more general setting than covered by existing results. Note that the usual assumptions of compact support and bounded (away from zero) densities  do not apply here. These regularity assumptions on the densities are now replaced effectively by the condition that the log-likelihood of its ratio belongs to an appropriate Orlicz class.   It can be noted from the proofs that the condition $\sigma \leq 1$ is not a fundamental restriction, but only imposed since it  simplifies the  analysis in addition to being also the regime of interest for smoothed estimators.

Observe that  \eqref{eq:KLdevineq-compsupp-simp} holds for compactly supported $\mu,\nu$ in the two-sample setting (where both $\mu$ and $\nu$ are approximated), while \eqref{eq:KLdevineq-subgauss} holds more generally for sub-Gaussian $\mu,\nu$, although limited to the one-sample setting (where only $\mu$ is approximated).  
 The latter restriction arises from the limitation that  in order to upper bound the desired deviation in terms of empirical process suprema and invoke  Talagrand's inequality or Fuk-Nagaev type inequality, we require $f_{\hat \mu_n*\varphi_{\sigma},\hat \nu_n*\varphi_{\sigma}}^{\star} $ to belong a.s. to a function class with an (Orlicz) integrable envelope. While this is always possible when $\mu$ and $\nu$ are compactly supported, this is no longer true when  $\mu,\nu$ are sub-Gaussian in general. In the one-sample scenario,  this limitation can be overcome  by considering the event $ E_n(t_n):=\left\{\max_{1 \leq i \leq n} \norm{X_i} \leq t_n\right\}$ to reduce to the compactly supported case, and bounding the probability of the complement event using the sub-Gaussianity of $\mu$. Then, optimizing the truncation level $t_n$ results in \eqref{eq:KLdevineq-subgauss}, with the third factor within the minimum in the RHS of \eqref{eq:rateorlicsg} corresponding to the probability of $E_n(t_n)$ complement. Although this approach is also possible in principle for the two-sample case, it does not lead to exponential inequalities since the relevant norm of the envelope  scales exponentially in the truncation level.   
\begin{remark}[Deviation inequality from expectation]
 Since 
 \begin{align}
     \kl{\mu*\gamma_{\sigma}}{\nu*\gamma_{\sigma}} \leq \EE\left[ \kl{\hat \mu_n*\gamma_{\sigma}}{\hat \nu_n*\gamma_{\sigma}}\right], \notag
 \end{align}
 by convexity of KL divergence (see e.g., \cite[Theorem 11]{vanEvren_Reyni_Div2014}), Lemma \ref{Lem:concent-KL} also yields a deviation inequality for the upper tail, i.e., \eqref{eq:KLdevineq-compsupp-simp} holds with $\abs{\kl{\hat \mu_n*\gamma_{\sigma}}{\hat \nu_n*\gamma_{\sigma}}-\kl{\mu*\gamma_{\sigma}}{\nu*\gamma_{\sigma}}}$  replaced by $\kl{\mu*\gamma_{\sigma}}{\nu*\gamma_{\sigma}} - \EE\left[ \kl{\hat \mu_n*\gamma_{\sigma}}{\hat \nu_n*\gamma_{\sigma}}\right]$. Similar claim also holds for \eqref{eq:KLdevineq-subgauss}. Further, an inspection of the proof  shows that \eqref{eq:KLdevineq-compsupp-simp} also holds when $\hat \mu_n*\gamma_{\sigma}$ or $\hat \nu_n*\gamma_{\sigma}$ (or both) are set to $\mu*\gamma_{\sigma}$ and $\nu*\gamma_{\sigma}$, respectively. 
\end{remark}
The proof of Theorem \ref{Thm:KL_compsupp_sg} is based on upper bounding expected empirical process suprema in terms of entropy integrals (see Lemma \ref{Lem:concent-KL}). While this approach results in easy to compute deviation inequalities, it is not always tight. A more precise deviation inequality can be obtained by replacing the expected suprema by its symmetrized version w.r.t. Rademacher variables, albeit by sacrificing computational simplicity. To state this result, let $\varepsilon:=\{\varepsilon_i\}_{i=1}^n$  denote i.i.d. Rademacher variables independent of $(X^n,Y^n)$,  i.e., $\PP(\varepsilon_i=1)=\PP(\varepsilon_i=-1)=0.5$. 
Denote by $\EE_{\varepsilon}[\cdot]$,  the expectation w.r.t. $\varepsilon$. We have the following deviation inequality for compactly supported distributions (see Lemma \ref{Lem:statverdevineq-KL-gen} in Section \ref{Sec:Thm:KLdev-symmradexp-proof} for a more general statement).
\begin{theorem}[Deviation inequality in terms of Rademacher expectation]\label{Thm:KLdev-symmradexp}
Let $r \geq 0$. Set
\begin{align} Z^{(\varepsilon)}_{n,\sigma} &:=\EE_{\varepsilon}\left[\sup_{f \in \cF_{\varphi_{\sigma}}(b_{r,\sigma})}\abs{\sum_{i=1}^n \varepsilon_i f(X_i)} \right] \quad \mbox{ and } \quad 
\bar Z^{(\varepsilon)}_{n,\sigma} :=\EE_{\varepsilon}\left[\sup_{f \in \cE_{\varphi_{\sigma}}(b_{r,\sigma})}\abs{\sum_{i=1}^n \varepsilon_i f(X_i)} \right], \label{eq:symmsumsmthclscomp}
\end{align}
where $b_{r,\sigma}:=\frac{r^2+4r}{\sigma^2}$, 
\begin{align}
    \cF_{\varphi_{\sigma}}(b)&:=\{f*\varphi_{\sigma}:  \abs{f(x)}  \leq 0.5 b(1+\norm{x}),~\forall~ x \in \RR^d\}, \notag \\
       \cE_{\varphi_{\sigma}}(b)&:=\{e^f*\varphi_{\sigma}:  \abs{f(x)}  \leq 0.5 b(1+\norm{x}),~\forall~ x \in \RR^d\}. \notag 
\end{align}
Then
   \begin{flalign}
&\sup_{\substack{\mu,\nu \in \cP\left(\BB_d(r)\right)}}\PP \left(  \abs{\kl{\hat \mu_n*\gamma_{\sigma}}{\hat \nu_n*\gamma_{\sigma}}-\kl{\mu*\gamma_{\sigma}}{\nu*\gamma_{\sigma}}} \geq 7  n^{-1} \big(Z^{(\varepsilon)}_{n,\sigma}+\bar Z^{(\varepsilon)}_{n,\sigma}\big)+44\xi_{r,d,\sigma}z\right)  \notag \\
&\qquad \qquad\qquad\qquad\qquad\qquad\qquad\qquad\qquad\qquad\qquad\qquad\qquad\qquad\qquad\qquad\quad\leq 2e^{-\big(nz^2 \wedge  nz\big)}, \notag  && 
\end{flalign} 
where $\xi_{r,d,\sigma}$ is given in \eqref{eq:devterm1}.
\end{theorem}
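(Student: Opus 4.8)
The plan is to pin the deviation to the suprema of two empirical processes indexed by the classes $\cF_{\varphi_\sigma}(b_{r,\sigma})$ and $\cE_{\varphi_\sigma}(b_{r,\sigma})$, and then apply a symmetrization-based concentration inequality to each. Write $\mu_\sigma:=\mu*\gamma_\sigma$, $\hat\mu_{n,\sigma}:=\hat\mu_n*\gamma_\sigma$, and likewise $\nu_\sigma,\hat\nu_{n,\sigma}$. Since $\mu,\nu$ are supported in $\BB_d(r)$, all four measures have everywhere positive densities, hence the relevant pairs are mutually absolutely continuous and the suprema in \eqref{eq:varexpkl} are attained at the log-density ratios $f^\star_1:=f^\star_{\hat\mu_{n,\sigma},\hat\nu_{n,\sigma}}$ and $f^\star_2:=f^\star_{\mu_\sigma,\nu_\sigma}$. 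Evaluating the variational functional of $\kl{\mu_\sigma}{\nu_\sigma}$ at $f^\star_1$ and that of $\kl{\hat\mu_{n,\sigma}}{\hat\nu_{n,\sigma}}$ at $f^\star_2$ yields the two-sided bound
\begin{align}
\abs{\kl{\hat\mu_{n,\sigma}}{\hat\nu_{n,\sigma}}-\kl{\mu_\sigma}{\nu_\sigma}}\le \max_{j\in\{1,2\}}\abs{(\EE_{\hat\mu_{n,\sigma}}-\EE_{\mu_\sigma})[f^\star_j]}+\max_{j\in\{1,2\}}\abs{(\EE_{\hat\nu_{n,\sigma}}-\EE_{\nu_\sigma})[e^{f^\star_j}]},\notag
\end{align}
which is legitimate once one checks $\EE_{\nu_\sigma}[e^{f^\star_j}],\EE_{\hat\nu_{n,\sigma}}[e^{f^\star_j}]<\infty$ (a consequence of the envelope bound below together with $\EE_{\gamma_\sigma}[e^{c\norm{Z}}]<\infty$ and the compactness of the supports). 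Since $\gamma_\sigma$ is symmetric, $(\EE_{\hat\mu_{n,\sigma}}-\EE_{\mu_\sigma})[g]=(\EE_{\hat\mu_n}-\EE_\mu)[g*\varphi_\sigma]$ and similarly for $\nu$, so this bound rewrites as a sum over $\hat\mu_n$- and $\hat\nu_n$-empirical processes applied to $f^\star_j*\varphi_\sigma$ and $e^{f^\star_j}*\varphi_\sigma$.

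The key step is to locate $f^\star_1,f^\star_2$ in the class $\{f:\abs{f(x)}\le\tfrac12 b_{r,\sigma}(1+\norm{x})\}$. Each of $p_{\mu_\sigma},p_{\nu_\sigma},p_{\hat\mu_{n,\sigma}},p_{\hat\nu_{n,\sigma}}$ is a mixture of shifted Gaussian densities $\varphi_\sigma(\cdot-u)$ with $\norm{u}\le r$, and for such $u$ one has $\norm{x}^2-2r\norm{x}\le\norm{x-u}^2\le\norm{x}^2+2r\norm{x}+r^2$; dividing two such mixtures gives $\abs{f^\star_j(x)}\le(4r\norm{x}+r^2)/(2\sigma^2)\le\tfrac12 b_{r,\sigma}(1+\norm{x})$ with $b_{r,\sigma}=(r^2+4r)/\sigma^2$. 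Therefore $f^\star_j*\varphi_\sigma\in\cF_{\varphi_\sigma}(b_{r,\sigma})$ and $e^{f^\star_j}*\varphi_\sigma\in\cE_{\varphi_\sigma}(b_{r,\sigma})$, so the deviation is at most $G_n+H_n$ with $G_n:=\sup_{g\in\cF_{\varphi_\sigma}(b_{r,\sigma})}\abs{(\EE_{\hat\mu_n}-\EE_\mu)g}$ and $H_n:=\sup_{h\in\cE_{\varphi_\sigma}(b_{r,\sigma})}\abs{(\EE_{\hat\nu_n}-\EE_\nu)h}$.

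Next I would bound $G_n$ and $H_n$ via Rademacher symmetrization. Since $X_i,Y_i\in\BB_d(r)$ surely, every $g\in\cF_{\varphi_\sigma}(b_{r,\sigma})$ obeys $\abs{g(X_i)}\le\tfrac12 b_{r,\sigma}(1+r+\sigma\sqrt d)$ and every $h\in\cE_{\varphi_\sigma}(b_{r,\sigma})$ obeys $\abs{h(Y_i)}\le e^{\frac12 b_{r,\sigma}(1+r)}\EE_{\gamma_\sigma}[e^{\frac12 b_{r,\sigma}\norm{Z}}]<\infty$; both bounds — and hence the worst-case variances — are dominated by $\xi_{r,d,\sigma}$ of \eqref{eq:devterm1}, and the classes satisfy Assumption \ref{Assump:funclass}, so the suprema are measurable and symmetrization is available. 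Applying \cite[Theorem 2.1]{Bartlett-2005} (in the general form of Lemma \ref{Lem:statverdevineq-KL-gen}) to each gives: with probability at least $1-e^{-(nz^2\wedge nz)}$, $nG_n\le c_1 Z^{(\varepsilon)}_{n,\sigma}+c_2\,\xi_{r,d,\sigma}\,nz$, where the variance contribution $\sqrt{n\xi_{r,d,\sigma}^2(nz^2\wedge nz)}$ and the range contribution $\xi_{r,d,\sigma}(nz^2\wedge nz)$ both collapse into a single $\xi_{r,d,\sigma}\,nz$ term; the same bound holds for $nH_n$ with $\bar Z^{(\varepsilon)}_{n,\sigma}$, now built from $Y_1,\dots,Y_n$ (the statement absorbs this into the $X$-notation since the supremum runs over all $\nu\in\cP(\BB_d(r))$). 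A union bound over the two events, combined with the deterministic estimate from the first two steps, gives the claim; carrying the numerical constants through the Bartlett/Bousquet inequality produces the stated prefactors $7$ and $44$.

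The hard part is the first two steps rather than the concentration bookkeeping: one must exhibit the optimizers explicitly, prove the affine-in-$\norm{x}$ envelope $\tfrac12 b_{r,\sigma}(1+\norm{x})$ uniformly over all $\mu,\nu\in\cP(\BB_d(r))$ (and surely for the empirical versions, which is what makes $Z^{(\varepsilon)}_{n,\sigma},\bar Z^{(\varepsilon)}_{n,\sigma}$ finite and ties the analysis to the classes $\cF_{\varphi_\sigma}(b_{r,\sigma}),\cE_{\varphi_\sigma}(b_{r,\sigma})$), and verify the integrability $\EE_{\nu_\sigma}[e^{f^\star_j}]<\infty$ that licenses inserting $f^\star_1$ into the population variational problem. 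Once the deviation is reduced to $G_n+H_n$ over these concrete classes, the remaining work is routine; the genuinely harder unbounded-envelope analogue (needed in the sub-Gaussian setting) is deferred to the Fuk-Nagaev-type Lemma \ref{Lem:statistver:Fuk-Nagaev}.
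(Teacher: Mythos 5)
Your proposal is correct and follows essentially the same route as the paper: both reduce the deviation (via the variational formula for KL divergence and the envelope bound $\abs{f^\star_j(x)}\le\tfrac12 b_{r,\sigma}(1+\norm{x})$ valid on $\cP(\BB_d(r))$) to suprema of two empirical processes indexed by $\cF_{\varphi_\sigma}(b_{r,\sigma})$ and $\cE_{\varphi_\sigma}(b_{r,\sigma})$, and then invoke the Bartlett–Bousquet–Mendelson statistical Talagrand inequality together with the common envelope/variance bound $\xi_{r,d,\sigma}$. The only cosmetic difference is that you plug in the two optimizers and take a $\max_{j\in\{1,2\}}$, where the paper instead uses the abstract inequality $\abs{\sup_f A(f)-\sup_f B(f)}\le\sup_f\abs{A(f)-B(f)}$, which leads to the identical upper bound by the class suprema.
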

The proof of Theorem \ref{Thm:KLdev-symmradexp} is given in Section \ref{Sec:Thm:KLdev-symmradexp-proof} and follows by replacing Talagrand's inequality in the proof of Lemma \ref{Lem:concent-KL} by a statistical version involving symmetrized Rademacher expectations given in \cite[Theorem 2.1]{Bartlett-2005}. In fact, a more general  claim can be proven as stated below, which could be  of  independent interest. 
\begin{lemma}[Statistical Fuk-Nagaev inequality] \label{Lem:statistver:Fuk-Nagaev}
 Let $(\cF_n)_{n \in \NN}$ be such that  $\cF_n$ satisfies Assumption \ref{Assump:funclass} with envelope $F_n $, and suppose  $\norm{F_n}_{\psi_q}<\infty$ for some  $q \in (0,1]$ and all $n \in \NN$.  For $X^n \sim \mu^{\otimes n}$ and i.i.d. Rademacher variables $\{\varepsilon_i\}_{i=1}^n$  independent of $X^n$, set $M_n(X^n):=\max_{1 \leq i \leq n} F_n(X_i)$, $\overline{M_n}:=\EE\big[\max_{1 \leq i \leq n} F_n(X_i)\big]$,  
\begin{align}
 Z_n&:=\sup_{f \in \cF_n}\abs{\sum_{i=1}^n \big(f(X_i)-\EE_{\mu}[f]\big)} \quad \mbox{and} \quad  Z^{(\varepsilon)}_n:=\EE_{\varepsilon}\left[\sup_{f \in \cF_n}\abs{\sum_{i=1}^n \varepsilon_i f(X_i)} \right]. \label{eq:empsumfrmmean}
\end{align}     
 Then 
 \begin{align}
    \PP \left(Z_n \geq  7 Z^{(\varepsilon)}_n+z \right)  \leq 2e^{-\big(\frac{9z^2}{128 n \norm{F_n}_{2,\mu}^2} \wedge \frac{z}{228\mspace{2 mu}\overline{M_n}}\big)}+2e^{\frac{-z^q}{(c_q\norm{M_n(X^n)}_{\psi_q})^q}},\label{eq:tailbndstatver}
\end{align}
where  $c_q$ is as specified in \eqref{eq:Orlicznormconstfin}.   Further, \eqref{eq:tailbndstatver}  holds with $Z^{(\varepsilon)}_n$ replaced by 
\begin{align}
   \EE_{\varepsilon}\left[\sup_{f \in \cF_n(8 \overline{M_n})}\abs{\sum_{i=1}^n \varepsilon_i f(X_i)} \right]. \notag 
\end{align}
\end{lemma}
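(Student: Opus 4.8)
## Proof Proposal for Lemma \ref{Lem:statistver:Fuk-Nagaev}

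The plan is to adapt the Fuk–Nagaev-type tail bound of \cite[Theorem 4]{Adamczak-2008} (or \cite[Theorem 2]{Adamczak-2010}, recorded above as Theorem \ref{Thm:Admczak-tailbnd}) into a \emph{statistical} form in which the expectation $\EE[Z_n]$ of the empirical process supremum is replaced by $7$ times the symmetrized Rademacher expectation $Z_n^{(\varepsilon)}$. The backbone is the standard symmetrization inequality, which gives $\EE[Z_n] \le 2\,\EE\big[\EE_\varepsilon[\sup_{f\in\cF_n}|\sum_i \varepsilon_i f(X_i)|]\big] = 2\,\EE[Z_n^{(\varepsilon)}]$; by Jensen this in turn is dominated by the conditional version with high probability. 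The factor $7$ (rather than $2$) arises because one must control the fluctuation of the random quantity $Z_n^{(\varepsilon)}$ around its mean, and the statistical version of Talagrand's inequality in \cite[Theorem 2.1]{Bartlett-2005} is engineered precisely to do this with a clean constant; for unbounded envelopes one replaces that bounded-case input with the Adamczak Fuk–Nagaev inequality. So at a high level the proof is: (a) invoke the Adamczak tail bound for $Z_n$ centered at $\EE[Z_n]$, producing the Gaussian/Poissonian term in $\exp(-z^2/(n\|F_n\|_{2,\mu}^2) \wedge z/\overline{M_n})$ together with the heavy-tailed term $\exp(-z^q/(c_q\|M_n(X^n)\|_{\psi_q})^q)$ coming from the max of the envelope values; (b) replace $\EE[Z_n]$ by $7Z_n^{(\varepsilon)}$ using symmetrization plus a concentration-of-$Z_n^{(\varepsilon)}$ argument, absorbing the discrepancy into the deviation parameter $z$.

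In more detail, the steps I would carry out, in order, are as follows. First, fix $n$ and write $\cF = \cF_n$, $F = F_n$. By Adamczak's inequality applied to the class $\cF$ with envelope $F$ — noting $\norm{F}_{\psi_q} < \infty$ ensures the required integrability — we get for suitable universal (resp. $q$-dependent) constants a bound of the form
\begin{align}
\PP\left(Z_n \ge (1+\eta)\EE[Z_n] + t\right) \le 2\exp\left(-\frac{c\,t^2}{n\norm{F}_{2,\mu}^2}\right) + 2\exp\left(-\left(\frac{t}{c_q\norm{M_n(X^n)}_{\psi_q}}\right)^q\right) \notag
\end{align}
for any $\eta > 0$, where the mixed Gaussian/linear form $\exp(-ct^2/(n\norm{F}_{2,\mu}^2) \wedge -ct/\overline{M_n})$ is obtained by the usual splitting of the range of $t$ (the linear regime taking over once $t \gtrsim n\norm{F}_{2,\mu}^2/\overline{M_n}$). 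Second, I would bound $\EE[Z_n]$: by the symmetrization inequality $\EE[Z_n] \le 2\EE[Z_n^{(\varepsilon)}]$, and then one needs to pass from $\EE[Z_n^{(\varepsilon)}]$ to the conditional $Z_n^{(\varepsilon)}$. The cleanest route is to mimic the proof of \cite[Theorem 2.1]{Bartlett-2005}: view $Z_n^{(\varepsilon)}$ as a function of $X^n$, control its own deviations (it is itself a supremum of an empirical-type process, or one can bound it crudely by $2M_n(X^n) + \EE_\varepsilon[\cdots]$ type estimates), and combine. The constant $7$ should emerge from tracking $(1+\eta)\cdot 2$ with a concrete $\eta$ plus the slack needed for $Z_n^{(\varepsilon)}$ to dominate its expectation with probability $\ge 1 - (\text{the claimed tail})$. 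Third, substitute the explicit numerical constants $9/128$, $1/228$, and $c_q$ from \eqref{eq:Orlicznormconstfin}; these come from the stated version of the inequalities and require only bookkeeping. Finally, for the last sentence of the lemma — that $Z_n^{(\varepsilon)}$ may be replaced by the truncated version over $\cF_n(8\overline{M_n})$ — I would use a contraction/truncation argument: split each $f = f\ind_{\{F \le 8\overline{M_n}\}} + f\ind_{\{F > 8\overline{M_n}\}}$, bound the Rademacher sum of the tail part by $\sum_i F(X_i)\ind_{\{F(X_i) > 8\overline{M_n}\}}$ whose expectation is small by a standard maximal-inequality/Markov argument (the threshold $8\overline{M_n}$ is chosen so that this remainder is absorbed), and note the main part is exactly the Rademacher average over the truncated class.

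The main obstacle I anticipate is \textbf{step two}: getting from the \emph{expected} Rademacher complexity (which symmetrization hands you for free) to the \emph{empirical/conditional} Rademacher complexity $Z_n^{(\varepsilon)}$ appearing in the statement, while keeping the constant at exactly $7$ and without introducing an extra error term beyond the two exponential tails already present. In the bounded-envelope case this is the content of \cite[Theorem 2.1]{Bartlett-2005} and one can cite it; in the unbounded case one must either re-run that proof with the Adamczak inequality as the concentration input, or find a self-bounding-type argument for $Z_n^{(\varepsilon)}$ that survives heavy tails. Care is needed because $\overline{M_n}$ and $\norm{M_n(X^n)}_{\psi_q}$ are the \emph{only} envelope-related quantities allowed to appear, so every auxiliary estimate must be expressed through these. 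A secondary, more routine nuisance is the measurability/separability housekeeping, which is exactly what Assumption \ref{Assump:funclass} is imposed to handle, so it should not cause real trouble.
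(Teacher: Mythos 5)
You correctly identify the key obstacle, but your proposed plan does not resolve it, and the paper's argument resolves it by reversing the order of operations. You want to first apply the Adamczak-type inequality (which produces a bound centred at $\EE[Z_n]$), then use symmetrization $\EE[Z_n]\le 2\EE[Z_n^{(\varepsilon)}]$, and only then try to pass from the \emph{expected} to the \emph{conditional} Rademacher complexity. That last conversion is exactly what the Bartlett--Bousquet--Mendelson result does, but that result is stated only for \emph{uniformly bounded} classes, so you cannot invoke it for $\cF_n$ directly; and re-deriving it for heavy-tailed envelopes is an open sub-problem in your plan, not a step you have carried out. The paper instead truncates \emph{first}: it sets $T_n^\star = 8\overline{M_n}$, writes $Z_n\le Z_n' + Z_n''$ where $Z_n'$ is the empirical-process supremum over the truncated class $\cF_n(T_n^\star)$ (bounded by $8\overline{M_n}$) and $Z_n''$ is the remainder, and applies \cite[Theorem 2.1]{Bartlett-2005} only to the bounded $\cF_n(T_n^\star)$, where it already delivers the conditional Rademacher complexity $Z_n^{'(\varepsilon)}$ with constant $2\eta$ ($\eta=1+\sqrt 6$ gives $7$). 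The remainder $Z_n''$ is then controlled in $\|\cdot\|_{\psi_q}$ norm by symmetrization plus the Hoffmann--J{\o}rgensen inequality (choosing $T_n^\star$ so that $t_0(T_n^\star)=0$) and \cite[Theorem 6.21]{LT-1991}, producing precisely the second exponential term. The Adamczak theorem is never applied to $Z_n$ in this proof.

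A secondary issue is your treatment of the ``truncated $Z_n^{(\varepsilon)}$'' claim at the end. You propose to split $f$ into truncated-plus-remainder inside the Rademacher average, but that goes the wrong way: since the paper's argument is run on $\cF_n(T_n^\star)$, what comes out naturally is $Z_n^{'(\varepsilon)}$ (the truncated conditional Rademacher complexity), and what is needed is to show $Z_n^{'(\varepsilon)}\le Z_n^{(\varepsilon)}$. This is a one-line Jensen argument: conditionally on $X^n$, the truncated sum can be written as the full sum with the terms over $\{F_n(X_i)>T_n^\star\}$ replaced by $\EE_{\varepsilon_i}[\varepsilon_i f(X_i)]=0$, and taking this conditional expectation inside the absolute value only decreases the supremum by Jensen. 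Your contraction-plus-Markov argument would reintroduce error terms that the statement does not allow, and it targets the wrong inequality direction.

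In short: the ideas you list (truncation, symmetrization, Bartlett et al., Hoffmann--J{\o}rgensen, Orlicz-norm bookkeeping) are all the right ingredients, but your sequencing leaves the central step --- passing from expected to conditional Rademacher complexity for an unbounded class --- as an unjustified jump. Truncating \emph{before} invoking the Bartlett statistical Talagrand inequality is what closes the gap.
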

Lemma \ref{Lem:statistver:Fuk-Nagaev}  can be considered as a generalization of  \cite[Theorem 2.1]{Bartlett-2005} to an unbounded function class whose envelope is integrable in a suitable Orlicz sense. The proof is given in Section \ref{Sec:Lem:statistver:Fuk-Nagaev-proof}. The essential idea is based on truncation of $\cF_n$ to a bounded class of functions $ \cF_n(T_n)=\left\{f \ind_{F_n \leq T_n}: f \in \cF_n\right\}$, applying  \cite[Theorem 2.1]{Bartlett-2005} to the truncated class,   invoking the Hoffmann-J{\o}rgensen inequality (see e.g. \cite[Proposition 6.8]{LT-1991}) to control the remainder by selecting an appropriate $T_n$, and showing that $\EE_{\varepsilon}\left[\sup_{f \in \cF_n(T_n)}\abs{\sum_{i=1}^n \varepsilon_i f(X_i)} \right] \leq Z^{(\varepsilon)}_n$.  
\begin{remark}[Statistical Fuk-Nagaev with Lebesgue integrable envelope]
    Under the weaker condition $F_n \in L^p(\mu)$ for some $p\geq 2$, it is possible to derive the inequality 
   \begin{align}
    \PP \left(Z_n \geq  7 Z^{(\varepsilon)}_n+z \right)  \leq 2e^{-\big(\frac{9z^2}{128 n \norm{F_n}_{2,\mu}^2} \wedge \frac{z}{228\mspace{2 mu}\overline{M_n}}\big)}+2(16)^p\EE\big[\big(M_n(X^n)\big)^p\big] z^{-p}.\label{eq:statFN-int-env}
\end{align}  
This follows by applying Markov's inequality (in place of  \eqref{eq:exporlicznorm}) in the proof of Lemma \ref{Lem:statistver:Fuk-Nagaev}, which yields
\begin{align}
    \PP \left(Z_n'' \geq  \frac{z}{4} \right) \leq 4^p\frac{\EE\big[Z_n''^p\big]}{z^p} \stackrel{(a)}{\leq} \frac{2(16)^p\EE \left[\big(M_n(X^n)\big)^p\right]}{z^p},\notag
\end{align}
 where $(a)$ is via  an application of  Hoffmann-J{\o}rgensen inequality given in \cite[Proposition 6.8]{LT-1991}:
\begin{align}
\EE\big[Z_n''^p\big] \leq 2 (4^p) \EE \left[\max_{1 \leq i \leq n}F_n^p\ind_{\{F_n>T_n\}}(X_i)\right] \leq 2 (4^p) \EE \left[\big(M_n(X^n)\big)^p\right].\notag
\end{align}
\end{remark}
\begin{remark}[Confidence interval with unknown distribution]
Lemma \ref{Lem:statistver:Fuk-Nagaev} is useful for constructing confidence intervals for $Z_n$ when $\mu$ is unknown, provided $Z^{(\varepsilon)}_n$ can be computed and upper bounds on $\norm{F_n(X)}_{\psi_q}$ are known (since $\norm{F_n}_{2,\mu}$, $\overline{M_n}$ and $\norm{M_n(X^n)}_{\psi_q}$ can be upper bounded in terms of this). 
\end{remark}
Next, we state the  analogue of Theorem \ref{Thm:KL_compsupp_sg} and Theorem \ref{Thm:KLdev-symmradexp} for R\'enyi divergences of order $\alpha \in (0,1) \cup (1,\infty)$.
\begin{theorem}[Compactly supported  distributions]\label{Thm:compactlysuppRen}
Let $r\geq 0$ and $0<\sigma \leq 1$. Then 
\begin{align}
&\sup_{\mu,\nu \in \cP\left(\BB_d(r)\right)} \mspace{-4 mu}\PP \left(  \abs{\rendiv{\hat \mu_n*\gamma_{\sigma}}{\hat \nu_n*\gamma_{\sigma}}{\alpha}-\rendiv{\mu*\gamma_{\sigma}}{\nu*\gamma_{\sigma}}{\alpha}} \gtrsim  \lambda_{\alpha,d,r,\sigma}\big(\tilde \xi_{r,d,\sigma} n^{-\frac 12}+z\big)\mspace{-2 mu}\right) \mspace{-2 mu}\leq \mspace{-2 mu} 2e^{-\left(nz^2 \wedge nz\right)}, \label{eq:Renconcineq-Orlicz} 
\end{align}
where $\tilde \xi_{r,d,\sigma}$ and $\lambda_{\alpha,d,r,\sigma}$  are defined in \eqref{eq:devterm2} and \eqref{eq:devtermren}, respectively. 
\end{theorem}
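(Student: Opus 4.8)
The plan is to mimic the proof of Theorem~\ref{Thm:KL_compsupp_sg}(i), replacing the Donsker--Varadhan-type formula \eqref{eq:varexpkl} by the R\'enyi variational formula \eqref{eq:rendonvarform}. Write $D:=\rendiv{\mu*\gamma_\sigma}{\nu*\gamma_\sigma}{\alpha}$, $D_n:=\rendiv{\hat\mu_n*\gamma_\sigma}{\hat\nu_n*\gamma_\sigma}{\alpha}$, and, for measurable $f$ with $\int e^{\alpha f}\,d(\nu*\gamma_\sigma)<\infty$, set
\[
A(f):=\tfrac{\alpha}{\alpha-1}\log\!\Big(\textstyle\int e^{(\alpha-1)f}\,d(\mu*\gamma_\sigma)\Big)-\log\!\Big(\textstyle\int e^{\alpha f}\,d(\nu*\gamma_\sigma)\Big),
\]
and define $A_n(f)$ by replacing $\mu,\nu$ with $\hat\mu_n,\hat\nu_n$. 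Since $X^n,Y^n$ lie in $\BB_d(r)$ a.s., completing the square in the Gaussian exponent (exactly as in the derivation of the constant $b_{r,\sigma}$ in Theorem~\ref{Thm:KLdev-symmradexp}) shows that $f^{\star}:=\log\frac{p_\mu*\varphi_\sigma}{p_\nu*\varphi_\sigma}$ and $f^{\star}_n:=\log\frac{p_{\hat\mu_n}*\varphi_\sigma}{p_{\hat\nu_n}*\varphi_\sigma}$ both satisfy $\abs{f(x)}\le 0.5\,b_{r,\sigma}(1+\norm{x})$ with $b_{r,\sigma}=(r^2+4r)/\sigma^2$, and that the integrability hypothesis ensuring the R\'enyi supremum is attained at these functions holds because $\nu*\gamma_\sigma$ and $\hat\nu_n*\gamma_\sigma$ have Gaussian tails. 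Hence $D=A(f^{\star})$ and $D_n=A_n(f^{\star}_n)$ a.s.; feeding $f^{\star}$ into the supremum defining $D_n$ and $f^{\star}_n$ into the one defining $D$ gives $|D_n-D|\le\max\{|A_n(f^{\star})-A(f^{\star})|,\ |A_n(f^{\star}_n)-A(f^{\star}_n)|\}\le\sup_{f\in\cC}|A_n(f)-A(f)|$, where $\cC:=\{f:\abs{f(x)}\le 0.5\,b_{r,\sigma}(1+\norm{x})\ \forall x\}$.

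Using $\int g\,d(\mu*\gamma_\sigma)=\int (g*\varphi_\sigma)\,d\mu$, put $g_f:=e^{(\alpha-1)f}*\varphi_\sigma$, $h_f:=e^{\alpha f}*\varphi_\sigma$, and write $P^\mu g:=\EE_\mu[g]$, $P^\mu_n g:=\frac1n\sum_i g(X_i)$, similarly for $\nu$. The bound $\abs{\log a-\log b}\le\abs{a-b}/(a\wedge b)$ gives
\[
|A_n(f)-A(f)|\le\tfrac{\abs{\alpha}}{\abs{\alpha-1}}\,\frac{\bigl|(P^\mu_n-P^\mu)g_f\bigr|}{P^\mu g_f\wedge P^\mu_n g_f}+\frac{\bigl|(P^\nu_n-P^\nu)h_f\bigr|}{P^\nu h_f\wedge P^\nu_n h_f}.
\]
Since $g_f,h_f$ are evaluated only at points of $\BB_d(r)$ (the samples, and the supports of $\mu,\nu$), a routine Gaussian-integral estimate shows that uniformly over $f\in\cC$ one has $c_{\alpha,d,r,\sigma}\le g_f(x)\wedge h_f(x)$ and $g_f(x)\vee h_f(x)\le C_{\alpha,d,r,\sigma}$ for all $x\in\BB_d(r)$, with finite positive constants depending only on $(\alpha,d,r,\sigma)$; in particular each denominator above is $\ge c_{\alpha,d,r,\sigma}$. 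Moreover $\{g_f:f\in\cC\}=\cE_{\varphi_\sigma}(\abs{\alpha-1}b_{r,\sigma})$ and $\{h_f:f\in\cC\}=\cE_{\varphi_\sigma}(\abs{\alpha}b_{r,\sigma})$ in the notation of Theorem~\ref{Thm:KLdev-symmradexp}. Thus, with $\cG$ and $\cH$ denoting these two classes restricted to $\BB_d(r)$,
\[
|D_n-D|\ \lesssim_{\alpha,d,r,\sigma}\ \sup_{g\in\cG}\bigl|(P^\mu_n-P^\mu)g\bigr|+\sup_{h\in\cH}\bigl|(P^\nu_n-P^\nu)h\bigr|.
\]

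Each of $\cG,\cH$ is a uniformly bounded family of $C^\infty$ functions on $\BB_d(r)$ whose H\"older norm of any fixed order $\beta$ is uniformly bounded: this is Lemma~\ref{lem:bndholdernorm} with the linear-growth constant rescaled by $\abs{\alpha-1}$, resp.\ $\abs{\alpha}$, the point being that derivatives fall on the Gaussian kernel and one is left with a Hermite-polynomial-weighted Gaussian integral over $\BB_d(r)$. Choosing $\beta>d$ and invoking the refined covering-entropy estimate for H\"older balls (Theorem~\ref{Thm:coventconst}) makes the uniform $L^2$ covering entropy of $\cG$ and $\cH$ integrable, so Dudley's bound gives $\EE\bigl[\sup_{g\in\cG}\abs{\sqrt n(P^\mu_n-P^\mu)g}\bigr]\lesssim_{\alpha,d,r,\sigma}1$ and likewise for $\cH$. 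Applying Bousquet's form of Talagrand's inequality \cite{BOUSQUET-2002} to each of the two bounded empirical processes---with envelope and worst-case variance controlled by $C_{\alpha,d,r,\sigma}$, the expected supremum by the above entropy-integral bound, and Assumption~\ref{Assump:funclass} invoked to pass to a countable sub-class exactly as in Lemma~\ref{Lem:concent-KL}---yields, for each, that the supremum is at most $\Omega_{\alpha,d,r,\sigma}(n^{-1/2})+t$ on an event of probability at least $1-e^{-\Omega_{\alpha,d,r,\sigma}(nt^2\wedge nt)}$. A union bound over the two processes and a rescaling of $t$ then give \eqref{eq:Renconcineq-Orlicz}, with $\lambda_{\alpha,d,r,\sigma}$ absorbing the prefactor $\abs{\alpha}/\abs{\alpha-1}$ together with $c_{\alpha,d,r,\sigma}^{-1}$ and $C_{\alpha,d,r,\sigma}$, and $\tilde\xi_{r,d,\sigma}$ the entropy-integral prefactor, which---being a statement about Gaussian-smoothed classes---is the same one appearing in Theorem~\ref{Thm:KL_compsupp_sg}.

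The main obstacle is the uniform two-sided control of $g_f,h_f$ on $\BB_d(r)$ together with the H\"older-norm bound for the $\alpha$-scaled Gaussian-smoothed exponential classes: one must show the relevant Gaussian integrals stay bounded, and bounded away from $0$, uniformly over the full linear-growth class $\cC$, with explicit dependence on $(\alpha,d,r,\sigma)$ as $\alpha$ ranges over $(0,1)\cup(1,\infty)$. It is exactly the prefactor $\abs{\alpha}/\abs{\alpha-1}$ and the scale $\abs{\alpha-1}$ (small as $\alpha\to1$, large as $\alpha\to\infty$) entering these estimates that force $\lambda_{\alpha,d,r,\sigma}$ to depend on $\alpha$ and to blow up at $\alpha=1$, consistent with KL divergence being handled separately in Theorem~\ref{Thm:KL_compsupp_sg}. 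Once these estimates are in hand, the concentration step is essentially verbatim the one used there.
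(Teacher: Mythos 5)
Your proposal is correct and follows essentially the same route as the paper's proof of Theorem~\ref{Thm:compactlysuppRen}: you establish that both optimizers $f^\star$ and $f^\star_n$ lie in the linear-growth class $\cF(b_{r,\sigma})$, linearize the logarithms via $\abs{\log a - \log b}\le\abs{a-b}/(a\wedge b)$ using a uniform lower bound on the Gaussian-smoothed exponentials (the paper derives the same lower bound $1/\kappa_{\cdot,r,d,\sigma}$ via Jensen applied to the envelope), pass to the two empirical processes indexed by the $\alpha$-scaled Gaussian-smoothed exponential classes, and close with the entropy-integral bound from Lemma~\ref{lem:bndholdernorm} plus Talagrand/Bousquet, exactly as in Lemma~\ref{Lem:concent-KL}. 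The only cosmetic differences are that you use the two-sided ``feed the other optimizer'' argument in place of the paper's direct invocation of $\abs{\sup A - \sup B}\le\sup\abs{A-B}$ (these are equivalent here), you take pointwise two-sided bounds on $g_f,h_f$ over $\BB_d(r)$ rather than bounding only the integrals, and you suggest $\beta>d$ where the paper takes $\beta=d/2+1$ (either choice makes the $L^2$ entropy integral converge). Your observation that the $\alpha$-dependence enters only through the $\alpha/\abs{\alpha-1}$ prefactor and the envelope norms, not the entropy integral $\tilde\xi_{r,d,\sigma}$, is exactly how the paper arrives at the factorization $\lambda_{\alpha,d,r,\sigma}\big(\tilde\xi_{r,d,\sigma}n^{-1/2}+z\big)$.
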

The high-level idea used in the proof of Theorem \ref{Thm:compactlysuppRen} (see Section \ref{Sec:Thm:compactlysuppRen-proof})  is similar to that of Theorem \ref{Thm:KL_compsupp_sg}, and utilizes the variational expression given in \eqref{eq:rendonvarform} to upper bound the desired deviation in terms of empirical process suprema. However, the presence of logarithms in the expression necessitates a linearization step  prior to achieving this goal. Fortunately, this is possible when $\mu,\nu$ are compactly supported by noting that the Lipschitz constant of logarithm is bounded by $c$ on the interval $[1/c,c]$ for $c \in [1,\infty)$. Notice that in contrast to Theorem \ref{Thm:KL_compsupp_sg}, we do not have the counterpart of \eqref{eq:KLdevineq-subgauss} for sub-Gaussian $\mu,\nu$. This is because even in the one-sample setting, the truncation argument, required for similar reasons  as in Part $(ii)$ of Theorem \ref{Thm:KL_compsupp_sg},  results in the  norm of the  envelope of the function class indexing the empirical process suprema, scaling exponentially with respect to the truncation level.
\begin{remark}[Deviation inequality from expectation]
By joint convexity of R\'enyi divergences in both its arguments for  $\alpha \in [0,1]$ \cite[Theorem 11]{vanEvren_Reyni_Div2014}, Theorem \ref{Thm:compactlysuppRen} also holds with $|\rendiv{\hat \mu_n*\gamma_{\sigma}}{\hat \nu_n*\gamma_{\sigma}}{\alpha}-\rendiv{\mu*\gamma_{\sigma}}{\nu*\gamma_{\sigma}}{\alpha}|$ replaced by
   $\rendiv{\hat \mu_n*\gamma_{\sigma}}{\hat \nu_n*\gamma_{\sigma}}{\alpha}-\EE\left[ \rendiv{\hat \mu_n*\gamma_{\sigma}}{\hat \nu_n*\gamma_{\sigma}}{\alpha}\right]$ for  $\alpha \in (0,1]$. Similarly,  for all $\alpha \in (0,1) \cup (1,\infty)$,   $|\rendiv{ \mu*\gamma_{\sigma}}{\hat \nu_n*\gamma_{\sigma}}{\alpha}-\rendiv{\mu*\gamma_{\sigma}}{\nu*\gamma_{\sigma}}{\alpha}|$ can be replaced by
   $\rendiv{ \mu*\gamma_{\sigma}}{\hat \nu_n*\gamma_{\sigma}}{\alpha}-\EE\left[ \rendiv{ \mu*\gamma_{\sigma}}{\hat \nu_n*\gamma_{\sigma}}{\alpha}\right]$  due to convexity of R\'enyi divergences  in the second argument for all orders \cite[Theorem 12]{vanEvren_Reyni_Div2014}. Further,  \eqref{eq:Renconcineq-Orlicz} also applies when $\hat \mu_n*\gamma_{\sigma}$ or $\hat \nu_n*\gamma_{\sigma}$ (or both) are set to $\mu*\gamma_{\sigma}$ and $\nu*\gamma_{\sigma}$, respectively. 
\end{remark}
Thus far, we considered deviation inequalities for smoothed plug-in estimators of R\'{e}nyi divergences. Under additional assumptions uniformly bounding the ratio of densities of $\mu$ and $\nu$, it is possible to obtain probabilistic inequalities bounding the deviation of the estimate from the unsmoothed value. For stating the result, we require the  concept of a Lipschitz class of functions given in Definition \ref{def:Lipschitzclass}. For $\rho$  such that $\mu,\nu \ll \rho $,  set
\begin{align}
   & \cP_{M,\rho}:=\left\{(\mu,\nu):\mu \ll \gg \nu \ll \rho, \norm{\frac{p_{\mu}}{p_{\nu}}}_{\infty,\rho} \vee \norm{\frac{p_{\nu}}{p_{\mu}}}_{\infty,\rho} \leq M \right\}, \notag \\
   & \tilde \cP_{M,r,s,\rho}:=\left\{(\mu,\nu) \in \cP\big(\BB_d(r)\big) \cap \cP_{M,\rho}: p_{\mu}, p_{\nu} \in \mathsf{Lip}_{s,1,M}(\rho) \right\}, \notag 
\end{align}
where $\mu \ll \gg \nu$ means that $\mu \ll \nu$ and $\nu \ll \mu$. 
With this, we have the following corollary to Theorem \ref{Thm:KL_compsupp_sg}. 
\begin{cor}[Deviation from unsmoothed KL divergence]\label{Cor:Rendivsmoothrate}
Let $s \in (0,1]$, $r \geq 0$, and $M \geq 1$.  Then, with $\sigma_n=n^{-\frac{1}{d+2s+4}}$, we have
\begin{flalign}
&\sup_{\substack{(\mu,\nu) \in  \tilde \cP_{M,r,s,\rho}}} \mspace{-4 mu}\PP \left(\mspace{-2 mu}  \abs{\kl{\hat \mu_n*\gamma_{\sigma}}{\nu*\gamma_{\sigma}}\mspace{-2 mu}-\mspace{-2 mu}\kl{\mu}{\nu}}\mspace{-2 mu} \gtrsim \mspace{-2 mu} \left(\bar c_{d,r}\mspace{-2 mu}+\mspace{-2 mu}c_{d,s}M^2\mspace{-2 mu}+\mspace{-2 mu}(r^2+4r)(1+r\sqrt{d}) z\right)n^{\frac{-s}{d+2s+4}} \mspace{-2 mu}\right) \notag \\
& \qquad \qquad\qquad\qquad\qquad\qquad\qquad\qquad\qquad\qquad\qquad\qquad\qquad \leq  e^{-\left(n^{\frac{d}{d+2s+4}}z^2 \wedge n^{\frac{d+s+2}{d+2s+4}}z\right)}, \label{eq:KLconcineq-smth}&&
\end{flalign}
where  $\bar c_{d,r}:=(3ed)^{3(d+2)}\sqrt{d}r(r+1)^{\frac{d}{2}+3}$ and  $c_{d,s}:=\int_{\RR^d} \norm{z}^s \varphi_1(z) d z$.
\end{cor}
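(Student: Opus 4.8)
The plan is to decompose the error by the triangle inequality,
\begin{align*}
\big|\kl{\hat\mu_n*\gamma_{\sigma_n}}{\nu*\gamma_{\sigma_n}}-\kl{\mu}{\nu}\big| &\le \big|\kl{\hat\mu_n*\gamma_{\sigma_n}}{\nu*\gamma_{\sigma_n}}-\kl{\mu*\gamma_{\sigma_n}}{\nu*\gamma_{\sigma_n}}\big|\\
&\quad +\big|\kl{\mu*\gamma_{\sigma_n}}{\nu*\gamma_{\sigma_n}}-\kl{\mu}{\nu}\big|,
\end{align*}
control the first (stochastic) term through the one-sample form of Theorem \ref{Thm:KL_compsupp_sg}(i) and the second (deterministic bias) term through the smoothing-stability estimate of Proposition \ref{prop:smoothRen-stability}, and then pick $\sigma_n=n^{-1/(d+2s+4)}$ as the value that equalizes the two resulting rates.

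For the bias term, since $(\mu,\nu)\in\tilde\cP_{M,r,s,\rho}$ the ratios $p_\mu/p_\nu$, $p_\nu/p_\mu$ are bounded by $M$ ($\rho$-a.e.) and $p_\mu,p_\nu\in\mathsf{Lip}_{s,1,M}(\rho)$. Because convolution with $\gamma_\sigma$ is a Markov kernel, the data-processing inequality gives $\kl{\mu*\gamma_\sigma}{\nu*\gamma_\sigma}\le\kl{\mu}{\nu}\le\log M<\infty$, so the bias is nonnegative and Proposition \ref{prop:smoothRen-stability} bounds it by a constant multiple of $c_{d,s}M^2\sigma^s$; at $\sigma=\sigma_n$ this is $\asymp c_{d,s}M^2\,n^{-s/(d+2s+4)}$. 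This step is entirely deterministic and produces the $c_{d,s}M^2$ contribution to the threshold.

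For the stochastic term I would invoke Theorem \ref{Thm:KL_compsupp_sg}(i) with $\hat\nu_n*\gamma_\sigma$ set equal to $\nu*\gamma_\sigma$, which is permitted by the remark following that theorem and which leaves only one empirical-process tail, $e^{-(nz^2\wedge nz)}$ rather than $2e^{-(\cdots)}$. Substituting the explicit $\sigma$-scalings of $\xi_{r,d,\sigma}$ and $\tilde\xi_{r,d,\sigma}$ from \eqref{eq:devterm1}--\eqref{eq:devterm2} --- namely $\xi_{r,d,\sigma}\asymp (r^2+4r)(1+r\sqrt d)\,\sigma^{-2}$ and $\tilde\xi_{r,d,\sigma}\asymp \sigma^{-d/2}$ up to the dimensional constant $\bar c_{d,r}$ coming from the H\"older-norm and covering-entropy bounds of Lemma \ref{lem:bndholdernorm} and Theorem \ref{Thm:coventconst} --- the deterministic $n^{-1/2}$ part of the threshold becomes $\xi_{r,d,\sigma_n}\tilde\xi_{r,d,\sigma_n}n^{-1/2}\asymp \bar c_{d,r}\,n^{-s/(d+2s+4)}$, since $\sigma_n^{-2-d/2}n^{-1/2}=n^{-s/(d+2s+4)}$. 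Finally I would replace the free parameter $z$ in Theorem \ref{Thm:KL_compsupp_sg}(i) by $z\,n^{-(s+2)/(d+2s+4)}$: this turns the $z$-part of the threshold into $(r^2+4r)(1+r\sqrt d)\,z\,n^{-s/(d+2s+4)}$ and, using $1-2(s+2)/(d+2s+4)=d/(d+2s+4)$ and $1-(s+2)/(d+2s+4)=(d+s+2)/(d+2s+4)$, turns the exponent $nz^2\wedge nz$ into $n^{d/(d+2s+4)}z^2\wedge n^{(d+s+2)/(d+2s+4)}z$. Adding the bias bound and absorbing universal constants into the $\gtrsim$ then yields the claim.

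The only delicate part is the bookkeeping: confirming the stated $\sigma$-dependence of $\xi_{r,d,\sigma}$ and $\tilde\xi_{r,d,\sigma}$ (which reduces to the entropy-integral estimate for the Gaussian-smoothed class with affine envelope, already in hand via Lemma \ref{lem:bndholdernorm}), checking $\sigma_n\le 1$ so that Theorem \ref{Thm:KL_compsupp_sg} is applicable, and verifying that $1/(d+2s+4)$ is exactly the bias--variance balance exponent and that the change of variables in $z$ reproduces the stated tail exponents together with the constants $\bar c_{d,r}$, $c_{d,s}M^2$, $(r^2+4r)(1+r\sqrt d)$. No probabilistic ingredient beyond Theorem \ref{Thm:KL_compsupp_sg} and Proposition \ref{prop:smoothRen-stability} is required.
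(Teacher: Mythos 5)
Your overall plan is correct and matches the paper's: split by the triangle inequality into a stochastic term and a deterministic bias term, bound the bias by Proposition \ref{prop:smoothRen-stability}, bound the stochastic term by a one-sample deviation inequality, set $\sigma_n=n^{-1/(d+2s+4)}$ to balance, and rescale $z$. However, there is a genuine error in the stochastic step. You cite the remark immediately after Theorem \ref{Thm:KL_compsupp_sg}, which says that \eqref{eq:KLdevineq-compsupp-simp} continues to hold when $\hat\nu_n*\gamma_\sigma$ is replaced by $\nu*\gamma_\sigma$, and then assert $\xi_{r,d,\sigma}\asymp (r^2+4r)(1+r\sqrt d)\,\sigma^{-2}$. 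But $\xi_{r,d,\sigma}$ as defined in \eqref{eq:devterm1} equals $\exp\!\left(\tfrac{(r^2+4r)(2dr^2+(8d+1)r+1)}{2\sigma^2}\right)$, which is \emph{exponential} in $\sigma^{-2}$, not polynomial. If you carried out your argument with this $\xi_{r,d,\sigma}$, the prefactor would blow up as $\sigma_n\to 0$ far faster than $n^{-1/2}$ decays, and the claimed $n^{-s/(d+2s+4)}$ rate would not follow. The remark you cite only says the same bound (with the same exponential prefactor) applies to the one-sample deviation; it does not give an improvement.

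What you need instead is the genuinely improved one-sample inequality \eqref{eq:onesampspecexp} from Remark \ref{Rem:onesamprate}, whose prefactor $\bar\xi_{r,d,\sigma}=(r^2+4r)(1+r\sigma\sqrt d)\sigma^{-2}$ is polynomial in $\sigma^{-1}$. The reason for the improvement is structural: when only $\mu$ is estimated, Lemma \ref{Lem:concent-KL} reduces to a single empirical process indexed by $\cF_{n,\phi}$, whose envelope $F_n*\varphi_\sigma$ grows affinely in $\|x\|$ and hence only polynomially in $\sigma^{-1}$; the second process over $\cE_{n,\phi}$, whose envelope $e^{F_n}*\varphi_\sigma$ is of size $e^{c/\sigma^2}$ and produces $\xi_{r,d,\sigma}$, drops out entirely. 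With that citation corrected and $\bar\xi_{r,d,\sigma}\le(r^2+4r)(1+r\sqrt d)\sigma^{-2}$ for $\sigma\le 1$, your bookkeeping (the choice of $\sigma_n$, the substitution $z\mapsto n^{-(s+2)/(d+2s+4)}\bar z$, and the exponents $d/(d+2s+4)$ and $(d+s+2)/(d+2s+4)$) reproduces the paper's proof.
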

The proof of Corollary \ref{Cor:Rendivsmoothrate} (see  Section \ref{Sec:Cor:Rendivsmoothrate-proof}) relies on the fact that  the difference between  smoothed R\'{e}nyi divergence and its unsmoothed value can be upper bounded given the densities $p_{\mu}$ and $p_{\nu}$  belong to an appropriate Lipschitz class with bounded ratios.  The corresponding approximation error is quantified in the next proposition which is a generalization of  \cite[Lemma 3]{SGK-IT-2023}. 
\begin{prop}[Approximation of R\'{e}nyi divergences] \label{prop:smoothRen-stability}
Let  $\sigma \geq 0$,  $s \in (0,1]$ and $M \geq 1$. Suppose  $ (\mu,\nu) \in \cP_{M,\rho}$ are such that  $p_{\mu}, p_{\nu} \in \mathsf{Lip}_{s,1,M}(\rho)$. Then
\begin{subequations}
\begin{align}
  &  \abs{\kl{\mu}{\nu}-\kl{\mu*\gamma_{\sigma}}{\nu*\gamma_{\sigma}}}\leq c_{d,s} M\left(M+1+\log M\right) \sigma^s, \label{eq:KL-stability-ub} \\
    &  \abs{\rendiv{\mu}{\nu}{\alpha}-\rendiv{\mu*\gamma_{\sigma}}{\nu*\gamma_{\sigma}}{\alpha}}\leq c_{d,s}\left(M^{2\alpha+1}+\frac{\alpha}{\abs{\alpha-1}}M^{2(1 \vee \alpha)}\right) \sigma^s, ~\alpha \in (0,1) \cup (1,\infty),\label{eq:ren-stability-ub}
\end{align}
\end{subequations}
where $c_{d,s}:=\int_{\RR^d} \norm{z}^s \varphi_1(z) d z$.
\end{prop}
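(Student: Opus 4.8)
Write $h:=p_\mu/p_\nu$, which by the bounded-ratio hypothesis is $\nu$-a.e.\ valued in $[M^{-1},M]$, and let $q_\mu:=p_\mu*\varphi_\sigma,\ q_\nu:=p_\nu*\varphi_\sigma$ be the Lebesgue densities of $\mu*\gamma_\sigma,\ \nu*\gamma_\sigma$; since $p_\mu\le Mp_\nu$ and $p_\nu\le Mp_\mu$ $\rho$-a.e., the same inequalities pass to $q_\mu,q_\nu$, so $\bar h:=q_\mu/q_\nu\in[M^{-1},M]$. The key structural fact is that if $(Y,X)$ has joint law $\nu(dy)\,\varphi_\sigma(x-y)\,dx$, then $X\sim\nu*\gamma_\sigma$, the posterior is $\nu_{Y|X=x}(dy)\propto\varphi_\sigma(x-y)p_\nu(y)\,d\rho(y)$, and $\bar h(X)=\EE[h(Y)\mid X]$ (because $\int h(y)\varphi_\sigma(x-y)p_\nu(y)\,d\rho(y)=\int p_\mu(y)\varphi_\sigma(x-y)\,d\rho(y)=q_\mu(x)$). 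Expressing everything through densities gives $e^{(\alpha-1)\rendiv{\mu}{\nu}{\alpha}}=\EE[h(Y)^\alpha]$ and $e^{(\alpha-1)\rendiv{\mu*\gamma_\sigma}{\nu*\gamma_\sigma}{\alpha}}=\int q_\mu^\alpha q_\nu^{1-\alpha}=\EE[\bar h(X)^\alpha]$, and likewise $\kl{\mu}{\nu}=\EE[\Phi(h(Y))]$, $\kl{\mu*\gamma_\sigma}{\nu*\gamma_\sigma}=\EE[\Phi(\bar h(X))]$ with $\Phi(t):=t\log t$. Hence, in each case the difference of the two integrals equals the averaged Jensen gap $\EE[g(h(Y))-g(\EE[h(Y)\mid X])]$ of $g\in\{t\mapsto t^\alpha,\ \Phi\}$; it has the sign predicted by the data-processing inequality, and since $g$ is Lipschitz on $[M^{-1},M]$, where $h(Y),\bar h(X)$ live, it is bounded in modulus by $C_g(M)\,\EE|h(Y)-\bar h(X)|$, with $C_g(M)=\sup_{[M^{-1},M]}|g'|$ equal to $|\alpha|M^{|\alpha-1|}$ for $g(t)=t^\alpha$ and to $1+\log M$ for $g=\Phi$.

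The remaining and central task is to bound $\EE|h(Y)-\bar h(X)|$ by the modulus of smoothness of the densities. For any measurable $c(\cdot)$ one has $\EE_{\nu_{Y|X=x}}|h(Y)-\bar h(x)|\le 2\,\EE_{\nu_{Y|X=x}}|h(Y)-c(x)|$ (the conditional mean is within a factor $2$ of the $L^1$-best constant), and averaging over $X\sim\nu*\gamma_\sigma$ the posterior normalizer $q_\nu(x)$ cancels, leaving $\EE|h(Y)-\bar h(X)|\le 2\int\!\!\int|h(y)-c(x)|\varphi_\sigma(x-y)p_\nu(y)\,dx\,d\rho(y)$. When $\rho$ is Lebesgue, the choice $c=h$ turns this into $2\,\EE_{Y\sim\nu}\EE_{Z\sim N(0,I_d)}|h(Y)-h(Y+\sigma Z)|$; now $p_\nu(y)|h(y)-h(y+\sigma z)|=|p_\mu(y)-h(y+\sigma z)p_\nu(y)|\le|p_\mu(y)-p_\mu(y+\sigma z)|+M|p_\nu(y+\sigma z)-p_\nu(y)|$, so integrating in $y$, using $\chi_{1,1}(p_\mu,t,\rho)\vee\chi_{1,1}(p_\nu,t,\rho)\le Mt^s$ (Definition~\ref{def:Lipschitzclass}), and then integrating in $z$ against $\varphi_1$ gives $\EE|h(Y)-\bar h(X)|\lesssim M(1+M)\,c_{d,s}\,\sigma^s$ with $c_{d,s}=\int\|z\|^s\varphi_1(z)\,dz$. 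For a general $\sigma$-finite $\rho$ (e.g.\ discrete), where $h$ need not be defined off $\supp(\rho)$, one instead uses two conditionally i.i.d.\ draws $Y,Y'\sim\nu_{Y|X}$ — so that $\EE_{\nu_{Y|X=x}}|h(Y)-\bar h(x)|\le\EE[|h(Y)-h(Y')|\mid X]$ and $\mathrm{Var}_{\nu_{Y|X=x}}(h)=\tfrac12\EE[(h(Y)-h(Y'))^2\mid X]$ — which again cancels $q_\nu$ after averaging, at the price of the weight $\psi_\sigma(y,y')=\int\varphi_\sigma(x-y)\varphi_\sigma(x-y')q_\nu(x)^{-1}\,dx$; this weight is then disposed of through the factorization $\varphi_\sigma(x-y)\varphi_\sigma(x-y')=\varphi_{\sigma\sqrt2}(y-y')\,\varphi_{\sigma/\sqrt2}(x-\tfrac{y+y'}{2})$ and the numerator split $p_\mu(y)p_\nu(y')-p_\mu(y')p_\nu(y)=p_\mu(y)(p_\nu(y')-p_\nu(y))+p_\nu(y)(p_\mu(y)-p_\mu(y'))$, yielding the same $O(M(1+M)c_{d,s}\sigma^s)$ bound.

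It remains to transfer this to the divergences. Since $\EE_\nu[h^\alpha]=e^{(\alpha-1)\rendiv{\mu}{\nu}{\alpha}}$ and $\int q_\mu^\alpha q_\nu^{1-\alpha}=e^{(\alpha-1)\rendiv{\mu*\gamma_\sigma}{\nu*\gamma_\sigma}{\alpha}}$ both lie in a fixed interval bounded away from $0$ and $\infty$ in terms of $M$ (using $h,\bar h\in[M^{-1},M]$, $\EE_\nu[h]=\EE_{\nu*\gamma_\sigma}[\bar h]=1$, and $\rendiv{\cdot}{\cdot}{\alpha}\ge0$), the map $t\mapsto\tfrac1{\alpha-1}\log t$ is Lipschitz there, so the bound on $|\EE_\nu[h^\alpha]-\int q_\mu^\alpha q_\nu^{1-\alpha}|$ from the previous steps transfers to $|\rendiv{\mu}{\nu}{\alpha}-\rendiv{\mu*\gamma_\sigma}{\nu*\gamma_\sigma}{\alpha}|$. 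Carrying this out with the two logarithms of the variational expression~\eqref{eq:rendonvarform} treated separately — the $-\log\int e^{\alpha f}d\nu$ term via the $\nu$-side estimate above, the $\tfrac{\alpha}{\alpha-1}\log\int e^{(\alpha-1)f}d\mu$ term via the parallel $\mu$-side estimate (using $q_\mu\le Mq_\nu$) — and collecting the resulting powers of $M$ together with the factor $\tfrac{\alpha}{|\alpha-1|}$ produces the two summands $M^{2\alpha+1}$ and $\tfrac{\alpha}{|\alpha-1|}M^{2(1\vee\alpha)}$ of~\eqref{eq:ren-stability-ub}. For KL the single integral $\EE[\Phi(h(Y))]$ with $C_\Phi(M)=1+\log M$ and $\|p_\mu\|_1=\|p_\nu\|_1=1$ gives~\eqref{eq:KL-stability-ub} after grouping the $M$-factors into $M(M+1+\log M)$; the case $\sigma=0$ is immediate.

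The hard part is the modulus-of-smoothness estimate of the second step. The function $h=p_\mu/p_\nu=e^{f^\star_{\mu,\nu}}$ — the optimizer of the variational formulas — need not belong to any Lipschitz class even when $p_\mu$ and $p_\nu$ do, so it cannot be smoothed directly; the bounded-ratio hypothesis is exactly what lets the posterior normalizers cancel against the density weights and reduces the estimate to moduli of smoothness of the \emph{individual} densities, and, for non-Lebesgue $\rho$, what keeps the auxiliary weight $\psi_\sigma$ — which superficially carries the non-integrable-looking factor $1/q_\nu$ — under control. Arranging these manipulations so that the $M$-dependence emerges in the stated polynomial form is the delicate point.
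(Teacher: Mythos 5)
Your proposal takes a genuinely different route from the paper. The paper applies a multivariate Taylor (fundamental theorem of calculus) expansion of $(p,q)\mapsto p^\alpha q^{1-\alpha}$ (resp.\ $p\log(p/q)$) along the straight-line path from $(p_\mu(x),p_\nu(x))$ to $(\mu*\varphi_\sigma(x),\nu*\varphi_\sigma(x))$ pointwise, integrates against $\rho$, and controls the two path-integral remainders via the bounded ratio $M^{-1}\le z_{\mu,\tau}/z_{\nu,\tau}\le M$; the essential algebraic feature is that the increment $|\mu*\varphi_\sigma-p_\mu|$ and the increment $|\nu*\varphi_\sigma-p_\nu|$ thereby acquire \emph{different} coefficients ($\alpha M^{|\alpha-1|}$ and $|\alpha-1| M^{\alpha}$ for the R\'enyi case, $1+\log M$ and $M$ for KL). You instead build the joint law of $(Y,X)$, observe $\bar h(X)=\EE[h(Y)\mid X]$, and bound the Jensen gap $\EE[g(h(Y))]-\EE[g(\bar h(X))]$ by $C_g(M)\,\EE|h(Y)-\bar h(X)|$, estimating $\EE|h-\bar h|$ by a factor-of-two $L^1$ trick. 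The conditional-mean structure and the two-draws device for handling non-Lebesgue $\rho$ are genuinely elegant, and they give a valid $O(\sigma^s)$ bound.

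However, the constants you obtain are strictly weaker than the proposition's, by a factor that grows unboundedly with $M$ (and with $\alpha$), so you have not proved the stated inequalities. Your argument pays the global Lipschitz constant $C_g(M)$ against all of $\EE|h-\bar h|$, which has already absorbed the ratio factor $M$ from the $\nu$-modulus term in the inequality $p_\nu(y)|h(y)-h(y+\sigma z)|\le|p_\mu(y)-p_\mu(y+\sigma z)|+M|p_\nu(y)-p_\nu(y+\sigma z)|$. For KL, $\EE|h-\bar h|\lesssim M(1+M)c_{d,s}\sigma^s$ combined with $C_\Phi(M)=1+\log M$ gives $\lesssim M(1+M)(1+\log M)c_{d,s}\sigma^s$, which contains an $M^2\log M$ term absent from the stated $M(M+1+\log M)c_{d,s}\sigma^s$; the ratio of the two bounds grows like $\log M$. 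For R\'enyi with $\alpha>1$, your chain $\frac{M^{\alpha}}{\alpha-1}\cdot \alpha M^{\alpha-1}\cdot 2M(1+M)$ gives $\tfrac{2\alpha}{\alpha-1}\left(M^{2\alpha}+M^{2\alpha+1}\right)$, whereas the proposition asserts $M^{2\alpha+1}+\tfrac{\alpha}{\alpha-1}M^{2\alpha}$: your $M^{2\alpha+1}$ coefficient carries an extra $\tfrac{\alpha}{\alpha-1}$ that diverges as $\alpha\to 1^+$. The paper avoids this because the $(\alpha-1)$ prefactor of the $\nu$-side Taylor term cancels the $\tfrac{1}{|\alpha-1|}$ from the subsequent log-Lipschitz estimate, whereas your inequality $|h^\alpha-\bar h^\alpha|\le\alpha M^{|\alpha-1|}|h-\bar h|$ destroys this structure before the log step. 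Your closing claim that ``treating the two logarithms of \eqref{eq:rendonvarform} separately'' yields exactly the stated summands is not worked out and does not appear to close the gap: with the optimizer $f^{\star}_{\mu,\nu}=\log h$ the two log-terms of \eqref{eq:rendonvarform} both equal $\log\int h^{\alpha}\,d\nu$, so there is no nontrivial split, and plugging in the other optimizer $\log\bar h$ forces you to control moduli of the smoothed ratio $\bar h$ rather than of $p_\mu,p_\nu$, which is not what you established. To make your framework deliver the stated constants you would need to run a two-coordinate path interpolation (mirroring the paper's $z_{\mu,\tau},z_{\nu,\tau}$) inside the conditional-expectation formulation, so that the $p_\mu$-modulus and $p_\nu$-modulus contributions each acquire their own coefficient, rather than applying a single uniform Lipschitz constant of $g$.
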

The proof of Proposition \ref{prop:smoothRen-stability} is given in Section \ref{prop:smoothRen-stability-proof} and relies on a  multi-variate Taylor's expansion of a certain functional of the densities. The Lipschitz and regularity assumptions on the densities enable to control the remainder term in this expansion.  

\begin{figure}[t]
\centering
\includegraphics[trim=1cm 8cm 13cm 5.8cm, clip, width= 0.8577\textwidth]{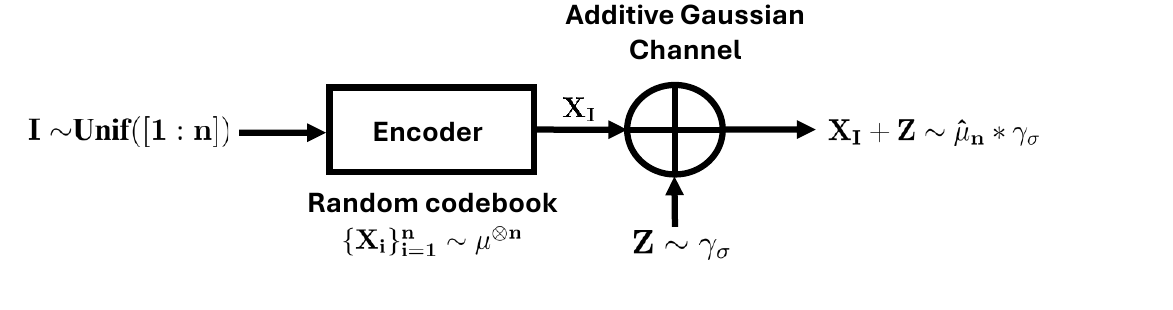}
\caption{Random coding over a Gaussian channel where each codeword $X_i$ of a random codebook $\{X_i\}_{i=1}^n$ of size $n$ is a vector of length $d$ generated i.i.d. according to $\mu$. A power constraint on each input codeword  naturally translates to a bound on its  Euclidean norm. } \label{Fig:DHT-zerorate}
\end{figure}
Before we end this subsection, we briefly mention that the deviation inequalities of the form given in Theorem \ref{Thm:KL_compsupp_sg} and Theorem \ref{Thm:compactlysuppRen} could be useful in other contexts in  information theory such as physical-layer security or covert communications, where the secrecy metric of interest is often expressed in terms of R\'{e}nyi divergences. Exponential deviation inequalities then provide high confidence bounds in random coding- or expurgation-based  arguments to show existence of codebooks satisfying the security criteria approximately.  For instance, consider the case of information-theoretically secure coding over a Gaussian wiretap channel with a per-codeword power constraint. In this context, $\{X_i\}_{i=1}^n$ would denote a random codebook of size $n$ with each codeword of length $d$ sampled independently according to $\mu$, and  $\hat \mu_n*\gamma_{\sigma}$ would correspond to the output distribution of a discrete-time additive white Gaussian noise channel of variance $\sigma^2$ with input  uniformly sampled from the codebook. Then, it is of interest to show that $\rendiv{\hat \mu_n* \gamma_{\sigma}}{\nu* \gamma_{\sigma}}{\alpha} \leq \rendiv{\mu* \gamma_{\sigma}}{\nu* \gamma_{\sigma}}{\alpha} +o(1)$ with overwhelming probability, where $\nu$ is an auxiliary distribution of interest. To the best of our knowledge, existing deviation inequalities relevant to this setting are applicable only to bounded metrics or finite (output-) alphabet channels (see e.g., \cite[Theorem 31]{Jingbo-2017} and \cite[Lemma 4]{Goldfeld-2020-wiretap}). Hence, we expect Theorem \ref{Thm:KL_compsupp_sg} and \ref{Thm:compactlysuppRen}  to have potential new applications in wiretap coding or covert communications over Gaussian channels when sufficient shared randomness is available. For wider applicability, it appears that an improvement of the dependence of dimension in these bounds would be required. 
\subsection{Deviation Inequalities for Neural Estimators of R\'{e}nyi Divergences}
Let  $\cG$ be a class of real-valued functions formed by the output of a neural network, referred to as neural net class. 
The variational forms \eqref{eq:varexpkl} and \eqref{eq:rendonvarform} leads to neural estimators of  KL  and R\'{e}nyi divergence of order $\alpha \in (0,1) \cup (1,\infty)$
given by
\begin{subequations}\label{eq:varexpestunreg}
 \begin{align}
 \klest{\hat \mu_n}{\hat \nu_n}{\cG}&:=\sup_{g \in \cG}\EE_{\hat \mu_n}[g]-\EE_{\hat \nu_n}[e^g]+1,  \label{eq:varexpestklunreg}
\end{align}  
and
\begin{align}
 \renest{\hat \mu_n}{\hat \nu_n}{\alpha}{\cG}&:=\sup_{g \in \cG} \frac{\alpha}{\alpha-1} \log \EE_{\hat \mu_n}\big[ e^{(\alpha-1)g}\big]-\log \EE_{\hat \nu_n}\big[ e^{\alpha g} \big],  \label{eq:varexpestrenunreg}
\end{align}
\end{subequations}
respectively. We will assume that $\cG$ satisfies the following mild assumption.
\begin{assump}[Neural class] \label{Assump:neuralclass}
  $\cG$ is a class of measurable functions on $\RR^d$ which is pointwise separable (i.e., separable w.r.t. the topology of pointwise convergence  on the support of $\rho$)  such that  $\norm{g}_{\infty,\rho} \leq b$, $\forall ~g \in \cG$,  and
\begin{align}
   \sup_{\gamma} \log N(\epsilon,\cG,\norm{\cdot}_{2,\gamma}) &\leq \hat c_{d,\beta}(b\epsilon^{-1})^{\frac{d}{\beta}}, \label{eq:coveringbnd}
   \end{align}
for some $\hat c_{d,\beta} \geq 0$, where the supremum is taken over all finitely supported discrete probability measures $\gamma$ with $\supp(\gamma) \subseteq \supp(\rho)$. 
\end{assump}
It is easy to see that Assumption \ref{Assump:neuralclass} is satisfied by neural nets with finite width, finite depth,  bounded parameters and continuous activation when $\rho$ has compact support for an appropriately chosen $b$. 

We next describe the class of probability measures of interest for neural estimators. 
Let 
\begin{align}
    \bar \cP_{M,\rho}(\cG,\delta) = \left\{(\mu,\nu) \in \cP_{M,\rho} : \exists~ g \in \cG \mbox{ s.t. } \norm{f_{\mu,\nu}^{\star}-g}_{\infty,\rho} \leq \delta\right\}. \notag
\end{align}
In words, $\bar \cP_{M,\rho}(\cG,\delta)$ is the subclass of $\cP_{M,\rho}$  such that the common optimizer $f_{\mu,\nu}^{\star}$ of the variational form in \eqref{eq:varformkl} and \eqref{eq:rendonvarform} can be approximated  $\rho$-a.e. by some function in $\cG$ up to an error $\delta$. Since quantifying the  approximation error $\delta$ is not our focus, we henceforth leave it as an open parameter. 
With this, we are ready to state a uniform deviation inequality for neural estimators of R\'{e}nyi divergences.
\begin{theorem}[Uniform deviation inequality  for neural estimator]\label{Thm:devineqren}   
Let $\beta >d/2$, $M \geq 1$, $a:=d/(2\beta)$,  $b:=\log M$,   $\cG_{b,\beta,\rho}$ be a neural class satisfying Assumption \ref{Assump:neuralclass}, and $c_{d,\beta}$ be such that $1 \leq c_{d,\beta} \lesssim \frac{\sqrt{\hat c_{d,\beta}}}{1-a} $, where $\hat c_{d,\beta}$ is as given in \eqref{eq:coveringbnd}. Then, the following hold:
 \begin{enumerate}[(i)]
\item \textbf{ KL divergence:} 
    \begin{align}   &\sup_{(\mu,\nu) \in \bar \cP_{M,\rho}(\cG_{b,\beta,\rho},\delta)} \mspace{-3 mu}\PP\mspace{-3 mu}\left(\abs{\klest{\hat \mu_n}{\hat \nu_n}{\cG_{b,\beta,\rho}}-\kl{\mu}{\nu}} \mspace{-2 mu}\geq \mspace{-2 mu}(M+1)\delta\mspace{-2 mu}+ \mspace{-2 mu}c_{d,\beta} M\left(1+(\log M)^a\right)\mspace{-2 mu}\big(n^{-\frac 12}+z\big) \mspace{-2 mu}\right) \notag  \\
    &\qquad \qquad \qquad \qquad \qquad \qquad \qquad \qquad \qquad \qquad \qquad   \qquad \qquad  \qquad \qquad     \leq 2 e^{-(nz^2 \wedge nz)}.\label{eq:devineqkl}
    \end{align} 
 \item \textbf{R\'{e}nyi divergence $\big(\alpha\in (0,1) \cup (1,\infty)\big)$:}      
 \begin{align}   &\sup_{(\mu,\nu) \in \bar \cP_{M,\rho}(\cG_{b,\beta,\rho},\delta)} \PP\left(\abs{\renest{\hat \mu_n}{\hat \nu_n}{\alpha}{\cG_{b,\beta,\rho}}-\rendiv{\mu}{\nu}{\alpha}} \geq \big(M^{2\alpha}+M^{2\abs{\alpha-1}}\big)\delta+ c_{d,\beta}  v_{M,\alpha,a}\big(n^{-\frac 12}+z\big) \right)\notag  \\
 &\qquad \qquad \qquad \qquad \qquad \qquad \qquad \qquad \qquad \qquad \qquad  \qquad \qquad  \qquad \qquad  \leq  2 e^{-(nz^2 \wedge nz)},\label{eq:devineqren}
    \end{align}  
where $v_{M,\alpha,a}=\frac{\alpha M^{2|\alpha-1|}}{\abs{\alpha-1}}\left(1+(\abs{\alpha-1}\log M)^{\frac a2}\right)+ M^{2\alpha}\big(1+(\alpha\log M )^a\big)$.   
    \end{enumerate}
\end{theorem}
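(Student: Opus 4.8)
The plan is to reduce the neural-estimator deviation inequality to an empirical-process concentration statement via the variational formulas \eqref{eq:varexpkl} and \eqref{eq:rendonvarform}, in three stages: an approximation (bias) bound, a uniform-deviation bound over $\cG_{b,\beta,\rho}$, and an application of Talagrand's inequality. For Part $(i)$ (KL), first I would split $\klest{\hat\mu_n}{\hat\nu_n}{\cG}-\kl{\mu}{\nu}$ into $\big(\klest{\hat\mu_n}{\hat\nu_n}{\cG}-\klest{\mu}{\nu}{\cG}\big)+\big(\klest{\mu}{\nu}{\cG}-\kl{\mu}{\nu}\big)$. The second term is the approximation error: since $f^\star_{\mu,\nu}=\log(d\mu/d\nu)$ is within $\delta$ in $\norm{\cdot}_{\infty,\rho}$ of some $g\in\cG$, and $\klest{\mu}{\nu}{\cG}\le \kl{\mu}{\nu}$ always (it's a supremum of lower bounds), while plugging this $g$ into $\EE_\mu[g]-\EE_\nu[e^g]+1$ and using $|e^g-e^{f^\star}|\le e^{f^\star}(e^\delta-1)\lesssim M\delta$ together with $p_\mu/p_\nu\le M$ gives a lower bound of $\kl{\mu}{\nu}-O((M+1)\delta)$; this yields the $(M+1)\delta$ bias term. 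The first term is controlled by noting that the map $(\text{measure pair})\mapsto \sup_{g\in\cG}\EE[g]-\EE[e^g]+1$ has the property that its fluctuation is at most $\sup_{g\in\cG}\big|(\EE_{\hat\mu_n}-\EE_\mu)[g]\big|+\sup_{g\in\cG}\big|(\EE_{\hat\nu_n}-\EE_\nu)[e^g]\big|$, i.e. a sum of two empirical-process suprema over the classes $\cG$ and $e^{\cG}:=\{e^g:g\in\cG\}$, both of which have envelopes bounded by $b=\log M$ and $e^b=M$ respectively.

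For each of these suprema I would invoke Talagrand's concentration inequality in the Bousquet form \cite{BOUSQUET-2002}: the envelopes are bounded ($\norm{g}_\infty\le \log M$, $\norm{e^g}_\infty\le M$), so one gets $\PP\big(\sup \ge \EE[\sup]+t\big)\le e^{-(nt^2/v \,\wedge\, nt/U)}$ for appropriate weak-variance $v$ and sup-norm $U$. The expected suprema $\EE[\sup_{g\in\cG}|(\EE_{\hat\mu_n}-\EE_\mu)[g]|]$ and its $e^{\cG}$ analogue are bounded by Dudley's entropy integral using the uniform $L^2$-covering bound \eqref{eq:coveringbnd}: with metric entropy $\lesssim \hat c_{d,\beta}(b/\epsilon)^{d/\beta}$ and $d/\beta<2$ (this is exactly where $\beta>d/2$ enters, making the entropy integral converge), Dudley gives $\EE[\sup]\lesssim \frac{\sqrt{\hat c_{d,\beta}}}{1-a}\,b^{a}\cdot n^{-1/2}\cdot(\text{envelope}^{1-a})$ where $a=d/(2\beta)$; for the $\cG$ class the envelope is $\log M$ giving $\lesssim c_{d,\beta}(\log M)\cdot n^{-1/2}$ up to lower-order, and for $e^{\cG}$ one must be careful since the covering number of $e^{\cG}$ in $L^2(\gamma)$ is controlled by that of $\cG$ via the Lipschitz constant $e^{b}=M$ of $x\mapsto e^x$ on $[-b,b]$, which after the computation contributes the $M(1+(\log M)^a)$ factor. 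Collecting, the expected suprema are $\lesssim c_{d,\beta}M(1+(\log M)^a)n^{-1/2}$, the variance/sup-norm terms in Talagrand are $\lesssim$ the same scale, and rescaling $t$ appropriately yields the claimed $c_{d,\beta}M(1+(\log M)^a)(n^{-1/2}+z)$ with tail $2e^{-(nz^2\wedge nz)}$ (the factor $2$ from a union over the two suprema, or from handling $|\cdot|$).

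For Part $(ii)$ (R\'enyi), the same skeleton applies but the logarithms in \eqref{eq:varexpestrenunreg} force a linearization step. I would write $\renest{\hat\mu_n}{\hat\nu_n}{\alpha}{\cG}-\rendiv{\mu}{\nu}{\alpha}$, handle the approximation error as before (plugging the approximant $g$ of $f^\star_{\mu,\nu}$ into the R\'enyi functional; since $\big(\tfrac{d\mu}{d\nu}\big)^{\!\alpha}\le M^\alpha$, the integrands $e^{(\alpha-1)g}$ and $e^{\alpha g}$ stay in $[M^{-|\alpha-1|},M^{|\alpha-1|}]$ and $[M^{-\alpha},M^{\alpha}]$, and on these intervals $\log$ is Lipschitz with constant $M^{|\alpha-1|}$, resp. $M^{\alpha}$, which combined with $|e^{(\alpha-1)g}-e^{(\alpha-1)f^\star}|\lesssim |\alpha-1|M^{|\alpha-1|}\delta$ and the analogous $\alpha$-bound produces the $(M^{2\alpha}+M^{2|\alpha-1|})\delta$ term), and for the fluctuation term use Lipschitzness of $\log$ on the relevant range to reduce to empirical-process suprema over the classes $e^{(\alpha-1)\cG}$ and $e^{\alpha\cG}$. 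The envelopes of these classes are $M^{|\alpha-1|}$ and $M^{\alpha}$, and their $L^2$-covering numbers follow from that of $\cG$ via the Lipschitz constants $|\alpha-1|M^{|\alpha-1|}$ and $\alpha M^\alpha$ of $x\mapsto e^{(\alpha-1)x}$, $x\mapsto e^{\alpha x}$ on $[-b,b]$; the Dudley integral computation with the prefactor $M^{|\alpha-1|}$-th, resp. $M^\alpha$, envelope then yields the two summands $\frac{\alpha M^{2|\alpha-1|}}{|\alpha-1|}(1+(|\alpha-1|\log M)^{a/2})$ and $M^{2\alpha}(1+(\alpha\log M)^a)$ defining $v_{M,\alpha,a}$ — after also accounting for the outer $\log$'s Lipschitz constants $M^{|\alpha-1|}$ and $M^{\alpha}$ being absorbed, which explains the squared powers $M^{2|\alpha-1|}$, $M^{2\alpha}$. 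Talagrand's inequality applied to each (bounded envelopes again) plus a union bound gives $2e^{-(nz^2\wedge nz)}$.

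The main obstacle I anticipate is twofold. First, keeping track of \emph{exactly} how the constants $M^{2\alpha}$, $M^{2|\alpha-1|}$ and the exponents $a$, $a/2$ arise: one must carefully combine (a) the Lipschitz constant of the outer $\log$ on the range of the inner integral, (b) the Lipschitz constant of the inner exponential when transferring covering numbers from $\cG$ to $e^{(\alpha-1)\cG}$ or $e^{\alpha\cG}$, and (c) the envelope appearing to the power $1-a$ in the Dudley bound — the asymmetry $a$ vs.\ $a/2$ in $v_{M,\alpha,a}$ suggests the $\mu$-term (order $\alpha-1$) and the $\nu$-term (order $\alpha$) are bounded by slightly different routes, likely because one of them uses a cruder or finer entropy estimate, and reproducing this faithfully requires care. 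Second, the $\alpha<1$ versus $\alpha>1$ sign bookkeeping in the R\'enyi variational expression (the factor $\tfrac{\alpha}{\alpha-1}$ changes sign) means the ``$\sup$ of lower bounds'' structure that made the KL approximation-error direction easy must be re-examined; for $\alpha\in(0,1)$ one side is a sup and for $\alpha>1$ the other, so the one-sided bias argument must be run in the correct direction in each regime, though the two-sided bound with $|\cdot|$ ultimately absorbs this. Everything else — the Dudley integral, Talagrand's inequality, the covering-number transfer under Lipschitz maps — is routine given the tools already set up in the paper.
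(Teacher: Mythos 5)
Your proposal matches the paper's proof in all essentials: the same split into approximation bias plus empirical-process fluctuation, the same use of the variational formulas to reduce the fluctuation to two suprema over $\cG$ and its exponential images, the same Lipschitz transfer of covering numbers, the same Dudley entropy integral (via Theorem 2.14.1 of \cite{AVDV-book}), the same application of Bousquet's form of Talagrand's inequality, and the same Lipschitz-based bound for the bias term. Your one mild mis-guess — attributing the $a$-vs-$a/2$ asymmetry in $v_{M,\alpha,a}$ to ``a cruder or finer entropy estimate'' — is off (both sides use the identical entropy bound; the asymmetry comes from mixing the $\sqrt{V_n}$-type variance contribution, which carries the $a/2$ power, with the $\EE[Z_n]$-type expectation contribution, which carries the $a$ power), but this does not affect the validity of the argument.
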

The proof of Theorem \ref{Thm:devineqren} is given in Section \ref{Sec:Thm:devineqren-proof} and relies on an application of Talagrand's concentration inequality for uniformly bounded function class. 

We note that neural classes $\cG$ such that $\bar \cP_{M,\rho}(\cG,\delta)$ contains non-trivial sets of distributions are known. For instance,  the class of shallow neural nets $\cG_k$ with width $k$,  bounded parameters and ReLU or sigmoid activation can approximate certain smooth class of functions uniformly on compact sets up to an error of $O\big(d^{1/2}k^{-1/2}\big)$\cite{Barron_1993,Klusowski-2018,sreekumar2021neural} (see also  \cite{YAROTSKY-2017,Schmidt-Hieber-2020,Dennis-2021} for some deep neural net approximation results). Moreover, $\cG_k$ also  satisfies \eqref{eq:coveringbnd} for $\beta >d/2$ (see proof of \cite[Theorem 3]{sreekumar2021neural}).  Further, sufficient primitive conditions on the class of distributions $(\mu,\nu)$ such that $f^{\star}_{\mu,\nu}$ belongs to this smooth class was also derived \cite[Proposition 9]{sreekumar2021neural}. These results implies that $\cP_{M,\rho} \big(\cG_k,c \mspace{2 mu} d^{1/2}k^{-1/2}\big)$, for an appropriate constant $c$, contains  finite discrete distributions and  smooth compactly supported continuous distributions  with bounded ratios of densities over appropriate parameter ranges. 
\subsection{Application: Auditing R\'{e}nyi Differential Privacy }\label{Sec:DPaudit}
Here, we consider an application of Theorem \ref{Thm:devineqren} to auditing DP,   which was introduced in \cite{DMKS-2006} as an approach for quantifying privacy leakage of privatization mechanisms. We recall some DP notions that are relevant to our setting. Consider a set~$\mathfrak{U}$ with a relation~$\sim$ such that $u \sim v$, for $u,v \in \mathfrak{U}$, denotes that $u$ and $v$ are adjacent. In the DP context, $\mathfrak{U}$ is a set of databases and $u\sim v$ denotes that $u$ and $v$ are adjacent databases, differing on a single entry. 
\begin{definition}[DP mechanisms\cite{DMKS-2006,DKMMN-2006,Mironov-2017}]
Let $\epsilon, \delta \geq 0$. A randomized (measurable) mechanism  $f:\mathfrak{U} \rightarrow \RR^d$ is 
\begin{enumerate}[(i)]
    \item $\epsilon$-differentially private if  $\PP\left(f(u) \in \cT\right) \leq e^{\epsilon}\,\PP\left(f(v) \in \cT\right)$ for every $u \sim v$ and $\cT \in \cB(\RR^d)$;
   \item $(\epsilon,\delta)$-differentially private if $\PP\left(f(u) \in \cT\right) \leq e^{\epsilon}\,\PP\left(f(v) \in \cT\right)+\delta$  for every $u \sim v$ and  $\cT \in \cB(\RR^d)$;
   \item $\epsilon$- R\'{e}nyi differentially private of order $\alpha$  if $\rendiv{\mu_{u}}{\mu_{v}}{\alpha} \leq \epsilon$ for every $u \sim v$,  where $\mu_{u} \in \cP(\RR^d)$ is the distribution of $f(u)$.
\end{enumerate}
\end{definition}
As $\epsilon$-DP is equivalent to $\sup_{u\sim v}\mathsf{D}_{\max}(\mu_u\|\mu_v)\leq \epsilon$, where $\mathsf{D}_{\max}$ is the $\infty$-order R\'{e}nyi divergence and $\rendiv{\mu}{\nu}{\alpha}$ is non-decreasing in $\alpha$ \cite[Theorem 3]{vanEvren_Reyni_Div2014}, it is clear that R\'{e}nyi DP of any  finite order $\alpha$ is a relaxation of DP. However, the two notions become equivalent at $\alpha = \infty$, and hence $\epsilon$- R\'{e}nyi DP is a good approximation of $\epsilon$-DP for sufficiently large  $\alpha$.

In practice, given output samples from a privacy mechanism, one encounters the problem of ascertaining whether the mechanism is differentially private or not, referred to as auditing DP. This can be framed as a binary hypothesis testing problem  where the null $H_0$ and the alternative $H_1$ corresponds to when privacy holds and does not hold, respectively. Note that the privacy mechanisms to be audited, hence  also the output distributions $\mu_u,\mu_v$, are usually unknown. Hence, auditing DP cannot be reduced to the problem of hypothesis testing  between probability distributions and computing R\'{e}nyi divergence to ascertain whether the mechanism is differentially private or not.

We next formally state the relevant hypothesis testing problem. 
Let $f:\mathfrak{U}\to \RR^d$ be a  
privacy mechanism and denote a pair of adjacent databases\footnote{Note that the current model subsumes the case of deterministic $(U,V)$ by taking $\tilde \pi$ to be a point mass on a pair of adjacent databases. In this case, the randomness in $\big(f(U),f(V))$ only comes from the mechanism.} by $(U,V) \sim \tilde \pi \in \cP(\mathfrak{U} \times \mathfrak{U})$.  
Let $\pi:=(f,f)_\#\tilde \pi$ be the joint distribution of $\big(f(U),f(V)\big)$, where $\#$ is the pushforward operation. Let $(X_1,Y_1),\ldots,(X_n,Y_n) $ $\sim \pi $ be pairwise i.i.d. samples of the privacy mechanism's output when acting on i.i.d.
pairs of adjacent databases, where  $ \pi=\pi_0$ under $H_0$ and $\pi=\pi_1$ under $H_{1}$.   With the marginals of $\pi_0$ and $\pi_1$ denoted by $(\mu_0,\nu_0)$ and $(\mu_1,\nu_1)$, respectively,  the relevant hypothesis test for auditing R\'{e}nyi DP of order $\alpha $ is  
\begin{equation} \label{HT-Ren-DP}
   H_0: \rendiv{\mu_0}{\nu_0}{\alpha} \leq \epsilon \quad \text{against} \quad 
   H_{1}: \rendiv{\mu_1}{\nu_1}{\alpha} > \epsilon.
\end{equation} 
Denote the empirical measures of $(X_1,\ldots,X_n)$ and $(Y_1,\ldots,Y_n)$ by $\hat \mu_{n}$ and $\hat \nu_{n}$, respectively. 
For a test statistic $T_n=T_n(X_1,\ldots,X_n,Y_1,\ldots,$ $Y_n)$, a standard class of tests rejects $H_0$ if $T_n>t_n$ for some $t_n \in \RR$.   
The type I and type II error probabilities achieved by the test $T_n$ are $e_{1,n}(T_n,t_n)=\PP(T_n >t_n|H=H_0)$ and $e_{2,n}(T_n,t_n)=\PP(T_n  \leq t_n|H=H_1)$, respectively, where $H$ denotes the true hypothesis.

In \cite{domingo2022auditing}, a hypothesis test for auditing DP using regularized kernel R\'{e}nyi divergence is proposed, where the null hypothesis is that the mechanism satisfies $(\epsilon,\delta)$-DP. A decision rule achieving any type I error probability is proposed, leaving the characterization of the  type II error probability open. In \cite{SGK-IT-2023}, a hypothesis testing based framework for auditing DP using KL divergence as the metric was proposed along with full asymptotic performance guarantees  that relied on the knowledge of limit distribution for smoothed KL divergence.  Here,  we propose a neural estimator based hypothesis testing framework for auditing R\'{e}nyi DP along with finite sample performance guarantees. We will assume that the neural net class $\cG_n$ used for the audit and the marginal distributions $(\mu_i,\nu_i)$, $i=0,1$, satisfy the following assumption. 
\begin{assump}[Assumptions on the neural class and distributions] \label{Assump:renDPtest}
Let  $\alpha \geq 1$,  $M \geq 1$,  $\beta >d/2$, and $b=\log M$. The neural net class $\cG_n$ and  $(\mu_i,\nu_i)$, $i=0,1$, satisfy $\cG_n \subseteq \cG_{b,\beta,\rho}$ and $ (\mu_i,\nu_i) \in \cP_{M,\beta}(\cG_n,\delta_n)$,  for some  $\delta_n =o(1)$.   
 Further, $\rendiv{\mu_0}{\nu_0}{\alpha} \leq \epsilon$ and $\rendiv{\mu_1}{\nu_1}{\alpha} > \epsilon$.
\end{assump}
Assumption \ref{Assump:renDPtest} is not much restrictive in practice. Indeed, the definition of DP itself necessitates that $\norm{p_{\mu_u}/p_{\mu_{v}}}_{\infty,\rho}$ is bounded\footnote{Here, $\rho$ can be taken to be the Lebesgue measure in the continuous distribution setting or average of probability mass functions  in the discrete setting.} uniformly for all $u,v \in \mathfrak{U}$ with $u \sim v$. Moreover, as alluded to earlier, it is known that even restricting to shallow neural nets,  the class of distributions $\cP_{M,\beta}(\cG_n,\delta_n)$ with $\delta_n=O(n^{-1/2})$ is sufficiently rich. In particular, for  $\beta> d/2$ and $M$ sufficiently large,  such a class  includes discrete distributions that are mutually absolutely continuous or  compactly supported continuous distributions with bounded ratios of densities. Hence,  Assumption \ref{Assump:renDPtest} allows for audit of most privacy mechanisms of practical interest. 

The next result proposes a test statistic for R\'{e}nyi DP audit and characterizes the associated error probabilities (see Section \ref{Sec:prop:ht-audit-proof} for proof).
\begin{prop}[DP audit guarantees] \label{prop:ht-audit}
    Suppose Assumption \ref{Assump:renDPtest} holds, and $c_{d,\beta},\mspace{2 mu}  v_{M,\alpha,a}$ be as specified in Theorem \ref{Thm:devineqren}.  Then, for $0 \leq \tau \leq 1$, the test statistic $T_n=\renest{\hat \mu_n}{\hat \nu_n}{\alpha}{\cG_n}$ with critical value 
    \begin{align}  t_n=\epsilon+\big(M^{2\alpha}+M^{2\abs{\alpha-1}}\big)\delta_n+ c_{d,\beta}  v_{M,\alpha,a}\big(n^{-\frac 12}+n^{\frac{\tau-1}{2}}\big), \notag
    \end{align}
    satisfies
    \[
       e_{1,n}(T_n,t_n) \leq 2e^{-n^{\tau}} \quad \text{and} \quad  
       e_{2,n}(T_n,t_n) \leq 2e^{-\vartheta_{n,\alpha,\delta_n}(\tau)}, 
    \]
 where  $\vartheta_{n,\alpha,\delta_n}(\tau)$ is given in \eqref{eq:expt2err} below.
\end{prop}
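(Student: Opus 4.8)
The plan is to apply the uniform deviation inequality of Theorem~\ref{Thm:devineqren}(ii) once under $H_0$ and once under $H_1$, each time choosing the free parameter in that inequality so that the deviation radius lines up with the critical value $t_n$, and then to simplify the resulting exponents using $0\le\tau\le1$. Write $R_n(z):=\big(M^{2\alpha}+M^{2\abs{\alpha-1}}\big)\delta_n+c_{d,\beta}\,v_{M,\alpha,a}\big(n^{-1/2}+z\big)$, so that $t_n=\epsilon+R_n\big(n^{(\tau-1)/2}\big)$. Before applying Theorem~\ref{Thm:devineqren}(ii) I would note that, under Assumption~\ref{Assump:renDPtest}, the audit class $\cG_n\subseteq\cG_{b,\beta,\rho}$ still satisfies Assumption~\ref{Assump:neuralclass} (uniform boundedness and pointwise separability pass to subclasses, and the uniform $L^2$-covering-entropy bound \eqref{eq:coveringbnd} is inherited up to a universal adjustment of $\hat c_{d,\beta}$), and that $(\mu_i,\nu_i)\in\bar\cP_{M,\rho}(\cG_n,\delta_n)$ for $i=0,1$; hence for every $z\ge0$ and $i\in\{0,1\}$,
\begin{align*}
\PP\big(\abs{T_n-\rendiv{\mu_i}{\nu_i}{\alpha}}\ge R_n(z)\big)\le 2e^{-(nz^2\wedge nz)}.
\end{align*}

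\textbf{Type I error.} Under $H_0$ one has $\rendiv{\mu_0}{\nu_0}{\alpha}\le\epsilon$, so $t_n\ge\rendiv{\mu_0}{\nu_0}{\alpha}+R_n\big(n^{(\tau-1)/2}\big)$ and therefore $\{T_n>t_n\}\subseteq\{\abs{T_n-\rendiv{\mu_0}{\nu_0}{\alpha}}\ge R_n(n^{(\tau-1)/2})\}$. Applying the displayed bound with $z=n^{(\tau-1)/2}$ gives $e_{1,n}(T_n,t_n)\le2e^{-(nz^2\wedge nz)}$; since $nz^2=n^{\tau}$ and $nz=n^{(1+\tau)/2}\ge n^{\tau}$ for $\tau\le1$, the minimum equals $n^\tau$, giving $e_{1,n}(T_n,t_n)\le2e^{-n^\tau}$.

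\textbf{Type II error.} Under $H_1$ put $\Delta_n:=\rendiv{\mu_1}{\nu_1}{\alpha}-\epsilon>0$. On $\{T_n\le t_n\}$ we have $\rendiv{\mu_1}{\nu_1}{\alpha}-T_n\ge\Delta_n-R_n\big(n^{(\tau-1)/2}\big)$, hence $\{T_n\le t_n\}\subseteq\{\abs{T_n-\rendiv{\mu_1}{\nu_1}{\alpha}}\ge\Delta_n-R_n(n^{(\tau-1)/2})\}$ whenever the right-hand side is nonnegative. Now choose $z=z_n^\star$ with $R_n(z_n^\star)=\Delta_n-R_n\big(n^{(\tau-1)/2}\big)$, i.e.
\begin{align*}
z_n^\star=\frac{\Delta_n-2\big(M^{2\alpha}+M^{2\abs{\alpha-1}}\big)\delta_n}{c_{d,\beta}\,v_{M,\alpha,a}}-2n^{-1/2}-n^{(\tau-1)/2},
\end{align*}
and apply the displayed bound: this yields $e_{2,n}(T_n,t_n)\le2e^{-(n(z_n^\star)^2\wedge n z_n^\star)}$, which is the claimed bound with $\vartheta_{n,\alpha,\delta_n}(\tau):=n(z_n^\star)^2\wedge n z_n^\star$. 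For $z_n^\star<0$ this exponent is negative and the bound is vacuous, which is consistent — a meaningful guarantee requires the privacy violation $\Delta_n$ to exceed the $o(1)$ threshold $2(M^{2\alpha}+M^{2\abs{\alpha-1}})\delta_n+2c_{d,\beta}v_{M,\alpha,a}n^{-1/2}+c_{d,\beta}v_{M,\alpha,a}n^{(\tau-1)/2}$, after which the test is consistent against any fixed violation since $\delta_n=o(1)$.

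\textbf{Main obstacle.} There is no substantive obstacle; the content is entirely bookkeeping: verifying that $\cG_n$ inherits the hypotheses of Theorem~\ref{Thm:devineqren}, solving $R_n(z)=\Delta_n-R_n(n^{(\tau-1)/2})$ for $z$ and packaging it as $\vartheta_{n,\alpha,\delta_n}(\tau)$, and the elementary exponent simplification $nz^2\wedge nz=n^\tau$ at $z=n^{(\tau-1)/2}$ for $\tau\in[0,1]$.
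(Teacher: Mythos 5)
Your proposal is correct and follows essentially the same route as the paper: apply Theorem~\ref{Thm:devineqren}(ii) under each hypothesis, plug in $z=n^{(\tau-1)/2}$ for the type I error (noting $n^\tau\wedge n^{(\tau+1)/2}=n^\tau$ for $\tau\in[0,1]$), and solve for the $z$ that makes the deviation radius equal the gap $\rendiv{\mu_1}{\nu_1}{\alpha}-t_n$ to get the type II exponent, with your $z_n^\star$ coinciding exactly with the paper's $\bar\vartheta_{n,\alpha,\delta_n}(\tau)$. The only cosmetic difference is that the paper packages the exponent as $n(\bar\vartheta\vee 0)^2\wedge n(\bar\vartheta\vee 0)$ to keep it nonnegative when the gap is too small, whereas you leave $z_n^\star$ possibly negative and note the bound is then vacuous — same content, and your remark on $\cG_n\subseteq\cG_{b,\beta,\rho}$ inheriting Assumption~\ref{Assump:neuralclass} makes explicit a step the paper uses silently.
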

We note that the performance guarantees given above, though derived for auditing  DP, apply to any hypothesis test of a similar kind using R\'{e}nyi divergence as the metric to differentiate the hypotheses.
\section{Proofs}
We first state a few definitions that will be used throughout this section. 
 For $\alpha \in (0,\infty)$, let
  \begin{align}
   \cF_{\varphi_{\sigma}}(\cF):=\big\{f*\varphi_{\sigma}: f \in \cF\big\}  \quad \mbox{and } \quad   \cE_{\varphi_{\sigma},\alpha}(\cF):=\big\{e^{ \alpha f}*\varphi_{\sigma}:f \in \cF\big\}, \label{eq:smoothfuncls}
\end{align}    
 be the Gaussian-smoothed function classes obtained by convolving the functions in $\cF$ and $e^{\alpha \cF}:=\{e^{\alpha f}:f \in \cF\}$ by $\varphi_{\sigma}$, respectively. We will denote $\cE_{\varphi_{\sigma},1}(\cF)$ by $\cE_{\varphi_{\sigma}}(\cF)$ henceforth. 
 
 For $\delta \geq 0$ and $\cF$ with envelope $F$, define the  entropy integral (see \eqref{cov-pack-num} for definition of covering entropy): 
\begin{align}
    J(\delta,F,\cF,\mu):= \int_0^{\delta} \sup_\gamma \sqrt{\log N(\epsilon \norm{F}_{2,\gamma},F, \norm{\cdot}_{2,\gamma} )} d\epsilon. \label{eq:coventrdef}
\end{align}
Here, the supremum is over all  probability measures $\gamma$ with discrete finite support contained in that of $\mu$.
\subsection{Proof of Theorem \ref{Thm:KL_compsupp_sg}}
\label{Thm:KL_compsupp_sg-proof}
We will state a more general lemma than required for our purposes, which provides a deviation inequality for smoothed plug-in estimators of KL divergence satisfying certain regularity conditions. Let $\Phi \in \cP(\RR^d)$, and $\phi$ denote its Lebesgue density which will act as a smoothing kernel. Note that $\norm{\phi}_{1,\RR^d}=1$. 
Set $ \cF_{\phi}(\cF):=\big\{f*\phi: f \in \cF\big\}$ and $\cE_{\phi}(\cF):=\big\{e^f*\phi: f \in \cF\big\}$. 
\begin{lemma}[Deviation inequality for KL divergence]\label{Lem:concent-KL}
 Let $\pi \in \cP(\RR^d \times \RR^d)$ with marginals $\mu$ and $\nu$ be such that $\mu \ll \nu$.  Suppose $\cF_n$ with envelope $F_n$ is such that $\cF_{n,\phi}:=\cF_{\phi}(\cF_n)$ and $\cE_{n,\phi}:=\cE_{\phi}(\cF_n)$ with envelopes\footnote{By Young's convolution inequality,  a sufficient condition for existence of a measurable envelope $F_n *\phi$ for $\cF_{n,\phi}$ is $F_n \in L^p(\RR^d)$ for some $p \geq 1$.}  $F_n *\phi$ and $e^{F_n} *\phi$, respectively, satisfy Assumption \ref{Assump:funclass},  $f_{\mu*\Phi,\nu*\Phi}^{\star}\in \cF_n$, $F_n*\phi \in L^p(\mu)$ and $e^{F_n}*\phi \in L^p(\nu)$ for   some $p\geq 2$. Further, assume that   for $(X^n,Y^n) \sim \pi^{\otimes n}$,  $\hat \mu_n * \Phi \ll \hat \nu_n * \Phi$ and 
  $f_{\hat \mu_n*\Phi,\hat \nu_n*\Phi}^{\star} \in \cF_n$ a.s. 
Then
\begin{align}
&\PP \left(  \abs{\kl{\hat \mu_n*\Phi}{\hat \nu_n*\Phi}-\kl{\mu*\Phi}{\nu*\Phi}} \gtrsim_p n^{-\frac 12} t_{n,\phi,\mu,\nu} +z\right) \notag \\
& \qquad \qquad \qquad  \leq e^{-R_{n,p,\phi,\mu}(z)} + e^{-\tilde R_{n,p,\phi,\nu}(z)}+\Big(\norm{F_n*\phi}_{p,\mu}^p +\norm{e^{F_n}*\phi}_{p,\nu}^p\Big) n^{1-p} z^{-p}, \label{eq:KLdevineq-mombnd} 
\end{align}
where  
\begin{subequations} \label{eq:paramsconbndmom}
\begin{flalign}  
&t_{n,\phi,\mu,\nu}:=J(1,F_n*\phi,\cF_{n,\phi},\mu) \norm{F_n*\phi}_{2,\mu}+   J(1,e^{F_n}*\phi,\cE_{n,\phi},\nu) \norm{e^{F_n}*\phi}_{2,\nu},  \\
&R_{n,p,\phi,\mu}(z):=    \frac{nz^2}{ \norm{F_n*\phi}_{2,\mu}^2} \wedge \frac{n^{\frac{p-1}{p}}z}{\norm{F_n*\phi}_{p,\mu}} \quad \mbox{and} \quad\tilde R_{n,p,\phi,\nu}(z):=  \frac{nz^2}{ \norm{e^{F_n}*\phi}_{2,\nu}^2} \wedge \frac{n^{\frac{p-1}{p}}z}{\norm{e^{F_n}*\phi}_{p,\nu}}. 
\end{flalign}
\end{subequations}
In particular, when  $\mu,\nu$ are compactly supported,  $\phi=\varphi_{\sigma}$ and
\begin{align}
\norm{F_n*\varphi_{\sigma}}_{2,\mu} \vee \norm{F_n*\varphi_{\sigma}}_{\infty,\mu} \vee \norm{e^{F_n}*\varphi_{\sigma}}_{2,\nu} \vee \norm{e^{F_n}*\varphi_{\sigma}}_{\infty,\nu} <\infty, \notag
\end{align}
\begin{align}
&\PP \left(  \abs{\kl{\hat \mu_n*\gamma_{\sigma}}{\hat \nu_n*\gamma_{\sigma}}-\kl{\mu*\gamma_{\sigma}}{\nu*\gamma_{\sigma}}} \gtrsim n^{-\frac 12} t_{n,\varphi_{\sigma},\mu,\nu} +z\right)  \notag \\
& \qquad \qquad \qquad \qquad  \leq e^{-\left(\frac{nz^2}{ \norm{F_n*\varphi_{\sigma}}_{2,\mu}^2} \wedge \frac{nz}{\norm{F_n*\varphi_{\sigma}}_{\infty,\mu}}\right)} + e^{-\left(\frac{nz^2}{ \norm{e^{F_n}*\varphi_{\sigma}}_{2,\mu}^2} \wedge \frac{nz}{\norm{e^{F_n}*\varphi_{\sigma}}_{\infty,\nu}}\right)}. \label{eq:KLdevineq-compsupp}
\end{align}
\end{lemma}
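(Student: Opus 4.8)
The plan is to reduce the two-sided deviation $\bigl|\kl{\hat\mu_n*\Phi}{\hat\nu_n*\Phi}-\kl{\mu*\Phi}{\nu*\Phi}\bigr|$ to the supremum of two empirical processes indexed by the smoothed classes $\cF_{n,\phi}$ and $\cE_{n,\phi}$, and then apply a concentration/tail inequality for such suprema. First I would use the variational representation \eqref{eq:varexpkl}, together with the hypothesis that $f^\star_{\mu*\Phi,\nu*\Phi}$ and $f^\star_{\hat\mu_n*\Phi,\hat\nu_n*\Phi}$ both lie in $\cF_n$, to sandwich the difference. Concretely, writing $\hat\mu_n*\Phi$ for the smoothed empirical measure, note that its density at $x$ is exactly $\frac1n\sum_i \phi(x-X_i)=\EE_{\hat\mu_n}[\phi(x-\cdot)]$, so an expectation of $f*\phi$ under $\hat\mu_n$ equals an expectation of $f$ under $\hat\mu_n*\Phi$, and similarly $\EE_{\hat\nu_n}[e^f*\phi]$ relates to $\EE_{\hat\nu_n*\Phi}[e^f]$. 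Evaluating the variational functional at the true optimizer $f^\star_{\mu*\Phi,\nu*\Phi}$ gives a lower bound for $\kl{\hat\mu_n*\Phi}{\hat\nu_n*\Phi}$ that differs from $\kl{\mu*\Phi}{\nu*\Phi}$ by $(\EE_{\hat\mu_n}-\EE_\mu)[f^\star*\phi]-(\EE_{\hat\nu_n}-\EE_\nu)[e^{f^\star}*\phi]$; evaluating at the empirical optimizer $f^\star_{\hat\mu_n*\Phi,\hat\nu_n*\Phi}$ gives the reverse bound. Taking suprema over $\cF_n$ in each term, the total deviation is bounded by
\begin{align}
\sup_{g\in\cF_{n,\phi}}\abs[\big]{(\EE_{\hat\mu_n}-\EE_\mu)[g]}+\sup_{h\in\cE_{n,\phi}}\abs[\big]{(\EE_{\hat\nu_n}-\EE_\nu)[h]}. \notag
\end{align}

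Next I would control each empirical-process supremum. For the general (unbounded envelope) case, I would invoke the Fuk--Nagaev type tail bound of \cite[Theorem 2]{Adamczak-2010} (Theorem \ref{Thm:Admczak-tailbnd}), which requires: (a) a bound on the expected supremum in terms of an entropy integral — this is where $J(1,F_n*\phi,\cF_{n,\phi},\mu)\norm{F_n*\phi}_{2,\mu}$ and the analogous term for $\cE_{n,\phi}$ enter, giving the $n^{-1/2}t_{n,\phi,\mu,\nu}$ centering term after dividing by $n$; (b) control of the weak variance $\norm{F_n*\phi}_{2,\mu}^2$; and (c) a moment bound $\norm{F_n*\phi}_{p,\mu}$ on the envelope, which supplies the polynomial tail $\norm{F_n*\phi}_{p,\mu}^p\, n^{1-p}z^{-p}$ (via the maximal term $\EE[\max_i (F_n*\phi)^p(X_i)]\le n\norm{F_n*\phi}_{p,\mu}^p$) and the subexponential regime $n^{(p-1)/p}z/\norm{F_n*\phi}_{p,\mu}$ inside $R_{n,p,\phi,\mu}(z)$. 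Since both processes appear, I would apply the tail bound to each with half the deviation budget, and a union bound gives \eqref{eq:KLdevineq-mombnd}. Measurability of the suprema and the replacement of uncountable suprema by countable subclasses is handled by Assumption \ref{Assump:funclass}, which is assumed to hold for $\cF_{n,\phi}$ and $\cE_{n,\phi}$.

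For the compactly supported specialization \eqref{eq:KLdevineq-compsupp}, the envelopes $F_n*\varphi_\sigma$ and $e^{F_n}*\varphi_\sigma$ are bounded, so I would instead use Talagrand's (bounded) concentration inequality in Bousquet's form \cite{BOUSQUET-2002}: the bound on the expected supremum stays the same (entropy integral times $L^2$ norm of the envelope), while the tail is now purely subgaussian/subexponential, $\exp(-(nz^2/\norm{F_n*\varphi_\sigma}_{2,\mu}^2\wedge nz/\norm{F_n*\varphi_\sigma}_{\infty,\mu}))$, eliminating the polynomial term; again one applies it to each process with half the budget and unions. The main obstacle I anticipate is item (a): bounding the \emph{expected} supremum of the smoothed empirical process by the stated entropy integral uniformly in the underlying distribution. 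This is a standard chaining/symmetrization argument (Dudley's bound for the Rademacher process, then symmetrization to the centered process, with the envelope's $L^2$ norm providing the scale), but making the constants explicit and, more importantly, ensuring the covering numbers $N(\epsilon\norm{F_n*\varphi_\sigma}_{2,\gamma},\cF_{n,\phi},\norm{\cdot}_{2,\gamma})$ are finite and well-behaved requires the Hölder-norm bound on the Gaussian-smoothed class (Lemma \ref{lem:bndholdernorm}) together with the refined Hölder covering estimate (Theorem \ref{Thm:coventconst}); those inputs are what eventually feed $\xi_{r,d,\sigma}$ and $\tilde\xi_{r,d,\sigma}$ in Theorem \ref{Thm:KL_compsupp_sg}, but at the level of Lemma \ref{Lem:concent-KL} they are abstracted into the hypothesis that $J(1,\cdot,\cdot,\cdot)<\infty$. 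A secondary subtlety is verifying the a.s.\ existence and membership of the empirical optimizer $f^\star_{\hat\mu_n*\Phi,\hat\nu_n*\Phi}$ in $\cF_n$ — this is assumed in the statement, but in the compactly supported application it follows because the log-density ratio of two Gaussian mixtures supported on a ball is controlled affinely in $\norm{x}$.
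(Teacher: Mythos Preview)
Your proposal is correct and follows essentially the same route as the paper: reduce the KL deviation via the variational form \eqref{eq:varexpkl} (using that both optimizers lie in $\cF_n$) to the sum of two empirical-process suprema indexed by $\cF_{n,\phi}$ and $\cE_{n,\phi}$, bound the expected suprema by the entropy integrals via Theorem \ref{Thm:Dudley-Pollard}, and then apply Theorem \ref{Thm:Admczak-tailbnd} (Fuk--Nagaev) for \eqref{eq:KLdevineq-mombnd} and Talagrand's inequality \eqref{eq:talagrandconctrunc} for \eqref{eq:KLdevineq-compsupp}, splitting the deviation budget in half and union-bounding. The only cosmetic difference is that the paper collapses your two-sided sandwich into the single inequality $|\sup_f A(f)-\sup_f B(f)|\le\sup_f|A(f)-B(f)|$ (equation \eqref{eq:diffsupbnd}), which is exactly what your optimizer-evaluation argument unpacks.
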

 The proof of Lemma \ref{Lem:concent-KL} is given in Section \ref{Lem:concent-KL-proof}. The key idea in its proof is to use the variational expression \eqref{eq:varexpkl} to upper bound the deviation $|\kl{\hat \mu_n*\Phi}{\hat \nu_n*\Phi}-\kl{\mu*\Phi}{\nu*\Phi}|$ in terms of sum of suprema of empirical processes indexed by $\cF_{n,\phi}$ and $\cE_{n,\phi}$. Then, \eqref{eq:KLdevineq-mombnd} and \eqref{eq:KLdevineq-compsupp} follow by invoking a Fuk-Nagaev type  inequality given in  \cite[Theorem 2]{Adamczak-2010} (see Theorem \ref{Thm:Admczak-tailbnd} below) and  Talagrand's concentration inequality given in \cite[Theorem 2.3]{BOUSQUET-2002} (see Theorem \ref{Thm:Talagrand-Bosquet} below), respectively.  While \eqref{eq:KLdevineq-compsupp} will suffice for our purposes, Lemma \ref{Lem:concent-KL} is more general both in terms of the flexibility of the smoothing kernel $\phi$ as well as  relaxed regularity conditions on the probability distributions. For instance, \eqref{eq:KLdevineq-mombnd} may be used to derive deviation inequalities in the two-sample Gaussian-smoothed setting when $\norm{X-Y}_2$ is bounded a.s. for $(X,Y) \sim \pi$ and the tails  of the marginals $\mu,\nu$  decay faster than sub-Gaussian. By following similar steps as in the proof below, it is easy to see that this case fits within the framework of Lemma \ref{Lem:concent-KL} since the aforementioned boundedness  and tail decay condition ensures that $f_{\hat \mu_n*\varphi,\hat \nu_n*\varphi}^{\star},f_{\mu*\varphi, \nu*\varphi}^{\star} \in \cF_n$ for an $\cF_n$ with envelope $F_n(x)=O_{\pi}(1+\norm{x}^2)$ such that the exponential moments of $F_n*\varphi$ exists. 
\begin{remark}[One-sample case] \label{Rem:onesamprate}
From the proof of Lemma \ref{Lem:concent-KL}, it is easy to see that \eqref{eq:KLdevineq-mombnd} and \eqref{eq:KLdevineq-compsupp} holds with $\nu*\Phi$ and $\hat \nu_n*\Phi$ replaced with $\nu$ such that $\mu*\Phi,\hat \mu_n*\Phi \ll \nu$ if the integrability conditions on only $F_n$ hold, i.e.,    $\norm{F_n*\phi}_{p,\mu}<\infty$ and $\norm{F_n*\phi}_{2,\mu} \vee \norm{F_n*\phi}_{\infty,\mu} <\infty$, respectively. For instance, \eqref{eq:KLdevineq-mombnd} simplifies to
\begin{align}
&\PP \left(  \abs{\kl{\hat \mu_n*\Phi}{ \nu}-\kl{\mu*\Phi}{\nu}} \gtrsim_q  n^{-\frac 12} 
J(1,F_n*\phi,\cF_{n,\phi},\mu) \norm{F_n*\phi}_{2,\mu} +z\right) \notag \\
 &\qquad \qquad \qquad \qquad\qquad \qquad\qquad \qquad\qquad \qquad\leq e^{-R_{n,p,\phi,\mu}(z)} +\norm{F_n*\phi}_{p,\mu}^p  n^{1-p} z^{-p}. \notag
\end{align}
Also, by considering the corresponding simplication for \eqref{eq:KLdevineq-compsupp}, the following improvement of  \eqref{eq:KLdevineq-compsupp-simp} holds in the one-sample setting where $\nu$ is known and only $\mu$ is estimated as $\hat \mu_n$:
\begin{flalign}
&\sup_{\mu,\nu \in \cP\left(\BB_d(r)\right)} \mspace{-3 mu}\PP \left(  \abs{\kl{\hat \mu_n*\gamma_{\sigma}}{\nu*\gamma_{\sigma}}-\kl{\mu*\gamma_{\sigma}}{\nu*\gamma_{\sigma}}} \gtrsim \bar{\xi}_{r,d,\sigma} \big(\tilde \xi_{r,d,\sigma}n^{-\frac 12}  +z\big)\right)  \mspace{-3 mu}\leq e^{-\left(nz^2 \wedge nz\right)}, \label{eq:onesampspecexp} 
\end{flalign}
where $\tilde \xi_{r,d,\sigma}$ is given in \eqref{eq:devterm2} below, and $\bar{\xi}_{r,d,\sigma}:=(r^2+4r)\big(1+r\sigma\sqrt{d}\big) \sigma^{-2}$.
\end{remark}
To prove Theorem \ref{Thm:KL_compsupp_sg}-$(i)$, fix $\mu,\nu \in \cP\big(\BB_d(r)\big)$. 
We may assume that  $r>0$ since \eqref{eq:KLdevineq-compsupp-simp} trivially holds for $r=0$.   Since $\supp(\mu),\supp(\nu) \subseteq \BB_d(r)$,  $\max_{1 \leq i \leq n} \norm{X_i} \vee \norm{Y_i}  \leq r$. Then  
\begin{align}
    \frac{\hat \mu_n * \varphi_{\sigma}(x)}{\hat \nu_n * \varphi_{\sigma}(x)}=\frac{\sum_{i=1}^n e^{-\frac{\norm{x-X_i}^2}{2 \sigma^2}}}{\sum_{i=1}^n e^{-\frac{\norm{x-Y_i}^2}{2 \sigma^2}}} \leq \frac{ e^{-\min_{i}\frac{\norm{x-X_i}^2}{2 \sigma^2}}}{ e^{-\max_{i}\frac{\norm{x-Y_i}^2}{2 \sigma^2}}} \leq \frac{ e^{-\min_{i}\frac{\norm{x}^2+\norm{X_i}^2-2\norm{x} \norm{X_i}}{2 \sigma^2}}}{ e^{-\max_{i}\frac{\norm{x}^2+\norm{Y_i}^2+2\norm{x} \norm{Y_i}}{2 \sigma^2}}} \leq e^{\frac{r^2+4r\norm{x}}{2 \sigma^2}}. \notag 
\end{align}
Similarly,  
\begin{align}
    \frac{\mu * \varphi_{\sigma}(x)}{\nu * \varphi_{\sigma}(x)}=\frac{ \int_{\RR^d} e^{-\frac{\norm{x-x'}^2}{2 \sigma^2}} d\mu(x')}{\int_{\RR^d} e^{-\frac{\norm{x-y}^2}{2 \sigma^2}}d\nu(y)} \leq \frac{ e^{-\min_{\norm{x'} \leq r }\frac{\norm{x-x'}^2}{2 \sigma^2}}}{ e^{-\max_{\norm{y} \leq r}\frac{\norm{x-y}^2}{2 \sigma^2}}} \leq \frac{ e^{-\frac{(\norm{x}^2-2r\norm{x}) }{2 \sigma^2}}}{ e^{-\frac{(\norm{x}^2+r^2+2\norm{x} r)}{2 \sigma^2}}} \leq e^{\frac{r^2+4r\norm{x}}{2 \sigma^2}}. \notag 
\end{align}
By similar steps, the same bounds also hold for  $\nu * \varphi_{\sigma}(x)/\mu * \varphi_{\sigma}(x)$ and $\hat \nu_n * \varphi_{\sigma}(x)/\hat \mu_n * \varphi_{\sigma}(x)$. Consequently, we have 
\begin{align}
    &\abs{f_{\hat \mu_n*\gamma_{\sigma},\hat \nu_n*\gamma_{\sigma}}^{\star}(x)} \vee \abs{f_{\mu*\gamma_{\sigma},\nu*\gamma_{\sigma}}^{\star}(x)} \leq \frac{r^2+4r\norm{x}}{2 \sigma^2}. \notag 
\end{align}
 Hence
 \begin{subequations}
\begin{align}\label{eq:loglikfncls} 
& f_{\hat \mu_n*\gamma_{\sigma},\hat \nu_n*\gamma_{\sigma}}^{\star}, f_{\mu*\gamma_{\sigma},\nu*\gamma_{\sigma}}^{\star} \in \cF\big(b_{r,\sigma},1\big) \mbox{ with } b_{r,\sigma}:=\frac{r^2+4r}{\sigma^2}, \\
   \mbox{where }& \cF(b,q):=\left\{f:\RR^d \rightarrow \RR, \abs{f(x)}  \leq 0.5 b(1+\norm{x}^q)\right\}.  
\end{align}      
 \end{subequations}

\noindent
For $\cX \subseteq \RR^d$, let
 \begin{align}
    v_{b,q}(\cX):= \sup_{x \in \cX} 0.5 b\big(1+\norm{x}^q\big).\label{eq:envelopmax}
\end{align} 
Define $F_{b,q}(x) :=0.5 b\big(1+\norm{x}^q\big)$,
  \begin{align}
 &  \cF_{\varphi_{\sigma}}(b,q):=\{f*\varphi_{\sigma}: f\in  \cF(b,q)\} \quad \mbox{ and } \quad 
\cE_{\varphi_{\sigma},\alpha}(b,q):=\{e^{\alpha f}*\varphi_{\sigma}: f\in  \cF(b,q)\}. \notag 
\end{align} 
Set $F_{b}(x)=0.5 b(1+\norm{x})$, $\cF_{\varphi_{\sigma}}(b):=\cF_{\varphi_{\sigma}}(b,1)$ and $   \cE_{\varphi_{\sigma}}(b):=\cE_{\varphi_{\sigma},1}(b,1)$. 

\medskip

For applying Lemma \ref{Lem:concent-KL}, we will require a bound on the entropy integrals,  $J(1,F_{b_{r,\sigma}}*\varphi_{\sigma},\cF_{\varphi_{\sigma}}(b_{r,\sigma}),\mu)$ and $ J(1,e^{F_{b_{r,\sigma}}}*\varphi_{\sigma},\cE_{\varphi_{\sigma}}(b_{r,\sigma}),\nu)$. To this end, we first obtain a slightly more general result bounding the covering entropy of $\cF_{\varphi_{\sigma}}(b,q)$. 
\begin{lemma}[Entropy bounds for smoothed function class]\label{lem:bndholdernorm}
Let $0<\sigma \leq 1$, $b \geq 0$, $0<\beta <\infty$, and $q \in \{0\} \cup [1,\infty)$. Then, for all $f \in \cF(b,q)$ and $\cX \subseteq \RR^d$ with non-empty interior,
  \begin{align}
\norm{f*\varphi_{\sigma}}_{\beta,\cX} &\leq \check c_{d,\beta,q}\mspace{2 mu}v_{b,q}(\cX)\sigma^{-(\ubar{\beta} \vee 1)}, \label{eq:Holdconstbndfclass}
\end{align}  
where $\check c_{d,\beta,q}$ is specified in \eqref{eq:constholdnorm} below. Further, for any $p \geq 1$ and $\gamma \in \cP\big(\BB_d(r)\big)$,
\begin{subequations}
\begin{align}
  &  \log N\big(\norm{F_{b,q}*\varphi_{\sigma}}_{\infty,\BB_d(r)}\epsilon,\cF_{\varphi_{\sigma}}\big(b,q\big),\norm{\cdot}_{\infty,\BB_d(r)}\big) \leq  c_{d,\beta,q}(1+r)^{d}(1+r^q)^{\frac{d}{\beta}} \mspace{2 mu} \epsilon^{-\frac{d}{\beta}}\sigma^{-\frac{(\ubar{\beta} \vee 1)d}{\beta}}, \label{eq:coventunif} \\
              &    \log N\big(\norm{F_{b,q}*\varphi_{\sigma}}_{p,\gamma}\epsilon,\cF_{\varphi_{\sigma}}\big(b,q\big),\norm{\cdot}_{p,\gamma}\big) \leq c_{d,\beta,q} (1+r)^{d}(1+r^q)^{\frac{d}{\beta}} \mspace{2 mu} \epsilon^{-\frac{d}{\beta}}\sigma^{-\frac{(\ubar{\beta} \vee 1)d}{\beta}}, \label{eq:coventholderprob} 
\end{align}
\end{subequations}
where $c_{d,\beta,q}$ is given in \eqref{eq:coventrconst} below. The same upper bounds in \eqref{eq:coventunif}-\eqref{eq:coventholderprob} also holds with  $\cF_{\varphi_{\sigma}}(b,q)$ and $F_{b,q}$  replaced by  $\cE_{\varphi_{\sigma},\alpha}(b,q)$ and $e^{\alpha F_{b,q}}$, respectively.  
Moreover, the above upper bounds with $q=1$ also applies to all $q \in [0,1]$ up to a universal multiplicative factor, and   $c_{d,\frac{d}{2}+1,q}=O\big(d^{6d+9}(9e^2)^d\big)$ for all $q \in [0,1]$.
\end{lemma}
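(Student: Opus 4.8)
The plan is to first bound the H\"older norm of $f*\varphi_{\sigma}$ over a ball and then feed this into a covering--entropy estimate for H\"older balls. Since $f\in\cF(b,q)$ is only controlled by the envelope $F_{b,q}(x)=0.5b(1+\norm{x}^q)$ and is not assumed differentiable, all derivatives must fall on the Gaussian kernel: $D^k(f*\varphi_{\sigma})=f*(D^k\varphi_{\sigma})$, where (from the Hermite--polynomial structure of Gaussian derivatives) $D^k\varphi_{\sigma}(y)=(-\sigma)^{-|k|}\big(\prod_{i=1}^{d}H_{k_i}(y_i/\sigma)\big)\varphi_{\sigma}(y)$, $H_m$ being the probabilists' Hermite polynomials. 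After the substitution $y=\sigma z$ this gives, for each $x\in\cX$,
\begin{align}
\abs{D^k(f*\varphi_{\sigma})(x)}\ \le\ \sigma^{-|k|}\int_{\RR^d}\abs{f(x-\sigma z)}\ \prod_{i=1}^{d}\abs{H_{k_i}(z_i)}\ \varphi_1(z)\,dz. \notag
\end{align}
I would bound $\abs{f(x-\sigma z)}\le F_{b,q}(x-\sigma z)\lesssim_q v_{b,q}(\cX)\,(1+\norm{z}^q)$ (trivial for $q=0$; for $q\ge 1$ using $\sigma\le 1$ and convexity of $t\mapsto t^q$, so that the supremum of $\norm{\cdot}^q$ over $\cX$ is governed by $v_{b,q}(\cX)$, and for a non-convex $\cX$ passing first to its convex hull), then apply Cauchy--Schwarz using $\int_{\RR^d}\prod_iH_{k_i}(z_i)^2\varphi_1(z)\,dz=\prod_ik_i!\le|k|!$ and the elementary Gaussian--moment bound $\int_{\RR^d}(1+\norm{z}^q)^2\varphi_1(z)\,dz=O(d)$. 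This yields $\sup_{x\in\cX}\abs{D^k(f*\varphi_{\sigma})(x)}\lesssim_q v_{b,q}(\cX)\,\sigma^{-|k|}\sqrt{|k|!}\,\sqrt{d}$, which (since $\sigma\le1$, the worst case is $|k|=\ubar\beta$) controls the $\max_{0\le|k|\le\ubar\beta}\sup\abs{D^k\,\cdot\,}$ part of $\norm{f*\varphi_{\sigma}}_{\beta,\cX}$ from Definition~\ref{def:Holderclass}.

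For the fractional H\"older--seminorm term (over $|k|=\ubar\beta$, exponent $\theta=\beta-\ubar\beta$) I would interpolate between the above estimate at order $\ubar\beta$ and at order $\ubar\beta+1$: writing $g=D^k(f*\varphi_{\sigma})$, $\abs{g(x)-g(y)}\le\min\!\big(2\norm{g}_{\infty,\mathrm{conv}(\cX)},\ \norm{x-y}\sup_{\mathrm{conv}(\cX)}\norm{\nabla g}\big)$ (the second bound by the mean value theorem along the segment from $x$ to $y$ — hence the convex hull), and dividing by $\norm{x-y}^{\theta}$ and optimizing over $\norm{x-y}$ (the two terms cross at scale $\norm{x-y}\asymp\sigma$) produces the negative power of $\sigma$ appearing in \eqref{eq:Holdconstbndfclass} together with a $\sqrt{(\ubar\beta+1)!}$ combinatorial factor. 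Collecting the two parts gives \eqref{eq:Holdconstbndfclass} with $\check c_{d,\beta,q}$ in \eqref{eq:constholdnorm} equal to the product of this combinatorial factor, the $O(\sqrt d)$ moment, and the constant $c_q$ from the envelope manipulation.

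Given the H\"older bound, $f*\varphi_{\sigma}\in C_{M}^{\beta}(\BB_d(r))$ with $M\lesssim\check c_{d,\beta,q}\,v_{b,q}(\BB_d(r))\,\sigma^{-(\ubar\beta\vee1)}$, and the covering--entropy bounds \eqref{eq:coventunif}--\eqref{eq:coventholderprob} then follow from the H\"older--ball entropy estimate of \cite[Theorem 2.7.1]{AVDV-book} in the explicit--constant form of Theorem~\ref{Thm:coventconst}, via the rescaling $C_M^{\beta}(\BB_d(r))=M\cdot C_1^{\beta}(\BB_d(r))$ (so $N(\epsilon,C_M^{\beta},\norm{\cdot}_{\infty})=N(\epsilon/M,C_1^{\beta},\norm{\cdot}_{\infty})$), the volume of the enlarged domain $\BB_d(r+1)$, and the crude bound $\norm{F_{b,q}*\varphi_{\sigma}}_{\infty,\BB_d(r)}\gtrsim b$ used to normalize the covering radius — this is where the $(1+r)^d(1+r^q)^{d/\beta}$ prefactor $c_{d,\beta,q}$ of \eqref{eq:coventrconst} comes from; the $L^p(\gamma)$ version \eqref{eq:coventholderprob} is immediate since $\norm{\cdot}_{p,\gamma}\le\norm{\cdot}_{\infty,\BB_d(r)}$ when $\supp(\gamma)\subseteq\BB_d(r)$. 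For $\cE_{\varphi_{\sigma},\alpha}(b,q)$ one reruns the argument with $h=e^{\alpha f}$ in place of $f$, using $0\le h\le e^{\alpha F_{b,q}}$ and $\int_{\RR^d}e^{c\norm{z}^q}\varphi_1(z)\,dz<\infty$ in the relevant range (automatic for $q\le1$), so the Cauchy--Schwarz step carries over with the exponential envelope, and the normalizing factor $\norm{e^{\alpha F_{b,q}}*\varphi_{\sigma}}_{\infty,\BB_d(r)}$ absorbs the exponential terms up to a dimension--dependent constant. The reduction of $q\in[0,1]$ to $q=1$ is the trivial inclusion $\cF(b,q)\subseteq\cF(2b,1)$ (since $\norm{x}^q\le1+\norm{x}$ for $q\in[0,1]$), and the explicit constant $c_{d,d/2+1,q}=O\big(d^{6d+9}(9e^2)^d\big)$ is obtained by setting $\beta=d/2+1$ and Stirling--bounding $\sqrt{(\ubar\beta+1)!}^{\,d/\beta}$, the volume of $\BB_d(r+1)$, and the dimension--dependence of the constant $K$ in Theorem~\ref{Thm:coventconst}.

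I expect the main obstacle to be not any single deep step but the sustained bookkeeping of the dimension-- and smoothness--dependence through the whole chain (Hermite moments $\to$ H\"older norm $\to$ rescaled H\"older ball $\to$ entropy estimate) so that the final prefactor stays $d^{O(d)}$; within this, the technically delicate point is carrying out the interpolation in the fractional--seminorm term so as to extract the correct power of $\sigma$ with a controlled combinatorial factor, and remembering to pass to the convex hull of $\cX$ there (harmless, since $\norm{\cdot}^q$ is convex for $q\ge1$ and constant for $q=0$, so $v_{b,q}(\mathrm{conv}(\cX))=v_{b,q}(\cX)$).
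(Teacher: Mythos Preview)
Your proposal follows essentially the same route as the paper: put all derivatives on the Gaussian kernel via Hermite polynomials, bound the envelope pointwise by $v_{b,q}(\cX)$ times a Gaussian moment, obtain the fractional H\"older seminorm by interpolating between the $\sup$ and Lipschitz bounds at orders $\ubar\beta$ and $\ubar\beta+1$, and then feed the resulting H\"older--ball inclusion into the explicit--constant version of \cite[Theorem~2.7.1]{AVDV-book} (Theorem~\ref{Thm:coventconst}). Two minor inaccuracies in your bookkeeping are worth flagging: the moment $\int(1+\norm{z}^q)^2\varphi_1(z)\,dz$ is $O(d)$ only for $q\le 1$ --- for general $q\ge 1$ it is of order $(qd)^q$, which is why the paper's constant $\check c_{d,\beta,q}$ in \eqref{eq:constholdnorm} carries a $\sqrt{(q+\ubar\beta)^{q+\ubar\beta}}$ rather than just $\sqrt d$ --- and there is no need to enlarge the domain to $\BB_d(r+1)$, since the $(1+r)^d$ factor in Theorem~\ref{Thm:coventconst} already accounts for the covering of $\BB_d(r)$; also, your convex--hull concern is handled in the paper simply by the observation that $f*\varphi_\sigma$ is smooth everywhere on $\RR^d$, so the mean--value bound on the gradient applies directly without passing to $\mathrm{conv}(\cX)$.
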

The proof of Lemma \ref{lem:bndholdernorm} is given in Section \ref{Sec:lem:bndholdernorm-proof} and relies on a systematic analysis of the difference of derivatives of a function belonging to $\cF_{\varphi_{\sigma}}(b,q)$. 

\medskip

Let  $\beta=\frac{d}{2}+1$. It follows from \eqref{eq:coventholderprob} that for all probability measures $\gamma$ such that $\supp(\gamma) \subseteq \BB_d(r)$, 
\begin{align}
   \log N\Big(\epsilon \norm{F_{b_{r,\sigma}}*\varphi_{\sigma}}_{2,\gamma}, \cF_{\varphi_{\sigma}}(b_{r,\sigma}), \norm{\cdot}_{2,\gamma}\Big) \lesssim d^{6d+9}(9e^2)^d (r+1)^{d+2} \sigma^{-d} \epsilon^{-\frac{d}{\beta}}. \notag
\end{align}
Hence 
\begin{align}
J(1,F_{b_{r,\sigma}}*\varphi_{\sigma},\cF_{\varphi_{\sigma}}(b_{r,\sigma}),\mu) &:= \int_{0}^{1}\sup_{\gamma} \sqrt{\log N\Big(\epsilon \norm{F_{b_{r,\sigma}}*\varphi_{\sigma}}_{2,\gamma}, \cF_{\varphi_{\sigma}}(b_{r,\sigma}), \norm{\cdot}_{2,\gamma}\Big)} d\epsilon   \lesssim \tilde \xi_{r,d,\sigma}, \notag
\end{align}
where we used that $\int_{0}^1 \epsilon^{-\frac{d}{2\beta}} d\epsilon \lesssim d$ for $\beta=\frac{d}{2}+1$, and
\begin{align}
   \tilde \xi_{r,d,\sigma}:= (3ed)^{3(d+2)} (r+1)^{\frac d2+1} \sigma^{-\frac{d}{2}}. \label{eq:devterm2}
\end{align}
Similarly,
\begin{align}  J(1,e^{F_{b_{r,\sigma}}}*\varphi_{\sigma},\cE_{\varphi_{\sigma}}(b_{r,\sigma}),\nu) \lesssim \tilde \xi_{r,d,\sigma}. \label{eq:entintsmthfncls}
\end{align} 

We next bound $\norm{F_{b_{r,\sigma}}*\varphi_{\sigma}}_{2,\mu}$ and $\norm{e^{F_{b_{r,\sigma}}}*\varphi_{\sigma}}_{2,\nu}$. For this purpose, we will use the  following relation between sub-Gaussian and norm sub-Gaussian distributions. $\mu \in \cP\big(\RR^d\big)$ is $\sigma^2$-norm sub-Gaussian for $\sigma>0$ if $X \sim \mu$ satisfies
\begin{align}
    \mu\big(\norm{X-\EE [X]} > t\big) \leq 2e^{\frac{-t^2}{2\sigma^2}},\quad \forall~ t \in \RR. \notag
\end{align}
\begin{lemma}{\cite[Lemma 1, Lemma 2]{jin-2019-subgaussnorm}} \label{lem:normsubgauss}
If $\mu \in \cP\big(\RR^d\big)$ is $\sigma^2$-sub-Gaussian, then it is $8d \sigma^2$-norm sub-Gaussian. In particular, if $X \sim \mu$ and $\EE[X]=0$, then $\left(\EE[\norm{X}^p]\right)^{\frac1p} \lesssim \sqrt{d p}\sigma$ and $\EE\big[e^{t\norm{X}}\big] \leq e^{4t^2d\sigma^2 }$.
\end{lemma}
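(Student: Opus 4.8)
Since the norm-sub-Gaussian property and both moment bounds are unchanged by subtracting the mean, it suffices to prove everything in the centered case $\EE[X]=0$; the general case follows by applying the result to $X-\EE[X]$. The plan is to first control the moments of $\norm{X}$ coordinatewise, then deduce the norm-sub-Gaussian tail bound and the moment generating function (MGF) estimate from these moments.

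First I would observe that each coordinate $X_i=\langle e_i,X\rangle$ inherits the scalar sub-Gaussian bound $\EE[e^{uX_i}]\le e^{\sigma^2u^2/2}$ for every $u\in\RR$, whence the standard one-dimensional estimates $\EE[X_i^2]\le\sigma^2$ and $(\EE[\abs{X_i}^p])^{1/p}\le C\sqrt{p}\,\sigma$ for a universal $C$ and all $p\ge1$. For $p\ge2$, the convexity of $x\mapsto x^{p/2}$ gives the power-mean inequality $\norm{X}^p=\big(\textstyle\sum_{i=1}^d X_i^2\big)^{p/2}\le d^{p/2-1}\sum_{i=1}^d\abs{X_i}^p$; taking expectations and inserting the coordinate bound yields $\EE[\norm{X}^p]\le d^{p/2}\,(C\sqrt{p}\,\sigma)^p$, i.e. $(\EE[\norm{X}^p])^{1/p}\le C\sqrt{dp}\,\sigma$, which is the asserted moment bound (for $1\le p<2$ one bounds $(\EE[\norm{X}^p])^{1/p}\le(\EE[\norm{X}^2])^{1/2}$ and uses the $p=2$ case).

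Next, for the norm-sub-Gaussian property I would combine this moment bound with Markov's inequality: $\PP(\norm{X}>t)\le t^{-p}\EE[\norm{X}^p]\le (C\sqrt{dp}\,\sigma/t)^p$, and optimize over $p$. Choosing $p\asymp t^2/(d\sigma^2)$ — and using that the target bound $2e^{-t^2/(16d\sigma^2)}$ is $\ge1$, hence trivial, precisely when this choice would fall below the admissible range — gives $\PP(\norm{X}>t)\le 2e^{-t^2/(16d\sigma^2)}$ after fixing $C$, i.e. $\mu$ is $8d\sigma^2$-norm-sub-Gaussian. Equivalently, one can argue by an $\varepsilon$-net on the unit sphere: with a sufficiently fine net $\cN$ one has $\norm{X}\le c_\varepsilon\max_{v\in\cN}\langle v,X\rangle$, and applying the scalar sub-Gaussian tail to each $\langle v,X\rangle$ together with a union bound over $\cN$ (whose cardinality is exponential in $d$) reproduces the same bound once the net cardinality is absorbed into the exponent, small $t$ again being trivial.

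Finally, for the MGF bound I would integrate the tail: $\EE[e^{t\norm{X}}]=1+t\int_0^\infty e^{ts}\,\PP(\norm{X}>s)\,ds\le 1+2t\int_0^\infty e^{ts-s^2/(16d\sigma^2)}\,ds$, and completing the square in the exponent reduces the integral to a Gaussian one, producing a bound of the form $e^{4t^2d\sigma^2}$ times a mild polynomial factor in $t\sigma\sqrt{d}$ (alternatively, expand $e^{t\norm{X}}$ as a power series and feed in the moment bound, splitting the sum around $k\asymp t^2d\sigma^2$). The only genuine work in the whole argument is the bookkeeping needed to make the universal constants land exactly on $8d\sigma^2$ and $4t^2d\sigma^2$; conceptually nothing is an obstacle, and this is precisely the content of \cite[Lemmas 1 and 2]{jin-2019-subgaussnorm}, which I would cite for the sharp constants.
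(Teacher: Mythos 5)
The paper gives no worked proof of this lemma: it is quoted and attributed to \cite[Lemmas 1--2]{jin-2019-subgaussnorm}, so the ``paper's proof'' is the citation itself. Your sketch does strictly more: coordinate sub-Gaussianity plus the power-mean inequality for the $L^p$ moment bound, Markov with optimized moment order for the norm-sub-Gaussian tail, and tail integration (or a series expansion) for the MGF, finally appealing to Jin et al.\ only for the numerical constants. That is a sound, standard route, and the subtle point you handle correctly is that the Markov/optimization step only applies for $p\ge 2$, with the low-$t$ regime covered because the target bound $2e^{-t^2/(16d\sigma^2)}$ is trivially $\ge 1$ there.

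One thing your honest caveat at the MGF step implicitly surfaces and is worth stating explicitly: the exponential moment bound $\EE\big[e^{t\norm{X}}\big]\le e^{4t^2 d\sigma^2}$ cannot hold verbatim for all $t>0$. Taking $X\sim N(0,1)$ with $d=\sigma=1$, the left side is $2\Phi(t)e^{t^2/2}=1+\sqrt{2/\pi}\,t+O(t^2)$ while the right side is $1+4t^2+O(t^4)$, so the linear term on the left dominates as $t\downarrow 0$. Consequently any derivation --- yours or the one in Jin et al.\ --- must carry a leading multiplicative constant, or the inequality should be read as $\lesssim$. This is harmless for the paper's downstream uses, where the lemma is only applied to $N(0,\sigma^2 I_d)$ with argument $t$ of order $(r^2+4r)/\sigma^2\ge\Omega_{r}(1)$, but it means the ``mild polynomial factor'' you flag in your last step is unavoidable rather than a deficiency of your argument.
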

\noindent We have 
\begin{align}
F_{b_{r,\sigma}}*\varphi_{\sigma}(x)= \frac{1}{(2\pi)^{\frac d2}\sigma^d}\int_{\RR^d} F_{b_{r,\sigma}}(y)e^{-\frac{\norm{y-x}^2}{2\sigma^2}}dy  &= \frac{r^2+4r}{2(2\pi)^{\frac d2}\sigma^{d+2}}\int_{\RR^d}     (1+\norm{y}) e^{-\frac{\norm{y-x}^2}{2\sigma^2}}dy \notag \\
&\lesssim  \frac{(r^2+4r)(1+\norm{x}\sigma\sqrt{d})}{\sigma^2}, \notag 
\end{align}
where to obtain the last inequality, we used $\norm{y} \leq \norm{x}+\norm{y-x}$ and the moment bound in Lemma \ref{lem:normsubgauss} for $p=1$.  
Hence,
\begin{align}
\norm{F_{b_{r,\sigma}}*\varphi_{\sigma}}_{2,\mu} \vee \norm{F_{b_{r,\sigma}}*\varphi_{\sigma}}_{\infty,\mu} \lesssim   
\frac{(r^2+4r)\big(1+r\sigma\sqrt{d}\big)}{\sigma^2}. \notag
\end{align}
Similarly, 
\begin{align}
e^{F_{b_{r,\sigma}}}*\varphi_{\sigma}(x)= \frac{1}{(2\pi)^{\frac d2}\sigma^d}\int_{\RR^d} e^{F_{b_{r,\sigma}}(y)}e^{-\frac{\norm{y-x}^2}{2\sigma^2}}dy &\leq \frac{e^{\frac{(r^2+4r)(1+\norm{x})}{2\sigma^2}}}{(2\pi)^{\frac d2}\sigma^d}\int_{\RR^d} e^{\frac{(r^2+4r)\norm{y-x}-\norm{y-x}^2}{2\sigma^2}}dy\notag \\
&\leq e^{\frac{(r^2+4r)(1+\norm{x})+2d(r^2+4r)^2}{2\sigma^2}},\notag
\end{align}
where we used the exponential moment bound in Lemma \ref{lem:normsubgauss}. 
Hence
\begin{align}
\norm{e^{F_{b_{r,\sigma}}}*\varphi_{\sigma}}_{2,\nu} \vee \norm{e^{F_{b_{r,\sigma}}}*\varphi_{\sigma}}_{\infty,\nu} \leq e^{\frac{(r^2+4r)\left(2dr^2+(8d+1)r+1\right)}{2\sigma^2}}=: \xi_{r,d,\sigma}.\label{eq:devterm1}
\end{align}
Substituting the above bounds, we obtain for $\sigma \leq 1$ that
\begin{align}
&t_{n,\varphi_{\sigma},\mu,\nu}\notag \\
& \lesssim J(1,F_{b_{r,\sigma}}*\varphi_{\sigma},\cF_{\varphi_{\sigma}}(b_{r,\sigma}),\mu) \norm{F_{b_{r,\sigma}}*\varphi_{\sigma}}_{2,\mu}+   J(1,e^{F_{b_{r,\sigma}}}*\varphi_{\sigma},\cE_{\varphi_{\sigma}}(b_{r,\sigma}),\nu) \norm{e^{F_{b_{r,\sigma}}}*\varphi_{\sigma}}_{2,\nu}  \notag\\
& \lesssim (3ed)^{3(d+2)} (r+1)^{\frac d2+1} \sigma^{-\frac{d}{2}} \left(\frac{(r^2+4r)(1+r\sigma\sqrt{d})}{\sigma^2}+ e^{\frac{(r^2+4r)(2dr^2+(8d+1)r+1)}{2 \sigma^2}}\right)\notag \\
& \lesssim    \tilde \xi_{r,d,\sigma}  \xi_{r,d,\sigma}. \notag 
\end{align} 
Applying \eqref{eq:KLdevineq-compsupp} and taking supremum over $\mu,\nu \in \cP\big(\BB_d(r)\big)$, we obtain \eqref{eq:KLdevineq-compsupp-simp}.

\medskip

Next, consider Part $(ii)$. Fix $\mu,\nu \in \mathsf{SG}(L)$. 
We have 
\begin{align}
    \frac{\hat \mu_n * \varphi_{\sigma}(x)}{\nu * \varphi_{\sigma}(x)}\mspace{-2 mu}=\mspace{-2 mu}\frac{\frac{1}{n}\sum_{i=1}^n e^{-\frac{\norm{x-X_i}^2}{2 \sigma^2}}}{\int_{\RR^d} e^{-\frac{\norm{x-y}^2}{2 \sigma^2}}d \nu(y)} \mspace{-2 mu}\stackrel{(a)}{\leq} \mspace{-2 mu}\frac{ e^{-\min_{i}\frac{\norm{x}^2+\norm{X_i}^2-2\norm{x} \norm{X_i}}{2 \sigma^2}}}{ \EE_{\nu}\left[e^{-\frac{(\norm{x}^2+\norm{Y}^2-2x \cdot Y)}{2 \sigma^2}}\right]} \mspace{-2 mu} \leq \mspace{-2 mu}\frac{ e^{\frac{\norm{x} \max_{i}\norm{X_i}}{ \sigma^2}}}{ \EE_{\nu}\left[e^{\frac{2x \cdot Y-\norm{Y}^2}{2 \sigma^2}}\right]} \mspace{-4 mu}\stackrel{(b)}{\leq } \mspace{-4 mu}\frac{ e^{\frac{\norm{x} \max_{i}\norm{X_i}}{ \sigma^2}}}{ e^{\frac{2 x \cdot  \EE_{\nu}[Y] -\EE_{\nu}\left[\norm{Y}^2\right]}{2 \sigma^2}}}, \label{eq:subgaussrat1} 
\end{align}
where $(a)$ follows by upper bounding average by the maximum and Cauchy-Schwarz, and $(b)$ is via an application of Jensens inequality to convex function $e^{x}$. Hence, we have
\begin{align}
\log\left(\frac{\hat \mu_n * \varphi_{\sigma}(x)}{\nu * \varphi_{\sigma}(x)}\right) &\leq \frac{\norm{x} \max_{i}\norm{X_i}}{ \sigma^2}+ \frac{\EE_{\nu}[\norm{Y}^2]+2 x \cdot  \EE_{\nu}[Y] }{2 \sigma^2} \notag \\
&\leq \frac{\norm{x} \max_{i}\norm{X_i}+(L^2+16dL)+ \norm{x}L}{ \sigma^2}, \notag
\end{align}
where we used that since 
\begin{align}
  \EE_{\nu}[\norm{Y}^2] \leq 2\left(\norm{\EE_{\nu}[Y]}^2+\EE_{\nu}[\norm{Y-\EE_{\nu}[Y]}^2]\right) \leq 2(L^2+16dL). \notag 
\end{align}
On the other hand,
\begin{align}
    \frac{\nu * \varphi_{\sigma}(x)}{\hat \mu_n * \varphi_{\sigma}(x)} =\frac{\int_{\RR^d} e^{-\frac{\norm{x-y}^2}{2 \sigma^2}}d \nu(y)}{\frac{1}{n}\sum_{i=1}^n e^{-\frac{\norm{x-X_i}^2}{2 \sigma^2}}} \stackrel{(a)}{\leq} \frac{ \EE_{\nu}\left[e^{-\frac{(\norm{x}^2+\norm{Y}^2-2x \cdot Y)}{2 \sigma^2}}\right]}{ e^{-\max_{i}\frac{\norm{x}^2+\norm{X_i}^2-2 x\cdot X_i}{2 \sigma^2}}} &\leq   e^{\max_{i}\frac{(\norm{X_i}^2-2x\cdot X_i)}{ 2\sigma^2}} \EE_{\nu}\left[e^{\frac{x \cdot Y}{ \sigma^2}}\right] \notag \\
    &\stackrel{(b)}{\leq } e^{\max_{i}\frac{(\norm{X_i}^2-2x\cdot X_i)}{ 2\sigma^2}} e^{\frac{L\norm{x}^2}{2 \sigma^4}+\frac{L\norm{x}}{ \sigma^2}}, \label{eq:subgaussrat2}
\end{align}
where $(a)$ follows by lower bounding the average by the minimum, and $(b)$ is because $\nu \in \mathsf{SG}(L)$ implies
\begin{align}
\EE_{\nu}\left[e^{\frac{x \cdot Y}{ \sigma^2}}\right] \leq  e^{\frac{\norm{x}^2 L}{\sigma^4} +\frac{L\norm{x}}{\sigma^2} }. \notag
\end{align}

From \eqref{eq:subgaussrat1} and \eqref{eq:subgaussrat2}, 
\begin{align}
    \abs{\log \left(\frac{\hat \mu_n * \varphi_{\sigma}(x)}{\nu * \varphi_{\sigma}(x)}\right)} \leq \frac{L\norm{x}^2}{2\sigma^4}+\norm{x}\left(\frac{L+\max_i \norm{X_i}}{\sigma^2}\right)+\max_i \frac{\norm{X_i}^2}{2\sigma^2}+\frac{L^2+16dL}{\sigma^2}. \label{eq:bndloglik}
\end{align}
We will apply \eqref{eq:onesampspecexp}  to arrive at the desired conclusion. However, for this purpose, we require  the RHS of \eqref{eq:bndloglik} be bounded a.s. Since this does not hold for $\mu \in \mathsf{SG}(L)$ in general due to the presence of the term $\max_i \norm{X_i}^2$, we will use a truncation argument by considering the  event  
\begin{align}
    E_n(t):=\left\{\max_{1 \leq i \leq n} \norm{X_i} \leq t\right\}, \notag 
\end{align}
under which \eqref{eq:onesampspecexp} can be applied. The parameter $t$ is then scaled appropriately w.r.t. $n$ such that the probability of   $\bar E_n(t)$ (the complement event of $E_n(t)$) decays exponentially.

Since $X \sim \mu$ satisfies $\norm{X}_{\psi_p} \leq L$ for some $ p \geq 2$ and $ L\geq 1$, we have
from \cite[Lemma 2.2.2]{AVDV-book} (a prefactor of $K=2^{\frac 1p}$ suffices) that
\begin{align}
 \norm{\max_{1 \leq i \leq n} \norm{X_i}}_{\psi_p} \leq 2^{\frac 1 p} \left(\log(1+n)\right)^{\frac 1p} \norm{X}_{\psi_p}. \notag 
\end{align}
By Markov's inequality and definition of Orlicz norm $\psi_p$, 
\begin{align}
    \PP \left(\max_{1 \leq i \leq n} \norm{X_i} \geq t\right) \leq 2 e^{-\frac{t^p}{\norm{\max_{1 \leq i \leq n} \norm{X_i}}_{\psi_p}^p}}. \label{eq:orlicznormineq}
\end{align}
Setting 
\begin{align}
  t=T_{n,p,\kappa,L}:=2^{\frac 1 p}  L n^{\kappa}, \notag  
\end{align}
for some $\kappa \geq 0$, we obtain from \eqref{eq:orlicznormineq} that
\begin{align}
     \PP \big(\bar E_n(T_{n,p,\kappa,L})\big) \leq 2 e^{-\frac{n^{p\kappa }}{\log(1+n)}}. \notag   
\end{align}
For any $\bar t \geq 0$, we obtain
\begin{align}
& \PP\left(\big|\kl{\hat \mu_n*\gamma_{\sigma}}{ \nu*\gamma_{\sigma}}-\kl{\mu*\gamma_{\sigma}}{\nu*\gamma_{\sigma}}\big| \geq \bar t ~\right) \notag\\
& \leq  \PP\left(\big|\kl{\hat \mu_n*\gamma_{\sigma}}{\nu*\gamma_{\sigma}}-\kl{\mu*\gamma_{\sigma}}{\nu*\gamma_{\sigma}}\big| \geq \bar t ~\Big|    E_n\big(T_{n,p,\kappa,L}\big)\right) + \PP\left(\bar{E}_n\big(T_{n,p,\kappa,L}\big)\right)\notag \\
& \leq  \PP\left(\big|\kl{\hat \mu_n*\gamma_{\sigma}}{ \nu*\gamma_{\sigma}}-\kl{\mu*\gamma_{\sigma}}{\nu*\gamma_{\sigma}}\big| \geq \bar t ~ \Big|    E_n\big(T_{n,p,\kappa,L}\big)\right) +  2e^{-\frac{n^{p \kappa}}{\log(1+n)}}.\label{eq:unionbndorlicz}
\end{align}
Next, we bound the first term in the RHS above. Note that given  $E_n\big(T_{n,p,\kappa,L}\big)$ holds, we have  $X^n \sim \tilde \mu^{\otimes n}$, where 
\begin{align}
\tilde \mu(\cA)  = \frac{\mu\left(\cA \cap \BB_d(T_{n,p,\kappa,L})\right) }{\mu\left(\BB_d(T_{n,p,\kappa,L})\right)},~\forall ~\cA \in \cB(\RR^d). \notag
\end{align}
Then,  \eqref{eq:bndloglik} yields  that under this event
\begin{align}
\abs{f^{\star}_{\hat \mu_n * \varphi_{\sigma},\hat \nu_n * \varphi_{\sigma}}(x)} \leq \frac{L\norm{x}^2}{2\sigma^4}+\norm{x}\left(\frac{L+T_{n,p,\kappa,L}}{\sigma^2}\right)+ \frac{T_{n,p,\kappa,L}^2}{2\sigma^2}+\frac{L^2+16dL}{\sigma^2}. \notag
\end{align}
Further
\begin{align}
    \frac{\mu * \varphi_{\sigma}(x)}{\nu * \varphi_{\sigma}(x)} =\frac{\int_{\RR^d} e^{-\frac{\norm{x-y}^2}{2 \sigma^2}}d \mu(y)}{\int_{\RR^d} e^{-\frac{\norm{x-y}^2}{2 \sigma^2}}d \nu(y)} \leq \frac{ \EE_{\mu}\left[e^{\frac{x \cdot Y}{ \sigma^2}}\right]}{\EE_{\nu}\left[e^{\frac{2x \cdot Y-\norm{Y}^2}{2 \sigma^2}}\right]} &\stackrel{(a)}{\leq}  \frac{e^{\frac{\norm{x}^2L}{2\sigma^4}+\frac{\norm{x}L}{\sigma^2}}}{e^{\frac{2x\cdot \EE_{\nu}[Y]-\EE_{\nu}\left[\norm{Y}^2\right]}{2\sigma^2}}} \notag \\
    & = e^{\frac{\norm{x}^2L}{2\sigma^4}+\frac{\norm{x}L}{\sigma^2}+\frac{\EE_{\nu}\left[\norm{Y}^2\right]-2x\cdot \EE_{\nu}[Y] }{2 \sigma^2}} \notag \\
    &  \stackrel{(b)}{\leq}e^{\frac{\norm{x}^2L}{2\sigma^4}+\frac{(L^2+16dL)+2\norm{x}L }{\sigma^2}}, \notag
\end{align}
where $(a)$ used that $\mu \in \mathsf{SG}(L)$ as well as  Jensen's inequality applied to convex function $e^x$, while $(b)$ is due to $\nu \in \mathsf{SG}(L)$ . Similar bound also holds for $\nu * \varphi_{\sigma}(x)/\mu * \varphi_{\sigma}(x)$. Hence, given $E_n\big(T_{n,p,\kappa,L}\big)$ holds,  \eqref{eq:bndloglik} and above equation implies
\begin{align}
    \abs{f^{\star}_{\mu * \varphi_{\sigma},\nu * \varphi_{\sigma}}(x)} &\leq \frac{L\norm{x}^2}{2\sigma^4}+\frac{2\norm{x}(L+T_{n,p,\kappa,L})}{\sigma^2}+\frac{2L^2+32dL+T_{n,p,\kappa,L}^2 }{2\sigma^2} \notag \\
    &\leq 2\left(\frac{T_{n,p,\kappa,L}+8dL}{\sigma^2}\right)^2 \big(1+\norm{x}^2\big). \notag
\end{align}
It follows that 
\begin{align}
   f^{\star}_{\mu * \varphi_{\sigma},\nu * \varphi_{\sigma}},f^{\star}_{\mu * \varphi_{\sigma},\nu * \varphi_{\sigma}} \in \cF\big(\bar b_{n,d,p,\kappa,\sigma,L},2\big) \mbox{ with } \bar b_{n,d,p,\kappa,\sigma,L}=4 \left(\frac{T_{n,p,\kappa,L}+8dL}{\sigma^2}\right)^2.\notag
\end{align}
Let $\bar F_{n,d,p,\kappa,\sigma,L}:=0.5 \bar b_{n,d,p,\kappa,\sigma,L}\big(1+\norm{x}^2\big)$ be the envelope for $\cF\big(\bar b_{n,d,p,\kappa,\sigma,L},2\big)$ and set $\beta=\frac d2+1$. 
For every probability measure $\gamma$ such that $\supp(\gamma) \subseteq \supp(\tilde \mu)$, we have from \eqref{eq:coventholderprob} that
\begin{align}
   \log N\Big(\epsilon \norm{\bar F_{n,d,p,\kappa,\sigma,L}*\varphi_{\sigma}}_{2,\gamma}, \cF_{\varphi_{\sigma}}\big(\bar b_{n,d,p,\kappa,\sigma,L},2\big), \norm{\cdot}_{2,\gamma}\Big) \lesssim d^{6d+9}(9e^2)^d \big(T_{n,p,\kappa,L}+1\big)^{d+4} \sigma^{-d} \epsilon^{-\frac{d}{\beta}}. \notag
\end{align}
  Hence
\begin{align}
&J(1,\bar F_{n,d,p,\kappa,\sigma,L}*\varphi_{\sigma},\cF_{\varphi_{\sigma}}\big(\bar b_{n,d,p,\kappa,\sigma,L},2\big),\tilde \mu) \notag \\
&:= \int_{0}^{1}\sup_{\gamma} \sqrt{\log N\Big(\epsilon \norm{\bar F_{n,d,p,\kappa,\sigma,L}*\varphi_{\sigma}}_{2,\gamma}, \cF_{\varphi_{\sigma}}\big(\bar b_{n,d,p,\kappa,\sigma,L},2\big), \norm{\cdot}_{2,\gamma}\Big)} d\epsilon  \notag \\
 & \lesssim (3ed)^{3(d+2)} \big(T_{n,p,\kappa,L}+1\big)^{\frac d2+2} \sigma^{-\frac{d}{2}} \notag \\
 & \leq (3ed)^{3(d+2)} 2^{\frac{d}{2}+2} \Big(2^{\frac 1 p}L\Big)^{\frac{d}{2}+2} n^{\kappa\left(\frac{d}{2}+2\right)}\sigma^{-\frac{d}{2}}. \notag
\end{align}
By using $\norm{y}^2 \leq 2\big(\norm{x}^2+\norm{x-y}^2\big)$, we have
\begin{align}
    \abs{\bar F_{n,d,p,\kappa,\sigma,L}*\varphi_{\sigma}(x)} &\leq \int_{\RR^d} 0.5 \bar b_{n,d,p,\kappa,\sigma,L}\big(1+2\norm{x}^2+2\norm{x-y}^2\big) \varphi_{\sigma}(x-y) dy  \notag \\
    &\lesssim  \bar b_{n,d,p,\kappa,\sigma,L} \left(0.5+\norm{x}^2+d \sigma^2\right). \notag 
\end{align} 
Hence
\begin{align}
    \norm{\bar F_{n,d,p,\kappa,\sigma,L}*\varphi_{\sigma}}_{\infty,\tilde \mu} \leq \bar b_{n,d,p,\kappa,\sigma,L} \left(0.5+T_{n,p,\kappa,L}^2+d \sigma^2\right) \lesssim L^4 d^3 \sigma^{-4} n^{4\kappa}. \notag
\end{align}
Further, since
\begin{align}
&\norm{\bar F_{n,d,p,\kappa,\sigma,L}}_{p,\tilde \mu*\gamma_{\sigma}}^p \notag \\
&=\int \bar F_{n,d,p,\kappa,\sigma,L}^p(x)\mspace{4 mu} \tilde \mu*\varphi_{\sigma}(x) dx \notag  \\
& =\int_{\BB_d(T_{n,p,\kappa,L})}\int_{\RR^d}  \frac{\bar b_{n,d,p,\kappa,\sigma,L}^p}{2^p}\big(1+2\norm{y}^2+2\norm{x-y}^2\big)^p \varphi_{\sigma}(x-y) dx \mspace{2 mu}d\tilde \mu(y) \notag \\
& \leq \int_{\BB_d(T_{n,p,\kappa,L})}\int_{\RR^d}  \frac{\bar b_{n,d,p,\kappa,\sigma,L}^p}{2}\left(\big(1+2\norm{y}^2\big)^p+2^p\norm{x-y}^{2p} \right)\varphi_{\sigma}(x-y) dx \mspace{2 mu}d\tilde \mu(y), \notag \\
& \leq \frac{\bar b_{n,d,p,\kappa,\sigma,L}^p}{2}\big(1+2T_{n,p,\kappa,L}^2\big)^p+\int_{\BB_d(T_{n,p,\kappa,L})}\int_{\RR^d}  \frac{\bar b_{n,d,p,\kappa,\sigma,L}^p}{2^{1-p}}\norm{x-y}^{2p} \varphi_{\sigma}(x-y) dx \mspace{2 mu}d\tilde \mu(y) \notag \\
& \leq 2^{2p-1}\bar b_{n,d,p,\kappa,\sigma,L}^p T_{n,p,\kappa,L}^{2p}+2^{2p-1} \bar b_{n,d,p,\kappa,\sigma,L}^p (dp)^p \sigma^{2p}, \notag \\
& \lesssim \big(2^{12} L^4 d^3 p n^{4 \kappa}\sigma^{-4}\big)^p, \notag 
\end{align}
we have
\begin{align}
&\norm{\bar F_{n,d,p,\kappa,\sigma,L}*\varphi_{\sigma}}_{2,\tilde \mu}^2  \mspace{-4 mu}=\mspace{-4 mu}\int \mspace{-2 mu} \big(\bar F_{n,d,p,\kappa,\sigma,L} *\varphi_{\sigma}(x)\big)^2 d\tilde \mu(x) \mspace{-2 mu}\leq \mspace{-2 mu} \int \mspace{-2 mu}\bar F_{n,d,p,\kappa,\sigma,L}^2 *\varphi_{\sigma}(x) d\tilde \mu(x)\mspace{-2 mu}\lesssim \mspace{-2 mu} L^8 d^6  n^{8 \kappa}\sigma^{-8}, \notag \\
& J\big(1,\bar F_{n,d,p,\kappa,\sigma,L}*\varphi_{\sigma},\cF_{\varphi_{\sigma}}\big(\bar b_{n,d,p,\kappa,\sigma,L},2\big),\tilde \mu\big) \norm{\bar F_{n,d,p,\kappa,\sigma,L}*\varphi_{\sigma}}_{2,\tilde \mu}  \notag \\
& \qquad\qquad\qquad\qquad\qquad\qquad\qquad\qquad\qquad\qquad\lesssim (3ed)^{3(d+2)}4^{\frac d2+2} L^{\frac d2+6}d^3\sigma^{-\left(\frac{d}{2}+4\right)} n^{\kappa\left(\frac{d}{2}+6\right)}. \notag
\end{align}
Let $0 \leq \tau \leq 1 $. Set  $ \kappa_{d,\tau}^{\star}:=\frac{\tau}{d+12}$, 
  \begin{align}  
  \hat{\xi}_{d,L,\sigma}&:= (3ed)^{3(d+2)}(4L)^{\frac d2+2} \sigma^{-\frac{d}{2}}  \quad \mbox{and} \quad 
\check{\xi}_{d,L,\sigma}:=L^4d^3\sigma^{-4}. \label{eq:constrateexp} 
\end{align}   
Then, substituting $\kappa=\kappa_{d,\tau}^{\star}$ in  the above equations yields
\begin{align}
& J\big(1,\bar F_{n,d,p,\kappa,\sigma,L}*\varphi_{\sigma},\cF_{\varphi_{\sigma}}\big(\bar b_{n,d,p,\kappa,\sigma,L},2\big),\tilde \mu\big) \norm{\bar F_{n,d,p,\kappa,\sigma,L}*\varphi_{\sigma}}_{2,\tilde \mu} \lesssim    \hat{\xi}_{d,L,\sigma}  \check{\xi}_{d,L,\sigma}n^{\frac{\tau}{2}}. \notag  
\end{align}
Applying \eqref{eq:onesampspecexp} in Remark \ref{Rem:onesamprate} and simplifying, we obtain  
\begin{align}
&\PP\left(\big|\kl{\hat \mu_n*\gamma_{\sigma}}{\nu*\gamma_{\sigma}}-\kl{\mu*\gamma_{\sigma}}{\nu*\gamma_{\sigma}}\big| \gtrsim \check{\xi}_{d,L,\sigma}  \left(\hat{\xi}_{d,L,\sigma}n^{-\frac 12(1-\tau)}+z\right) \Big|    E_n\big(T_{n,p,\kappa_{d,\tau}^{\star},L}\big)\right) \notag \\
&\qquad \qquad\qquad\qquad\qquad\qquad\qquad\qquad\qquad\qquad\qquad\qquad\qquad\qquad\qquad\qquad\qquad \leq  e^{-\check{\zeta}_{n,d,\tau}(z)}, \notag
\end{align}
with \begin{align}
\check{\zeta}_{n,d,\tau}(z):=n^{\frac{d+12-8\tau}{d+12}}z^2 \wedge n^{\frac{d+12-4\tau}{d+12}}z. \notag
\end{align}
Finally, since $\hat{\xi}_{d,L,\sigma} \geq 1$, we obtain from the above equation and \eqref{eq:unionbndorlicz}  with   $\kappa=\kappa_{d,\tau}^{\star}$ and $\bar t =c\mspace{2 mu}\check{\xi}_{d,L,\sigma}  \big(\hat{\xi}_{d,L,\sigma}n^{-\frac 12(1-\tau)}+z\big)$  (for an appropriate constant $c$)  that 
\begin{align}
& \PP\left(\big|\kl{\hat \mu_n*\gamma_{\sigma}}{ \nu*\gamma_{\sigma}}-\kl{\mu*\gamma_{\sigma}}{\nu*\gamma_{\sigma}}\big| \gtrsim \check{\xi}_{d,L,\sigma}  \big(\hat{\xi}_{d,L,\sigma}n^{-\frac 12(1-\tau)}+z\big)\right) \leq 3 e^{-\hat{\zeta}_{n,d,p,\tau}(z)}, \notag 
\end{align}
where $\hat{\zeta}_{n,d,p,\tau}(z)$ is as given in \eqref{eq:rateorlicsg}.
Taking supremum over all $\mu,\nu \in \mathsf{SG}(L)$ such that $\norm{X}_{\psi_p} \leq L$ yields \eqref{eq:KLdevineq-subgauss}, thus completing the proof.
\subsection{Proof of Theorem \ref{Thm:KLdev-symmradexp}} \label{Sec:Thm:KLdev-symmradexp-proof}

\begin{lemma}[Deviation inequality in terms of Rademacher expectation]\label{Lem:statverdevineq-KL-gen}
Consider the setting of Lemma \ref{Lem:concent-KL}. Let
\begin{align} Z^{(\varepsilon)}_{n,\sigma} &:=\EE_{\varepsilon}\left[\sup_{f \in \cF_{n,\varphi_{\sigma}}}\abs{\sum_{i=1}^n \varepsilon_i f(X_i)} \right] \quad \mbox{ and } \quad 
\bar Z^{(\varepsilon)}_{n,\sigma} :=\EE_{\varepsilon}\left[\sup_{f \in \cE_{n,\varphi_{\sigma}}}\abs{\sum_{i=1}^n \varepsilon_i f(X_i)} \right]. \notag
\end{align}
 Then 
\begin{align}
&\PP \left(  \abs{\kl{\hat \mu_n*\gamma_{\sigma}}{\hat \nu_n*\gamma_{\sigma}}-\kl{\mu*\gamma_{\sigma}}{\nu*\gamma_{\sigma}}} \geq 7  n^{-1} \big(Z^{(\varepsilon)}_{n,\sigma}+\bar Z^{(\varepsilon)}_{n,\sigma}\big)+2z\right)  \notag \\
&   \leq  2 e^{-\Big(\frac{9n z^2}{128  \norm{F_n*\varphi_{\sigma}}_{2,\mu}^2} \wedge \frac{n^{\frac{p-1}{p}}z}{228\mspace{2 mu}\norm{F_n*\varphi_{\sigma}}_{p,\mu}}\Big)}+2e^{-\Big(\frac{9nz^2}{128  \norm{e^{F_n}*\varphi_{\sigma}}_{2,\mu}^2} \wedge \frac{nz}{228\norm{e^{F_n}*\varphi_{\sigma}}_{p,\nu}}\Big)}\notag \\
& \qquad \qquad \qquad \qquad \qquad \qquad \qquad  +2 (16)^p\left(\norm{F_n*\varphi_{\sigma}}_{p,\mu}^p+\norm{e^{F_n}*\varphi_{\sigma}}_{p,\nu}^p\right)n^{1-p}.\label{eq:KLdevineq-genOrlicz-radexp}
\end{align}    
In particular, for compactly supported $\mu,\nu$, 
\begin{align}
&\PP \left(  \abs{\kl{\hat \mu_n*\gamma_{\sigma}}{\hat \nu_n*\gamma_{\sigma}}-\kl{\mu*\gamma_{\sigma}}{\nu*\gamma_{\sigma}}} \geq 7n^{-1}\big(Z^{(\varepsilon)}_{n,\sigma}+\bar Z^{(\varepsilon)}_{n,\sigma}\big)  +2z\right)  \notag\\
& \qquad \qquad \qquad  \leq e^{-\left(\frac{nz^2}{ 8\norm{F_n*\varphi_{\sigma}}_{2,\mu}^2} \wedge \frac{nz}{22\norm{F_n*\varphi_{\sigma}}_{\infty,\mu}}\right)} + e^{-\left(\frac{nz^2}{ 8\norm{e^{F_n}*\varphi_{\sigma}}_{2,\mu}^2} \wedge \frac{nz}{22\norm{e^{F_n}*\varphi_{\sigma}}_{\infty,\nu}}\right)}. \label{eq:KLdevineq-compsupp-radexp}
\end{align}
\end{lemma}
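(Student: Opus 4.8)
The plan is to re-run the proof of Lemma~\ref{Lem:concent-KL} verbatim up to the point where the estimation error is controlled by empirical-process suprema, and then replace the Talagrand--Bousquet step (which concentrates those suprema around their means) by the statistical Fuk--Nagaev bound of Lemma~\ref{Lem:statistver:Fuk-Nagaev}, whose remainder is phrased through a conditional Rademacher average. Concretely, using the dual form \eqref{eq:varexpkl} I would represent $\kl{\hat\mu_n*\gamma_{\sigma}}{\hat\nu_n*\gamma_{\sigma}}$ as the value of $f\mapsto\EE_{\hat\mu_n*\gamma_{\sigma}}[f]-\EE_{\hat\nu_n*\gamma_{\sigma}}[e^f]+1$ at its maximiser $f^{\star}_n:=f^{\star}_{\hat\mu_n*\gamma_{\sigma},\hat\nu_n*\gamma_{\sigma}}$, and $\kl{\mu*\gamma_{\sigma}}{\nu*\gamma_{\sigma}}$ as the value of the analogous functional at $f^{\star}:=f^{\star}_{\mu*\gamma_{\sigma},\nu*\gamma_{\sigma}}$. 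Since each functional is suboptimal at the other's maximiser and both $f^{\star}_n$ (a.s.) and $f^{\star}$ lie in $\cF_n$ by hypothesis, subtracting the two representations and using the Fubini identity $\EE_{\hat\mu_n*\gamma_{\sigma}}[f]=\tfrac1n\sum_i(f*\varphi_{\sigma})(X_i)$ (legitimate under the envelope integrability) bounds both one-sided deviations by
\[
\big|\kl{\hat\mu_n*\gamma_{\sigma}}{\hat\nu_n*\gamma_{\sigma}}-\kl{\mu*\gamma_{\sigma}}{\nu*\gamma_{\sigma}}\big|\ \le\ \tfrac1n\big(Z_{n,1}+Z_{n,2}\big),
\]
with $Z_{n,1}:=\sup_{g\in\cF_{n,\varphi_{\sigma}}}|\sum_{i=1}^n(g(X_i)-\EE_{\mu}[g])|$ and $Z_{n,2}:=\sup_{g\in\cE_{n,\varphi_{\sigma}}}|\sum_{i=1}^n(g(Y_i)-\EE_{\nu}[g])|$, using $f^{\star}_n*\varphi_{\sigma},f^{\star}*\varphi_{\sigma}\in\cF_{n,\varphi_{\sigma}}$ and $e^{f^{\star}_n}*\varphi_{\sigma},e^{f^{\star}}*\varphi_{\sigma}\in\cE_{n,\varphi_{\sigma}}$; only the marginals of $\pi$ enter, since $Z_{n,1}$ depends on $X^n\sim\mu^{\otimes n}$ and $Z_{n,2}$ on $Y^n\sim\nu^{\otimes n}$.

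Because the hypotheses of Lemma~\ref{Lem:concent-KL} give $L^p$ rather than Orlicz control of the envelopes $F_n*\varphi_{\sigma}$ and $e^{F_n}*\varphi_{\sigma}$, I would next apply the Lebesgue-integrable-envelope form \eqref{eq:statFN-int-env} of Lemma~\ref{Lem:statistver:Fuk-Nagaev} separately to $Z_{n,1}$ (class $\cF_{n,\varphi_{\sigma}}$, envelope $F_n*\varphi_{\sigma}$, sampling law $\mu$) and to $Z_{n,2}$ (class $\cE_{n,\varphi_{\sigma}}$, envelope $e^{F_n}*\varphi_{\sigma}$, sampling law $\nu$), each at threshold $nz$. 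Using the crude bounds $\overline{M_n}\le n^{1/p}\norm{F_n*\varphi_{\sigma}}_{p,\mu}$ and $\EE[\max_i(F_n*\varphi_{\sigma})^p(X_i)]\le n\norm{F_n*\varphi_{\sigma}}_{p,\mu}^p$ (and the analogues for $e^{F_n},\nu$), recognising the two conditional Rademacher averages as $Z^{(\varepsilon)}_{n,\sigma}$ and $\bar Z^{(\varepsilon)}_{n,\sigma}$, and unioning over $\{Z_{n,1}\ge 7Z^{(\varepsilon)}_{n,\sigma}+nz\}$ and $\{Z_{n,2}\ge 7\bar Z^{(\varepsilon)}_{n,\sigma}+nz\}$ yields \eqref{eq:KLdevineq-genOrlicz-radexp}; the rescaling $z\mapsto nz$ turns $9(nz)^2/(128n\norm{\cdot}_{2}^2)$ into $9nz^2/(128\norm{\cdot}_{2}^2)$ and $(nz)/(228\,\overline{M_n})$ into $n^{(p-1)/p}z/(228\norm{\cdot}_{p})$. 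Assumption~\ref{Assump:funclass} for $\cF_{n,\varphi_{\sigma}},\cE_{n,\varphi_{\sigma}}$ (inherited from Lemma~\ref{Lem:concent-KL}) makes the suprema measurable and legitimises the Rademacher inequality.

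For the compactly supported case the envelopes are uniformly bounded, so in place of \eqref{eq:statFN-int-env} I would invoke the bounded-envelope inequality \cite[Theorem~2.1]{Bartlett-2005} directly (equivalently, the $p\to\infty$ degeneration of the previous step), which removes the polynomial remainder and gives the Bernstein-type exponents $nz^2/(8\norm{\cdot}_{2}^2)\wedge nz/(22\norm{\cdot}_{\infty})$ for each class; the same union bound then produces \eqref{eq:KLdevineq-compsupp-radexp}, with the constants $7,8,22$ inherited from \cite[Theorem~2.1]{Bartlett-2005}.

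The step demanding real care is the variational reduction: one must check that the dual representation \eqref{eq:varexpkl} together with the a.s.\ membership $f^{\star}_n,f^{\star}\in\cF_n$ bounds \emph{each} one-sided deviation by the \emph{sum} of two \emph{separate} empirical-process suprema --- one over $\cF_{n,\varphi_{\sigma}}$ in the $X$-sample, one over $\cE_{n,\varphi_{\sigma}}$ in the $Y$-sample --- and that the Fubini passage to the smoothed classes is justified by the stated integrability. Everything after that is bookkeeping: substituting the envelope-norm estimates into Lemma~\ref{Lem:statistver:Fuk-Nagaev} / \cite[Theorem~2.1]{Bartlett-2005} and tracking the $1/n$ rescaling.
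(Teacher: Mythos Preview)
Your proposal is correct and matches the paper's own proof essentially line for line: the paper also reuses the argument of Lemma~\ref{Lem:concent-KL} up to the union-bound split \eqref{eq:devsplitineq}, then applies the $L^p$-envelope statistical Fuk--Nagaev inequality \eqref{eq:statFN-int-env} together with $\overline{M_n}\le n^{1/p}\norm{F_n*\varphi_{\sigma}}_{p,\mu}$ and $\EE[(M_n)^p]\le n\norm{F_n*\varphi_{\sigma}}_{p,\mu}^p$ for \eqref{eq:KLdevineq-genOrlicz-radexp}, and \eqref{eq:eq:statist-fuk-nagaev-simp} with the optimised choice $\eta=1+\sqrt{6}$ (which gives $2\eta<7$ and the constants $8,22$) for \eqref{eq:KLdevineq-compsupp-radexp}.
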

The proof of Lemma \ref{Lem:statverdevineq-KL-gen} follows similar to Lemma \ref{Lem:concent-KL} up to \eqref{eq:devsplitineq}, and then applies \eqref{eq:statFN-int-env} along with $\overline{M_n} \leq \norm{M_n(X^n)}_{p,\mu} \leq n^{\frac 1p} \norm{F_n}_{p,\mu}$  to obtain \eqref{eq:KLdevineq-genOrlicz-radexp}, and \eqref{eq:eq:statist-fuk-nagaev-simp} with $\eta=1+\sqrt{6}$ to derive \eqref{eq:KLdevineq-compsupp-radexp}. 

\medskip

Theorem \ref{Thm:KLdev-symmradexp} follows from \eqref{eq:KLdevineq-compsupp-radexp} by noting that for $\mu,\nu \in \cP\left(\BB_d(r)\right)$,  
and  function classes $\cF_{\varphi_{\sigma}}(b_{r,\sigma})$ and $\cE_{\varphi_{\sigma}}(b_{r,\sigma})$ with envelopes $F_{b_{r,\sigma}}*\varphi_{\sigma}$ and $e^{F_{b_{r,\sigma}}}*\varphi_{\sigma}$, respectively, we have 
\begin{align}
\norm{F_{b_{r,\sigma}}*\varphi_{\sigma}}_{2,\mu} \vee \norm{F_{b_{r,\sigma}}*\varphi_{\sigma}}_{\infty,\mu} \vee \norm{e^{F_{b_{r,\sigma}}}*\varphi_{\sigma}}_{2,\nu} \vee  \norm{e^{F_{b_{r,\sigma}}}*\varphi_{\sigma}}_{\infty,\nu}    \leq \xi_{r,d,\sigma}, \notag 
\end{align}
where $\xi_{r,d,\sigma}$ is given in \eqref{eq:devterm1}.
\subsection{Proof of Theorem \ref{Thm:compactlysuppRen}}\label{Sec:Thm:compactlysuppRen-proof}
Fix $\mu,\nu \in \cP\left(\BB_d(r)\right)$. As shown in \eqref{eq:loglikfncls},  $f_{\hat \mu_n*\gamma_{\sigma},\hat \nu_n*\gamma_{\sigma}}^{\star}, f_{\mu*\gamma_{\sigma},\nu*\gamma_{\sigma}}^{\star} \in \cF\big(b_{r,\sigma}\big)$, where $b_{r,\sigma}:=\frac{r^2+4r}{\sigma^2}$. Also, recall that   $F_{b_{r,\sigma}}(x) :=0.5 b_{r,\sigma}\big(1+\norm{x}\big)$ is an envelope for $\cF\big(b_{r,\sigma}\big)$. 
Let
\begin{align}
  L_{\alpha,\sigma,\hat \mu_n}(f):=  \int_{\RR^d} \left(e^{\alpha f} *\varphi_{\sigma} \right)  \mspace{2 mu}d \hat \mu_n. \notag 
\end{align}
Note that for arbitrary functionals $A:\cF \rightarrow \RR$ and $B:\cF \rightarrow  \RR$ such that at least one of $\sup_{f \in \cF}A(f)$ and $\sup_{f \in \cF}B(f)$ is finite, we have
\begin{align}
    \abs{\sup_{f \in \cF}A(f)-\sup_{f \in \cF}B(f)} \leq \sup_{f \in \cF}\abs{A(f)-B(f)}. \label{eq:diffsupbnd} 
\end{align}
Using this and observing  that with probability one,
\begin{align}
  0 < \sup_{f \in \cF(b_{r,\sigma})} \log L_{\alpha-1,\sigma,\hat \mu_n}, \sup_{f \in \cF(b_{r,\sigma})} \log L_{\alpha-1,\sigma,\mu}, \sup_{f \in \cF(b_{r,\sigma})} \log L_{\alpha,\sigma,\hat \nu_n}, \sup_{f \in \cF(b_{r,\sigma})} \log L_{\alpha,\sigma,\nu}  <\infty, \notag
\end{align}
 we obtain 
\begin{align}
 & \abs{\rendiv{\hat \mu_n*\gamma_{\sigma}}{\hat \nu_n*\gamma_{\sigma}}{\alpha}-\rendiv{\mu*\gamma_{\sigma}}{\nu*\gamma_{\sigma}}{\alpha}} \notag \\
 &\leq \sup_{f \in \cF(b_{r,\sigma})} \frac{\alpha}{\abs{\alpha-1}} \abs{\log L_{\alpha-1,\sigma,\hat \mu_n}(f)-\log L_{\alpha-1,\sigma,\mu}(f)}  +\sup_{f \in \cF(b_{r,\sigma})}\abs{\log L_{\alpha,\sigma,\hat \nu_n}(f) -\log L_{\alpha,\sigma,\nu}(f)} \notag \\
  &\leq \frac{\alpha}{\abs{\alpha-1}}\sup_{f \in\cF(b_{r,\sigma})}  \frac{\abs{L_{\alpha-1,\sigma,\hat \mu_n}(f)-L_{\alpha-1,\sigma,\mu}(f)}}{L_{\alpha-1,\sigma,\hat \mu_n}(f) \wedge L_{\alpha-1,\sigma,\mu}(f)}+\sup_{f \in \cF(b_{r,\sigma})}  \frac{\abs{L_{\alpha,\sigma,\hat \nu_n}(f)-L_{\alpha,\sigma,\nu}(f)}}{L_{\alpha,\sigma,\hat \nu_n}(f) \wedge L_{\alpha,\sigma,\nu}(f)} \notag \\
&\leq \frac{\alpha}{\abs{\alpha-1}}  \frac{\sup_{f \in \cF(b_{r,\sigma})}\abs{L_{\alpha-1,\sigma,\hat \mu_n}(f)-L_{\alpha-1,\sigma,\mu}(f)}}{L_{-|\alpha-1|,\sigma,\hat \mu_n}(F_{b_{r,\sigma}}) \wedge L_{-|\alpha-1|,\sigma,\mu}(F_{b_{r,\sigma}})}+  \frac{\sup_{f \in \cF(b_{r,\sigma})}\abs{L_{\alpha,\sigma,\hat \nu_n}(f)-L_{\alpha,\sigma,\nu}(f)}}{L_{-\alpha,\sigma,\hat \nu_n}(F_{b_{r,\sigma}}) \wedge L_{-\alpha,\sigma,\nu}(F_{b_{r,\sigma}})}, \label{eq:diffrendiv}  
\end{align}
where the penultimate inequality used $\abs{\log \frac{x}{y}} \leq \frac{|x-y|}{x \wedge y}$. Note that since $\supp(\mu) \subseteq \BB_d(r)$, we have
\begin{align}
    L_{-|\alpha-1|,\sigma,\mu}\big(F_{b_{r,\sigma}}\big)&:=\int_{\RR^d} \left(e^{-\abs{\alpha-1}F_{b_{r,\sigma}}} *\varphi_{\sigma} \right)  \mspace{2 mu}d \mu \geq e^{-\abs{\alpha-1}\norm{F_{b_{r,\sigma}}}_{1, \mu*\gamma_{\sigma}}} \geq \frac{1}{\kappa_{|\alpha-1|,r,d,\sigma}}, \notag
\end{align}
 where we used that $\EE[\norm{Z}]\mspace{-2 mu}=\mspace{-2 mu}\sqrt{d}\sigma$ for $Z \mspace{-2 mu}\sim \mspace{-2 mu} N(0,\sigma^2I_d)$ and $\kappa_{\alpha,r,d,\sigma}\mspace{-2 mu}:=\mspace{-2 mu}e^{\frac{ \alpha b_{r,\sigma}(1+r+\sqrt{d}\sigma)}{2}}$. 
 Similarly, using  $\supp(\hat \mu_n),\supp(\hat \nu_n),$ $\supp(\nu) \subseteq \BB_d(r)$, we can lower bound the terms in the denominator of the expression in \eqref{eq:diffrendiv}, which upon substitution therein yields
 \begin{align}
 &  \abs{\rendiv{\hat \mu_n*\gamma_{\sigma}}{\hat \nu_n*\gamma_{\sigma}}{\alpha}-\rendiv{\mu*\gamma_{\sigma}}{\nu*\gamma_{\sigma}}{\alpha}} \notag \\
 &\leq  \frac{\alpha \kappa_{|\alpha-1|,r,d,\sigma}}{\abs{\alpha-1}}  \sup_{f \in \cF(b_{r,\sigma})}\abs{L_{\alpha-1,\sigma,\hat \mu_n}(f)-L_{\alpha-1,\sigma,\mu}(f)}+ \kappa_{\alpha,r,d,\sigma} \sup_{f \in \cF(b_{r,\sigma})}\abs{L_{\alpha,\sigma,\hat \nu_n}(f)-L_{\alpha,\sigma,\nu}(f)} \notag  \\
  &=  \sup_{f \in \bar \cE_{\varphi_{\sigma},\alpha,r,d}} \abs{(\hat \mu_n-\mu)(f)} 
  +  \sup_{f \in \tilde  \cE_{\varphi_{\sigma},\alpha,r,d}}\abs{(\hat \nu_n-\nu)(f)}. \notag
 \end{align}
Here 
\begin{align}
\bar \cE_{\varphi_{\sigma},\alpha,r,d}&:=\left\{\frac{\alpha \kappa_{|\alpha-1|,r,d,\sigma}}{\abs{\alpha-1}}e^{|\alpha-1| f}*\varphi_{\sigma}: f\in  \cF(b_{r,\sigma})\right\} \notag \\ \mbox{and} \quad 
\tilde  \cE_{\varphi_{\sigma},\alpha,r,d}&:=\left\{ \kappa_{\alpha,r,d,\sigma}e^{\alpha f}*\varphi_{\sigma}: f\in  \cF(b_{r,\sigma})\right\}, \notag
\end{align}
are Gaussian-smoothed function classes with envelopes $\bar E_{\sigma,\alpha,r,d}*\varphi_{\sigma}$ and $ \tilde E_{\sigma,\alpha,r,d}*\varphi_{\sigma}$, respectively, where 
\begin{align}
\bar E_{\sigma,\alpha,r,d}&:=\frac{\alpha \kappa_{|\alpha-1|,r,d,\sigma}}{\abs{\alpha-1}}e^{|\alpha-1| F_{b_{r,\sigma}}} \quad \mbox{and} \quad
    \tilde E_{\sigma,\alpha,r,d} :=\kappa_{\alpha,r,d,\sigma}e^{\alpha F_{b_{r,\sigma}}}.\notag 
\end{align} 
It follows analogous to \eqref{eq:entintsmthfncls} that  
\begin{flalign}  
J(1,\bar E_{\sigma,\alpha,r,d}*\varphi_{\sigma},\bar \cE_{\varphi_{\sigma},\alpha,r,d},\mu) \vee J(1,\tilde  E_{\sigma,\alpha,r,d}*\varphi_{\sigma},\tilde  \cE_{\varphi_{\sigma},\alpha,r,d},\nu)  &\lesssim \tilde \xi_{r,d,\sigma}, \notag  \notag 
\end{flalign}
where $\tilde \xi_{r,d,\sigma}$ is defined in \eqref{eq:devterm2}.  

Next, note that
\begin{align}
\bar E_{\sigma,\alpha,r,d}*\varphi_{\sigma}(x)&= \frac{1}{(2\pi)^{\frac d2}\sigma^d}\int_{\RR^d}\frac{\alpha \kappa_{|\alpha-1|,r,d,\sigma}}{\abs{\alpha-1}}e^{|\alpha-1| F_{b_{r,\sigma}}(y)}e^{-\frac{\norm{y-x}^2}{2\sigma^2}}dy \notag \\
&\leq \frac{\alpha \kappa_{|\alpha-1|,r,d,\sigma}}{\abs{\alpha-1}} e^{\frac{|\alpha-1|b_{r,\sigma}(1+\norm{x})}{2}} \EE\left[e^{\frac{|\alpha-1|b_{r,\sigma}\norm{Z}}{2}}\right]\notag \\
&\leq \frac{\alpha \kappa_{|\alpha-1|,r,d,\sigma}}{\abs{\alpha-1}} e^{\frac{|\alpha-1|b_{r,\sigma}(1+\norm{x})}{2}+d|\alpha-1|^2 b_{r,\sigma}^2 \sigma^2 } ,\notag 
\end{align}
where we used the exponential moment bound in Lemma \ref{lem:normsubgauss}. 
Hence
\begin{align}
\norm{\bar E_{\sigma,\alpha,r,d}*\varphi_{\sigma}}_{2,\mu} \vee \norm{\bar E_{\sigma,\alpha,r,d}*\varphi_{\sigma}}_{\infty,\mu} &\leq  \frac{\alpha \kappa_{|\alpha-1|,r,d,\sigma}}{\abs{\alpha-1}} e^{\frac{|\alpha-1|b_{r,\sigma}(1+r)}{2}+d|\alpha-1|^2 b_{r,\sigma}^2 \sigma^2 } \notag \\
&\leq  \frac{\alpha }{\abs{\alpha-1}} e^{|\alpha-1|b_{r,\sigma}\left(1+r+\frac{\sqrt{d}\sigma}{2}+d |\alpha-1|b_{r,\sigma}\sigma^2\right)}. \notag 
\end{align}
Likewise, we have
\begin{align}
   \norm{\tilde E_{\sigma,\alpha,r,d}*\varphi_{\sigma}}_{2,\nu} \vee \norm{\tilde E_{\sigma,\alpha,r,d}*\varphi_{\sigma}}_{\infty,\nu} & \leq \kappa_{\alpha,r,d,\sigma}e^{\frac{\alpha b_{r,\sigma}(1+r)}{2}+d \alpha^2 b_{r,\sigma}^2 \sigma^2 } \leq e^{\alpha b_{r,\sigma}\left(1+r+\frac{\sqrt{d}\sigma}{2}+d\alpha b_{r,\sigma} \sigma^2\right) }. \notag 
\end{align}
 Substituting the above bounds, we obtain for $\sigma \leq 1$ that
\begin{flalign}
&t_{n,\varphi_{\sigma},\mu,\nu} \notag \\
&:=\mspace{-2 mu}J(1,\bar E_{\sigma,\alpha,r,d}*\varphi_{\sigma},\bar \cE_{\varphi_{\sigma},\alpha,r,d},\mu) \norm{\bar E_{\sigma,\alpha,r,d}*\varphi_{\sigma}}_{2,\mu}  \mspace{-2 mu}  +\mspace{-2 mu} J(1,\tilde  E_{\sigma,\alpha,r,d}*\varphi_{\sigma},\tilde  \cE_{\varphi_{\sigma},\alpha,r,d},\nu)  \norm{\tilde E_{\sigma,\alpha,r,d}*\varphi_{\sigma}}_{2,\nu}  \notag\\
& \lesssim \tilde \xi_{r,d,\sigma} \left(\frac{\alpha }{\abs{\alpha-1}} e^{|\alpha-1|b_{r,\sigma}\left(1+r+\frac{\sqrt{d}\sigma}{2}+d |\alpha-1|b_{r,\sigma}\sigma^2\right)}+ e^{\alpha b_{r,\sigma}\left(1+r+\frac{\sqrt{d}\sigma}{2}+d\alpha b_{r,\sigma} \sigma^2\right) }\right)\notag \\
& = \tilde \xi_{r,d,\sigma}\mspace{2 mu}\lambda_{\alpha,d,r,\sigma},  \notag  &&
\end{flalign}
where
\begin{align}
   \lambda_{\alpha,d,r,\sigma}:= \left(\frac{\alpha }{\abs{\alpha-1}}+1\right)  e^{(\alpha+1) b_{r,\sigma}\left(1+r+\frac{\sqrt{d}\sigma}{2}+d(\alpha+1) b_{r,\sigma} \sigma^2\right) }. \label{eq:devtermren}
\end{align}
Since
\begin{align}
    \norm{\bar E_{\sigma,\alpha,r,d}*\varphi_{\sigma}}_{2,\mu} \vee \norm{\bar E_{\sigma,\alpha,r,d}*\varphi_{\sigma}}_{\infty,\mu} \vee \norm{\tilde E_{\sigma,\alpha,r,d}*\varphi_{\sigma}}_{2,\nu} \vee \norm{\tilde E_{\sigma,\alpha,r,d}*\varphi_{\sigma}}_{\infty,\nu} \leq \lambda_{\alpha,d,r,\sigma}, \notag
\end{align}
following similar steps as leading to \eqref{eq:KLdevineq-compsupp} and substituting the above bounds in the resulting inequality, we obtain \eqref{eq:Renconcineq-Orlicz}.
\subsection{Proof of Corollary \ref{Cor:Rendivsmoothrate}} \label{Sec:Cor:Rendivsmoothrate-proof}
From \eqref{eq:onesampspecexp}, we have for $0<\sigma \leq 1$ and every $\mu,\nu \in \cP\left(\BB_d(r)\right)$,
\begin{flalign}
&\PP \left(\abs{\kl{\hat \mu_n*\gamma_{\sigma}}{\nu*\gamma_{\sigma}}-\kl{\mu*\gamma_{\sigma}}{\nu*\gamma_{\sigma}}} \gtrsim \bar c_{d,r}\sigma^{-\left(\frac{d
}{2}+2\right)} n^{-\frac 12}  +\frac{(r^2+4r)(1+r\sqrt{d})}{\sigma^2}z\right)  \notag \\
& \qquad \qquad\qquad\qquad\qquad\qquad\qquad\qquad\qquad\qquad\qquad\qquad\qquad\qquad\qquad\qquad\quad\leq e^{-\left(nz^2 \wedge nz\right)}, \notag
\end{flalign}
where  $\bar c_{d,r}:=(3ed)^{3(d+2)}\sqrt{d}r(r+1)^{\frac{d}{2}+3}$.
On the other hand, \eqref{eq:KL-stability-ub} yields
\begin{align}
    \abs{\kl{\mu}{\nu}-\kl{\mu*\gamma_{\sigma}}{\nu*\gamma_{\sigma}}}\leq c_{d,s} M\left(M+1+\log M\right) \sigma^s \leq 2c_{d,s} M^2 \sigma^s. \notag
\end{align}
Combining these, and setting $\sigma=n^{-\frac{1}{d+2s+4}} \leq 1$ and $z=n^{\frac{-(s+2)}{d+2s+4}}\bar z$, \eqref{eq:KLconcineq-smth} follows via  triangle inequality.
\subsection{Proof of Proposition \ref{prop:smoothRen-stability}}\label{prop:smoothRen-stability-proof}
The inequality \eqref{eq:KL-stability-ub} follows via same steps as  in \cite[Lemma 3]{SGK-IT-2023} up to replacing Lebesgue measure by $\rho$ and taking $p_{\mu}$, $p_{\nu}$ as densities w.r.t. $\rho$. We next show \eqref{eq:ren-stability-ub} using the same techniques therein. For $\tau \in [0,1]$, let $z_{\mu,\tau}(x):=(1-\tau)p_{\mu}(x)+\tau \mu*\varphi_{\sigma}(x) $ and $z_{\nu,\tau}(x):=(1-\tau)p_{\nu}(x)+\tau\nu*\varphi_{\sigma}(x)$. By Taylor's theorem,  
\begin{align}
\left(\mu*\varphi_{\sigma}(x)\right)^{\alpha}\left(\nu*\varphi_{\sigma}(x)\right)^{1-\alpha}
&=p_{\mu}(x)^{\alpha}p_{\nu}(x)^{1-\alpha}+\alpha \big(\mu*\varphi_{\sigma}(x)-p_{\mu}(x)\big)\int_{0}^1\bigg(\frac{z_{\mu,\tau}(x)}{z_{\nu,\tau}(x)}\bigg)^{\alpha-1}d\tau \notag  \\
&\qquad \qquad  +(1-\alpha)\big(\nu*\varphi_{\sigma}(x)-p_{\nu}(x)\big)\int_{0}^1 \bigg(\frac{z_{\mu,\tau}(x)}{z_{\nu,\tau}(x)}\bigg)^{\alpha}d\tau.\label{eq:taylexpren} 
\end{align} 
Since  $1/M \leq p_{\mu}(x)/p_{\nu}(x) \leq M$ $\rho$-a.e. by assumption, we have $1/M \leq \mu*\varphi_{\sigma}(x)/\nu*\varphi_{\sigma}(x) \leq M$ $\rho$-a.e. Hence,  
\begin{align}
\frac 1M &\leq \frac{p_{\mu}(x)}{p_{\nu}(x)} \wedge \frac{\mu*\varphi_{\sigma}(x)}{\nu*\varphi_{\sigma}(x)} \leq \frac{z_{\mu,\tau}(x)}{z_{\nu,\tau}(x)} \leq \frac{p_{\mu}(x)}{p_{\nu}(x)} \vee \frac{\mu*\varphi_{\sigma}(x)}{\nu*\varphi_{\sigma}(x)}\leq M, ~~~\rho~\mbox{a.e.}\label{eq:bndratioden}
\end{align} 
Consider $\alpha \in (0,1)$.  Integrating w.r.t. $\rho$ in \eqref{eq:taylexpren}, we  obtain (note that $M \geq 1$)
\begin{align}
\abs{e^{(\alpha-1)\rendiv{\mu}{\nu}{\alpha}}-e^{(\alpha-1)\rendiv{\mu*\gamma_{\sigma}}{\nu*\gamma_{\sigma}}{\alpha}} } \mspace{-2 mu}
&\mspace{-2 mu}\leq \mspace{-2 mu}\alpha  M^{1-\alpha}\mspace{-4 mu}\int_{\RR^d}\mspace{-2 mu}\abs{\mu*\varphi_{\sigma}-p_{\mu}} d\rho  +M^{\alpha}(1-\alpha)  \mspace{-4 mu}\int_{\RR^d}\mspace{-2 mu}\abs{\nu*\varphi_{\sigma}-p_{\nu}} d\rho \notag  \\
& \leq c_{d,s}\big(\alpha M^{2-\alpha}+(1-\alpha)M^{\alpha+1}\big) \sigma^s. \label{eq:exprenapprox1}
\end{align}
In the final inequality, we used that
\begin{flalign}
 & \left(\int_{\RR^d} \abs{p_{\mu}-\mu*\varphi_{\sigma}}d \rho \right)\vee  \left(\int_{\RR^d} \abs{p_{\nu}-\nu*\varphi_{\sigma}}d\rho \right)\leq  c_{d,s} M  \sigma^s, \notag 
\end{flalign}
which follows under the Lipschitz assumption on the densities similarly to the proof of \cite[Lemma 3]{SGK-IT-2023}.
On the other hand, for $\alpha \in (1,\infty)$,  integrating w.r.t. $\rho$ in \eqref{eq:taylexpren} leads to
\begin{align}
\abs{e^{(\alpha-1)\rendiv{\mu}{\nu}{\alpha}}-e^{(\alpha-1)\rendiv{\mu*\gamma_{\sigma}}{\nu*\gamma_{\sigma}}{\alpha}} } 
&\mspace{-2 mu}\leq \mspace{-2 mu}\alpha  M^{\alpha-1}\mspace{-4 mu}\int_{\RR^d}\mspace{-2 mu}\abs{\mu*\varphi_{\sigma}-p_{\mu}} d\rho  \mspace{-2 mu}+\mspace{-2 mu}M^{\alpha}(\alpha-1) \mspace{-2 mu} \int_{\RR^d}\mspace{-2 mu}\abs{\nu*\varphi_{\sigma}-p_{\nu}} d\rho \notag \\
& \leq c_{d,s} \big(\alpha M^{\alpha}+(\alpha-1)M^{\alpha+1}\big) \sigma^s. \label{eq:exprenapprox2}
\end{align}
Next, observe that \eqref{eq:bndratioden} implies  
 \begin{align} \label{eq:exprenlb}
   & e^{(\alpha-1)\rendiv{\mu}{\nu}{\alpha}}=\int p_{\mu}(x)^{\alpha}p_{\nu}(x)^{1-\alpha} d\rho  \geq  M^{-\alpha} \quad \mbox{ and } \quad  e^{(\alpha-1)\rendiv{\mu*\gamma_{\sigma}}{\nu*\gamma_{\sigma}}{\alpha}} \geq M^{-\alpha}.
\end{align}   
 Applying the inequality $\big|\log x-\log y\big| \leq \abs{x-y}/(x \wedge y)$ for $x,y >0$  with $x=e^{(\alpha-1)\rendiv{\mu}{\nu}{\alpha}}$,  $y=e^{(\alpha-1)\rendiv{\mu*\gamma_{\sigma}}{\nu*\gamma_{\sigma}}{\alpha}}$ and using \eqref{eq:exprenapprox1}-\eqref{eq:exprenlb}  yields the desired claim. 

\subsection{Proof of Theorem \ref{Thm:devineqren}} \label{Sec:Thm:devineqren-proof}
\textbf{Part $(i)$:} 
Using \eqref{eq:varexpestklunreg} and \eqref{eq:diffsupbnd},
\begin{align}
 \abs{\klest{\hat \mu_n}{\hat \nu_n}{\cG_{b,\beta,\rho}}-\klest{\mu}{\nu}{\cG_{b,\beta,\rho}}}  &\leq \sup_{g \in \cG_{b,\beta,\rho}}\abs{(\hat \mu_n-\mu)(g)-(\hat \nu_n-\nu)(e^g)} \notag \\
& \leq \sup_{g \in \cG_{b,\beta,\rho}}\abs{(\hat \mu_n-\mu)(g)}+\sup_{g \in \cG_{b,\beta,\rho}}\abs{(\hat \nu_n-\nu)(e^g)} \notag \\
&=\frac{Z_{n,\cG_{b,\beta,\rho}}+\tilde Z_{n,\cE_{b,\beta,\rho}}}{n}, \label{eq:uppbnddevkl}
\end{align} 
where $\cG_{b,\beta,\rho}$ is as given in Theorem \ref{Thm:devineqren},  $\cE_{b,\beta,\rho}=\{e^g: g \in \cG_{b,\beta,\rho}\}$,  
\begin{align}
    &Z_{n,\cG_{b,\beta,\rho}}:=\sup_{g \in \cG_{b,\beta,\rho}}\abs{\sum_{i=1}^n \big(g(X_i)-\EE_{\mu}[g]\big)} \quad \mbox{ and } \quad \tilde Z_{n,\cE_{b,\beta,\rho}}:=\sup_{g \in \cE_{b,\beta,\rho}}\abs{\sum_{i=1}^n \big(g(Y_i)-\EE_{\nu}[g]\big)}. \label{eq:diffvalsemp} 
\end{align}
To obtain the desired deviation inequality, we  will use Theorem \ref{Thm:Talagrand-Bosquet}. To this end, we require an upper bound $\EE[Z_{n,\cG_{b,\beta,\rho}}]$ and $\EE[\tilde Z_{n,\cE_{b,\beta,\rho}}]$, for which we use the result stated next.
\begin{theorem}[see e.g., Theorem 2.14.1 in \cite{AVDV-book}] \label{Thm:Dudley-Pollard}
 Suppose  $\cF$ satisfies Assumption \ref{Assump:funclass} with envelope $F \in L^p(\mu)$ for some $p \geq 2$.  For $X^n \sim \mu^{\otimes n}$,  let $Z_n$ be as given in \eqref{eq:empsumfrmmean}  (with $\cF_n=\cF$ for all $n\in \NN$). Then
    \begin{align}
       \EE \left[Z_n\right] \leq \left(\EE[Z_n^p]\right)^{\frac 1p} \leq c_p 
 \sqrt{n}J(1,F,\cF,\mu) \norm{F}_{p,\mu},   \label{eq:entropyintmain}
    \end{align}
where  $c_p >0$ is a constant. 
\end{theorem}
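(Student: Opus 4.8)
This is the classical uniform-entropy maximal inequality \cite[Theorem 2.14.1]{AVDV-book} in its $L^p$-moment form, so the plan is to reproduce the standard chaining argument and keep track of the constant $c_p$. The leftmost inequality $\EE[Z_n]\le(\EE[Z_n^p])^{1/p}$ is just Jensen, so all the work is in the moment bound, and I may assume $J(1,F,\cF,\mu)<\infty$ (otherwise the claim is vacuous).

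The first step is symmetrization: applying the symmetrization inequality for the convex increasing function $x\mapsto x^p$ (e.g.\ \cite[Lemma 2.3.1]{AVDV-book}, valid since $F\in L^1(\mu)$), with $\{\varepsilon_i\}_{i=1}^n$ i.i.d.\ Rademacher variables independent of $X^n$, gives $(\EE[Z_n^p])^{1/p}\le 2\big(\EE\big[(\sup_{f\in\cF}|\sum_{i=1}^n\varepsilon_i f(X_i)|)^p\big]\big)^{1/p}$. Next I would condition on $X^n$ and run Dudley's chaining on the process $f\mapsto\sum_{i=1}^n\varepsilon_i f(X_i)$, which by Hoeffding's lemma has sub-Gaussian increments with respect to the random semimetric $d_n(f,g):=(\sum_{i=1}^n(f(X_i)-g(X_i))^2)^{1/2}=\sqrt n\,\|f-g\|_{2,\hat\mu_n}$, where $\hat\mu_n$ is the empirical measure of $X^n$; pointwise separability (Assumption \ref{Assump:funclass}) reduces the supremum to a countable subclass and legitimizes the chaining. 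Converting the resulting $\psi_2$-tail bound into an $L^p(\varepsilon)$ bound costs a factor $\lesssim\sqrt p$, giving, conditionally on $X^n$ and up to lower-order terms, a bound of the form $\lesssim_p\int_0^{\mathrm{diam}_n}\sqrt{\log N(u,\cF,d_n)}\,du$ with $\mathrm{diam}_n\le 2\sqrt n\,\|F\|_{2,\hat\mu_n}$.

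The third step is the change of variables $u=v\sqrt n\,\|F\|_{2,\hat\mu_n}$ together with $N(v\sqrt n\|F\|_{2,\hat\mu_n},\cF,d_n)=N(v\|F\|_{2,\hat\mu_n},\cF,\|\cdot\|_{2,\hat\mu_n})$; since $\hat\mu_n$ is a finitely supported probability measure with support contained in that of $\mu$, this is at most $\sup_\gamma N(v\|F\|_{2,\gamma},\cF,\|\cdot\|_{2,\gamma})$, and the new upper limit $\mathrm{diam}_n/(\sqrt n\,\|F\|_{2,\hat\mu_n})$ is bounded by a universal constant, so the conditional bound becomes $\lesssim_p\sqrt n\,\|F\|_{2,\hat\mu_n}\,J(1,F,\cF,\mu)$ with $J$ as in \eqref{eq:coventrdef}. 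The last step is to take expectation over $X^n$ and invoke the power-mean inequality: since $p/2\ge1$, $\|F\|_{2,\hat\mu_n}^p=(n^{-1}\sum_i F(X_i)^2)^{p/2}\le n^{-1}\sum_i F(X_i)^p$, whence $\EE[\|F\|_{2,\hat\mu_n}^p]\le\|F\|_{p,\mu}^p$, which yields \eqref{eq:entropyintmain}.

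I expect the main obstacle to be the passage, at the level of $p$-th moments, from the data-dependent empirical entropy $N(\cdot,\cF,\|\cdot\|_{2,\hat\mu_n})$ in the conditional chaining bound to the deterministic uniform entropy inside $J(1,F,\cF,\mu)$: the uniform-entropy assumption (supremum over all finitely supported $\gamma$) is precisely what makes the chaining bound free of the sample, and carefully tracking the $p$-dependence of the constant through the $\psi_2\!\to\!L^p$ conversion — as well as checking that the lower-order ``diameter'' contribution is absorbed — is the delicate part. An acceptable shortcut, if one only needs the existence of $c_p$, is to invoke \cite[Theorem 2.14.1 and its moment refinement]{AVDV-book} directly.
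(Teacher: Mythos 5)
Your sketch is precisely the standard chaining proof of the uniform-entropy moment bound in van der Vaart and Wellner (Theorem 2.14.1), which the paper cites but does not reproduce: Jensen, symmetrization with $x\mapsto x^p$, conditional Dudley/$\psi_2$-to-$L^p$ chaining against the empirical $L^2$ metric, change of variables to the sample-free uniform entropy integral, and finally the power-mean bound $\EE\big[\|F\|_{2,\hat\mu_n}^p\big]\le\|F\|_{p,\mu}^p$. This matches the intended argument, so the proposal is correct; the only caveat you flag (absorbing the diameter term) is the same one implicit in the paper's definition \eqref{eq:coventrdef}, which omits the usual $1+\log N$ inside the integrand and is therefore only valid in the regime $J(1,F,\cF,\mu)\gtrsim 1$ used in all of the paper's applications.
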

Recall that $\mu,\nu \ll \rho$, $b=\log M$, $\beta>d/2$ and $a:=\frac{d}{2 \beta} <1$. From the definition of $J(1,F,\cF,\mu)$ (see \eqref{eq:coventrdef}) with $\cF=\cG_{b,\beta,\rho}$ and envelope $F$ such that $F=b$ $\rho$-a.e., we have using \eqref{eq:coveringbnd} that
\begin{align}
 J(1,b,\cG_{b,\beta,\rho},\mu) &:= \int_{0}^{1}\sup_{\gamma} \sqrt{\log N\Big(\epsilon b, \cG_{b,\beta,\rho}, \norm{\cdot}_{2,\gamma}\Big)} d\epsilon   \leq \sqrt{\hat c_{d,\beta}}\int_{0}^{1} \epsilon^{-\frac{d}{2\beta}} d\epsilon  \leq \frac{\sqrt{\hat c_{d,\beta}}}{1-a}.\notag
\end{align}
 It follows from \eqref{eq:entropyintmain} that
\begin{align}
\EE\left[Z_{n,\cG_{b,\beta,\rho}}\right] &\lesssim \sqrt{n}  J(1,b,\cG_{b,\beta,\rho},\mu) \mspace{2 mu}b\leq \frac{\sqrt{n\hat c_{d,\beta}}b}{1-a} . \notag
\end{align}
Observe that for $z_1,z_2 \in \RR$ such that $\abs{z_1} \vee \abs{z_2} \leq b$, we have $\abs{e^{z_1}-e^{z_2}} \leq e^b\abs{z_1-z_2}$. Since this implies that for $g_1,g_2 \in \cG_{b,\beta,\rho}$, $\norm{e^{g_1}-e^{g_2}}_{2,\gamma}  \leq e^b \norm{g_1-g_2}_{2,\gamma}$ for any probability measure $\gamma$ with $\supp(\gamma) \subseteq \supp(\rho)$, we obtain 
\begin{align}
    N\big(\epsilon e^b, \cE_{b,\beta,\rho}, \norm{\cdot}_{2,\gamma}\big) \leq N\big(\epsilon, \cG_{b,\beta,\rho}, \norm{\cdot}_{2,\gamma}\big). \notag
\end{align}
Consequently
\begin{align}
J(1,e^b,\cE_{b,\beta,\rho},\nu) &:= \int_{0}^{1}\sup_{\gamma} \sqrt{\log N\Big(\epsilon e^b, \cE_{b,\beta,\rho}, \norm{\cdot}_{2,\gamma}\Big)} d\epsilon \notag \\
 &\leq \int_{0}^{1}\sup_{\gamma} \sqrt{\log N\Big(\epsilon, \cG_{b,\beta,\rho}, \norm{\cdot}_{2,\gamma}\Big)} d\epsilon \notag \\
 &\leq \frac{\sqrt{\hat c_{d,\beta}} b^a}{1-a}  ,\notag
\end{align}
where we again used \eqref{eq:coveringbnd} to obtain the last inequality. Then,  \eqref{eq:entropyintmain} yields 
\begin{align} \EE\left[\tilde Z_{n,\cE_{b,\beta,\rho}}\right] \lesssim   \sqrt{n}J\big(1,e^b,\cE_{b,\beta,\rho},\nu\big) e^b \leq  \frac{\sqrt{n\hat c_{d,\beta}} b^a e^b}{1-a}. \notag
\end{align}
In order to apply \eqref{eq:talagrandconcmain}, we also use the following bounds: 
\begin{align}
    \sqrt{V_n}&:=\sqrt{n \theta^2+2b \EE\left[Z_{n,\cG_{b,\beta,\rho}}\right]} \leq \sqrt{n} \theta+ \sqrt{2b\EE\left[Z_{n,\cG_{b,\beta,\rho}}\right]} \lesssim \sqrt{n} b+ \sqrt{2}b (\hat c_{d,\beta})^{\frac 14}(1-a)^{-\frac 12} n^{\frac 14}, \notag \\ 
    \sqrt{\tilde V_n} &:=\sqrt{n \tilde \theta^2+2e^b\EE[\tilde Z_{n,\cE_{b,\beta,\rho}}]} \leq \sqrt{n} \tilde \theta+ \sqrt{2e^b\EE[\tilde Z_{n,\cE_{b,\beta,\rho}}]}   \leq \sqrt{n} e^b+ \sqrt{2}e^b b^{\frac{a}{2}}(\hat c_{d,\beta})^{\frac 14}(1-a)^{-\frac 12}  n^{\frac 14}. \notag 
\end{align}
Setting $c_{d,\beta}:=c\frac{\sqrt{\hat c_{d,\beta}}}{1-a}$ for an appropriate constant $c \geq 1$ and noting that the above bounds yields $\sqrt{V_n} \leq \sqrt{c_{d,\beta}n} b$ and  $\sqrt{\tilde V_n} \leq \sqrt{c_{d,\beta}n} e^b(1+b^{\frac a2})$, 
\begin{align}
 &  \PP\left(\abs{\klest{\hat \mu_n}{\hat \nu_n}{\cG_{b,\beta,\rho}}-\klest{ \mu}{ \nu}{\cG_{b,\beta,\rho}}} \geq c_{d,\beta} (b+b^ae^b)n^{-\frac{1}{2}} +4\sqrt{2c_{d,\beta}}e^b\big(1+b^{\frac a2}\big)z \right) \notag \\
  &  \leq \PP\left(Z_{n,\cG_{b,\beta,\rho}}+\tilde Z_{n,\cE_{b,\beta,\rho}} \geq c_{d,\beta} (b+b^ae^b)\sqrt{n} +4\sqrt{2c_{d,\beta}}e^b\big(1+b^{\frac a2}\big)nz \right) \notag \\
    &  \leq \PP\left(Z_{n,\cG_{b,\beta,\rho}}-\EE\big[Z_{n,\cG_{b,\beta,\rho}}\big] \geq 2\sqrt{2c_{d,\beta}}bnz\right)+ \PP\left(\tilde Z_{n,\cE_{b,\beta,\rho}}-\EE\big[\tilde Z_{n,\cE_{b,\beta,\rho}}\big] \geq  2\sqrt{2c_{d,\beta}}e^b\big(1+b^{\frac a2}\big)nz\right)  \notag \\
 & \leq  2 e^{-(nz^2 \wedge nz)}. \label{eq:kldivbndtal}
\end{align}
Here,  the first inequality follows from \eqref{eq:uppbnddevkl} while  the final step is via an application of   \eqref{eq:talagrandconcmain}.

Next, observe that   for $(\mu,\nu) \in \bar \cP_{M,\rho}(\cG_{b,\beta,\rho},\delta)$,  there exists   $f \in \cG_{b,\beta,\rho}$ (by definition) such that $\norm{f_{\mu,\nu}^{\star}-f}_{\infty,\rho} \leq \delta$. This implies 
\begin{align}
\abs{\klest{ \mu}{ \nu}{\cG_{b,\beta,\rho}}-\kl{\mu}{\nu}}&=\kl{\mu}{\nu}-\klest{ \mu}{ \nu}{\cG_{b,\beta,\rho}} \notag \\
&= \EE_{\mu}[f_{\mu,\nu}^{\star}]-\EE_{\nu}[e^{f_{\mu,\nu}^{\star}}]+1-\klest{ \mu}{ \nu}{\cG_{b,\beta,\rho}} \notag \\
& \leq \EE_{\mu}[f_{\mu,\nu}^{\star}-f]-\EE_{\nu}[e^{f_{\mu,\nu}^{\star}}-e^f] \notag \\
& \leq \EE_{\mu}\big[\abs{f_{\mu,\nu}^{\star}-f}\big]+M \EE_{\nu}\left[\abs{f_{\mu,\nu}^{\star}-f}\right] \notag \\
&\leq (M+1)\delta, \notag
\end{align}
where in the penultimate inequality, we  used  $\norm{f}_{\infty,\rho} \leq b=\log M$ and the fact that the Lipschitz constant of  $e^x$ is bounded by $M$ for $\abs{x} \leq \log M$.  
Hence, using triangle inequality and adjusting the constant $c_{d,\beta}$, we have
\begin{align}
 &  \PP\left(  \abs{\klest{\hat \mu_n}{\hat \nu_n}{\cG_{b,\beta,\rho}}-\kl{\mu}{\nu}} \geq (M+1)\delta+ c_{d,\beta} M\left(1+(\log M)^a\right)\big(n^{-\frac 12}+z\big)  \right)  \leq 2 e^{-( nz^2 \wedge nz)}. \notag
\end{align}
Taking supremum w.r.t. $(\mu,\nu) \in\bar \cP_{M,\rho}(\cG_{b,\beta,\rho},\delta)$ completes the proof of \eqref{eq:devineqkl}.

\medskip
\noindent
\textbf{Part $(ii)$:}
We have
\begin{align}
  & \abs{\renest{\hat \mu_n}{\hat \nu_n}{\alpha}{\cG_{b,\beta,\rho}}-\renest{\mu}{\nu}{\alpha}{\cG_{b,\beta,\rho}}} \notag \\
    &  \leq \sup_{g \in\cG_{b,\beta,\rho}}\frac{\alpha}{\abs{\alpha-1}}\abs{\log \left(\frac{\EE_{\hat \mu_n}\big[ e^{(\alpha-1)g}\big]}{\EE_{\mu}\big[ e^{(\alpha-1)g}\big]}\right)}+\sup_{g \in \cG_{b,\beta,\rho}}\abs{\log \left(\frac{\EE_{\hat \nu_n}\big[ e^{\alpha g}\big]}{\EE_{\nu}\big[ e^{\alpha g}\big]}\right)} \notag \\
  & \leq \sup_{g \in \cG_{b,\beta,\rho}}\frac{\alpha e^{\abs{\alpha-1}b}}{\abs{\alpha-1}}\abs{\EE_{\hat \mu_n}\big[ e^{(\alpha-1)g}\big]-\EE_{\mu}\big[ e^{(\alpha-1)g}\big]}+e^{\alpha b}\sup_{g \in \cG_{b,\beta,\rho}}\abs{\EE_{\hat \nu_n}\big[ e^{\alpha g}\big]-\EE_{\nu}\big[ e^{\alpha g}\big]} \notag \\
   &=\sup_{g \in \bar \cG_{b,\alpha,\beta,\rho}} \abs{(\hat \mu_n-\mu)(g)} +\sup_{g \in  \tilde \cG_{b,\alpha,\beta,\rho}} \abs{(\hat \nu_n-\nu)(g)} \notag \\
   &=\frac{Z_{n,\bar \cG_{b,\alpha,\beta,\rho}}+\tilde Z_{n,\tilde \cG_{b,\alpha,\beta,\rho}}}{n},\notag
\end{align}
where $Z_{n,\bar \cG_{b,\alpha,\beta,\rho}}$ and $\tilde Z_{n,\tilde \cG_{b,\alpha,\beta,\rho}}$ are as defined in \eqref{eq:diffvalsemp} with
  \begin{align} \notag
\bar \cG_{b,\alpha,\beta,\rho}&:= \left\{\frac{\alpha e^{\abs{\alpha-1}b}}{\abs{\alpha-1}} e^{(\alpha-1)g}: g \in \cG_{b,\beta,\rho}\right\} \quad \mbox{and} \quad
 \tilde \cG_{b,\alpha,\beta,\rho}:=\big\{e^{\alpha (b+g)}:g \in \cG_{b,\beta,\rho}\big\}. 
\end{align}  
In the first inequality above, we used  \eqref{eq:varexpestrenunreg} and  \eqref{eq:diffsupbnd}. In the second inequality, we used $\abs{\log \frac{x}{y}} \leq \frac{|x-y|}{x \wedge y}$ and that $\norm{g}_{\infty,\rho} \leq b=\log M $ for $g \in \cG_{b,\beta,\rho}$. 
Note that for any $c \in \RR$ and $x,y \in \RR$ such that $\abs{x} \vee \abs{y} \leq b$, $\abs{e^{c x}-e^{c y}} \leq  \abs{c}e^{\abs{c}b} \abs{x-y}$. Hence, for any probability measure $\gamma$ such that $\supp(\gamma) \subseteq \supp(\rho)$, 
  \begin{align} \notag
 N\Big(\alpha e^{2\abs{\alpha-1}b} \epsilon , \bar \cG_{b,\alpha,\beta,\rho}, \norm{\cdot}_{2,\gamma}\Big) &\leq N\Big(\epsilon , \cG_{b,\beta,\rho}, \norm{\cdot}_{2,\gamma}\Big), \notag \\
\mbox{and} \quad
 N\Big(\alpha e^{2\alpha b} \epsilon , \tilde \cG_{b,\alpha,\beta,\rho}, \norm{\cdot}_{2,\gamma}\Big) &\leq N\Big(\epsilon , \cG_{b,\beta,\rho}, \norm{\cdot}_{2,\gamma}\Big). \notag 
\end{align}  
Let $F$ be an envelope for $\bar \cG_{b,\alpha,\beta,\rho}$ such that $F=(\alpha/\abs{\alpha-1}) e^{2\abs{\alpha-1}b}$ $\rho$-a.e. Then, we have using \eqref{eq:coveringbnd} that for $\beta >d/2$,
\begin{align}
 J(1,F,\bar \cG_{b,\alpha,\beta,\rho},\mu) &:= \int_{0}^{1}\sup_{\gamma} \sqrt{\log N\Big(\epsilon F, \bar \cG_{b,\alpha,\beta,\rho}, \norm{\cdot}_{2,\gamma}\Big)} d\epsilon   \leq \frac{\sqrt{\hat c_{d,\beta}}}{1-a} \big(b\abs{\alpha-1}\big)^{a}. \notag
\end{align}
Similarly, considering $F$ such that $F= e^{2\alpha b}$ $\rho$-a.e. to be an envelope for $\tilde \cG_{b,\alpha,\beta,\rho}$,  we obtain for $\beta >d/2$ that
\begin{align}
  J(1,F,\tilde \cG_{b,\alpha,\beta,\rho},\nu) &:= \int_{0}^{1}\sup_{\gamma} \sqrt{\log N\Big(\epsilon F, \tilde  \cG_{b,\alpha,\beta}, \norm{\cdot}_{2,\gamma}\Big)} d\epsilon   \leq \frac{\sqrt{\hat c_{d,\beta}}}{1-a} (b\alpha)^{a}.  \notag 
\end{align}
Consequently,  \eqref{eq:entropyintmain}  yields (with respective envelopes $F$) that
\begin{align}
\EE\left[Z_{n,\bar \cG_{b,\alpha,\beta,\rho}}\right] &\lesssim  \sqrt{n} J\big(1,F,\bar \cG_{b,\alpha,\beta,\rho},\mu\big) \norm{F}_{2,\mu}\leq \frac{\sqrt{\hat c_{d,\beta}}}{1-a} \sqrt{n}\alpha b^{a}\abs{\alpha-1}^{a-1}e^{2\abs{\alpha-1}b}, \notag \\
\EE\left[\tilde Z_{n,\tilde \cG_{b,\alpha,\beta,\rho}}\right] &\lesssim \sqrt{n}  J\big(1,F,\tilde \cG_{b,\alpha,\beta,\rho},\nu\big) \norm{F}_{2,\mu}\leq \frac{\sqrt{\hat c_{d,\beta}}}{1-a} \sqrt{n}(b \alpha)^{a}e^{2\alpha b}. \notag 
\end{align}
Moreover
\begin{align}
    \sqrt{V_n} \leq \sqrt{n} \frac{\alpha e^{2\abs{\alpha-1}b}}{\abs{\alpha-1}} + \sqrt{2\frac{\alpha e^{2\abs{\alpha-1}b}}{\abs{\alpha-1}}\EE\big[Z_{n,\bar \cG_{b,\alpha,\beta,\rho}}\big]} 
    &\leq \sqrt{n}\frac{\alpha e^{2\abs{\alpha-1}b}}{\abs{\alpha-1}} + \frac{\sqrt{2 c_{d,\beta}}\alpha e^{2\abs{\alpha-1}b}b^{\frac a2}n^{\frac 14}}{\abs{\alpha-1}^{1-\frac a2}} \notag \\
    &\leq  \frac{\sqrt{nc_{d,\beta}}\alpha e^{2\abs{\alpha-1}b}}{\abs{\alpha-1}}\left(1+(b\abs{\alpha-1})^{\frac a2}\right), \notag
\end{align}
and \begin{align}
   &  \sqrt{\tilde V_n} \leq \sqrt{n} e^{2\alpha b} + \sqrt{2b\EE\big[\tilde Z_{n,\tilde  \cG_{b,\alpha,\beta}}\big]} \leq \sqrt{n} e^{2\alpha b} + \sqrt{2  c_{d,\beta}}e^{2\alpha b}(b\alpha)^{\frac a2}n^{\frac 14}  \leq  \sqrt{n c_{d,\beta}}e^{2\alpha b} \big(1+(b\alpha)^{\frac a2}\big) . \notag
\end{align}
Applying \eqref{eq:talagrandconcmain}, it follows similar to \eqref{eq:kldivbndtal} that for $z \geq 0\mspace{2 mu}$:
\begin{align}
 &  \PP\left(\abs{\renest{\hat \mu_n}{\hat \nu_n}{\alpha}{\cG_{b,\beta,\rho}}-\renest{ \mu}{ \nu}{\alpha}{\cG_{b,\beta,\rho}}} \geq c_{d,\beta} \big(n^{-\frac 12}+z\big) v_{e^b,\alpha,a}\right) \leq 2e^{-( nz^2 \wedge nz )},  \label{eq:talgrndineqrenunreg}
\end{align}
where $v_{e^b,\alpha,a}:=\frac{\alpha e^{2\abs{\alpha-1}b}}{\abs{\alpha-1}}\left(1+(b\abs{\alpha-1})^{\frac a2}\right)+ e^{2\alpha b}\big(1+(b\alpha)^a\big)$.

Next, for $(\mu,\nu) \in \bar \cP_{M,\rho}(\cG_{b,\beta,\rho},\delta)$,  there exists an $f \in \cG_{b,\beta,\rho}$ such that $\norm{f_{\mu,\nu}^{\star}-f}_{\infty,\rho} \leq \delta$. This implies
\begin{align}
&\abs{\renest{ \mu}{ \nu}{\alpha}{\cG_{b,\beta,\rho}}-\rendiv{\mu}{\nu}{\alpha}}\notag \\
&=\rendiv{\mu}{\nu}{\alpha}-\renest{ \mu}{ \nu}{\alpha}{\cG_{b,\beta,\rho}} \notag \\
& \leq \frac{\alpha}{\alpha-1} \left(\log \EE_{\hat \mu}\big[ e^{(\alpha-1)f_{\mu,\nu}^{\star}}\big]-\log \EE_{\hat \mu}\big[ e^{(\alpha-1)f}\big]\right)-\left(\log \EE_{\nu}\big[e^{\alpha f_{\mu,\nu}^{\star}}\big]-\log \EE_{\nu}\big[e^{\alpha f}\big]\right) \notag \\
& \leq \alpha e^{2\abs{\alpha-1}b} \EE_{\mu}\big[\abs{f_{\mu,\nu}^{\star}-f}\big]+\alpha e^{2\alpha b} \EE_{\nu}\left[\abs{f_{\mu,\nu}^{\star}-f}\right] \notag \\
&\leq \alpha  (M^{2\abs{\alpha-1}}+M^{2\alpha})\delta, \notag 
\end{align}
where in the penultimate inequality, we used that the Lipschitz constant of $\log x$  is bounded by $1/\abs{c}$ for $x\geq c>0$ and that of $e^x$ is bounded by $e^c$ for $x \in [-c,c]$, and in the last equation, we used that $\norm{f_{\mu,\nu}^{\star}-f}_{\infty,\rho} \leq \delta$ and $b=\log M$. The claim in \eqref{eq:devineqren} follows via triangle inequality from \eqref{eq:talgrndineqrenunreg} and the above equation after simplification.

\subsection{Proof of Proposition \ref{prop:ht-audit}} \label{Sec:prop:ht-audit-proof}
Let $0 \leq  \tau  \leq 1$. Consider the test statistic $T_n=\renest{\hat \mu_n}{\hat \nu_n}{\alpha}{\cG_n}$ with critical value
\begin{align}
    t_n=\epsilon+\big(M^{2\alpha}+M^{2\abs{\alpha-1}}\big)\delta_n+ c_{d,\beta}  v_{M,\alpha,a}\big(n^{-\frac 12}+n^{0.5(\tau-1)}\big). \notag
\end{align}
 Define the event  $E_{n}:=\left\{\renest{\hat \mu_n}{\hat \nu_n}{\alpha}{\cG_n} > t_n\right\}$
and let $\bar E_{n}$ denote its compliment. 
Since $\rendiv{\mu_0}{\nu_0}{\alpha} \leq \epsilon$ by assumption of the hypotheses, we have
\begin{align}
&e_{1,n}(T_n,t_n)\notag \\
&=\PP\big(E_{n}|H_0\big)\notag \\
 & =  \PP\left(\renest{\hat \mu_n}{\hat \nu_n}{\alpha}{\cG_n} > \epsilon+\big(M^{2\alpha}+M^{2\abs{\alpha-1}}\big)\delta_n+ c_{d,\beta}  v_{M,\alpha,a}\big(n^{-\frac 12} +n^{0.5(\tau-1)}\big) \big| H_0\right) \notag \\
  & \leq  \PP\left(\abs{\renest{\hat \mu_n}{\hat \nu_n}{\alpha}{\cG_n}-\rendiv{\mu_0}{\nu_0}{\alpha}} \geq \big(M^{2\alpha}+M^{2\abs{\alpha-1}}\big)\delta_n+ c_{d,\beta}  v_{M,\alpha,a}\big(n^{-\frac 12} +n^{0.5(\tau-1)}\big)\big| H_0 \right) \notag \\
& \leq 2e^{-\left(n^{\tau} \wedge n^{0.5(\tau+1)} \right)},\notag
\end{align}
where the last inequality follows using \eqref{eq:devineqren}. Note that $n^{\tau} \wedge n^{0.5(\tau+1)}=n^{\tau}$ for $0\leq \tau \leq 1$.

On the other hand, since $\rendiv{\mu_1}{\nu_1}{\alpha} > \epsilon$ by assumption of the hypotheses,
\begin{align}
&e_{2,n}(T_n,t_n)\notag \\
 & =  \PP\left(\renest{\hat \mu_n}{\hat \nu_n}{\alpha}{\cG_n} \leq t_n\big| H_1\right) \notag \\
  & \leq  \PP\left(\renest{\hat \mu_n}{\hat \nu_n}{\alpha}{\cG_n}-\rendiv{\mu_1}{\nu_1}{\alpha} \leq - (\rendiv{\mu_1}{\nu_1}{\alpha}-t_n)\big| H_1\right) \notag \\
  & \leq  \PP\left(\abs{\renest{\hat \mu_n}{\hat \nu_n}{\alpha}{\cG_n}-\rendiv{\mu_1}{\nu_1}{\alpha}} \geq \rendiv{\mu_1}{\nu_1}{\alpha}-t_n\big| H_1 \right) \notag \\
    & =  \PP\left(\abs{\renest{\hat \mu_n}{\hat \nu_n}{\alpha}{\cG_n}-\rendiv{\mu_1}{\nu_1}{\alpha}} \geq \big(M^{2\alpha}+M^{2\abs{\alpha-1}}\big)\delta_n+ c_{d,\beta}  v_{M,\alpha,a}\big(n^{-\frac 12} +\bar \vartheta_{n,\alpha,\delta_n}(\tau)\big) \big| H_1 \right) \notag \\
& \leq 2e^{-\vartheta_{n,\alpha,\delta_n}(\tau)}, \notag
\end{align}
where in the final inequality, we applied \eqref{eq:devineqren} with $z=\bar \vartheta_{n,\alpha,\delta_n}(\tau)$,  and
\begin{subequations}\label{eq:expt2err}
 \begin{align}
\vartheta_{n,\alpha,\delta_n}(\tau)&:=  n \big(\bar \vartheta_{n,\alpha,\delta_n}(\tau) \vee 0\big)^2 \wedge n  \big(\bar\vartheta_{n,\alpha,\delta_n}(\tau) \vee 0),   \\
\bar \vartheta_{n,\alpha,\delta_n}(\tau)&:=\frac{1}{c_{d,\beta} v_{M,\alpha,a}}\left(  \rendiv{\mu_1}{\nu_1}{\alpha}-\epsilon -2\big(M^{2\alpha}+M^{2\abs{\alpha-1}}\big)\delta_n -2c_{d,\beta} v_{M,\alpha,a}n^{-\frac 12} \right)-  n^{0.5(\tau-1)} . 
\end{align}   
\end{subequations}
\section{Proofs of  Lemmas}
\subsection{Proof of Lemma \ref{Lem:concent-KL}} \label{Lem:concent-KL-proof}
To handle the case of compactly supported distributions, we  will use a version of Talagrand's concentration inequality given in \cite[Theorem 2.3]{BOUSQUET-2002}.
\begin{theorem}[Theorem 2.3 in \cite{BOUSQUET-2002}] \label{Thm:Talagrand-Bosquet}
Suppose that $\cF$ satisfies Assumption \ref{Assump:funclass} and   
$\norm{f}_{\infty,\mu} \leq b$ for all $f \in \cF$. For $X^n \sim \mu^{\otimes n}$, let  $Z_n$ be as given in \eqref{eq:empsumfrmmean} (with $\cF_n=\cF$ for all $n\in \NN$). Let  $\theta^2:=\sup_{f \in \cF} \EE_{\mu }\big[\big(f-\EE_{\mu}[f]\big)^2\big]$ and  $V_n:=n \theta^2+ 2b \EE[Z_n]$. Then, for all $z \geq 0$,
\begin{subequations}
      \begin{align}
      \PP \left(Z_n \geq \EE[Z_n]+\sqrt{2zV_n}+\frac{bz}{3} \right) \leq e^{-z}. \label{eq:Tal-bosqversion}
\end{align}
In particular, for every $z \geq 0$,
 \begin{align}
   &   \PP \left(Z_n \geq \EE[Z_n]+z \right) \leq e^{-\big(\frac{z^2}{8V_n} \wedge \frac{3z}{2b}\big)}, \label{eq:talagrandconcmain} \\
   & \PP \left(Z_n \geq 2\EE[Z_n]+z \right) \leq e^{-\big(\frac{z^2}{8n\theta^2} \wedge \frac{3z}{8b}\big)}. \label{eq:talagrandconctrunc}
 \end{align} 
 \end{subequations}
\end{theorem}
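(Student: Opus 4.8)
\textbf{Plan of proof for Theorem \ref{Thm:Talagrand-Bosquet}.}
This is a restatement of Bousquet's version of Talagrand's concentration inequality for the supremum of an empirical process, so the plan is to present it as a consequence of \cite[Theorem 2.3]{BOUSQUET-2002} together with elementary algebraic manipulations, rather than to reprove the concentration inequality from scratch. First I would note that $Z_n$ as defined in \eqref{eq:empsumfrmmean} is a measurable supremum: by Assumption \ref{Assump:funclass}, $\cF$ is pointwise separable, so $Z_n$ equals the supremum over a countable subclass and hence is a genuine random variable, and Bousquet's theorem applies to $\cF$ (note that the theorem statement uses $\sup_f |\cdot|$, which is the supremum of a larger, symmetrized class that is still pointwise separable with the same envelope bound $b$). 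The quantity $\theta^2 = \sup_{f \in \cF}\EE_\mu[(f-\EE_\mu[f])^2]$ is the weak variance term and $V_n = n\theta^2 + 2b\,\EE[Z_n]$ is exactly Bousquet's ``$v$'' parameter; with the uniform bound $\|f\|_{\infty,\mu}\le b$ the displayed inequality \eqref{eq:Tal-bosqversion} is precisely the conclusion of \cite[Theorem 2.3]{BOUSQUET-2002} after translating notation (some formulations state it with $2bz/3$ or with $b$ replaced by a bound on $\|f - \EE_\mu[f]\|_\infty \le 2b$; I would fix the constant so that the version quoted here is the one used, and remark that the precise numerical constant in the linear term is immaterial downstream).

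Next I would derive \eqref{eq:talagrandconcmain} from \eqref{eq:Tal-bosqversion}. Given a target deviation $z \ge 0$ (reusing the symbol, now in the role of the deviation rather than the exponent), set $u := \frac{z^2}{8V_n} \wedge \frac{3z}{2b}$ and plug $u$ in place of the ``$z$'' of \eqref{eq:Tal-bosqversion}. It then suffices to check that $\sqrt{2uV_n} + \frac{bu}{3} \le z$. On the event that the minimum is achieved by the first term, $u \le \frac{z^2}{8V_n}$, so $\sqrt{2uV_n} \le z/2$; and $u \le \frac{3z}{2b}$ forces $\frac{bu}{3} \le z/2$, so the sum is at most $z$. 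On the event that the minimum is the second term the same two bounds hold (each of the two expressions in the minimum is an upper bound for $u$), so the inequality $\sqrt{2uV_n}+\frac{bu}{3}\le z$ is valid in all cases. Therefore $\PP(Z_n \ge \EE[Z_n] + z) \le \PP(Z_n \ge \EE[Z_n] + \sqrt{2uV_n} + \frac{bu}{3}) \le e^{-u}$, which is \eqref{eq:talagrandconcmain}.

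Finally, for \eqref{eq:talagrandconctrunc} I would combine \eqref{eq:talagrandconcmain} with the crude bound $\EE[Z_n] \le \EE[Z_n]$, using $2\EE[Z_n] + z \ge \EE[Z_n] + (\EE[Z_n] + z)$; more to the point, one wants to absorb the $2b\,\EE[Z_n]$ piece of $V_n$ into the mean term. Write $V_n = n\theta^2 + 2b\,\EE[Z_n]$ and split: $\frac{z^2}{8V_n} \ge \frac{z^2}{8n\theta^2} \wedge \frac{z^2}{16 b\,\EE[Z_n]}$ is not quite the clean route; instead I would apply \eqref{eq:talagrandconcmain} with the substitution $z \mapsto z + \EE[Z_n]$ is also not clean. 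The cleanest path is: by \eqref{eq:talagrandconcmain} with deviation $z' := z + \EE[Z_n]$ we would get an exponent involving $V_n$; alternatively, and this is what I would actually write, one uses the standard consequence that $\sqrt{2uV_n} \le \sqrt{2u\,n\theta^2} + \sqrt{4ub\,\EE[Z_n]} \le \sqrt{2u\,n\theta^2} + \EE[Z_n] + ub$ (AM--GM on the second term), so from \eqref{eq:Tal-bosqversion} with $u := \frac{z^2}{8n\theta^2}\wedge\frac{3z}{8b}$ one gets $Z_n \ge \EE[Z_n] + \sqrt{2uV_n} + \frac{bu}{3}$ implies $Z_n \ge 2\EE[Z_n] + \sqrt{2u\,n\theta^2} + ub + \frac{bu}{3} \ge 2\EE[Z_n] + z$ after checking, as before, that $\sqrt{2u\,n\theta^2} + \frac{4bu}{3} \le z$ from the two bounds on $u$; this yields \eqref{eq:talagrandconctrunc}. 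The only mild obstacle is bookkeeping of the numerical constants $8, 3, 2$ and making sure the AM--GM split in the last step is done with constants compatible with the stated exponent $\frac{z^2}{8n\theta^2}\wedge\frac{3z}{8b}$; none of this is deep, but it is the step most prone to off-by-a-constant slips, and if the stated constants do not come out exactly I would either adjust the constants in the statement (they are used only through ``$\lesssim$'' downstream) or invoke a slightly looser split.
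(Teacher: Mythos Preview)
Your proposal is correct and matches the paper's approach: the paper derives \eqref{eq:talagrandconcmain} by the case split $\sqrt{2zV_n}\gtrless bz/3$, which is equivalent to your device of plugging $u=\frac{z^2}{8V_n}\wedge\frac{3z}{2b}$ into \eqref{eq:Tal-bosqversion}, and for \eqref{eq:talagrandconctrunc} the paper uses exactly your split $\sqrt{2zV_n}\le\theta\sqrt{2nz}+2\sqrt{bz\EE[Z_n]}\le\theta\sqrt{2nz}+bz+\EE[Z_n]$ followed by the analogous case split $\sqrt{2zn\theta^2}\gtrless 4bz/3$. One cosmetic slip: in your last paragraph the implication is phrased the wrong way (you need $\{Z_n\ge 2\EE[Z_n]+z\}\subseteq\{Z_n\ge\EE[Z_n]+\sqrt{2uV_n}+\tfrac{bu}{3}\}$, not the reverse), but your stated check $\sqrt{2un\theta^2}+\tfrac{4bu}{3}\le z$ is precisely what makes that correct direction work, and the constants come out exactly as claimed.
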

 Equation \eqref{eq:talagrandconcmain}  follows from \eqref{eq:Tal-bosqversion} by considering the cases $\sqrt{2zV_n} \geq bz/3$ and its opposite. Equation \eqref{eq:talagrandconctrunc} follows from \eqref{eq:Tal-bosqversion} by noting that $\sqrt{2zV_n} \leq \theta \sqrt{2nz} +2\sqrt{bz\EE[Z_n]} \leq \theta \sqrt{2nz}+bz+\EE[Z_n]$, where we used $\sqrt{x+y} \leq \sqrt{x}+\sqrt{y}$ for $x,y \geq 0$ and $2xy \leq x^2+y^2$ for $x,y \in \RR$, and considering the cases $\sqrt{2zn\theta^2} \geq 4bz/3$ and its opposite.

To show \eqref{eq:KLdevineq-mombnd}, we will use the following Fuk-Nagaev
type inequality which is applicable for a sequence of function classes $(\cF_n)_{n \in \NN}$ with unbounded envelope. It is a  restatement of  \cite[Theorem 2]{Adamczak-2010} in a form with explicit constants which is useful for our purposes.
\begin{theorem}[Theorem 2 in \cite{Adamczak-2010}] \label{Thm:Admczak-tailbnd} 
In the setting of Lemma \ref{Lem:statistver:Fuk-Nagaev}, if $F_n$ satisfies the weaker integrability condition $\norm{F_n}_{p,\mu}<\infty$ for some $p\geq 2$ (instead of $\norm{F_n}_{\psi_q}<\infty$ for some $q \in (0,1]$), 
     \begin{align}
      \PP \Big(Z_n \geq 2\EE[Z_n]+z \Big) \leq e^{-\big(\frac{z^2}{128 n \norm{F_n}_{2,\mu}^2} \wedge \frac{3z}{256\overline{M_n}}\big)} +2(16)^p \EE\big[\big(M_n(X^n)\big)^p\big] z^{-p}, \label{eq:tailbndepmomcond}
  \end{align}
where $\overline{M_n} \leq \norm{M_n(X^n)}_{p,\mu} \leq n^{\frac 1p} \norm{F_n}_{p,\mu}$.
\end{theorem}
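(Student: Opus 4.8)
The statement is the Fuk--Nagaev inequality of \cite{Adamczak-2010} (see also \cite{Adamczak-2008}) with its absolute constants made explicit, supplemented by an elementary chain of bounds on $\overline{M_n}$; I would split the work accordingly. \emph{Step 1 (the tail bound).} Pointwise separability of $\cF_n$ (Assumption \ref{Assump:funclass}) lets us replace $\cF_n$ by a countable subclass without changing $Z_n$, after which $Z_n$ is the supremum of an empirical process with i.i.d.\ summands and integrable envelope $\norm{F_n}_{p,\mu}<\infty$. For such a process, \cite[Theorem 2]{Adamczak-2010} supplies a Fuk--Nagaev bound of essentially the form \eqref{eq:tailbndepmomcond}: a Bernstein-type exponential term whose ``variance'' is the weak variance $\sup_{f\in\cF_n}\sum_{i=1}^n\EE_\mu[(f-\EE_\mu f)^2]$, which by $\abs{f}\le F_n$ is at most $n\EE_\mu[F_n^2]=n\norm{F_n}_{2,\mu}^2$, whose ``scale'' is $\overline{M_n}$, and a polynomial residual governed by $\EE[(\max_{1\le i\le n}F_n(X_i))^p]=\EE[(M_n(X^n))^p]$; pinning the numerical constants requires re-running Adamczak's argument, which I outline in Step 3.

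\emph{Step 2 (the auxiliary inequality).} Since $p\ge 1$, Jensen's inequality gives $\overline{M_n}=\EE[\max_i F_n(X_i)]\le\bigl(\EE[(\max_i F_n(X_i))^p]\bigr)^{1/p}=\norm{M_n(X^n)}_{p,\mu}$, and $\max_i F_n(X_i)^p\le\sum_i F_n(X_i)^p$ gives $\EE[(M_n(X^n))^p]\le n\,\EE_\mu[F_n^p]=n\norm{F_n}_{p,\mu}^p$, hence $\norm{M_n(X^n)}_{p,\mu}\le n^{1/p}\norm{F_n}_{p,\mu}$.

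\emph{Step 3 (constants, via the truncation argument behind Adamczak's theorem).} Fix a level $T_n$, write each $f=f\ind_{\{F_n\le T_n\}}+f\ind_{\{F_n>T_n\}}$, and split $Z_n\le Z_n'+Z_n''$ with $Z_n'$ the sup of the truncated (envelope $\le T_n$) process and $Z_n''$ that of the complementary one. Control $Z_n'$ by Bousquet's inequality \eqref{eq:talagrandconctrunc} of Theorem \ref{Thm:Talagrand-Bosquet}, applied with $b=T_n$ and variance parameter $\le\norm{F_n}_{2,\mu}^2$, using $\EE Z_n'\lesssim\EE Z_n$ (truncation does not inflate the associated symmetrized process); this is the source of the Bernstein term. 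Control $Z_n''$ by the Hoffmann--J{\o}rgensen inequality \cite[Proposition 6.8]{LT-1991} exactly as in the remark following Lemma \ref{Lem:statistver:Fuk-Nagaev}, namely $\EE[(Z_n'')^p]\le 2\cdot4^p\,\EE[(M_n(X^n))^p]$, followed by Markov's inequality at a threshold of order $z$, producing the $2(16)^p z^{-p}\EE[(M_n(X^n))^p]$ residual. Finally, choosing $T_n$ a fixed multiple of $\overline{M_n}$ turns the scale $z/T_n$ in the Bernstein exponent into $3z/(256\,\overline{M_n})$ while leaving the Gaussian constant $1/128$, and a union bound over the two events yields \eqref{eq:tailbndepmomcond}. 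The \textbf{main obstacle} is exactly this constant bookkeeping: getting $128$, $256$ and $2(16)^p$ to come out right, and in particular justifying that the exponent's scale may be taken as $\overline{M_n}$ rather than the possibly infinite $\norm{F_n}_{\infty,\mu}$ or a free truncation level --- this is Adamczak's refinement and is what forces $T_n$ to be chosen proportional to $\overline{M_n}$ --- together with cleanly absorbing the desymmetrization factor in $\EE Z_n'\lesssim\EE Z_n$.
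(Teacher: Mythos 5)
Your proposal takes the same route as the paper, which treats Theorem~\ref{Thm:Admczak-tailbnd} as a citation of \cite[Theorem 2]{Adamczak-2010} and simply remarks that the constants ``are determined similar to that in Lemma~\ref{Lem:statistver:Fuk-Nagaev}'' while the chain $\overline{M_n}\le\norm{M_n(X^n)}_{p,\mu}\le n^{1/p}\norm{F_n}_{p,\mu}$ follows from the $\psi(x)=x^p$ maximal inequality in \cite[above Lemma 2.2.2]{AVDV-book}. Your Step~2 proves the same chain by hand (Jensen plus $\max_i\le\sum_i$), which is equivalent, and your Step~3 correctly reconstructs the truncation-at-$T_n^\star=8\overline{M_n}$, Bousquet's inequality \eqref{eq:talagrandconctrunc} for $Z_n'$, and Hoffmann--J{\o}rgensen plus Markov for $Z_n''$ that underlies Lemma~\ref{Lem:statistver:Fuk-Nagaev} and its remark, which is exactly where the factor $2(16)^p$ comes from. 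With the split $z=\tfrac{3z}{4}+\tfrac{z}{4}$ applied to $(Z_n',Z_n'')$, the Bernstein exponent becomes $\tfrac{9z^2}{128n\norm{F_n}_{2,\mu}^2}\wedge\tfrac{9z}{256\overline{M_n}}$, of which the stated $\tfrac{z^2}{128n\norm{F_n}_{2,\mu}^2}\wedge\tfrac{3z}{256\overline{M_n}}$ is a harmless relaxation, so your bookkeeping is essentially right.

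The one place where your sketch papers over a real issue is the step ``using $\EE Z_n'\lesssim\EE Z_n$ (truncation does not inflate the associated symmetrized process).'' You correctly flag this as the main obstacle, but note that it does \emph{not} follow from the argument in Lemma~\ref{Lem:statistver:Fuk-Nagaev}: there the comparison is $Z_n^{'\varepsilon}\le Z_n^{(\varepsilon)}$, which is a pointwise (in $X^n$) Jensen inequality for the Rademacher expectations and requires nothing from $\EE[Z_n]$. To pass from $\EE[Z_n']$ to $\EE[Z_n]$ one needs a symmetrization--desymmetrization chain $\EE[Z_n']\le 2\EE[Z_n^{'\varepsilon}]\le 2\EE[Z_n^{(\varepsilon)}]$ followed by a converse symmetrization, and the last step is not a clean factor-of-two inequality for non-centered classes (an extra additive term of order $\sqrt{n}\sup_f\abs{\EE_\mu f}$ appears). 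This is precisely the part that Adamczak's argument handles carefully to keep the constant $2$ in front of $\EE[Z_n]$, and the paper deliberately defers to \cite{Adamczak-2010} rather than reproving it; if you wanted a self-contained derivation you would have to work through that desymmetrization step rather than treating it as automatic.
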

The constants  in Theorem \ref{Thm:Admczak-tailbnd} are determined similar to that in Lemma \ref{Lem:statistver:Fuk-Nagaev}, and the upper bound on $\norm{M_n(X^n)}_{p,\mu}$ follows due to the maximal inequality for Orlicz norm with $\psi(x)=x^p$ (see equation above \cite[Lemma 2.2.2]{AVDV-book}).

\medskip

\noindent
Proceeding with the proof of \eqref{eq:KLdevineq-mombnd}, let  $X^n \sim \mu^{\otimes n}$, $Y^n \sim \nu^{\otimes n}$,  and  $\cF_n$ be the function class having the properties specified in Lemma \ref{Lem:concent-KL}. Recall that $f_{\mu*\Phi,\nu*\Phi}^{\star}:= \log \left(\mu *\phi/\nu *\phi \right)$. Since $f_{\hat \mu_n*\Phi,\hat \nu_n*\Phi}^{\star}, f_{\mu*\Phi,\nu*\Phi}^{\star} \in \cF_n$ by assumption, 
\begin{align}
 \sup_{f \in \cF_n}\mspace{-2 mu}\EE_{\hat \mu_n*\Phi}[f]-\EE_{\hat \nu_n*\Phi}[e^f]+1 \mspace{-2 mu}& \mspace{-2 mu}\leq  \kl{\hat \mu_n*\Phi}{\hat \nu_n*\Phi}\mspace{-2 mu}=\mspace{-2 mu}\EE_{\hat \mu_n*\Phi}\big[f_{\hat \mu_n*\Phi,\hat \nu_n*\Phi}^{\star}\big]\mspace{-2 mu}-\mspace{-2 mu}\EE_{\hat \nu_n*\Phi}\big[e^{f_{\hat \mu_n*\Phi,\hat \nu_n*\Phi}^{\star}}\big]\mspace{-2 mu}+\mspace{-2 mu}1, \notag \\
 \sup_{f \in \cF_n}\EE_{\mu*\Phi}[f]-\EE_{\nu*\Phi}[e^f]+1 &\leq  \kl{\mu*\Phi}{\nu*\Phi} =\EE_{\mu*\Phi}\big[f_{\mu*\Phi,\nu*\Phi}^{\star}\big]-\EE_{\nu*\Phi}\big[e^{f_{\mu*\Phi,\nu*\Phi}^{\star}}\big]+1. \notag
\end{align}
Hence
\begin{subequations} \label{eq:klvarfeq}
\begin{align}
  & \kl{\hat \mu_n*\Phi}{\hat \nu_n*\Phi}=\sup_{f \in \cF_n}\EE_{\hat \mu_n*\Phi}[f]-\EE_{\hat \nu_n*\Phi}[e^f]+1,   \label{eq:klvarfeq1} \\
  &  \kl{\mu*\Phi}{\nu*\Phi}=\sup_{f \in \cF_n}\EE_{\mu*\Phi}[f]-\EE_{\nu*\Phi}[e^f]+1. \label{eq:klvarfeq2}   
\end{align}
\end{subequations}
Moreover, since $\abs{f_{\mu*\Phi,\nu*\Phi}^{\star}(x) } \leq F_n(x)$ for all $x \in \RR^d$ and $\norm{F_n*\phi}_{p,\mu} <\infty$ for some $p \geq 2$ implies that $\norm{F_n}_{1,\mu*\Phi} =\norm{F_n*\phi}_{1,\mu}<\infty$, $\kl{\mu*\Phi}{\nu*\Phi} <\infty$. Then, from \eqref{eq:diffsupbnd} and \eqref{eq:klvarfeq}, we obtain
\begin{align}
 &\big|\kl{\hat \mu_n*\Phi}{\hat \nu_n*\Phi}-\kl{\mu*\Phi}{\nu*\Phi}\big| \notag \\
 &\leq \sup_{f \in \cF_n}\abs{(\hat \mu_n*\phi-\mu*\phi)(f)-(\hat \nu_n*\phi-\nu*\phi)(e^f)} \notag \\
& \leq \sup_{f \in \cF_n}\abs{(\hat \mu_n*\phi-\mu*\phi)(f)}+\sup_{f \in \cF_n}\abs{(\hat \nu_n*\phi-\nu*\phi)(e^f)} \notag \\
&= \frac{Z_{n,\cF_{n,\phi}} + \tilde Z_{n,\cE_{n,\phi}}}{n}, \notag
\end{align}
where 
\begin{align}
    &Z_{n,\cF_{n,\phi}}:=\sup_{f \in \cF_{n,\phi}}\abs{\sum_{i=1}^n \big(f(X_i)-\EE_{\mu}[f]\big)} \quad \mbox{ and } \quad \tilde Z_{n,\cE_{n,\phi}}:=\sup_{f \in \cE_{n,\phi}}\abs{\sum_{i=1}^n \big(f(X_i)-\EE_{\nu}[f]\big)}, \notag 
\end{align}
and  
$\cF_{n,\phi}$ and $\cE_{n,\phi}$ are as defined in Lemma \ref{Lem:concent-KL}. 
By an application of  Theorem \ref{Thm:Dudley-Pollard}, we obtain 
\begin{align} 
   \EE[Z_{n,\cF_{n,\phi}}] & \leq c  \sqrt{n} J(1,F_n*\phi,\cF_{n,\phi},\mu) \norm{F_n*\phi}_{2,\mu}, \notag \\
    \mbox{and} \quad
   \EE[\tilde Z_{n,\cE_{n,\phi}}]  &\leq c  \sqrt{n} J(1,e^{F_n}*\phi,\cE_{n,\phi},\nu) \norm{e^{F_n}*\phi}_{2,\nu}. \notag
\end{align}
 Now, we can upper bound the probability of deviation  as follows:
\begin{align}
&\PP\left(\big|\kl{\hat \mu_n*\Phi}{\hat \nu_n*\Phi}-\kl{\mu*\Phi}{\nu*\Phi}\big| \geq 2n^{-\frac 12}t_{n,\varphi_{\sigma},\mu,\nu} +z\right) \notag \\
& \leq \PP\left(\big|\kl{\hat \mu_n*\Phi}{\hat \nu_n*\Phi}-\kl{\mu*\Phi}{\nu*\Phi}\big| \geq 2n^{-1}\EE\left[Z_{n,\cF_{n,\phi}}\right]+2n^{-1}\EE\left[\tilde Z_{n,\cE_{n,\phi}}\right]+z\right) \notag \\
& \leq \PP\left(Z_{n,\cF_{n,\phi}} + \tilde Z_{n,\cE_{n,\phi}} \geq 2\EE\left[Z_{n,\cF_{n,\phi}}\right]+2\EE\left[\tilde Z_{n,\cE_{n,\phi}}\right]+nz\right) \notag \\
& \leq \PP\left(Z_{n,\cF_{n,\phi}} \geq 2\EE\left[Z_{n,\cF_{n,\phi}}\right]+\frac{nz}{2}\right) + \PP\left(\tilde Z_{n,\cE_{n,\phi}} \geq 2\EE\left[\tilde Z_{n,\cE_{n,\phi}}\right]+\frac{nz}{2}\right). \label{eq:devsplitineq}   
\end{align}
Applying   \eqref{eq:tailbndepmomcond} in Theorem \ref{Thm:Admczak-tailbnd} to each term above and scaling $z$ by appropriate constants (depending on $p$),  we obtain \eqref{eq:KLdevineq-mombnd}. 

The proof of \eqref{eq:KLdevineq-compsupp} is analogous up to \eqref{eq:devsplitineq}, and the conclusion then follows by applying \eqref{eq:talagrandconctrunc} instead, noting that $\sup_{f \in \cF_{n,\phi}} \EE_{\mu }\big[\big(f-\EE_{\mu}[f]\big)^2\big]\leq   \norm{F_n*\phi}_{2,\mu}^2 $  and scaling $z$ appropriately. 
\subsection{Proof of Lemma \ref{lem:bndholdernorm}}\label{Sec:lem:bndholdernorm-proof}
We may assume without loss of generality that $b>0$, since otherwise, the claims hold trivially. Consider  $q =0$ or $q \in [1,\infty)$. For $f \in \cF(b,q)$, set  $f_{\sigma}:=f*\varphi_{\sigma}$. Let $H_m(z)$, $m \in \ZZ_{\geq 0}$, denote the Hermite polynomial of degree $m$ defined by
\begin{align}
    H_m(z)=(-1)^m e^{\frac{z^2}{2}} \left[\frac{d^m}{dz^m}e^{-\frac{z^2}{2}}\right]. \notag
\end{align}
We have
\begin{align}
  D^kf_{\sigma}(x)=(-\sigma)^{-|k|}\int f(y) \varphi_{\sigma}(x-y)\underbrace{\left[\prod_{j=1}^d H_{k_j}\left(\frac{x_j-y_j}{\sigma}\right)\right]}_{B_{k,\sigma}(x-y)}dy.   \label{eq:bksigma}
\end{align}
Hence
\begin{align}
&\abs{D^kf_{\sigma}(x)-D^kf_{\sigma}(z)} \leq \sigma^{-|k|}\int \abs{f(y)} \abs{\varphi_{\sigma}(x-y)B_{k,\sigma}(x-y)-\varphi_{\sigma}(z-y)B_{k,\sigma}(z-y)}dy. \label{eq:bndderexp}
\end{align}
By triangle inequality for absolute value, we have
\begin{align}
    &\abs{\varphi_{\sigma}(x-y)B_{k,\sigma}(x-y)-\varphi_{\sigma}(z-y)B_{k,\sigma}(z-y)} \notag \\
    &\leq \varphi_{\sigma}(x-y) \abs{B_{k,\sigma}(x-y)-B_{k,\sigma}(z-y)}  + \abs{B_{k,\sigma}(z-y)}\abs{\varphi_{\sigma}(x-y)-\varphi_{\sigma}(z-y)} \notag \\
    &\leq  \abs{B_{k,\sigma}(x-y)-B_{k,\sigma}(z-y)} \big(\varphi_{\sigma}(x-y)+\abs{\varphi_{\sigma}(x-y)-\varphi_{\sigma}(z-y)}\big) \notag \\
    & \qquad \qquad + \abs{B_{k,\sigma}(x-y)}\abs{\varphi_{\sigma}(x-y)-\varphi_{\sigma}(z-y)}. \label{eq:intermdiff1}
\end{align}
In the second inequality, we used $\abs{B_{k,\sigma}(z-y)} \leq \abs{B_{k,\sigma}(x-y)}+\abs{B_{k,\sigma}(x-y)-B_{k,\sigma}(z-y)}$. 

Next observe that
\begin{align}
  \varphi_{\sigma}(x-y)-\varphi_{\sigma}(z-y)&=\varphi_{\sigma}(x-y)\left( 1-e^{-\frac{\big(\norm{z-y}^2-\norm{x-y}^2\big)}{2 \sigma^2} } \right) \notag \\
  & \stackrel{(a)}{\leq} \varphi_{\sigma}(x-y)\left( 1-e^{-\frac{\big(\norm{z-y}+\norm{x-y}\big)\norm{x-z}}{2 \sigma^2} } \right) \notag \\
  & \stackrel{(b)}{\leq}  \frac{\norm{x-z}}{ 2\sigma^2} \varphi_{\sigma}(x-y)\big(\norm{z-y}+\norm{x-y}\big) \notag \\
  & \stackrel{(c)}{\leq} \frac{\norm{x-z}^2}{2\sigma^2} \varphi_{\sigma}(x-y)+\frac{\norm{x-z}\norm{x-y}}{ \sigma^2}\varphi_{\sigma}(x-y), \label{eq:intermdiff2}
\end{align}
where
\begin{enumerate}[(a)]
\item follows by using $a^2-b^2=(a+b)(a-b)$ for $a,b \in \RR$ and  $\norm{u}-\norm{v} \leq \norm{u-v}$ for $u,v \in \RR^d$;
\item is due to $1-e^{-x} \leq x$ for $x \in \RR$;
\item is since $\norm{z-y}+\norm{x-y} \leq \norm{x-z}+2\norm{x-y}$ for $x,y,z \in \RR^d$.
\end{enumerate}
Substituting \eqref{eq:intermdiff2} in \eqref{eq:intermdiff1}, we obtain
\begin{align}
  & \abs{\varphi_{\sigma}(x-y)B_{k,\sigma}(x-y)-\varphi_{\sigma}(z-y)B_{k,\sigma}(z-y)}\notag \\
 &\qquad \qquad \qquad \qquad\leq   \varphi_{\sigma}(x-y)\abs{B_{k,\sigma}(x-y)-B_{k,\sigma}(z-y)} \left(1+\frac{\norm{x-z}^2}{2 \sigma^2} +\frac{\norm{x-z}\norm{x-y}}{ \sigma^2}\right) \notag \\
 &\qquad \qquad \qquad\qquad \qquad \qquad \qquad \qquad + \varphi_{\sigma}(x-y)\abs{B_{k,\sigma}(x-y)}\norm{x-z}\left( \frac{\norm{x-z}}{2 \sigma^2} +\frac{\norm{x-y}}{ \sigma^2}\right). \notag 
\end{align}
Consider the term $B_{k,\sigma}(x-y)$. We have
\begin{align}
B_{k,\sigma}(x-y)&:= \prod_{j=1}^d H_{k_j}\left(\frac{x_j-y_j}{\sigma}\right)=\sum_{l=1}^{|k|}\sigma^{-l} \sum_{\substack{s \in \ZZ_{\geq 0}^d:\\|v|=l}} \tilde c_{k,s} \prod_{j=1}^d(x_j-y_j)^{v_j}, \notag
\end{align}
for some constants $\tilde c_{k,s}$. A crude upper bound on these constants can be obtained by using the closed form expression for the coefficients of a Hermite polynomial as
\begin{align}
\max_{\abs{s} \leq \abs{k}} \abs{\tilde c_{k,s}} \leq \max_{\abs{s} \leq \abs{k}} \frac{|k|!}{|s|!(|k|-2|s|)! 2^{|s|}} \leq |k|!\left(\left\lfloor\frac{\abs{k}}{3} \right\rfloor !\right)^{-2} 2^{-\left\lfloor\frac{\abs{k}}{3} \right\rfloor} \leq 3 \abs{k}^{\frac 32} (2\abs{k})^{\frac{\abs{k}}{3}}.\notag
\end{align}
Hence,
\begin{align}
   \abs{B_{k,\sigma}(x-y)-B_{k,\sigma}(z-y)} \leq \sum_{l=1}^{|k|}\sigma^{-l} \sum_{\substack{s \in \ZZ_{\geq 0}^d:\\|s|=l}} \abs{\tilde c_{k,s}} \abs{\prod_{j=1}^d(x_j-y_j)^{s_j}-\prod_{j=1}^d(z_j-y_j)^{s_j}}. \label{eq:difftermb} 
\end{align}
Now, note that for any $u,v \in \RR^i$, we have
\begin{flalign}
 \prod_{j=1}^{i} u_j-\prod_{j=1}^i v_j&=\sum_{l=0}^{i-1}\left(\prod_{j=1}^{i-l} u_j\prod_{j=i-l+1}^{i} v_j-\prod_{j=1}^{i-l-1} u_j\prod_{j=i-l}^{i} v_j \right)  \notag \\
 &=\sum_{l=0}^{i-1}(u_{i-l}-v_{i-l}) \left(\prod_{j=1}^{i-l-1} u_j\right) \left(\prod_{j=i-l+1}^{i} v_j\right). \notag 
\end{flalign}
Hence, we have
\begin{align}
\abs{\prod_{j=1}^{i} u_j-\prod_{j=1}^i v_j} \leq \sum_{l=0}^{i-1} \norm{u-v}_{\infty} \left(\norm{u}_{\infty}+\norm{v}_{\infty}\right)^{i} \leq i 2^{i-1}\norm{u-v}_{\infty} \big(\norm{u}_{\infty}^i+\norm{v}_{\infty}^i\big).  \notag 
\end{align}
Substituting in \eqref{eq:difftermb}, we obtain
\begin{align}
   \abs{B_{k,\sigma}(x-y)-B_{k,\sigma}(z-y)} &\leq \sum_{l=1}^{|k|}\sigma^{-l} \sum_{\substack{s \in \ZZ_{\geq 0}^d:|s|=l}} \abs{\tilde c_{k,s}} |s|2^{|s|-1}\norm{x-z} \left(\norm{x-y}^{|s|}+\norm{z-y}^{|s|}\right) \notag \\
   & \leq \hat c_{d,\abs{k}}  \sigma^{-|k|} \norm{x-z}  \left(1 + \norm{x-y}^{|k|}+\norm{z-y}^{|k|}\right) \notag \\
   &   \leq \hat c_{d,\abs{k}} \sigma^{-|k|} \norm{x-z}  \left(1 + \norm{x-y}^{|k|}+\norm{x-z}^{|k|}\right), \notag 
\end{align}
where 
\begin{align}
    \hat c_{d,\abs{k}}:=|k|!(\abs{k}+1)^d \left(\left\lfloor\frac{\abs{k}}{3} \right\rfloor !\right)^{-2} 2^{-\left\lfloor\frac{\abs{k}}{3} \right\rfloor}. \notag
\end{align}
Hence, since $\sigma^{-|k|} \geq \sigma^{{-|k'|}}$ for all $\sigma \leq 1$ and $|k| \geq |k'| $, we have for such $\sigma$ that 
\begin{align}
  & \abs{\varphi_{\sigma}(x-y)B_{k,\sigma}(x-y)-\varphi_{\sigma}(z-y)B_{k,\sigma}(z-y)}\notag \\
 & \leq \hat c_{d,\abs{k}}  \sigma^{-|k|} \varphi_{\sigma}(x-y)\norm{x-z}  \left(1 + \norm{x-y}^{|k|}+\norm{x-z}^{|k|}\right) \left(1+\frac{\norm{x-z}^2}{2 \sigma^2} +\frac{\norm{x-z}\norm{x-y}}{ \sigma^2}\right) \notag \\
 &\qquad \qquad\qquad\qquad\qquad\qquad+ \varphi_{\sigma}(x-y)\abs{B_{k,\sigma}(x-y)}\norm{x-z}\left( \frac{\norm{x-z}}{2 \sigma^2} +\frac{\norm{x-y}}{ \sigma^2}\right). \notag
\end{align}
Next observe that for $q \geq 1$ or $q=0$ and $f \in \cF(b,q)$, 
\begin{align}
\abs{f(y)} \leq 0.5 b (1+\norm{y}^q) \leq 0.5 b (1+2^{q-1}\norm{x-y}^q+2^{q-1}\norm{x}^q ). \notag     
\end{align}
Hence, for $x \in \cX$, we obtain using the definition of $v_{b,q}(\cX)$ (see \eqref{eq:envelopmax}) that
\begin{align}
\abs{f(y)} \leq  2^{q-1}  \left(v_{b,q}(\cX)+b\norm{x-y}^q\right). \notag     
\end{align}
Using this in the above equation yields for $x \in \cX$ that
\begin{align}
  & \abs{f(y)}\abs{\varphi_{\sigma}(x-y)B_{k,\sigma}(x-y)-\varphi_{\sigma}(z-y)B_{k,\sigma}(z-y)}\notag \\
 & \leq 2^{q-1} \hat c_{d,\abs{k}}\big(v_{b,q}(\cX)+b\norm{x-y}^q\big)\varphi_{\sigma}(x-y)\norm{x-z} \Bigg(  \sigma^{-|k|}  \left(1 + \norm{x-y}^{|k|}+\norm{x-z}^{|k|}\right) \notag \\
 &\qquad \qquad \qquad \left(1+\frac{\norm{x-z}^2}{2 \sigma^2} +\frac{\norm{x-z}\norm{x-y}}{ \sigma^2}\right)  + \abs{B_{k,\sigma}(x-y)}\left( \frac{\norm{x-z}}{2 \sigma^2} +\frac{\norm{x-y}}{ \sigma^2}\right)\Bigg) \notag \\
 & \leq 2^{q-1}\hat c_{d,\abs{k}}\big(v_{b,q}(\cX)+b\norm{x-y}^q\big)\varphi_{\sigma}(x-y)\norm{x-z}  \bigg( \sigma^{-|k|}\Big(1 + \norm{x-y}^{|k|}+\norm{x-z}^{|k|}\Big) \notag \\
 & \qquad \qquad\qquad \qquad +\abs{B_{k,\sigma}(x-y)}\frac{\norm{x-y}}{ \sigma^2}\bigg) + \norm{x-z}^2 R_{d,q,\sigma,|k|},\label{eq:integrndub}
\end{align}
where 
\begin{align}
R_{d,q,\sigma,|k|}&:=    2^{q-1}\hat c_{d,\abs{k}}\big(v_{b,q}(\cX)+b\norm{x-y}^q\big)\varphi_{\sigma}(x-y)\bigg(\abs{B_{k,\sigma}(x-y)}+ \sigma^{-(|k|+2)}    \notag \\
&\qquad \qquad\qquad\qquad\qquad\quad \times \Big(1 + \norm{x-y}^{|k|}+\norm{x-z}^{|k|}\Big)\left(\norm{x-z}+\norm{x-y}\right)\bigg).\notag
\end{align}
For  $U \sim N(0,\sigma^2)$ and $V=N(0,\sigma^2I_d)$, a straightforward computation yields
\begin{subequations}
    \begin{align}
    &\big(\EE[H_m(U)])^2 \leq \EE[H_m^2(U)]=m !, \label{eq:bndherm1}\\
&\EE[\norm{V}^q]^{\frac 1q} \leq \sqrt{qd}\sigma.\label{eq:bndmomnt1}
    \end{align}   
\end{subequations}
    From these and $\prod_{j=1}^d k_j! \leq |k|!$, we obtain using the definition of $B_{k,\sigma}(x-y)$ (see \eqref{eq:bksigma}) that
\begin{subequations}\label{eq:integralvals}
    \begin{align}
  &  \int_{\RR^d} \norm{x-y}^a   \varphi_{\sigma}(x-y) dy=\left(\sqrt{ad}\sigma\right)^a ,  \\
       &  \int_{\RR^d}     \abs{B_{k,\sigma}(x-y)}\varphi_{\sigma}(x-y) dy\leq \left(\prod_{j=1}^d\EE\Big[H_{k_j}^2(U)\Big]\right)^{\frac 12} \leq \sqrt{|k|!},  
    \end{align}
    and 
    \begin{align}
     \int_{\RR^d} \norm{x-y}^a    \abs{B_{k,\sigma}(x-y)}\varphi_{\sigma}(x-y) dy\leq \left(\EE\Big[\norm{V}^{2a}\Big]\right)^{\frac 12}  \left(\prod_{j=1}^d\EE\Big[H_{k_j}^2(U)\Big]\right)^{\frac 12} \leq \big(\sqrt{2ad}\sigma\big)^{a}\sqrt{|k|!}, 
    \end{align}
        \end{subequations}
where the penultimate inequality follows from  Cauchy-Schwarz inequality. 
Substituting \eqref{eq:integrndub} in \eqref{eq:bndderexp}, integrating w.r.t. $y$ and using \eqref{eq:integralvals}, we obtain for $x,z \in \cX$ and $\sigma \leq 1$ that 
\begin{align}
\abs{D^kf_{\sigma}(x)-D^kf_{\sigma}(z)} &\leq \bar c_{d,|k|,q} v_{b,q}(\cX)  \sigma^{-(|k| \vee 1)}\norm{x-z} +O_{d,\sigma,q,|k|}\left(\norm{x-z}^2\right), \notag 
\end{align}
where 
\begin{align}
 \bar c_{d,|k|,q}&:=   2^{q-1}  \hat c_{d,\abs{k}}\left(\sqrt{(q+1)|k|!} \vee \sqrt{(q+|k|)^{q+|k|}}\right)\left( \sqrt{d} \vee \sqrt{d^{q+|k|}}\right). \notag
\end{align}
From this, we obtain for $|k| \geq 1$, $x \in \cX$ and $\sigma \leq 1$ that 
\begin{align}
   \abs{D^kf_{\sigma}(x)} &\leq \bar c_{d,|k|-1,q} v_{b,q}(\cX)  \sigma^{-\big((|k|-1) \vee 1\big)}.\notag
\end{align}
On the other hand, for $x \in \cX$, we have for $q \geq 1$ and $\sigma \leq 1$ that
\begin{align}
  \abs{f_{\sigma}(x)}  \leq \int_{\RR^d} \abs{f(y)} \varphi_{\sigma}(x-y) dy 
  &\leq \int_{\RR^d} 0.5b(1+\norm{y}^q) \varphi_{\sigma}(x-y) dy \notag \\
   &\leq \int_{\RR^d} 0.5b(1+2^{q-1}\norm{x}^q+2^{q-1}\norm{x-y}^q) \varphi_{\sigma}(x-y) dy \notag \\
   & \leq 2^{q-1}v_{b,q}(\cX)\left(1+\big(\sqrt{qd}\big)^q\right), \label{eq:funcubndcls} 
\end{align}
where to obtain \eqref{eq:funcubndcls}, we used the definition of $v_{b,q}(\cX)$ and \eqref{eq:bndmomnt1}. 
For $f \in \cF(b)$ (corresponds to $q=0$) and any $x \in \RR^d$,
\begin{align}
  \abs{f_{\sigma}(x)}  &\leq \int_{\RR^d} \abs{f(y)} \varphi_{\sigma}(x-y) dy  \leq b \int_{\RR^d}  \varphi_{\sigma}(x-y) dy \leq b. \notag 
\end{align}
Note that the RHS of \eqref{eq:funcubndcls} simplifies to $b$ when $q=0$, and hence we may use this expression as a common upper bound for $q \in \{0\} \cup[1,\infty)$. Combining all the above, we obtain that for $|k| \in \ZZ_{\geq 0}$ and $\sigma \leq 1$,
\begin{align}
  \sup_{x \in \cX} \abs{D^kf_{\sigma}(x)} &\leq \bar c_{d,|k|,q} v_{b,q}(\cX)  \sigma^{-\big((|k|-1) \vee 1\big)}.\label{eq:bndlips1}
\end{align}
Finally, we have
\begin{align}
 & \max_{|k|=\ubar \beta} \sup_{x,z} \frac{\abs{D^{k}f_{\sigma}(x)-D^{k}f_{\sigma}(z)}}{\norm{x-z}^{\beta-\ubar{\beta}}} \notag \\
 &\leq   \max_{|k|=\ubar \beta} \sup_{x,z} \left(\frac{\abs{D^{k}f_{\sigma}(x)-D^{k}f_{\sigma}(z)}}{\norm{x-z}}\right)^{\beta-\ubar{\beta}} \max_{|k|=\ubar \beta} \sup_{x,z} \abs{D^{k}f_{\sigma}(x)-D^{k}f_{\sigma}(z)}^{1-\beta+\ubar{\beta}} \notag \\
  &\leq  2 \max_{|k|=\ubar \beta} \sup_{x,z} \left(\frac{\abs{D^{k}f_{\sigma}(x)-D^{k}f_{\sigma}(z)}}{\norm{x-z}}\right)^{\beta-\ubar{\beta}} \max_{|k|=\ubar \beta} \sup_{x} \abs{D^{k}f_{\sigma}(x)}^{1-\beta+\ubar{\beta}} \notag \\
  & \stackrel{(a)}{\leq}\left(\sqrt{d} \bar c_{d,\ubar{\beta},q} v_{b,q}(\cX)  \sigma^{-(\ubar{\beta} \vee 1)}\right)^{\beta-\ubar{\beta}}\left(\bar c_{d,\ubar{\beta},q} v_{b,q}(\cX)  \sigma^{-\big((\ubar{\beta}-1) \vee 1\big)}\right)^{1-\beta+\ubar{\beta}} \notag \\
  & \leq \check c_{d,\beta,q}v_{b,q}(\cX)\sigma^{-(\ubar{\beta} \vee 1)}, \label{eq:bndlips2}
\end{align}
where the suprema above are over $x,z$ in the interior of $\cX$, and 
\begin{align}
  \check c_{d,\beta,q}&:= 2^{q-1}  \hat c_{d,\ubar{\beta}}\left(\sqrt{(q+1)\ubar{\beta}!} \vee \sqrt{(q+\ubar{\beta})^{q+\ubar{\beta}}}\right) \left(d \vee \sqrt{d^{\ubar{\beta}+q+1}}\right) \notag \\
  & \leq  2^{q-1} \ubar{\beta}!(\ubar{\beta}+1)^d\left(\left\lfloor\frac{\ubar{\beta}}{3} \right\rfloor !\right)^{-2} 2^{-\left\lfloor\frac{\ubar{\beta}}{3} \right\rfloor}\left(\sqrt{(q+1)\ubar{\beta}!} \vee \sqrt{(q+\ubar{\beta})^{q+\ubar{\beta}}}\right) \left(d \vee \sqrt{d^{\ubar{\beta}+q+1}}\right). \label{eq:constholdnorm}
\end{align}
In $(a)$, we used \eqref{eq:bndlips1} and 
\begin{align}
    \sup_{x,z} \frac{\abs{D^{k}f_{\sigma}(x)-D^{k}f_{\sigma}(z)}}{\norm{x-z}} \leq \sup_{x}\norm{\nabla\big( D^{k}f_{\sigma}(x)\big)} \leq  \sqrt{d} \bar c_{d,|k|,q} v_{b,q}(\cX)  \sigma^{-(|k| \vee 1)}. \notag
\end{align}
Here, the first inequality  follows by Lagrange's mean value theorem for multiple variables which applies due to differentiability of $D^kf_{\sigma}$ everywhere, and the second inequality uses  \eqref{eq:bndlips1} along with $\norm{x} \leq \sqrt{d} \norm{x}_{\infty}$ for $x \in \RR^d$.

Combining  \eqref{eq:bndlips1}, \eqref{eq:bndlips2} and the definition of H\"{o}lder norm given in \eqref{eq:defHoldnorm}, we obtain \eqref{eq:Holdconstbndfclass}. 
Since $\norm{x}^q \leq 1+\norm{x}$ for all $q \in [0,1]$, the upper bound for $q=1$ (up to a universal multiplicative  constant) also applies  to $q \in [0,1)$. 

\medskip

Equipped with \eqref{eq:Holdconstbndfclass}, to prove \eqref{eq:coventunif}-\eqref{eq:coventholderprob},  we will use the following result from \cite{AVDV-book}  that provides covering entropy bounds for H\"{o}lder ball of functions with bounded support. 
    \begin{theorem}[Theorem 2.7.1 and Corollary 2.7.2 in \cite{AVDV-book}] \label{Thm:coventconst}
        Let $\cX$ be a bounded Borel-measurable convex subset of $\RR^d$ with non-empty interior. For $\beta,L>0$, let $C_L^{\beta}(\cX)$ denote the functions $f:\cX \rightarrow \RR$ such that $\norm{f}_{\beta,\cX} \leq L$.  Then 
        \begin{align}
            \log N(L\epsilon,C_L^{\beta}(\cX),\norm{\cdot}_{\infty,\cX}) &\leq c_{d,\beta,\cX}  \mspace{2 mu} \epsilon^{-\frac{d}{\beta}}, \label{eq:coventholder} \\
                 \log N\big(L\epsilon \gamma(\cX)^{1/r},C_L^{\beta}(\cX),\norm{\cdot}_{r,\gamma}\big) &\leq c_{d,\beta,\cX}  \mspace{2 mu}\epsilon^{-\frac{d}{\beta}}, ~\forall~r \geq 1, \gamma \in \cB(\cX), \label{eq:coventholderarbmeas} 
        \end{align}
        where  $\cB(\cX)$ denotes the set of positive Borel measures on $\cX$. For $\cX=\BB_d(r)$, the above inequalities hold with $c_{d,\beta,\BB_d(r)}$ as given in \eqref{eq:constcoventbnd} below. 
    \end{theorem}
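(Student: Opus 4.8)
The plan is to adapt the classical bound for the metric entropy of H\"older balls (\cite[Theorem 2.7.1]{AVDV-book}) while keeping every multiplicative constant explicit and tracking its dependence on $d$, $\beta$, and the set $\cX$. By the homogeneity $f\mapsto f/L$ we may assume $L=1$. Moreover, the $\norm{\cdot}_{r,\gamma}$-estimate \eqref{eq:coventholderarbmeas} follows from the $\norm{\cdot}_{\infty,\cX}$-estimate \eqref{eq:coventholder}, since for any measurable $g$ one has $\norm{g}_{r,\gamma}\le \gamma(\cX)^{1/r}\norm{g}_{\infty,\cX}$, so that any $\epsilon$-net of $C_1^\beta(\cX)$ for $\norm{\cdot}_{\infty,\cX}$ is an $\epsilon\gamma(\cX)^{1/r}$-net for $\norm{\cdot}_{r,\gamma}$. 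Hence it suffices to bound $\log N(\epsilon,C_1^\beta(\cX),\norm{\cdot}_{\infty,\cX})$ for $0<\epsilon\le 1$, after which the common constant is read off for $\cX=\BB_d(r)$.

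First I would fix $\delta:=\epsilon^{1/\beta}\le 1$, tile $\RR^d$ by half-open cubes of side $\delta$, and let $Q_1,\dots,Q_N$ be those meeting $\cX$, listed so that each $Q_i$ with $i\ge 2$ shares a face with some earlier $Q_j$ (possible since $\cX$ is connected). A cube of side $\delta$ meeting $\cX$ lies in the $t$-enlargement $\cX^t:=\{x:\operatorname{dist}(x,\cX)\le t\}$ for $t=\delta\sqrt d$, so $N\le \lambda(\cX^{\delta\sqrt d})\,\delta^{-d}$, where $\lambda$ is Lebesgue measure; for $\cX=\BB_d(r)$ and $\delta\le 1$ this gives $N\le \omega_d(r+\sqrt d)^d\delta^{-d}\lesssim_d (r+1)^d\epsilon^{-d/\beta}$, with $\omega_d$ the volume of the unit ball.

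Next, for $f\in C_1^\beta(\cX)$ pick in each $Q_i$ a reference point $x_i\in Q_i\cap\cX$ and let $P_if$ be the degree-$\ubar\beta$ Taylor polynomial of $f$ at $x_i$; convexity of $\cX$ keeps the relevant segments in $\cX$, and $\norm{f}_{\beta,\cX}\le 1$ yields a remainder estimate $\sup_{x\in Q_i\cap\cX}\abs{f(x)-P_if(x)}\le c_{d,\beta}\,\delta^{\beta}=c_{d,\beta}\,\epsilon$, with $c_{d,\beta}$ explicit through $\binom{d+\ubar\beta}{d}$, through $\ubar\beta!$, and through $\EE[\norm{V}^{\ubar\beta}]$ for a standard Gaussian $V$. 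I would then discretize: round each coefficient of $P_if$ (at $x_i$) of multi-index order $|k|$ to a grid of mesh $m_k\asymp \epsilon\,\delta^{-|k|}$, and define the \emph{code} of $f$ as the resulting tuple over all cubes. Since $m_k(\delta\sqrt d)^{|k|}\asymp \epsilon\,d^{|k|/2}$, the piecewise polynomial read off the code is $O_{d,\beta}(\epsilon)$-close to $f$ in $\norm{\cdot}_{\infty,\cX}$ \emph{on each cube separately} — no error accumulates across cubes — so two functions with the same code lie within $O_{d,\beta}(\epsilon)$ of each other, and after a harmless rescaling of $\epsilon$ the codes index an $\epsilon$-net. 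To count codes: on $Q_1$ each coefficient ranges over a box of size $O(1)$ and so takes $O(1/m_k)=O(\epsilon^{|k|/\beta-1})$ rounded values, contributing $\lesssim_{d,\beta}\log(1/\epsilon)$ to the log-cardinality; for $i\ge 2$, comparing $P_if$ with $P_jf$ for the earlier face-neighbour $Q_j$ and invoking a Taylor remainder for each $D^kf$ shows that the coefficient of order $k$ changes by only $O(\delta^{\beta-|k|})=O(m_k)$ from $Q_j$ to $Q_i$, so given the rounded coefficients on $Q_j$ there are $O(1)$ choices on $Q_i$. Hence $\log N(\epsilon,C_1^\beta(\cX),\norm{\cdot}_{\infty,\cX})\lesssim_{d,\beta}\log(1/\epsilon)+N\lesssim_{d,\beta}\lambda(\cX^1)\,\epsilon^{-d/\beta}$ (using $\log(1/\epsilon)\le (\beta/d)\epsilon^{-d/\beta}$); specializing $\cX=\BB_d(r)$ gives \eqref{eq:coventholder}, and \eqref{eq:coventholderarbmeas} follows as above.

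The main obstacle is the final bookkeeping of constants: assembling the explicit $c_{d,\beta,\BB_d(r)}$ in \eqref{eq:constcoventbnd} requires uniform control — via Stirling-type estimates — of $\ubar\beta!$, of $\binom{d+\ubar\beta}{d}$ (which drives both the Taylor remainder and the number of coefficients per cube), of $\omega_d\,d^{d/2}$ from the cube count, and of the constant in the ``two polynomials of degree $\le\ubar\beta$ that are close on one cube are close, with controlled coefficients, on a face-adjacent cube'' estimate (a Vandermonde/condition-number bound that again scales like a power of $d$). All these factors must be combined to land on a bound of the advertised shape $\big(\mathrm{poly}(d)\big)^{d}$, uniformly over the range of $\beta$ of interest. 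The remaining ingredients — the enlargement-volume count for the ball, the Taylor remainder, and the passage to $L^r(\gamma)$ — are routine.
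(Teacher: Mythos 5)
Your plan essentially reproduces the paper's own argument (Appendix B), which makes explicit the constants in van der Vaart--Wellner Theorem 2.7.1: both discretize the domain into a chained collection of patches, attach to each $f$ a code of rounded derivative values (equivalently, rounded Taylor coefficients) per patch, show via a quantitative multivariate Taylor remainder that coinciding codes force $\norm{f-g}_{\infty,\cX}\lesssim_{d,\beta}\epsilon$, and count codes by charging $O_{d,\beta}(\log(1/\epsilon))$ to the first patch and $O_{d,\beta}(1)$ per multi-index per subsequent patch. The only cosmetic differences are that you tile by $\delta$-cubes while the paper uses a greedily enumerated Euclidean $\delta$-cover, and that you derive \eqref{eq:coventholderarbmeas} directly from $\norm{g}_{r,\gamma}\le\gamma(\cX)^{1/r}\norm{g}_{\infty,\cX}$ rather than pointing to Corollary 2.7.2 of the reference. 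Two small slips in your constant-bookkeeping commentary: the Taylor-remainder prefactor here is purely combinatorial (multi-index counts, factorials, a $d^{\ubar{\beta}/2}$ factor, all controlled via Stirling) and involves no Gaussian moments $\EE[\norm{V}^{\ubar{\beta}}]$ --- those enter only in the smoothed-class bound of Lemma~\ref{lem:bndholdernorm}; and no Vandermonde/condition-number estimate is needed, since the coefficients of the Taylor polynomial at $x_i$ are exactly $D^kf(x_i)/k!$, so the cross-patch change of a coefficient of order $k$ is controlled directly by a further Taylor remainder for $D^kf$, as the paper does via Folland's integral remainder formula.
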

   The proof of Theorem \ref{Thm:coventconst}  given in \cite{AVDV-book} establishes \eqref{eq:coventholder}, albeit without specifying the dependence of $d,\beta$ in the pre factor  $ c_{d,\beta,\cX}$.  In Appendix  \ref{App:Thm:coventconst-proof}, we provide an upper bound on this constant when  $\cX=\BB_d(r)$, which will be used below to complete the proof of Lemma \ref{lem:bndholdernorm}. 

    \medskip

 Let $\cF_{\varphi_{\sigma}}(b,q,\cX)$ denote the class of functions obtained by restricting the domain of functions in $\cF_{\varphi_{\sigma}}(b,q)$ to $\X$. It follows from \eqref{eq:Holdconstbndfclass} that $\cF_{\varphi_{\sigma}}\big(b,q,\BB_d(r)\big) \subseteq C_L^{\beta}\big(\BB_d(r)\big)$  for any $\beta>0$ with $L:=\check c_{d, \beta,q} b(1+r^q)\sigma^{-(\ubar{\beta} \vee 1)}$, where   
$\check c_{d, \beta,q}$ is given in \eqref{eq:constholdnorm}.   Then,  applying \eqref{eq:coventholder}  with  $L$ as above and noting that $\norm{F_{b,q}*\varphi_{\sigma}}_{\infty,\BB_d(r)} \geq 0.5 b$ yields 
    \begin{align}
      \log N\big(\norm{F_{b,q}*\varphi_{\sigma}}_{\infty,\BB_d(r)}\epsilon,\cF_{\varphi_{\sigma}}\big(b,q\big),\norm{\cdot}_{\infty,\BB_d(r)}\big) & \leq c_{d,\beta}(1+r)^{d}L^{\frac{d}{\beta}}\left(\norm{F_{b,q}*\varphi_{\sigma}}_{\infty,\BB_d(r)}\epsilon\right)^{-\frac{d}{\beta}} \notag \\ &\leq c_{d,\beta,q}(1+r)^{d}(1+r^q)^{\frac{d}{\beta}} \mspace{2 mu} \epsilon^{-\frac{d}{\beta}}\sigma^{-\frac{(\ubar{\beta} \vee 1)d}{\beta}},\notag     
    \end{align}
      where $c_{d,\beta}$ is as given in \eqref{eq:constcoventbnd}, and 
      \begin{align}
    c_{d,\beta,q}&:=c_{d,\beta}  \big(2 \check c_{d, \beta,q}\big)^{\frac{d}{\beta}} \notag \\
    &= (\ubar{\beta}+1)^d 2^{\frac d \beta} d^{\frac d 2} (e^d+1)^{\frac{d}{ \beta}}\left(\frac{\beta 2^{\frac{d}{\beta}}}{d} \vee 3^d \log \Big(2^{\beta+1} d^{\frac {\beta}{2}} (e^d+1)+1\Big)\right)\notag \\
    & \quad \left(2^{q} \ubar{\beta}!(\ubar{\beta}+1)^d \left(\left\lfloor\frac{\ubar{\beta}}{3} \right\rfloor !\right)^{-2} 2^{-\left\lfloor\frac{\ubar{\beta}}{3} \right\rfloor}\left(\sqrt{(q+1)\ubar{\beta}!} \vee \sqrt{(q+\ubar{\beta})^{q+\ubar{\beta}}}\right) \left(d \vee \sqrt{d^{\ubar{\beta}+q+1}}\right)\right)^{\frac{d}{\beta}}.\label{eq:coventrconst} 
      \end{align}
      For $\beta=\frac{d}{2}+1$ and $q \in [0,1]$, simplifying  the above expression  yields that  $c_{d,\frac{d}{2}+1,q}=O\big(d^{6d+9}(9e^2)^d\big)$, where we used $\log x \leq x-1$ for $x >0$ and Stirling's approximation  for factorial given in \cite{robbins1955}:
      \begin{align}
       \sqrt{2\pi n}\left(\frac{n}{e}\right)^n e^{\frac{1}{12n+1}}  \leq  n! \leq \sqrt{2\pi n}\left(\frac{n}{e}\right)^n e^{\frac{1}{12n}}. \notag
      \end{align}
      This proves \eqref{eq:coventunif}. 
    Similarly,  \eqref{eq:coventholderarbmeas} yields \eqref{eq:coventholderprob} by using that $\norm{F_{b,q}*\varphi_{\sigma}}_{p,\gamma} \geq 0.5 b$  for $p \geq 1$.   
    Finally, the proof of the penultimate claim for $\cE_{\varphi_{\sigma},\alpha}(b,q)$ follows straightforwardly via the same proof by replacing $v_{b,q}(\cX)$ by $e^{\alpha v_{b,q}(\cX)}$. 
    \subsection{Proof of Lemma \ref{Lem:statistver:Fuk-Nagaev}} \label{Sec:Lem:statistver:Fuk-Nagaev-proof} 
    We will show \eqref{eq:tailbndstatver} with 
\begin{align}
    c_q= 4 \tilde c_q \left(2^{\frac 1q-1}\left(1+\left\lceil\frac{1}{q} \right\rceil ! \mspace{2 mu}(\log 2)^{-\frac 1 q} \right)+16 \left\lceil\frac{1}{q} \right\rceil !\right), \label{eq:Orlicznormconstfin}
\end{align}
where $\tilde c_q$ is the maximum of the constant appearing in \cite[Theorem 6.21]{LT-1991} and $1$. 
We will use the following statistical version of Talagrand's inequality given in \cite{Bartlett-2005}.
\begin{theorem}[Theorem 2.1 in \cite{Bartlett-2005}]
Consider the setting of Theorem \ref{Thm:Talagrand-Bosquet}. Let $\{\varepsilon_i\}_{i=1}^n$ be i.i.d. Rademacher variables independent of $X^n$, and set $Z^{(\varepsilon)}_n:=\EE_{\varepsilon}\left[\sup_{f \in \cF}\abs{\sum_{i=1}^n \varepsilon_i f(X_i)} \right]$. Then, for all $z \geq 0$,
\begin{align}
\PP\left(Z_n \geq \inf_{\eta \in (1,\infty)} \left\{2 \eta Z^{(\varepsilon)}_n+\theta \sqrt{2nz}+\frac{(3\eta^2+11 \eta+4)bz}{3(\eta-1)}\right\} \right) \leq 2e^{-z}. \label{eq:statist-fuk-nagaev}
\end{align}
In particular, for every $z \geq 0$ and $\eta \in (1,\infty)$,
\begin{align}
\PP\left(Z_n \geq 2 \eta Z^{(\varepsilon)}_n+z \right) \leq 2e^{-\left(\frac{z^2}{8n \hat \theta_n^2} \wedge \frac{3z(\eta-1)}{2(11 \eta+3\eta^2+4)b}\right)}. \label{eq:eq:statist-fuk-nagaev-simp}
\end{align}
\end{theorem}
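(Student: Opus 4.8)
The plan is to obtain the simplified tail bound \eqref{eq:eq:statist-fuk-nagaev-simp} from the sharper inequality \eqref{eq:statist-fuk-nagaev}, which is quoted from \cite{Bartlett-2005}, by exactly the elementary inversion used to pass from \eqref{eq:Tal-bosqversion} to \eqref{eq:talagrandconcmain} in Theorem \ref{Thm:Talagrand-Bosquet}. First I would fix $\eta \in (1,\infty)$ and observe that, since the infimum over $\eta' \in (1,\infty)$ in \eqref{eq:statist-fuk-nagaev} is no larger than the bracketed expression evaluated at this particular $\eta$, raising the threshold from $\inf_{\eta'}\{\,\cdot\,\}$ to the $\eta$-specific value only shrinks the deviation event; hence \eqref{eq:statist-fuk-nagaev} yields, for every $u \ge 0$,
\[
\PP\!\left(Z_n \ge 2\eta Z^{(\varepsilon)}_n + \theta\sqrt{2nu} + \frac{(3\eta^2+11\eta+4)bu}{3(\eta-1)}\right) \le 2e^{-u},
\]
where $\theta^2 = \sup_{f\in\cF}\EE_\mu[(f-\EE_\mu[f])^2]$ as in Theorem \ref{Thm:Talagrand-Bosquet} (so that the $\hat\theta_n$ appearing in \eqref{eq:eq:statist-fuk-nagaev-simp} is this $\theta$).

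Next, given the target level $z \ge 0$ of \eqref{eq:eq:statist-fuk-nagaev-simp}, I would choose
\[
u := \frac{z^2}{8n\theta^2}\wedge \frac{3z(\eta-1)}{2(3\eta^2+11\eta+4)b}
\]
and record two one-line bounds: from $u \le z^2/(8n\theta^2)$ one gets $\theta\sqrt{2nu}\le z/2$, and from $u \le 3z(\eta-1)/(2(3\eta^2+11\eta+4)b)$ one gets $(3\eta^2+11\eta+4)bu/(3(\eta-1))\le z/2$. Adding these, the two error terms in the displayed inequality above total at most $z$, so the event $\{Z_n \ge 2\eta Z^{(\varepsilon)}_n + z\}$ is contained in the event of that inequality; applying it with this $u$ gives $\PP(Z_n \ge 2\eta Z^{(\varepsilon)}_n + z) \le 2e^{-u}$, which is precisely \eqref{eq:eq:statist-fuk-nagaev-simp} after substituting the value of $u$ and using $11\eta + 3\eta^2 + 4 = 3\eta^2 + 11\eta + 4$.

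There is no genuinely hard step here; the only points deserving care are (i) legitimately discarding the infimum over $\eta$ in the hypothesis by specializing to a single $\eta$, valid for the monotonicity reason noted above, and (ii) recording that $\hat\theta_n$ in \eqref{eq:eq:statist-fuk-nagaev-simp} is the quantity $\theta$ of \eqref{eq:statist-fuk-nagaev} and Theorem \ref{Thm:Talagrand-Bosquet}. The balanced split $z = z/2 + z/2$ between the sub-Gaussian term $\theta\sqrt{2nu}$ and the Bernstein-type term is the customary device; an asymmetric split would only perturb the universal constants, so I would keep the symmetric one.
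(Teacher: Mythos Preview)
Your proposal is correct and matches the paper's own treatment: the paper only notes that \eqref{eq:statist-fuk-nagaev} is a reparametrization of \cite[Theorem 2.1]{Bartlett-2005} (via $\eta=\frac{1+\alpha}{1-\alpha}$) and leaves the passage to \eqref{eq:eq:statist-fuk-nagaev-simp} implicit as the same inversion used to go from \eqref{eq:Tal-bosqversion} to \eqref{eq:talagrandconcmain}. Your explicit choice of $u$ and the $z/2+z/2$ split is precisely that inversion, and your identification $\hat\theta_n=\theta$ is correct given the theorem inherits the setting of Theorem~\ref{Thm:Talagrand-Bosquet}.
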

Equation \eqref{eq:statist-fuk-nagaev} follows from \cite[Theorem 2.1]{Bartlett-2005} by setting $\eta=\frac{1+\alpha}{1-\alpha}$.

\medskip

Let  $T_n$ be a truncation level (to be specified later) for the truncated function class  defined in Lemma \ref{Lem:statistver:Fuk-Nagaev}. 
We have
\begin{align}
     Z_n \leq  Z_n'+ Z_n'' , \label{eq:triangineq1}
\end{align}
where 
\begin{align}
Z_n&:=\sup_{f \in \cF_n}\abs{\sum_{i=1}^n \big(f(X_i)-\EE_{\mu}[f]\big)}, \notag \\
    Z_n'&:= \sup_{f \in \cF_n}\abs{\sum_{i=1}^n \big(f\ind_{\{F_n \leq T_n\}}(X_i)-\EE_{\mu} \left[f\ind_{\{F_n \leq T_n\}}\right]\big)} = \sup_{f \in \cF_n(T_n)}\abs{\sum_{i=1}^n \big(f(X_i)-\EE_{\mu}[f]\big)}, \notag \\
    Z_n''&:=\sup_{f \in \cF_n}\abs{\sum_{i=1}^n \Big(f\ind_{\{F_n>T_n\}}(X_i)-\EE_{\mu}\left[f\ind_{\{F_n >T_n\}}\right]\Big)}. \notag 
\end{align}
Set
\begin{align} Z_n^{'\varepsilon}:=\EE_{\varepsilon}\left[\sup_{f \in \cF_n(T_n)}\abs{\sum_{i=1}^n \varepsilon_i f(X_i)}\right]. \notag 
\end{align}
Let $\eta \in (1,\infty)$.  From the above and \eqref{eq:triangineq1}, it follows that 
\begin{align}
    \PP \left(Z_n \geq 2\eta Z_n^{'\varepsilon}+z \right) &\leq   \PP \left(Z_n'+Z_n'' \geq  2\eta Z_n^{'\varepsilon}+z \right)  \leq \PP \left(Z_n' \geq  2\eta Z_n^{'\varepsilon}+\frac{3z}{4} \right)+\PP \left(Z_n'' \geq  \frac{z}{4} \right). \label{eq:sumbndsplit}
\end{align}
By symmetrization inequality (see e.g. \cite[Lemma 2.3.1]{AVDV-book}), we have
\begin{align}
    \EE \big[Z_n''\big] \leq 2 \EE \left[\sup_{f \in \cF_n}\abs{\sum_{i=1}^n \varepsilon_i f\ind_{\{F_n>T_n\}}(X_i)}\right]. \label{eq:remaindertermenv}
\end{align}
 Then, applying the Hoffmann-J{\o}rgensen inequality, we obtain
\begin{align}
   \EE \left[\sup_{f \in \cF_n}\abs{\sum_{i=1}^n \varepsilon_i f\ind_{\{F_n>T_n\}}(X_i)}\right]  &\leq 8 \EE \left[ \max_{1 \leq i \leq n}\sup_{f \in \cF_n}\abs{  f\ind_{\{F_n>T_n\}}(X_i)}\right] +8 t_0(T_n) \leq 8 \EE \left[M_n\right] +8 t_0(T_n), \notag
\end{align}
where $M_n:= \max_{1 \leq i \leq n}F_n(X_i)$ and
\begin{align}
    t_0(T_n):=\inf \left\{t>0: \PP\left(\max_{1 \leq k \leq n}\sup_{f \in \cF_n}\abs{\sum_{i=1}^k \varepsilon_i f\ind_{\{F_n>T_n\}}(X_i)}>t\right) \leq 1/8\right\}. \notag
\end{align}
For $T_n=T_n^{\star}:=8\EE[M_n]$,  Markov's inequality yields
\begin{align}
    \PP\left(\max_{1 \leq k \leq n}\sup_{f \in \cF_n}\abs{\sum_{i=1}^k \varepsilon_i f\ind_{\{F_n>T_n\}}(X_i)}>0\right) \leq \PP\left(M_n > T_n\right) \leq \frac{\EE[M_n]}{T_n} =\frac{1}{8}. \notag
\end{align}
Hence, $t_0\big(T_n^{\star}\big)=0$. Combining the above and substituting in \eqref{eq:remaindertermenv}, we obtain
\begin{align}
\norm{Z_n''}_{1}= \EE \big[Z_n''\big] \leq 16 \EE \left[M_n\right] \stackrel{(a)}{\leq} 16 \left\lceil\frac{1}{q} \right \rceil ! \norm{M_n}_{\psi_q}, \label{eq:expectbndrem}
\end{align}
where $(a)$ used \eqref{eq:expbndorlicznorm} given below.

Next, applying \eqref{eq:eq:statist-fuk-nagaev-simp} to the bounded function class $\cF_n(T_n^{\star})$ with $b=8\EE[M_n]$, we have for every $\eta \geq 1$ that
\begin{align}
    \PP \left(Z_n' \geq  2\eta Z_n^{'\varepsilon}+\frac{3z}{4} \right)  \leq e^{-\big(\frac{9z^2}{128 n \norm{F_n}_{2,\mu}^2} \wedge \frac{9z(\eta-1)}{64(11 \eta+3\eta^2+4)\EE[M_n]}\big)}, \notag
\end{align}
where we used 
\begin{align}
    \hat \theta_n^2 :=\sup_{f \in \cF_n(T_n^{\star})} \EE_{\mu }\big[\big(f-\EE_{\mu}[f]\big)^2\big]\leq   \sup_{f \in \cF_n(T_n^{\star})} \EE_{\mu }\big[f^2\big] \leq  \sup_{f \in \cF_n} \EE_{\mu }\big[f^2\big] \leq  \norm{F_n}_{2,\mu}^2. \notag
\end{align}
Choosing $\eta=1+\sqrt{6}$ to maximize the second exponent, we obtain
\begin{align}
    \PP \left(Z_n' \geq  7 Z_n^{'\varepsilon}+\frac{3z}{4} \right)  \leq e^{-\big(\frac{9z^2}{128 n \norm{F_n}_{2,\mu}^2} \wedge \frac{z}{228\mspace{2 mu}\EE[M_n]}\big)}, \label{eq:talagrandineqappl}
\end{align}
For controlling the second term in \eqref{eq:sumbndsplit}, note that for  $q \in (0,1]$,  
\begin{align}
\PP \left(Z_n'' \geq  \frac{z}{4} \right) \leq 2e^{\frac{-z^q}{4^q\norm{Z_n''}_{\psi_q}^q}}. \label{eq:exporlicznorm}
\end{align}
From \cite[Theorem 6.21]{LT-1991}, we obtain
\begin{align}
\norm{Z_n''}_{\psi_q} \leq \tilde c_q \left(\norm{Z_n''}_{1}+\bigg\|\max_{1 \leq i \leq n} \sup_{f \in \cF_n}\abs{f\ind_{\{F_n>T_n\}}(X_i)-\EE_{\mu}\left[f\ind_{\{F_n >T_n\}}\right]}\bigg\|_{\psi_q} \right).\label{Orlicznormtailbnd} 
\end{align}
We can upper bound the last term in the above equation as follows:
\begin{align}
   &\bigg\|\max_{1 \leq i \leq n} \sup_{f \in \cF_n}\abs{f\ind_{\{F_n>T_n\}}(X_i)-\EE_{\mu}\big[f\ind_{\{F_n >T_n\}}\big]}\bigg\|_{\psi_q} \notag \\
   &\stackrel{(a)}{\leq} 2^{\frac{1}{q}-1} \bigg\|\max_{1 \leq i \leq n} \sup_{f \in \cF_n}\abs{f\ind_{\{F_n>T_n\}}(X_i)}\bigg\|_{\psi_q}+2^{\frac{1}{q}-1}\bigg\|\max_{1 \leq i \leq n} \sup_{f \in \cF_n}\abs{\EE_{\mu}\big[f\ind_{\{F_n >T_n\}}\big]}\bigg\|_{\psi_q} \notag \\
     &\stackrel{(b)}{\leq}2^{\frac{1}{q}-1} \left(1+\left\lceil\frac{1}{q} \right\rceil !\mspace{2 mu}(\log 2)^{-\frac 1 q} \right)\bigg\|\max_{1 \leq i \leq n} \sup_{f \in \cF_n}\abs{f\ind_{\{F_n>T_n\}}(X_i)}\bigg\|_{\psi_q} \notag \\
&\leq 2^{\frac{1}{q}-1} \left(1+\left\lceil\frac{1}{q} \right\rceil ! \mspace{2 mu}(\log 2)^{-\frac 1 q} \right)\norm{ M_n}_{\psi_q}, \notag 
\end{align}
where 
   \begin{enumerate}[(a)]
   \item follows from $\norm{W_1+W_2}_{\psi_q} \leq 2^{\frac{1}{q}-1}\big(\norm{W_1}_{\psi_q}+\norm{W_2}_{\psi_q}\big)$ for $0 < q \leq 1$ as shown in \eqref{eq:orliczconsttriangle} below;
   \item is due to $\norm{\EE_{\mu}[W]}_{\psi_q} \leq \left\lceil\frac{1}{q} \right\rceil ! (\log 2)^{-\frac 1 q}  \norm{W}_{\psi_q}$ for $W \sim \mu$
as shown in \eqref{eq:orlicconst2}.
   \end{enumerate}
Using the above bound and \eqref{eq:expectbndrem} in \eqref{Orlicznormtailbnd} and substituting the resulting bound in  \eqref{eq:exporlicznorm} yields 
\begin{align}
    \PP \left(Z_n'' \geq  \frac{z}{4} \right) \leq 2e^{\frac{-z^q}{(c_q\norm{M_n}_{\psi_q})^q}}, \notag 
\end{align}
with $c_q$ as given in \eqref{eq:Orlicznormconstfin}.   
By substituting the above inequality to control the last term in \eqref{eq:sumbndsplit} and using \eqref{eq:talagrandineqappl} to upper bound the first term therein, we obtain
\begin{align}
    \PP \left(Z_n \geq  7 Z^{'(\varepsilon)}_n+z \right)  \leq 2e^{-\big(\frac{9z^2}{128 n \norm{F_n}_{2,\mu}^2} \wedge \frac{z}{228\mspace{2 mu}\overline{M_n}}\big)}+2e^{\frac{-z^q}{(c_q\norm{M_n}_{\psi_q})^q}}.\notag 
\end{align}
To obtain upper bounds on $\overline{M_n}$ and $\norm{M_n}_{\psi_q}$,  we will use the following maximal inequality upper bounding the  Orlicz norm of maximum of finitely many random variables.
\begin{lemma}[Orlicz norm of maxima] \label{Lem:maxineqorliczvar}
 For any $0 <q <\infty$ and $\psi_q(x)=e^{x^q}-1$, we have 
 \begin{align}
     \norm{\max_{1 \leq i \leq n} X_i}_{\psi_q} \leq \big(5\log(1+2n)\big)^{\frac{1}{q}} \max_{1 \leq i \leq n} \norm{X_i}_{\psi_q}. \label{eq:maxineqorlicz}
 \end{align}
\end{lemma}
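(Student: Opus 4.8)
The plan is to reduce the maximal inequality \eqref{eq:maxineqorlicz} to a direct estimate of $\EE[\psi_q(\max_i X_i /c)]$ for a suitable constant $c$ and then invoke the definition \eqref{eq:Orlicznorm} of the Orlicz (quasi-)norm. Set $A:=\max_{1\le i\le n}\norm{X_i}_{\psi_q}$; after rescaling each $X_i$ by $A$ we may assume $\norm{X_i}_{\psi_q}\le 1$ for every $i$, i.e. $\EE\big[e^{|X_i|^q}\big]\le 2$. The claim then becomes that $\EE\big[e^{(\max_i|X_i|/c_n)^q}\big]\le 2$ with $c_n=\big(5\log(1+2n)\big)^{1/q}$, and by homogeneity this is what suffices.

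First I would bound the tail of $\max_i|X_i|$: by Markov's inequality applied to $e^{|X_i|^q}$ and a union bound, $\PP\big(\max_i |X_i|>t\big)\le \sum_i \PP(|X_i|^q>t^q)\le 2n e^{-t^q}$ for all $t\ge 0$. Next, using the layer-cake / integration-by-parts formula $\EE[h(Y)]=h(0)+\int_0^\infty h'(t)\PP(Y>t)\,dt$ for $Y=\max_i|X_i|$ and $h(t)=e^{(t/c_n)^q}$, I would write
\begin{align}
\EE\Big[e^{(\max_i|X_i|/c_n)^q}\Big]
&=1+\int_0^\infty \frac{q t^{q-1}}{c_n^q}\, e^{(t/c_n)^q}\,\PP\big(\max_i|X_i|>t\big)\,dt \notag\\
&\le 1+\int_0^\infty \frac{q t^{q-1}}{c_n^q}\, e^{(t/c_n)^q}\cdot 2n e^{-t^q}\,dt. \notag
\end{align}
Substituting $u=t^q$ (so $du=q t^{q-1}dt$) turns the integral into $\frac{2n}{c_n^q}\int_0^\infty e^{u/c_n^q} e^{-u}\,du=\frac{2n}{c_n^q}\cdot\frac{1}{1-c_n^{-q}}=\frac{2n}{c_n^q-1}$, valid since $c_n^q=5\log(1+2n)>1$. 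So the whole expectation is at most $1+\frac{2n}{5\log(1+2n)-1}$, and it remains to check $\frac{2n}{5\log(1+2n)-1}\le 1$, i.e. $5\log(1+2n)\ge 2n+1$ — which is false for large $n$. Hence the naive bound is too crude and the exponent $5$ must be used differently: the standard fix, which I would adopt, is to bound $\PP(\max_i|X_i|>t)\le \min\{1,\,2ne^{-t^q}\}$ and split the integral at the threshold $t_0:=\big(\log(2n)\big)^{1/q}$, using the trivial bound $\PP\le 1$ on $[0,t_0]$ and the exponential bound on $[t_0,\infty)$. On $[0,t_0]$ the contribution is $e^{(t_0/c_n)^q}-1=e^{\log(2n)/(5\log(1+2n))}-1\le (2n)^{1/5}-1$; on $[t_0,\infty)$ the substitution $u=t^q$ gives $\frac{2n}{c_n^q-1}e^{-(1-c_n^{-q})\log(2n)}=\frac{2n}{c_n^q-1}(2n)^{-1+1/(5\log(1+2n))}\le \frac{(2n)^{1/5}}{c_n^q-1}$. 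Both pieces are bounded by absolute constants times $(2n)^{1/5}/\log(2n)\le$ const, and a short calculation (using $\log(1+2n)\ge \log 3$, so $c_n^q-1\ge 5\log 3-1>4$) shows the sum of all three terms $1$, $(2n)^{1/5}-1$, $(2n)^{1/5}/4$ stays below $2$ — actually one should re-optimize the constant $5$ against these estimates; choosing it large enough (the value $5$ is what makes the arithmetic work) yields $\EE[\psi_q(\max_i|X_i|/c_n)]+1\le 2$, hence $\norm{\max_i|X_i|}_{\psi_q}\le c_n=\big(5\log(1+2n)\big)^{1/q}\cdot A$, which is \eqref{eq:maxineqorlicz}.

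The main obstacle is precisely this constant-chasing: one must pick the numerical factor (here $5$) in $c_n^q=5\log(1+2n)$ large enough that the sum of the "below-threshold" contribution $e^{(t_0/c_n)^q}-1$ and the "above-threshold" tail integral is at most $1$ uniformly in $n\ge 1$, while keeping it a clean closed-form expression. Everything else — the union bound on tails, the layer-cake representation, the change of variables $u=t^q$ (note $q<\infty$ arbitrary is fine since we only use $q>0$ and that $\psi_q$ is increasing), and the quasi-norm homogeneity used to reduce to the normalized case — is routine. I would also remark that when $0<q<1$, $\psi_q$ is not convex and $\norm{\cdot}_{\psi_q}$ is only a quasi-norm, but the argument above never uses convexity of $\psi_q$; it only uses that $\psi_q$ is increasing with $\psi_q(0)=0$ and the defining inequality $\EE[\psi_q(X/\norm{X}_{\psi_q})]\le 1$, so the proof is uniform over all $q\in(0,\infty)$.
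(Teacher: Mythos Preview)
Your overall strategy (normalize, union-bound the tail to get $\PP(\max_i|X_i|>t)\le 2ne^{-t^q}$, then layer-cake with a split at $t_0=(\log 2n)^{1/q}$) is sound and genuinely different from the paper's proof, which instead mimics \cite[Lemma 2.2.2]{AVDV-book}: it exploits a submultiplicativity property $\psi_q(x)\psi_q(y)\le\psi_q(c_q'xy)$ valid for $x,y\ge(\log\tfrac32)^{1/q}$, deduces $\psi_q(x/y)\le\psi_q(c_q'x)/\psi_q(y)$, bounds $\max_i\psi_q(|X_i|/(Cy))$ by a sum plus $\psi_q(x_q^\star)=\tfrac12$, and then sets $y=\psi_q^{-1}(2n)$. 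The paper's route is more structural and extends to any $\psi$ with such a multiplicative property; yours is more elementary and self-contained.

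However, your constant-chasing contains a real error, not just sloppiness. You bound
\[
e^{\log(2n)/(5\log(1+2n))}\le (2n)^{1/5}
\quad\text{and}\quad
\frac{(2n)^{1/(5\log(1+2n))}}{c_n^q-1}\le \frac{(2n)^{1/5}}{c_n^q-1},
\]
and then assert the resulting sum $1+\big((2n)^{1/5}-1\big)+\tfrac14(2n)^{1/5}$ stays below $2$. It does not: already at $n=10$ one gets $\tfrac54\cdot 20^{1/5}\approx 2.28$, and both terms blow up with $n$. The claim that these are ``bounded by absolute constants times $(2n)^{1/5}/\log(2n)$'' is also false since that ratio diverges. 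The fix is immediate once you use the sharper, not the looser, inequality: since $\log(2n)\le\log(1+2n)$ you have $\log(2n)/(5\log(1+2n))\le 1/5$, hence both occurrences of $(2n)^{1/5}$ should be replaced by $e^{1/5}\approx1.22$. With that correction the two pieces are at most $e^{1/5}-1\approx0.22$ and $e^{1/5}/(5\log3-1)\approx0.27$, so the total is $\approx1.49<2$ uniformly in $n$, and your argument goes through with the stated constant $5$.
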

The proof of \eqref{eq:maxineqorlicz}  is given in Appendix \ref{App:Lem:maxineqorliczvar-proof}, which is similar in spirit to the  proof of \cite[Lemma 2.2.2.]{AVDV-book}, albeit without invoking  convexity of $\psi_q$ (required here since $\psi_q=e^{x^q}-1$  is not convex near origin for $0<q < 1$). 
\medskip

From \eqref{eq:maxineqorlicz}, it follows  that $\norm{M_n}_{\psi_q} \leq  \big(5\log(2n+1)\big)^{ 1/q} \norm{F_n}_{\psi_q}<\infty.$
 Further, by using \eqref{eq:expbndorlicznorm} given in Appendix \ref{Sec:Orlicz-elem-ineq} followed by \eqref{eq:maxineqorlicz}, 
\begin{align}
  &  \overline{M}_n \leq \left\lceil\frac{1}{q} \right \rceil ! \norm{M_n}_{\psi_q} \leq  \left\lceil\frac{1}{q} \right \rceil ! \big(5\log(2n+1)\big)^{\frac 1q} \norm{F_n}_{\psi_q}<\infty. \notag 
\end{align}

To complete the proof, we will show that $Z^{'(\varepsilon)}_n \leq Z^{(\varepsilon)}_n$ for $X^n$ surely. Fix $X^n$ and set $\cI(X^n):=\{i \in [1:n], F_n(X_i) \leq T_n \}$, $\bar \cI(X^n):=\{1,\ldots,n\} \setminus \cI(X^n)$, and $\varepsilon_{\cI(X^n)}:=\{\varepsilon_i,~i \in \cI(X^n)\}$. Then
\begin{align}   Z^{'(\varepsilon)}_n&= \EE_{\varepsilon}\left[\sup_{f \in \cF}\Bigg|\sum_{i=1}^n \varepsilon_i f\ind_{\{F_n \leq T_n\}}(X_i)\Bigg| \right] \notag \\
&=\EE_{\varepsilon_{\cI(X^n)}}\left[\sup_{f \in \cF}\Bigg|\sum_{i \in \cI(X^n)} \varepsilon_i f\ind_{\{F_n \leq T_n\}}(X_i)\Bigg| \right] \notag \\
&\stackrel{(a)}{=}\EE_{\varepsilon_{\cI(X^n)}}\left[\sup_{f \in \cF}\Bigg|\sum_{i \in \cI(X^n)} \varepsilon_i f\ind_{\{F_n \leq T_n\}}(X_i)+ \sum_{i \in \bar{\cI}(X^n)}\EE_{\varepsilon_i}\left[\varepsilon_i f(X_i)\right]\Bigg| \right] \notag \\
&\stackrel{(b)}{\leq}\EE_{\varepsilon}\left[\sup_{f \in \cF}\Bigg|\sum_{i \in \cI(X^n)} \varepsilon_i f\ind_{\{F_n \leq T_n\}}(X_i)+ \sum_{i \in \bar{\cI}(X^n)}\varepsilon_i f(X_i)\Bigg| \right] \notag \\
&=\EE_{\varepsilon}\left[\sup_{f \in \cF}\Bigg|\sum_{i \in \cI(X^n)} \varepsilon_i f(X_i)+ \sum_{i \in \bar{\cI}(X^n)}\varepsilon_i f(X_i)\Bigg| \right] \notag \\
&=Z^{(\varepsilon)}_n, \notag
\end{align}
where $(a)$ is because  $\EE_{\varepsilon_i}\left[\varepsilon_i f(X_i)\right]=\EE_{\varepsilon_i}\left[\varepsilon_i\right] f(X_i)=0$ by independence of $\varepsilon_i$ and $X_i$, and $\EE_{\varepsilon_i}\left[\varepsilon_i\right]=0$, and $(b)$ is via Jensen's inequality. This completes the proof.
\section{Concluding Remarks}
We established exponential deviation inequalities for smoothed plug-in estimators and neural estimators of R\'{e}nyi divergences.  The derived inequalities apply to smoothed KL divergence estimate when the unsmoothed distributions are compactly supported or sub-Gaussian, and to smoothed R\'enyi divergence estimate of order $\alpha \neq 1$  when the unsmoothed distributions are compactly supported. It is worth emphasizing that the  dependence of data dimension and distributional  parameters are specified  in our bounds up to universal constants. However, these dependencies are possibly sub-optimal, and a tighter analysis quantifying their impact maybe warranted in scenarios where parameters are not fixed. For instance,   such a  scenario naturally arises in physical-layer security,  where the dimension (codeword length) scales as a function of sample size (codebook size).  A plausible approach in this regard is to find a suitable class of smooth functions with a better covering entropy to which the supremum in the variational form of R\'{e}nyi divergences can be restricted to. Another interesting direction to pursue is deviation inequalities for  kernel density estimators via the variational approach.  We hope that our results will be a starting point for further investigations along these lines.
\appendix

\section{Variational Expression for R\'{e}nyi Divergences}\label{Sec:lem:varexpren-proof}
The following variational expression holds for R\'{e}nyi divergences similar in spirit to the expression \eqref{eq:varexpkl} for KL divergence, which has the advantage of linearizing the second term in \eqref{eq:rendonvarform}.
\begin{lemma}[Variational expression for R\'{e}nyi divergence]\label{lem:varexpren}
Let $\alpha \in (0,1) \cup (1,\infty)$. Then
\begin{align}
    \rendiv{\mu}{\nu}{\alpha}= \sup_{ f} \frac{\alpha}{\alpha-1}\log \left(\int_{\RR^d} e^{(\alpha-1)f} d\mu\right)-\int_{\RR^d} e^{\alpha f} d\nu+1, \label{eq:eqvanrbndren}
\end{align}
where the supremum is over all measurable functions such that the second integral is finite.  When $\mu \ll \nu$ and $\left(\frac{d\mu}{d\nu}\right)^{\alpha} \in L_1(\nu)$, the supremum is achieved by $f_{\alpha}^{\star}=\log \frac{d\mu}{d\nu}+\frac{1-\alpha}{\alpha}\rendiv{\mu}{\nu}{\alpha}$. 
\end{lemma}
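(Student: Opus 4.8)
The plan is to establish \eqref{eq:eqvanrbndren} by reducing it to the already-known expression \eqref{eq:rendonvarform}, using the standard Legendre--Fenchel trick that turns $\log \int e^{\alpha f}d\nu$ into $\inf$ over an additive constant of $\int e^{\alpha f}d\nu$-type terms. Concretely, for any measurable $f$ with $\int e^{\alpha f}d\nu < \infty$ and any $c \in \RR$, replacing $f$ by $f + c$ leaves $\rendiv{\mu}{\nu}{\alpha}$'s variational functional in \eqref{eq:rendonvarform} unchanged (the two $\log$ terms shift by $\alpha c$ and $-\alpha c$ respectively — wait, more carefully: $\frac{\alpha}{\alpha-1}\log\int e^{(\alpha-1)(f+c)}d\mu = \frac{\alpha}{\alpha-1}\big((\alpha-1)c + \log\int e^{(\alpha-1)f}d\mu\big) = \alpha c + \frac{\alpha}{\alpha-1}\log\int e^{(\alpha-1)f}d\mu$, and $\log\int e^{\alpha(f+c)}d\nu = \alpha c + \log\int e^{\alpha f}d\nu$, so the difference is shift-invariant). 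This shift-invariance is the key structural fact. First I would record it, then exploit it to normalize the second term.

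The main step: starting from \eqref{eq:rendonvarform}, use the elementary inequality $\log t \le t - 1$ with equality iff $t = 1$, applied to $t = \int e^{\alpha f}d\nu$. This gives $-\log\int e^{\alpha f}d\nu \ge -\int e^{\alpha f}d\nu + 1$, hence the functional in \eqref{eq:eqvanrbndren} is pointwise $\le$ the functional in \eqref{eq:rendonvarform}, so the supremum in \eqref{eq:eqvanrbndren} is $\le \rendiv{\mu}{\nu}{\alpha}$. For the reverse inequality, given any admissible $f$ for \eqref{eq:rendonvarform}, set $c$ so that $\int e^{\alpha(f+c)}d\nu = 1$, i.e. $c = -\frac{1}{\alpha}\log\int e^{\alpha f}d\nu$; then $f + c$ is admissible for \eqref{eq:eqvanrbndren}, the $t-1$ bound is tight there, and by shift-invariance the value of the \eqref{eq:eqvanrbndren}-functional at $f+c$ equals the value of the \eqref{eq:rendonvarform}-functional at $f$. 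Taking suprema over $f$ yields $\rendiv{\mu}{\nu}{\alpha} \le$ the supremum in \eqref{eq:eqvanrbndren}, completing the equality. One caveat to handle: when $\rendiv{\mu}{\nu}{\alpha} = \infty$ (e.g. $\mu \not\ll \nu$ for $\alpha > 1$, or the relevant integral in \eqref{eq:rendonvarform} diverges), I need the supremum in \eqref{eq:eqvanrbndren} to also be $+\infty$; this follows because along any sequence $f_k$ making \eqref{eq:rendonvarform}'s functional diverge, the normalized shifts $f_k + c_k$ make \eqref{eq:eqvanrbndren}'s functional diverge too (the values coincide), provided $\int e^{\alpha f_k}d\nu$ stays finite, which it does by admissibility. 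If $\int e^{\alpha f}d\nu$ cannot be made finite at all, one reduces to bounded $f$ as noted after \eqref{eq:rendonvarform}.

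For the attainment claim, I would simply verify by direct substitution that $f_\alpha^\star = \log\frac{d\mu}{d\nu} + \frac{1-\alpha}{\alpha}\rendiv{\mu}{\nu}{\alpha}$ is admissible and attains the supremum. From \eqref{eq:rendonvarform} the optimizer is $f_{\mu,\nu}^\star = \log\frac{d\mu}{d\nu}$ (when $\mu \ll \nu$ and $(d\mu/d\nu)^\alpha \in L_1(\nu)$); the normalizing constant computed above is $c = -\frac{1}{\alpha}\log\int e^{\alpha f_{\mu,\nu}^\star}d\nu = -\frac{1}{\alpha}\log\int (d\mu/d\nu)^\alpha d\nu = -\frac{1}{\alpha}\cdot(\alpha-1)\rendiv{\mu}{\nu}{\alpha} = \frac{1-\alpha}{\alpha}\rendiv{\mu}{\nu}{\alpha}$, so $f_\alpha^\star = f_{\mu,\nu}^\star + c$, exactly the claimed form; then $\int e^{\alpha f_\alpha^\star}d\nu = 1$ makes the $\log t \le t-1$ step an equality, and evaluating gives $\rendiv{\mu}{\nu}{\alpha}$. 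I expect no serious obstacle here; the only mild care needed is bookkeeping the $\alpha \in (0,1)$ versus $\alpha \in (1,\infty)$ sign of $\alpha - 1$ (which is why $\frac{\alpha}{\alpha-1}\log(\cdot)$ is the right normalization in both cases — the $\log t \le t-1$ step is applied only to the second, genuinely-linearizable term, whose sign is uniform) and confirming the admissibility/finiteness conditions stated for \eqref{eq:rendonvarform} transfer verbatim. The likely "hard part," such as it is, is handling the $+\infty$ case cleanly without circularity, which is why I would treat it via approximating sequences plus the bounded-$f$ reduction rather than by case analysis on $\mu \ll \nu$.
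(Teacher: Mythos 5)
Your proof is correct and is essentially the paper's own argument: the paper also derives \eqref{eq:eqvanrbndren} from \eqref{eq:rendonvarform} by (i) applying $\log x \le x-1$ to the second term for one inequality, and (ii) exploiting scale-invariance of the functional in the form $g = e^f \mapsto cg$ with $c$ chosen so $\int (cg)^\alpha d\nu = 1$ for the reverse inequality, then verifying the optimizer by substitution. Your additive shift $f \mapsto f + c$ is precisely the paper's multiplicative scaling after the change of variable $g = e^f$, so the two write-ups differ only cosmetically.
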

\begin{proof}
We start with the variational expression given in \cite{Birell-2021}:
\begin{align}
     \rendiv{\mu}{\nu}{\alpha}=\sup_{f} \frac{\alpha}{\alpha-1} \log \left(\int_{\RR^d} e^{(\alpha-1)f} d\mu\right)-\log \left(\int_{\RR^d} e^{\alpha f} d\nu\right), \label{eq:varexpren}
\end{align}
where the supremum is taken over the set of all bounded  continuous functions. 
Substituting $g=e^f$, we obtain
\begin{align}
     \rendiv{\mu}{\nu}{\alpha}=\sup_{g > 0} \frac{\alpha}{\alpha-1} \log \left(\int_{\RR^d} g^{(\alpha-1)} d\mu\right)-\log \left(\int_{\RR^d} g^{\alpha} d\nu\right), \notag
\end{align}
where the supremum is over bounded continuous functions $g>0$. 
 Note that the expression on the RHS is invariant to scaling the function $g$ by a positive constant $c$. Hence, we have
\begin{align}
     \rendiv{\mu}{\nu}{\alpha}=\sup_{ g > 0} \frac{\alpha}{\alpha-1} \log \left(\int_{\RR^d} (cg)^{(\alpha-1)} d\mu\right)-\log \left(\int_{\RR^d} (cg)^{\alpha} d\nu\right). \notag
\end{align} 
Since $\int_{\RR^d} g^{\alpha} d\nu>0$ for any $g>0$, we may choose $c>0$ such that $\int_{\RR^d} (cg)^{\alpha} d\nu=1$. Then
\begin{align}
     \rendiv{\mu}{\nu}{\alpha}&=\sup_{ \substack{g > 0:\\ \int_{\RR^d} g^{\alpha} d\nu=1}} \frac{\alpha}{\alpha-1} \log \left(\int_{\RR^d} g^{(\alpha-1)} d\mu\right)-\log \left(\int_{\RR^d} g^{\alpha} d\nu\right) \notag \\
     &=\sup_{ \substack{g > 0:\\ \int_{\RR^d} g^{\alpha} d\nu=1}} \frac{\alpha}{\alpha-1} \log \left(\int_{\RR^d} g^{(\alpha-1)} d\mu\right)-\int_{\RR^d} g^{\alpha} d\nu +1 \notag \\
     & \leq \sup_{ g > 0} \frac{\alpha}{\alpha-1} \log \left(\int_{\RR^d} g^{(\alpha-1)} d\mu\right)-\int_{\RR^d} g^{\alpha} d\nu +1 \notag \\
          &  \leq  \sup_{f} \frac{\alpha}{\alpha-1} \log \left(\int_{\RR^d} e^{(\alpha-1)f} d\mu\right)-\int_{\RR^d} e^{\alpha f} d\nu +1, \notag 
\end{align}
where the supremum is over all bounded continuous functions $f$.
On the other hand, by applying  $\log x \leq x-1$ for  $x \geq 0$ to the last term in \eqref{eq:varexpren}, we obtain the reverse inequality, thus proving \eqref{eq:eqvanrbndren}. When $\mu \ll \nu$ and $\left(\frac{d\mu}{d\nu}\right)^{\alpha} \in L_1(\nu)$, $\rendiv{\mu}{\nu}{\alpha}<\infty$ and the integrals in \eqref{eq:eqvanrbndren} are finite. Moreover, it can be seen by substitution that the supremum in \eqref{eq:eqvanrbndren} is achieved by $f_{\alpha}^{\star}=\log \frac{d\mu}{d\nu}+\frac{1-\alpha}{\alpha}\rendiv{\mu}{\nu}{\alpha}$. 
\end{proof}

\section{Determination of Prefactor in Theorem \ref{Thm:coventconst}}\label{App:Thm:coventconst-proof}
The claim in \eqref{eq:coventholder} will follow by determining the constants appearing in the proof of \cite[Theorem 2.7.1]{AVDV-book}. 
It is sufficient to consider the case $L=1$ since the claim for general $L$ follows by scaling. 
Let $\delta=\epsilon^{\frac 1 \beta}$, $\norm{\cdot}$ denote the Euclidean norm on $\RR^d$, and  $\{x_{(i)}\}_{i=1}^m$, with $m=N(\delta,\BB_d(r),\norm{\cdot})$, denote the set of points in the interior of $\BB_d(r)$ such that balls of radius $\delta$ centered at these points cover $\BB_d(r)$. 
We will use the fact that $m \leq \left(1+\frac{2r}{\delta}\right)^d$ for $\delta < r$ and $m=1$ for $\delta \geq r$, which implies that $m \leq 1 \vee \left(\frac{3r}{\delta}\right)^d$.   Recall the differential operator $D^k$ for a $d$-dimensional vector $k$ with non-negative integers as elements (see Definition  \ref{def:Holderclass}). Set $\frac{x^k}{k!}:=\prod_{i=1}^{d} \frac{x_i^{k_i}}{k_i!}$. 
For $k$ with $\abs{k} \leq \ubar{\beta}$, and for each $f$, consider the  vector 
\begin{align}
    A_k f=\left(\left\lfloor \frac{D^k f\big(x_{(1)}\big)}{\delta^{\beta-|k|}} \right\rfloor,\cdots, \left\lfloor \frac{D^k f\big(x_{(m)}\big)}{\delta^{\beta-|k|}} \right\rfloor\right). \notag
\end{align}
For any $x \in \BB_d(r)$, there exists an $x_{(i)}$ for some $1 \leq i \leq m$ such that $\norm{x-x_{(i)}} \leq \delta$. 
By multivariate Taylor's theorem with remainder estimates given in \cite[Equation 8]{Folland-1990}, for $f,g \in C_1^{\beta}\big(\BB_d(r)\big)$ and $x \in \mathrm{int}\big(\BB_d(r)\big)$, we have 
\begin{align}
    (f-g)(x)=\sum_{|k| \leq \ubar{\beta}} D^k(f-g)\big(x_{(i)}\big)\frac{(x-x_{(i)})^k}{k!}+R_{\beta}\big(f-g,x_{(i)},x-x_{(i)}\big), \label{eq:taylorexp}
\end{align}
where 
\begin{align}
  R_{\beta}(f,x_{(i)},x-x_{(i)}) = \ubar{\beta} \sum_{|k|=\ubar{\beta}} \frac{\big(x-x_{(i)}\big)^k}{k!} \int_0^1 (1-t)^{\ubar{\beta}-1} \Big(D^k f\big((1-t)x_{(i)}+tx\big)-D^k f\big(x_{(i)}\big)\Big)dt. \label{eq:taylorexprem}
\end{align}
By definition of $C_1^{\beta}\big(\BB_d(r)\big)$, for $k$ such that $\abs{k}=\ubar{\beta}$, 
\begin{align}
   \abs{ D^k f\big((1-t)x_{(i)}+tx\big)-D^k f\big(x_{(i)}\big)} \leq \norm{x-x_{(i)}}^{\beta-\ubar{\beta}}. \notag
\end{align}
Then, we obtain via triangle inequality that
\begin{align}
   \abs{ D^k f\big((1-t)x_{(i)}+tx\big)-D^k f\big(x_{(i)}\big)-D^k g\big((1-t)x_{(i)}+tx\big)+D^k g\big(x_{(i)}\big)} \leq 2\norm{x-x_{(i)}}^{\beta-\ubar{\beta}}. \notag
\end{align}
This yields as shown in the last equation in \cite{Folland-1990} that
\begin{align}
    \abs{R_{\beta}(f-g,x_{(i)},x-x_{(i)})} \leq 2\frac{\norm{x-x_{(i)}}_1^{\ubar{\beta}}\norm{x-x_{(i)}}^{\beta-\ubar{\beta}}}{\ubar{\beta} !} \leq 2d^{\frac{\ubar{\beta}}{2 }}\frac{\norm{x-x_{(i)}}^{\beta}}{\ubar{\beta} !} \leq \frac{2d^{\frac{\ubar{\beta}}{2 }}\delta^{\beta}}{\ubar{\beta} !}. \notag
\end{align}
Thus, if $A_kf =A_k g$ for all $k$ such that $\abs{k} \leq \ubar{\beta}$, we obtain from \eqref{eq:taylorexp}-\eqref{eq:taylorexprem} that
\begin{align}
  \abs{f(x)-g(x)} \leq 2 \sum_{|k| \leq \ubar{\beta}} d^{\frac{|k|}{2 }}\left(\delta^{\beta-|k|} \frac{\delta^{|k|}}{k!}\right) +\frac{2d^{\frac{\ubar{\beta}}{2 }}\delta^{\beta}}{\ubar{\beta} !} \leq  2 d^{\frac{\ubar{\beta}}{2 }}\delta^{\beta} \left(e^d+\frac{1}{\ubar{\beta} !}\right)\leq  2 d^{\frac{\ubar{\beta}}{2 }} \left(e^d+\frac{1}{\ubar{\beta} !}\right)\epsilon. \notag 
\end{align}
Setting $\hat c_{d,\beta}:=2 d^{\frac{\ubar{\beta}}{2 }} \left(e^d+1\right)$, it follows  that the covering number $N\big(\epsilon \mspace{2 mu}\hat c_{d,\beta}, C_1^{\beta}\big(\BB_d(r)\big),\norm{\cdot}_{\infty}\big) $ is upper bounded by the number of different matrices $Af$ as $f$ ranges over $C_1^{\beta}\big(\BB_d(r)\big)$, where $Af$ is formed by stacking $A_k f$ for $\abs{k} \leq \ubar{\beta}$ into different rows. Since the number of rows is  upper bounded by $(\ubar{\beta}+1)^d$ and the number of possible values of each entry in the row $A_kf$ is bounded by $2/\delta^{\beta-|k|} \vee 1$ because $\max_{1 \leq i \leq m }\abs{D^kf(x_i)} \leq 1$, each column of such matrices can at most have $1 \vee (2\delta^{-\beta})^{(\ubar{\beta}+1)^d}$ values.

Next, assume without loss of generality that  $\{x_{(i)}\}_{i=1}^m$ is enumerated such that for each $j>1$, there is an index $i <j$ such that $\norm{x_{(i)}-x_{(j)}}\leq 2\delta$. This follows by considering a sorting algorithm that iteratively builds the desired enumerated list by choosing a point (from the remaining unlisted points) and adding them at the rightmost possible location in the list such that the aforementioned property holds at each iteration. The feasibility of this procedure at each step is guaranteed. Otherwise the union of $\delta$ balls centered at points in the list would be disjoint with a strictly positive gap from the union of $\delta$ balls centered at points outside the list, thus contradicting the assumption that $\{x_{(i)}\}_{i=1}^m$ forms a $\delta$-covering of $\BB_d(r)$.

Fixing the first column of $Af(x_{(1)})$, the smoothness properties of $f$ enforces limitations on the values the other columns may take. To see this, for each column indexed by $j>1$, choose an $i<j$ such that $\norm{x_{(i)}-x_{(j)}}\leq 2\delta$. By Taylor's theorem,
\begin{align}
    D^kf\big(x_{(j)}\big)=\sum_{ \abs{l} \leq \ubar{\beta}-|k|} D^{k+l} f\big(x_{(i)}\big)\frac{\big(x_{(j)}-x_{(i)}\big)^l}{l!}+R_{\beta-\abs{k}}\big(f,x_{(i)},x_{(j)}-x_{(i)}\big), \label{eq:taylorexp2}
\end{align}
with
\begin{align}
 & R_{\beta-\abs{k}}\big(f,x_{(i)},x_{(j)}-x_{(i)}\big)\notag \\
 &=  (\ubar{\beta}-\abs{k}) \sum_{|l|=\ubar{\beta}-\abs{k}} \frac{\big(x_{(j)}-x_{(i)}\big)^l}{l!} \int_0^1 (1-t)^{\ubar{\beta}-\abs{k}-1} \Big(D^{k+l} f\big((1-t)x_{(i)}+tx_{(j)}\big)-D^{k+l} f\big(x_{(i)}\big)\Big)dt. \notag
\end{align}
Let $B_{k}f:= \delta^{\beta-|k|} A_kf$ and $Bf$ be the matrix obtained by stacking $B_{k}f$ as rows. Suppose $A f\big(x_{(i)}\big)=A g\big(x_{(i)}\big)$. Then  $B f\big(x_{(i)}\big)=B g\big(x_{(i)}\big)$. Hence
\begin{align}
 & \abs{D^kf\big(x_{(j)}\big)-D^kg\big(x_{(j)}\big)} \notag \\
 & \stackrel{(a)}{\leq} \mspace{-2 mu}  \abs{D^kf\big(x_{(j)}\big)\mspace{-2 mu}-\mspace{-2 mu}\sum_{ \abs{l} \leq \ubar{\beta}-|k|}\mspace{-2 mu} B^{k+l} f\big(x_{(i)}\big)\frac{\big(x_{(j)}-x_{(i)}\big)^l}{l!}}\mspace{-2 mu}+\mspace{-2 mu}\abs{D^kg\big(x_{(j)}\big)\mspace{-2 mu}-\mspace{-2 mu}\sum_{ \abs{l} \leq \ubar{\beta}-|k|} \mspace{-2 mu}B^{k+l} g\big(x_{(i)}\big)\frac{\big(x_{(j)}-x_{(i)}\big)^l}{l!}} \notag \\
 & \stackrel{(b)}{\leq}\mspace{-4 mu}  \sum_{ \abs{l} \leq \ubar{\beta}-|k|}\mspace{-2 mu}\abs{D^{k+l}f\big(x_{(i)}\big)- B^{k+l} f\big(x_{(i)}\big)}\frac{\abs{\big(x_{(j)}-x_{(i)}\big)^l}}{l!}\mspace{-2 mu}+\mspace{-2 mu}\abs{D^{k+l}g\big(x_{(i)}\big)- B^{k+l} g\big(x_{(i)}\big)}\frac{\abs{\big(x_{(j)}-x_{(i)}\big)^l}}{l!} \notag \\
 & \leq 2^{\beta-|k|+1} \sum_{ \abs{l} \leq \ubar{\beta}-|k|}d^{\frac{|l|}{2}}\delta^{\beta-|k|-|l|} \frac{\delta^{|l|}}{l!}+R_{\beta-\abs{k}}\big(f,x_{(i)},x_{(j)}-x_{(i)}\big)+R_{\beta-\abs{k}}\big(g,x_{(i)},x_{(j)}-x_{(i)}\big) \notag \\
 & \leq 2^{\beta-|k|+1} d^{\frac{\ubar{\beta}-|k|}{2}}\delta^{\beta-|k|}e^d+2^{\beta-|k|+1} d^{\frac{\ubar{\beta}-|k|}{2}}\frac{\delta^{\beta-|k|}}{(\ubar{\beta}-|k|)!} \notag \\
&  \leq 2^{\beta-|k|}\hat c_{d,\beta-|k|}\delta^{\beta-|k|}, \notag
\end{align}
where $(a)$ follows via triangle inequality and $(b)$ is by substitution of \eqref{eq:taylorexp2}. This means that fixing the values of the $i^{th}$ column of $Af$, each value in the $j^{th}$ column of $Af$ ranges over integers in an interval of length at most $2^{\beta-|k|}\hat c_{d,\beta-|k|}+1$. Using the upper bound $1 \vee (2\delta^{-\beta})^{(\ubar{\beta}+1)^d}$ for the number of different values in the first column of $Af$, it follows that the number of possible matrices $Af$ and hence the covering number is upper bounded as
\begin{align}
N\big(\epsilon \mspace{2 mu}\hat c_{d,\beta}, C_1^{\beta}\big(\BB_d(r)\big),\norm{\cdot}_{\infty}\big)  \leq     1 \vee (2\epsilon^{-1})^{(\ubar{\beta}+1)^d}~\big(2^{\beta}\hat c_{d,\beta}+1\big)^{(m-1)(\ubar{\beta}+1)^d}.\notag 
\end{align}
Taking logarithms, we obtain
\begin{align}
\log N\big(\epsilon \mspace{2 mu}\hat c_{d,\beta}, C_1^{\beta}\big(\BB_d(r)\big),\norm{\cdot}_{\infty}\big)  &\leq 0 \vee   (\ubar{\beta}+1)^d \frac{\beta}{d}  \log \left(2^{\frac{d}{\beta}}\epsilon^{-\frac{d}{\beta}}\right)+(m-1)(\ubar{\beta}+1)^d \log \big(2^{\beta}\hat c_{d,\beta}+1\big) \notag \\
&\leq    (\ubar{\beta}+1)^d \frac{\beta}{d}  2^{\frac{d}{\beta}}\epsilon^{-\frac{d}{\beta}}+\left(\frac{3r}{\delta}\right)^d(\ubar{\beta}+1)^d \log \big(2^{\beta}\hat c_{d,\beta}+1\big) \notag \\
& =: (\ubar{\beta}+1)^d\left(\frac{\beta 2^{\frac{d}{\beta}}}{d}+(3r)^d \log \big(2^{\beta}\hat c_{d,\beta}+1\big)\right)\epsilon^{-\frac{d}{\beta}}.\notag 
\end{align}
Hence, by scaling $\epsilon$, \eqref{eq:coventholder}  follows with
\begin{align}
   c_{d,\beta,\BB_d(r)} &:=(\ubar{\beta}+1)^d\hat c_{d,\beta}^{\frac{d}{\beta}}\left(\frac{\beta 2^{\frac{d}{\beta}}}{d}+(3r)^d \log \big(2^{\beta}\hat c_{d,\beta}+1\big)\right) \notag \\
   &=(\ubar{\beta}+1)^d 2^{\frac d \beta} d^{\frac d 2} (e^d+1)^{\frac{d}{ \beta}}\left(\frac{\beta 2^{\frac{d}{\beta}}}{d}+(3r)^d \log \Big(2^{\beta+1} d^{\frac {\beta}{2}} (e^d+1)+1\Big)\right) \notag \\
   &= \underbrace{(\ubar{\beta}+1)^d 2^{\frac d \beta} d^{\frac d 2} (e^d+1)^{\frac{d}{ \beta}}\left(\frac{\beta 2^{\frac{d}{\beta}}}{d} \vee 3^d \log \Big(2^{\beta+1} d^{\frac {\beta}{2}} (e^d+1)+1\Big)\right)}_{c_{d,\beta}}(1+r^d)\label{eq:constcoventbnd}
\end{align}
Given \eqref{eq:coventholder}, the upper bound in \eqref{eq:coventholderarbmeas} follows via similar steps given in \cite[Corollary 2.7.2]{AVDV-book}, which deals with bracketing entropy in place of covering entropy. 
\section{Auxiliary Results Related to Orlicz Norm}
\subsection{Elementary Inequalities Involving Orlicz Norm}\label{Sec:Orlicz-elem-ineq}
Let $0 <q \leq 1$ and $W_1,W_2$ be Banach space valued random variables. We will show that
\begin{align}
    \norm{W_1+W_2}_{\psi_q} \leq 2^{\frac{1}{q}-1}\big(\norm{W_1}_{\psi_q}+\norm{W_2}_{\psi_q}\big). \label{eq:orliczconsttriangle}
\end{align}
 Note that for $a,b\geq 0$ and $0 \leq q \leq 1$, we have
 \begin{align}
   2^{q-1}(a^q+b^q) \leq (a+b)^q = \frac{a+b}{(a+b)^{1-q}} \leq \frac{a}{(a+b)^{1-q}}+\frac{b}{(a+b)^{1-q}} \leq a^q+b^q, \notag  
 \end{align}
where the first inequality follows from Jensen's inequality applied to the concave map $x \mapsto x^q$. Denoting by $\norm{\cdot}_{\mathfrak{B}}$ the relevant Banach space norm and setting  $a=\norm{W_1}_{\mathfrak{B}}$, $b=\norm{W_2}_{\mathfrak{B}}$, $\bar a=\norm{W_1}_{\psi_q}$, $\bar b=\norm{W_2}_{\psi_q}$ and $\bar c_q=2^{\frac{1}{q}-1}$,  we obtain by using the above inequalities that
    \begin{align}
\EE\left[e^{\left(\frac{\norm{W_1+W_2}_{\mathfrak{B}}}{\bar c_q(\bar a+\bar b)}\right)^q}\right] \leq \EE\left[e^{\frac{(a+b)^q}{\bar c_q^q(\bar a+\bar b)^q}}\right]\leq \EE\left[e^{\frac{a^q+b^q}{\bar c_q^q(\bar a+\bar b)^q}}\right] \leq \EE\left[e^{\frac{a^q+b^q}{\bar c_q^q 2^{q-1}(\bar a^q+\bar b^q)}}\right]=\EE\left[e^{\frac{a^q+b^q}{\bar a^q+\bar b^q}}\right]. \label{eq:orliczconst1}
    \end{align}
Continuing,  we have
\begin{align}
  \EE\left[e^{\frac{a^q+b^q}{\bar a^q+\bar b^q}}\right]  \stackrel{(a)}{\leq} \frac{\bar a^q}{\bar a^q+\bar b^q}\EE\left[e^{\left(\frac{\norm{W_1}_{\mathfrak{B}}}{\bar a}\right)^q}\right] +\frac{\bar b^q}{\bar a^q+\bar b^q}\EE\left[e^{\left(\frac{\norm{W_2}_{\mathfrak{B}}}{\bar b}\right)^q}\right] \stackrel{(b)}{\leq}\frac{2\bar a^q}{\bar a^q+\bar b^q}+\frac{2\bar b^q}{\bar a^q+\bar b^q} \leq 2, \label{eq:orliczconst2}
\end{align}
where $(a)$ is via Jensen's inequality applied to the convex map $x \mapsto e^x$ and $(b)$ follows from the definition of Orlicz norm $\norm{\cdot}_{\psi_q}$ which implies 
\begin{align}
\EE\left[e^{\left(\frac{\norm{W_1}_{\mathfrak{B}}}{\bar a}\right)^q}\right] \vee \EE\left[e^{\left(\frac{\norm{W_2}_{\mathfrak{B}}}{\bar b}\right)^q}\right] \leq 2. \notag  
\end{align} 
Using the definition of Orlicz norm $\norm{\cdot}_{\psi_q}$ again, we have $  \norm{W_1+W_2}_{\psi_q} \leq c_q(\bar a+\bar b)$ since combining \eqref{eq:orliczconst1}-\eqref{eq:orliczconst2} yields
  \begin{align}
\EE\left[e^{\left(\frac{\norm{W_1+W_2}_{\mathfrak{B}}}{c_q(\bar a+\bar b)}\right)^q}\right] \leq 2.
\notag  
\end{align} 

\medskip

Next, we show that for a Banach space valued random variable $W \sim \mu$   such that $\EE_{\mu}[W]$ exists (as a Bochner integral), 
\begin{align}
\norm{\EE_{\mu}[W]}_{\psi_q} \leq \left\lceil\frac{1}{q} \right \rceil !~(\log 2)^{-\frac 1q}  \norm{W}_{\psi_q}.  \label{eq:orlicconst2}
\end{align}
We may assume that $\norm{\EE_{\mu}[W]}_{\psi_q}>0$ since otherwise \eqref{eq:orlicconst2} holds trivially. In this case, by continuity of $\psi_q$ and the definition of Orlicz norm $\norm{\cdot}_{\psi_q}$, we have 
\begin{align}
 e^{\left(\frac{\norm{\EE_{\mu}[W]}_{\mathfrak{B}}}{\norm{\EE_{\mu}[W]}_{\psi_q}}\right)^q} = 2.   \notag
\end{align}
This implies that
\begin{align}
    \norm{\EE_{\mu}[W]}_{\psi_q} = (\log 2)^{-\frac 1q} \norm{\EE_{\mu}[W]}_{\mathfrak{B}} \leq \left\lceil\frac{1}{q} \right \rceil ! (\log 2)^{-\frac 1q} \norm{W}_{\psi_q}, \notag
\end{align}
where in the last inequality, we used 
\begin{align}
  \norm{\EE_{\mu}[W]}_{\mathfrak{B}} \leq \EE_{\mu}\big[\norm{W}_{\mathfrak{B}}\big] \leq \left\lceil\frac{1}{q} \right \rceil ! \norm{W}_{\psi_q}. \label{eq:expbndorlicznorm}  
\end{align}
 The first inequality here, in turn,  follows from properties of Bochner integral and the second inequality is a consequence of the fact that for a non-negative valued random variable $X$, 
\begin{align}
 \EE_{\mu}\big[X\big] \leq \EE_{\mu}\big[X^{q \left\lceil\frac{1}{q} \right \rceil}\big] \leq \left\lceil\frac{1}{q} \right \rceil !\mspace{2 mu}\EE\big[ e^{X^{q}}-1 \big]. \notag 
\end{align}
\subsection{Proof of Lemma \ref{Lem:maxineqorliczvar}} \label{App:Lem:maxineqorliczvar-proof}
We will follow the proof given in \cite[Lemma 2.2.2.]{AVDV-book}, by avoiding arguments that require convexity of $\psi$.  For $\psi_q(x)=e^{x^q}-1$, let $x^{\star}_q=\big(\log (3/2)\big)^{1/q}$.  Observe that for $x,y \geq x^{\star}_q$ and $c_q' \geq \big(2/\log(3/2)\big)^{1/q}$, we have $\psi_q(x)\psi_q(y) \leq \psi_q(c_q'xy)$. Consequently, for any $y \geq x^{\star}_q$ and $x \geq x^{\star}_q y$, 
\begin{align}
   \psi_q\left(\frac{x}{y}\right) \leq \frac{\psi_q(c_q'x)}{\psi_q(y)}. \label{bndratioorlicnorm} 
\end{align}
Then, for  any  $y \geq x^{\star}_q$ and $C>0$, we have 
\begin{align}
    \max_i \psi_q\left(\frac{\abs{X_i}}{Cy}\right) &=  \max_i \left(\psi_q\left(\frac{\abs{X_i}} {Cy}\right) \ind_{\big\{\abs{X_i} \geq Cy x^{\star}_q \big\} }+ \psi_q\left(\frac{\abs{X_i}} {Cy}\right) \ind_{\big\{\abs{X_i} < Cy x^{\star}_q \big\}} \right) \notag \\
    & \stackrel{(a)}{\leq}  \max_i \left(\frac{\psi_q\left(\frac{c_q'\abs{X_i}} {C}\right)}{\psi_q(y)} + \psi_q\left(\frac{\abs{X_i}} {Cy}\right) \ind_{\big\{\abs{X_i} < Cy x^{\star}_q \big\}} \right) \notag \\
    & \stackrel{(b)}{\leq} \max_i\frac{\psi_q\left(\frac{c_q'\abs{X_i}} {C}\right)}{\psi_q(y)} +  \psi_q(x^{\star}_q)\notag \\
    & \stackrel{(c)}{\leq} \sum_{i}^n \frac{\psi_q\left(\frac{c_q'\abs{X_i}} {C}\right)}{\psi_q(y)}+\frac 12,\notag
\end{align}
where in $(a)$, we applied \eqref{bndratioorlicnorm} with $x=\abs{X_i}/C$ under the event that $\abs{X_i} \geq Cyx^{\star}_q$, $(b)$ used that $\psi_q$ is monotone increasing, and in $(c)$ follows by upper bounding $\max$ by sum, and $\psi_q(x^{\star}_q)=0.5$. Setting $C=c_q' \max_i \norm{X_i}_{\psi_q}$ and taking expectation, we obtain for $y \geq x^{\star}_q$
\begin{align}
 \EE\left[  \psi_q\left(\frac{\max_i\abs{X_i}}{c_q' \max_i \norm{X_i}_{\psi_q}y}\right)\right]   \leq \frac{n}{\psi_q(y)}+\frac 12. \notag
\end{align}
Choosing $y=\psi_q^{-1}(2n) \geq x^{\star}_q $, the RHS is less than 1, thus implying that (by choosing $c_q'=\big(2/\log(3/2)\big)^{1/q} \leq 5^{\frac 1q}$)
\begin{align*}
    \norm{\max_i\abs{X_i}}_{\psi_q} \leq 5^{\frac 1q}\psi_q^{-1}(2n) \max_i \norm{X_i}_{\psi_q} \leq  \big(5\log(1+2n)\big)^{\frac{1}{q}} \max_i \norm{X_i}_{\psi_q}.
\end{align*}

\bibliographystyle{abbrv}
\bibliography{ref}

\end{document}